\renewcommand{\theequation}{\thesection.\arabic{equation}}
\newcommand{\myref}[2]{\hyperref[#1]{#2}}
\numberwithin{equation}{section}
\newtheorem{theorem}{Theorem}[section]
\newtheorem{lemma}{Lemma}[section]
\newtheorem{assumption}{Assumption}[section]
\newtheorem{example}{Example}[section]
\newcounter{assumptionM}
\newcounter{assumptionA}
\def\theassumptionM{M.\arabic{assumptionM}}
\def\theassumptionA{A.\arabic{assumptionA}}
\begin{document}
	\relax
	\hypersetup{pageanchor=false}
	\hypersetup{pageanchor=true}

\author{
\begin{tabular}{c c c}
    Federico A. Bugni & & Mengsi Gao \\
    Northwestern University & &UC Berkeley \\
    & & \\
    Filip Obradovi{\'c} & & Amilcar Velez \\
    Northwestern University & &Northwestern University \\
\end{tabular}
}

% \author{
% Federico A. Bugni\\
% %Department of Economics\\
% Northwestern University\\
% % \url{federico.bugni@northwestern.edu}
% \and
% Mengsi Gao\\
% %Department of Economics\\
% UC Berkeley\\
% % \url{mengsi.gao@berkeley.edu}
% \and 
% Filip Obradovic\\
% %Department of Economics\\
% Northwestern University\\
% % \url{obradovicfilip@u.northwestern.edu}
% \and
% Amilcar Velez\\
% %Department of Economics\\
% Northwestern University\\
% % \url{amilcare@u.northwestern.edu}
% }
 
%\bigskip
\title{\Large\vspace{-1.5cm} Identification and Inference on Treatment Effects under Covariate-Adaptive Randomization and Imperfect Compliance\thanks{Corresponding author: \href{mailto:federico.bugni@northwestern.edu}{federico.bugni@northwestern.edu}. We thank Eric Auerbach, Ivan Canay, Joel Horowitz, and seminar and conference participants where this paper was presented, for their helpful comments and discussions.}\vspace{0.5cm}
%We thank the Coeditor and four anonymous referees for comments and suggestions that have greatly improved the manuscript. We also thank XXX for helpful comments and discussion. This research was supported by the National Science Foundation Grant SES-1729280.}
%, and seminar participants at various institutions for helpful comments on this paper. We also thank XXX for excellent research assistance. 
% The research of the first author was supported by the National Science Foundation Grant SES-1729280. The research of the second author was supported by National Science Foundation Grant SES-1530534.
%}
}

\maketitle

\vspace{-0.3in}
\thispagestyle{empty} 

%what is the point of this paper?
% what are the implications of our findings?
\begin{spacing}{1.2}

\begin{abstract}
% word count: 162 words
Randomized controlled trials (RCTs) frequently utilize covariate-adaptive randomization (CAR) (e.g., stratified block randomization) and commonly suffer from imperfect compliance. This paper studies the identification and inference for the average treatment effect (ATE) and the average treatment effect on the treated (ATT) in such RCTs with a binary treatment.

We first develop characterizations of the identified sets for both estimands. Since data are generally not i.i.d.\ under CAR, these characterizations do not follow from existing results. We then provide consistent estimators of the identified sets and asymptotically valid confidence intervals for the parameters. Our asymptotic analysis leads to concrete practical recommendations regarding how to estimate the treatment assignment probabilities that enter the estimated bounds. For the ATE bounds, using sample analog assignment frequencies is more efficient than relying on the true assignment probabilities. For the ATT bounds, the most efficient approach is to use the true assignment probability for the probabilities in the numerator and the sample analog for those in the denominator.

\end{abstract}
\end{spacing}

\medskip
\noindent KEYWORDS: Randomized controlled trials, covariate-adaptive randomization, stratified block randomization, imperfect compliance, partial identification.

\noindent JEL classification codes: C12, C14

\thispagestyle{empty} 
%\newpage
%\tableofcontents
\newpage

\section{Introduction}

% Basic setup: RCT, CAR, and mention IC (Imperfect compliance)
This paper considers identification and inference in a randomized controlled trial (RCT) that features covariate-adaptive randomization (CAR) and imperfect compliance. CAR refers to randomization schemes that first stratify according to baseline covariates and then assign treatment status to achieve ``balance'' within each stratum (\cite{bugni/canay/shaikh:2018,bugni/canay/shaikh:2019,bugni/gao:2023}, among many others). These randomization schemes are widely utilized to assign treatment status in RCTs across various scientific disciplines; see \cite{rosenberger/lachin:2016} for a textbook treatment of this topic focused on clinical trials and \cite{duflo/glennerster/kremer:2007} and \cite{bruhn/mckenzie:2008} for reviews focused on development economics. CAR complicates the econometric analysis, as it typically leads to treatment assignments, treatment decisions, and outcomes that are not independent and identically distributed (i.i.d.). In particular, observations may be dependent, and their distributions may differ across units.
% First, the data may not be an independent and identically distributed (i.i.d.) sample since we allow for treatment assignment with CAR. This is a consequence of the dependence (i) in treatment status across units and (ii) between treatment status and baseline covariates resulting from CAR.

% IC
We allow for imperfect compliance in the sense that the RCT participants endogenously decide their treatment, which may or may not coincide with the treatment status assigned by the CAR mechanism. 
% We have perfect compliance when the endogenous treatment decision coincides with the treatment assignment, and we have imperfect compliance otherwise. 
This is empirically relevant, as imperfect compliance is a common occurrence in many RCTs. 
%For recent examples of RCTs that use CAR and have imperfect compliance, see \cite{angrist/lavy:2009}, \cite{attanasio/kugler/meghir:2011}, \cite{dupas/karlan/robinson/ubfal:2018}, \cite{mcintosh/alegria/ordonez/zenteno:2018}, \cite{somville/vandewalle:2018}, among many others.
In this context, \cite{bugni/gao:2023} study inference on the local average treatment effect (LATE). For more results on inference about the LATE under CAR, see \cite{ansel/hong/li:2018,bai/guo/shaikh/tabord-meehan:2024,jiang/linton/tang/zhang:2024}.

% ESTIMANDS
This paper studies the identification and inference for the average treatment effect (ATE) and the average treatment effect on the treated (ATT) in such RCTs with a binary treatment. We extend our analysis to the average treatment effect on the untreated (ATU) in the Appendix. These parameters are of substantial interest to researchers conducting RCTs. Under i.i.d.\ assumptions, the literature has shown that these parameters are partially identified under imperfect compliance; e.g., \cite{manski:1989,manski:1990,manski:1994,balke/pearl:1997,heckman:2001,huber:2017,kitagawa:2021}. Under CAR, however, the data are typically not i.i.d., which renders these results inapplicable.

% LIST OF CONTRIBUTIONS.
Our paper makes several contributions. First, we provide the identified set of our estimands of interest under CAR assumptions. Since the data are generally not i.i.d., our identification analysis is based on the joint data-generating process (DGP) of the individual data and the treatment assignment mechanism. To the extent of our knowledge, this is the first characterization of these identified sets in the CAR literature with imperfect compliance. Second, we provide consistent estimators of the bounds of the identified set. Finally, we provide asymptotically valid confidence intervals for our partially identified parameters. Our asymptotic results are uniformly valid in a large class of probability distributions, which is necessary to obtain approximations that adequately represent the finite sample experiment of interest; see \cite{imbens/manski:2004,andrews/soares:2010}.
% ,andrews/shi:2013}.
% Our results appear to be the first uniformly valid asymptotic approximations in the CAR literature. 
% On the empirical front, we use our results to revisit the RCT in \cite{dupas/karlan/robinson/ubfal:2018}.

% HERE CONCRETE RECOMMENDATIONS
Our asymptotic analysis delivers concrete and practical recommendations on how to estimate the identified sets, particularly on how to estimate the treatment assignment probabilities. In the context of an RCT, the treatment assignment probabilities are often known, and so the question here is whether the researcher should estimate the identified sets using the true assignment probabilities or their sample analogs. As it turns out, our recommendations depend on the estimand under consideration. In the case of the ATE, using sample analog treatment assignment probabilities is (weakly) more efficient than using the true (limiting) treatment assignment probabilities, which translates into a more powerful inference about the ATE. In the case of the ATT (and the ATU), we recommend using the true (limiting) treatment assignment probabilities in the numerator and the sample analogs in the denominator. This approach yields more powerful inference for the ATT (and the ATU) than alternative bounds based on other configurations of treatment assignment probability estimators. To our knowledge, these asymptotic findings appear to be new in the context of analysis of RCTs with imperfect compliance (with or without CAR). In the context of perfect compliance (i.e., point identified estimands) and i.i.d.\ samples (i.e., without CAR), analogous results have been obtained in \cite{hahn:1998,hirano/imbens/ridder:2003}. 

The rest of the paper is organized as follows. Section \ref{sec:setup}, we describe the setup. Section \ref{sec:ate} studies the identification and inference for the ATE, and Section \ref{sec:att} does the same for the ATT. In Section \ref{sec:MC}, we explore the finite sample behavior of our methods via Monte Carlo simulations. Section \ref{sec:application} illustrates our results in an empirical application based on the RCT in \cite{dupas/karlan/robinson/ubfal:2018}. Section \ref{sec:conclusions} provides concluding remarks. All proofs and several intermediate results are collected in the appendix. For brevity, several auxiliary results have been placed in the Appendix.

\section{Setup}\label{sec:setup}

We consider an RCT with $n$ participants. For each participant $i=1,\dots,n$, $Y_i \in \mathbb{R}$ is the observed outcome of interest, $Z_i \in \mathcal{Z}$ is a vector of observed baseline covariates, $A_i \in \{0,1\}$ is the treatment assignment, and $D_i \in \{0,1\}$ is the treatment decision. 

We propose potential outcome models for outcomes and treatment decisions. We use $Y_i(D)$ to denote the potential outcome of participant $i$ if he/she makes treatment decision $D$, and we use $D_i(A)$ to denote the potential treatment decision of participant $i$ if he/she has assigned treatment $A$. These are related to their observed counterparts in the usual manner:
\begin{align}
 D_i &~=~ D_i(1) A_i + D_i(0) (1-A_i),\notag\\
 Y_i &~=~ Y_i(1) D_i + Y_i(0) (1-D_i) .\label{eq:outcome}
\end{align}

Following the usual classification in the LATE framework in \cite{angrist/imbens:1994}, each participant in the RCT can only be one of four types: complier, always taker, never taker, or a defier. An individual $i$ is said to be a complier if $\{D_i(0)=0,D_i(1)=1\}$, an always taker if $\{D_i(0)=D_i(1)=1\}$, a never taker if $\{D_i(0)=D_i(1)=0\}$, and a defier if $\{D_i(0)=1,D_i(1)=0\}$. As is common in the literature, we will later impose that there are no defiers in our population.

We use ${\bf Q}$ to denote the distribution of the underlying random variables, given by
\begin{align*}
 W^{(n)} ~=~ \{(Y_i(1),Y_i(0),D_i(1),D_i(0),Z_i)~:i=1,\dots,n\}.
\end{align*}
We use ${\bf G}$ to denote the treatment assignment distribution, i.e., the distribution of $\{ A^{( n) }|W^{(n)}\} $. Finally, we use ${\bf P}$ to denote the observed data distribution, given by
\begin{align*}
 X^{(n)} ~=~\{(Y_i,D_i,A_i,Z_i)~:i=1,\dots,n\}.
\end{align*}
Note that ${\bf P}$ is jointly determined by \eqref{eq:outcome} and the distributions ${\bf Q}$ and ${\bf G}$. We state our assumptions below in terms of restrictions on ${\bf Q}$ and ${\bf G}$. These distributions can vary with the sample size, but we keep this dependence implicit for the simplicity of exposition. Throughout the paper, we use $P$, $E$, and $V$ to denote probability, expectation, and variance, likewise keeping the underlying probability measures ${\bf P}$, ${\bf Q}$, or ${\bf G}$ implicit.

Strata are constructed from the observed baseline covariates $Z_i$ using a prespecified function $S:\mathcal{Z} \to \mathcal{S}$, where $\mathcal{Z}$ denotes the support of $Z_i$ and $\mathcal{S}$ is a finite set. For each participant $i=1,\dots,n$, let $S_{i} \equiv S(Z_{i})$ and let $S^{(n)} = \{S_{i}:i=1,\dots,n\}$. 
% By definition, $S^{(n)}$ is completely determined by the covariates in $W^{(n)}$. 
Our assumptions on ${\bf Q}$ are as follows.
\begin{assumption}{\rm \bf (Underlying distribution)}\label{ass:1}
For constants $Y_L$, $Y_H$, $\xi$, $W^{(n)}$ is an i.i.d.\ sample from ${\bf Q}$ that satisfies
\begin{enumerate}[(a)]
%\item $E[ |Y_{i}( d)|^{2+\delta} \mid D_{i}(a) = d, S_{i} = s ] <M $ for all $(d,a,s) \in \{0,1\}\times \{0,1\} \times \mathcal{S}$,
\item $Y_{i}(d) \in [Y_L,Y_H]$ for all $d\in \{0,1\}$, where $Y_L$ and $Y_H$ are known constants. Also, $p(s) \equiv P(S_i=s)\geq \xi >0$ for all $s \in \mathcal{S}$.
\item $ P(D_{i}( 0) =1,D_{i}(1) =0) =0$ or, equivalently, $ P( D_{i}(1)\geq D_{i}(0)) =1$.
\item $V[ Y_{i}( d) |D_{i}( a) =d,S_{i}=s] P(D_{i}( a) =d,S_{i}=s) \geq  \xi >0$ for some $((d,a),s)\in  \{(0,0),(1,1)\} \times \mathcal{S}$.
\item $P(D_i(1)=1,D_i(0)=0)\geq  \xi >0$ or, equivalently, $ P( D_{i}(1)> D_{i}(0)) \geq  \xi >0$.
 	\end{enumerate}
\end{assumption}

Assumption \ref{ass:1} says that the underlying data distribution is i.i.d., i.e., ${\bf Q}$ is the product of $n$ identical marginal distributions of $(Y_i(1),Y_i(0),D_i(1),D_i(0),Z_i)$. In addition, Assumption \ref{ass:1}(a) imposes that $Y_L$ and $Y_H$ are known logical bounds for the potential outcomes
and that all strata are relevant. In principle, one could entertain the case with $Y_L = -\infty$ or $Y_H=\infty$, but our later results reveal that either of these generates a non-informative, trivial identified set for our parameters of interest. Assumption \ref{ass:1}(b) corresponds to the ``no defiers'' or ``monotonicity'' condition that is standard in treatment effect analysis under imperfect compliance.
Assumption \ref{ass:1}(c) places a nontrivial bound on the conditional variance for at least one subgroup. This implies that our inference is non-degenerate, and it is arguably mild, as it is only required to hold for one $((d,a),s)\in  \{(0,0),(1,1)\} \times \mathcal{S}$. Finally, Assumption \ref{ass:1}(d) implies that the proportion of compliers in the population is non-trivial, ensuring that $P(D_i=1)\in(0,1)$, making ATT well-defined. It is not required for the ATE.

Under Assumption \ref{ass:1}, we introduce the following notation for population objects: for each $(d,a,s) \in \{0,1\}\times  \{0,1\}\times \mathcal{S}$,
\begin{align}
%    p(s)&~\equiv~ P(S_{i} = s)>0 \notag\\
    \pi_{D(a)}(s) &~\equiv~ P(D_{i}( a) = 1|S_i=s)\notag\\
    \mu (d,a,s) &~\equiv~ \left\{\begin{array}{cc}
         E[ Y_{i}( d) |D_{i}( a) =d,S_{i}=s]&\text{ if }P(D_{i}( a) =d,S_{i}=s)>0 \\
         0&\text{ if }P(D_{i}( a) =d,S_{i}=s)=0 \\
    \end{array}\right.\notag\\
   \sigma^2 (d,a,s)&~\equiv~  \left\{\begin{array}{cc}
         V[ Y_{i}( d) |D_{i}( a) =d,S_{i}=s]&\text{ if }P(D_{i}( a) =d,S_{i}=s)>0 \\
         0&\text{ if }P(D_{i}( a) =d,S_{i}=s)=0. \\
    \end{array}\right.\label{eq:keyDefns}
\end{align}
% where the left-hand side expressions do not depend on $i=1,\dots,n$ due to the i.i.d.\ sampling in Assumption \ref{ass:1}. 
We note that the expressions in \eqref{eq:keyDefns} are all features of the distribution ${\bf Q}$ but we omit this dependence for ease of notation. It is also convenient to introduce notation for sample objects: for each $s \in \mathcal{S}$,
\begin{align*}
    n(s) \equiv \sum\nolimits_{i=1}^n I[S_{i} =s ]~~\text{ and }~~
    n_{A}(s) \equiv \sum\nolimits_{i=1}^n I[A_{i} = 1, S_{i} =s ]. 
\end{align*}
With this notation in place, we can state our assumptions regarding the treatment assignment mechanism ${\bf G}$.
\begin{assumption}{\rm \bf (Assignment Mechanism)}\label{ass:2}
For a constant $\varepsilon>0$, ${\bf G}$ satisfies
\begin{enumerate}[{(a)}]
\item $W^{(n)}\perp A^{(n)}~|~S^{(n)}$.
\item $P(A_{i}=1 | S^{(n)}) \in (0,1)$ for all $i=1,\dots,n$.
\item For all $s \in \mathcal{S}$, $n_{A}(s)/n(s)  = \pi _{A}(s) + o_{p}(1)$, where $\pi _{A}(s) \in (\varepsilon,1-\varepsilon)$.
\item $P(A_{i}=1 | S_i ) = P(A_{j}=1 | S_j )$ for all $i,j=1,\dots,n$.
\item $\sqrt{n} E[|\pi_A(S_i) - P( A_i=1|S_i)|] = o(1)$ for all $i=1,\ldots ,n$.
\item $\{\{\sqrt{n}(n_{A}(s)/n(s)-\pi_A(s)):s\in  S\}|S^{(n)}\}\overset{d}{\to }N(\mathbf{0},{{\Sigma}}_{A})$ w.p.a.1, with ${{\Sigma}} _{A}=diag({\tau}(s)(1-\pi _{A}(s))\pi _{A}(s)/p(s):s\in \mathcal{S})$ and ${\tau}(s)\in [0,1]$.
\end{enumerate}
\end{assumption}

Assumption \ref{ass:2} represents the main departure relative to the i.i.d.\ setup, as it allows the treatment assignment vector $A^{(n)}$ to be non-independent and non-identically distributed. Assumption \ref{ass:2}(a) requires that the treatment assignment vector $A^{(n)}$ is a function of the strata vector $S^{(n)}$ and a randomization device that is conditionally independent of the underlying sample $W^{(n)}$. Assumption \ref{ass:2}(b) imposes that the conditional treatment assignment probability for any unit is between zero and one. This is a mild requirement that is satisfied by any of the treatment assignment mechanisms typically used in practice. 
% In the absence of this assumption, we get trivial identified sets.
Assumption \ref{ass:2}(c) imposes that the fraction of units assigned to treatment in the stratum $s$, $n_A(s)/n(s)$, converges in probability to {\it some limit} $\pi_A(s) \in (0,1)$. In practice, $\pi_A(s)$ is typically the strata-specific treatment assignment probability desired by the researcher; see Examples \ref{ex:SBR} and \ref{ex:SRS}. For this reason, we refer to $\{\pi_A(s):s\in \mathcal{S}\}$ as the {\it target probabilities}. Assumption \ref{ass:2}(d) requires that the probability of assignment of any unit conditional only on its own strata is independent of the identity of the unit. 
We note that $P(A_{i}=1 | S_i) =E[P(A_i=1|S^{(n)})|S_i]$, which is not necessarily equal to $\pi_A(S_i)$.\footnote{In the typical CAR method, the reason for the difference between $P(A_i=1|S^{(n)})$ and $\pi_A(S_i)$ is that the sample size of strata $s=S_i$ multiplied by $\pi_A(s)$ may not be an integer. In turn, this results in $P(A_{i}=1 | S_i)$ and $\pi_A(S_i)$ being different.} 
While $P(A_i=1|S_i)$ and $\pi_A(S_i)$ need not coincide, Assumption \ref{ass:2}(e) states that these need to converge to each other in expectation at a fast rate. Finally, Assumption \ref{ass:2}(f) strengthens Assumption \ref{ass:2}(c) to require that $\sqrt{n}(n_A(s)/n(s) - \pi_A(s))$ is asymptotically normal conditional on the strata information $S^{(n)}$. For each stratum $s \in \mathcal{S}$, the parameter $\tau(s) \in [0,1]$ determines the amount of dispersion that the CAR mechanism allows on the fraction of units assigned to the treatment. A lower value of $\tau(s)$ implies that the CAR mechanism imposes a higher degree of ``balance'' or ``control'' of the treatment assignment proportion relative to its desired target value. 
%Finally, Assumptions \ref{ass:2}(d)-(f) are required for our analysis of the ATT, but not for the ATE. 
Assumption \ref{ass:2} is satisfied by a wide array of CAR schemes, including stratified block randomization (SBR) and simple random sampling (SRS), among others.\footnote{Additional examples of CAR methods include Efron's biased coin design (\cite{efron:1971}) and the so-called minimization methods (\cite{pocock/simon:1975,hu/hu:2012}). Under suitable conditions, these can also be shown to satisfy Assumption \ref{ass:2}. See \cite{bugni/canay/shaikh:2018,bugni/gao:2023} for details.}
SBR deserves special focus, as it is a frequently used CAR method in RCTs in economics.

\begin{example}[Stratified Block Randomization (SBR)]\label{ex:SBR}
This is sometimes also referred to as blocking, block randomization, or permuted blocks within strata. To implement this method, the researcher proposes a vector of desired assignment probabilities for each stratum, which we denote by $\{\pi_A(s):s \in \mathcal{S}\}$. Within every stratum $s \in \mathcal{S}$, SBR assigns exactly $\lfloor n(s)\pi_A(s)\rfloor$ of the $n(s)$ participants in stratum $s$ to treatment and the remaining $n(s) -\lfloor n(s)\pi_A(s)\rfloor $ to control, where all possible $$\binom{n(s)}{\lfloor n(s){\pi}_A(s)\rfloor}$$ assignments are equally likely. Then, $P(A_i = 1| S^{(n)}) ={\lfloor n(S_i)\pi_A(S_i)\rfloor}/{n(S_i)} $.

SBR can be shown to satisfy Assumption \ref{ass:2}. In particular, \cite{bugni/canay/shaikh:2018} show that Assumptions \ref{ass:2}(a)-(c) and (f) hold with $\tau(s)=0$ for all $s \in \mathcal{S}$.
To verify Assumption \ref{ass:2}(d), note that for all $s \in \mathcal{S}$ and $i=1,\dots,n$,
$$P(A_i = 1|S_i = s) ~=~ E[P(A_i=1|S^{(n)})|S_i = s] ~=~ E[{\lfloor n(s)\pi_A(s)\rfloor}/{n(s)}].$$
Finally, Assumption \ref{ass:2}(e) follows from this derivation:
\begin{align*}
\sqrt{n}E\left[ \left\vert \pi _{A}\left( S_{i}\right) -P( A_{i}=1|S_i) \right\vert \right]  &~=~\sqrt{n}E\left[ \left\vert \pi _{A}\left( S_{i}\right) -E[\lfloor n( S_{i}) \pi _{A}( S_{i}) \rfloor /n( S_{i}) | S_i] \right\vert \right]  \\
% note that we can simply delete the absolute value, as the term inside is always positive.
&~\overset{(1)}{\leq}~\sqrt{n}E\left[ 1\left[ n\left( S_{i}\right) \geq 1\right]/n\left( S_{i}\right) \right]  \\
&~=~\sqrt{n}\sum\nolimits_{s\in \mathcal{S}}E\left[ 1/(\vartheta \left( s\right) +1)\right] p\left( s\right)  \\
&~\overset{(2)}{=}~({1}/{\sqrt{n}})\sum\nolimits_{s\in \mathcal{S}}\left( 1-\left( 1-p\left( s\right) \right) ^{n}\right) ~\overset{(3)}{\to}~0,
\end{align*}
for $\vartheta \left( s\right) \sim Bi\left( n-1,p\left( s\right) \right) $ for all $s\in \mathcal{S}$, where (1) holds by $n\left( S_{i}\right) \pi _{A}\left( S_{i}\right) -1\leq \left\lfloor n\left( S_{i}\right) \pi _{A}\left( S_{i}\right) \right\rfloor$ $ \leq n\left( S_{i}\right) \pi _{A}\left( S_{i}\right) $, (2) by direct computation, and (3) by $\left\vert \mathcal{S}\right\vert <\infty $.
% Derivation:%
% \begin{eqnarray*}
% E\left[ \frac{1}{\vartheta \left( s\right) +1}\right]  &=&\sum_{k=0}^{n-1}%
% \frac{1}{k+1}\left( 
% \begin{array}{c}
% n-1 \\ 
% k%
% \end{array}%
% \right) p\left( s\right) ^{k}\left( 1-p\left( s\right) \right) ^{n-1-k} \\
% &=&\sum_{k=0}^{n-1}\frac{1}{k+1}\frac{n-1!}{k!\left( n-1-k\right) !}p\left(
% s\right) ^{k}\left( 1-p\left( s\right) \right) ^{n-1-k} \\
% &=&\frac{1}{n}\sum_{k=0}^{n-1}\frac{n!}{\left( k+1\right) !\left( n-\left(
% 1+k\right) \right) !}p\left( s\right) ^{k}\left( 1-p\left( s\right) \right)
% ^{n-1-k} \\
% &=&\frac{1}{p\left( s\right) n}\sum_{k=0}^{n-1}\frac{n!}{\left( k+1\right)
% !\left( n-\left( 1+k\right) \right) !}p\left( s\right) ^{k+1}\left(
% 1-p\left( s\right) \right) ^{n-\left( 1+k\right) } \\
% &=&\frac{1}{p\left( s\right) n}\sum_{k=0}^{n-1}\left( 
% \begin{array}{c}
% n \\ 
% k+1%
% \end{array}%
% \right) p\left( s\right) ^{k+1}\left( 1-p\left( s\right) \right) ^{n-\left(
% 1+k\right) } \\
% &=&\frac{1}{p\left( s\right) n}\sum_{L=1}^{n}\left( 
% \begin{array}{c}
% n \\ 
% L%
% \end{array}%
% \right) p\left( s\right) ^{L}\left( 1-p\left( s\right) \right) ^{n-L} \\
% &=&\frac{1}{p\left( s\right) n}\left( \sum_{L=0}^{n}\left( 
% \begin{array}{c}
% n \\ 
% L%
% \end{array}%
% \right) p\left( s\right) ^{L}\left( 1-p\left( s\right) \right) ^{n-L}-\left(
% 1-p\left( s\right) \right) ^{n}\right)  \\
% &=&\frac{1}{p\left( s\right) n}\left( 1-\left( 1-p\left( s\right) \right)
% ^{n}\right) 
% \end{eqnarray*}
 % \textcolor{red}{The last point needs to be verified, working on the Binomial distribution stuff}
\end{example}

\begin{example}[Simple Random Sampling (SRS)]\label{ex:SRS}
This refers to a treatment assignment mechanism in which $A^{(n)}$ satisfies
\begin{equation}
P(A^{(n)}=(a_i: i=1,\dots,n)|S^{(n)},W^{(n)})~=~\prod_{i=1}^{n} \pi_A(S_i)^{a_i} (1-\pi_A(S_i))^{1-a_i}.
 \label{eq:SRS}
\end{equation}
In other words, SRS assigns each participant in stratum $s$ to treatment with probability $\pi_A(s)$ and to control with probability $(1-\pi_A(s))$, independent of the rest of the sample information. 

It is easy to see that SRS satisfies Assumption \ref{ass:2}. In particular, \cite{bugni/canay/shaikh:2018} show that Assumptions \ref{ass:2}(a)-(c) and (f) hold with $\tau(s)=1$ for all $s \in \mathcal{S}$. Finally, Assumptions \ref{ass:2}(d)-(e) hold by $P(A_i = 1|S^{(n)} ) =P(A_i = 1|S_i) = \pi_A(S_i)$ for all $i=1,\dots,n$.
\end{example}

\section{Average treatment effect (ATE)}\label{sec:ate}

This section studies the identification and inference of the ATE, defined as
\begin{equation}
   \theta
   %({\bf Q})
   ~\equiv~E[ Y_i( 1) -Y_i( 0) ] .
\end{equation}
By definition, $\theta$ is the expectation of the treatment effect when the treatment is mandated across the entire population.

% It is worth noting that the analysis in this section will only require Assumptions \ref{ass:1}(a)-(c) and \ref{ass:2}(a)-(c).
We note that the ATE is solely determined by the underlying data distribution ${\bf Q}$, i.e., the treatment assignment mechanism ${\bf G}$ plays no role in determining $ \theta$.

\subsection{Identification}\label{sec:identicationATE}

The next result characterizes the identified set of the ATE.
\begin{theorem}\label{thm:main}
Under Assumptions \ref{ass:1}(a)-(c) and \ref{ass:2}(a)-(c), the identified set for the ATE is $\Theta _{I}({\bf P})= [\theta_{L}({\bf P}),\theta_{H}({\bf P})],$ where
\begin{align}
\theta _{L}({\bf P}) ~& = ~E\bigg[ \frac{(Y_{i}D_{i}+Y_{L}( 1-D_{i}) ) A_{i}}{P(A_{i}=1|S_{i}) }-\frac{( Y_{i}(1-D_{i}) +Y_{H}D_{i}) ( 1-A_{i}) }{1-P(A_{i}=1|S_{i}) }\bigg] , \notag\\
\theta _{H}({\bf P})~& = ~E\bigg[ \frac{(Y_{i}D_{i}+Y_{H}( 1-D_{i}) ) A_{i}}{P(A_{i}=1|S_{i}) }-\frac{( Y_{i}(1-D_{i}) +Y_{L}D_{i}) ( 1-A_{i}) }{1-P(A_{i}=1|S_{i})}\bigg] .\label{eq:bounds}
\end{align}
% \begin{enumerate}[(a)]
%     \item Neither $\theta _{L,i}(P_{n})$ nor $\theta _{H,i}(P_{n})$ vary with $i=1,\dots,n$. That is, we can define
%     \begin{align}
% \theta _{L}(P_{n}) ~& = E_{n}\left[ \frac{(Y_{i}D_{i}+Y_{L}( 1-D_{i}) ) A_{i}}{P_{n}(A_{i}=1|S^{( n) }) }-\frac{( Y_{i}(1-D_{i}) +Y_{H}D_{i}) ( 1-A_{i}) }{P_{n}(A_{i}=0|S^{( n) }) }\right] , \notag\\
% \theta _{H}(P_{n})
% ~& = ~
% E_{n}\left[ \frac{(Y_{i}D_{i}+Y_{H}( 1-D_{i}) ) A_{i}}{P_{n}(A_{i}=1|S^{( n) }) }-\frac{( Y_{i}(1-D_{i}) +Y_{L}D_{i}) ( 1-A_{i}) }{P_{n}(A_{i}=0|S^{( n) })}\right] .\label{eq:bounds}
% \end{align}
%     \item $\Theta _{I}(P_{n}) ~=~ [\theta_{L}(P_{n}),\theta_{H}(P_{n})]$.
% \end{enumerate}
% Moreover, \eqref{eq:bounds} also holds with $P(A_{i}=1|S_{i})$ replaced by $P(A_{i}=1|S^{( n)})$.
\end{theorem}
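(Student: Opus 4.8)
The plan is to establish the two set inclusions separately: first that every admissible value of $\theta$ lies in $[\theta_L(\mathbf{P}),\theta_H(\mathbf{P})]$ (validity), and then that every point of this interval is attained by some DGP consistent with $\mathbf{P}$ and the maintained assumptions (sharpness). The engine for both directions is an inverse-probability-weighting identity built on the conditional independence in Assumption \ref{ass:2}(a): for any $\mathbf{Q}$-integrable function $g$ of $W_i$,
\begin{equation*}
E\!\left[\frac{A_i}{P(A_i=1|S^{(n)})}\,g(W_i)\right]=E[g(W_i)],\qquad E\!\left[\frac{1-A_i}{P(A_i=0|S^{(n)})}\,g(W_i)\right]=E[g(W_i)],
\end{equation*}
which follows from the law of iterated expectations after conditioning on $(W^{(n)},S^{(n)})$ and using $E[A_i\mid W^{(n)},S^{(n)}]=E[A_i\mid S^{(n)}]=P(A_i=1\mid S^{(n)})$; Assumption \ref{ass:2}(b) guarantees the weights are finite.

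For validity, I would first rewrite the observed integrands as functions of $W_i$ alone. On the event $\{A_i=1\}$ one has $D_i=D_i(1)$ and $Y_i=Y_i(D_i(1))$, so $Y_iD_i+Y_H(1-D_i)=Y_i(1)D_i(1)+Y_H(1-D_i(1))$; symmetrically, on $\{A_i=0\}$ the term $Y_i(1-D_i)+Y_LD_i$ equals $Y_i(0)(1-D_i(0))+Y_LD_i(0)$. Applying the identity above termwise yields
\begin{equation*}
\theta_H(\mathbf{P})=E\big[Y_i(1)D_i(1)+Y_H(1-D_i(1))\big]-E\big[Y_i(0)(1-D_i(0))+Y_LD_i(0)\big],
\end{equation*}
and analogously for $\theta_L(\mathbf{P})$. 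Since $Y_L\le Y_i(d)\le Y_H$ by Assumption \ref{ass:1}(a), the first expectation exceeds $E[Y_i(1)]$ by $E[(Y_H-Y_i(1))(1-D_i(1))]\ge0$ and the second falls short of $E[Y_i(0)]$ by $E[(Y_i(0)-Y_L)D_i(0)]\ge0$, giving $\theta\le\theta_H(\mathbf{P})$; the bound $\theta\ge\theta_L(\mathbf{P})$ is identical. Notably this containment does not invoke Assumption \ref{ass:1}(b), which instead becomes essential for the reverse (sharpness) direction.

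For sharpness, I would isolate the features of $\mathbf{Q}$ that are not pinned down by $\mathbf{P}$. Under Assumption \ref{ass:1}(b), $\{D_i(1)=0\}$ is exactly the never-takers and $\{D_i(0)=1\}$ exactly the always-takers, and random assignment identifies the type probabilities together with the laws of $Y_i(1)$ on $\{D_i(1)=1\}$ and of $Y_i(0)$ on $\{D_i(0)=0\}$. The only unidentified objects entering $\theta$ are the conditional means $m_1\equiv E[Y_i(1)\mid D_i(1)=0]$ and $m_0\equiv E[Y_i(0)\mid D_i(0)=1]$, so that
\begin{equation*}
\theta=\big(E[Y_i(1)D_i(1)]-E[Y_i(0)(1-D_i(0))]\big)+m_1\,P(D_i(1)=0)-m_0\,P(D_i(0)=1),
\end{equation*}
with the bracketed term identified from $\mathbf{P}$ via the weighting identity. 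I would then fix the identified part of $\mathbf{Q}$ and the mechanism $\mathbf{G}$ at their true values and, for each $(m_1,m_0)\in[Y_L,Y_H]^2$, take the conditional law of $Y_i(1)$ on the never-takers and of $Y_i(0)$ on the always-takers to be two-point distributions on $\{Y_L,Y_H\}$ with the prescribed means. Because never-takers always have $D_i=0$ and always-takers always have $D_i=1$, neither of these potential outcomes is ever realized as $Y_i$, so the constructed $\mathbf{Q}$ reproduces $\mathbf{P}$ and still satisfies Assumption \ref{ass:1}; letting $(m_1,m_0)$ range over $[Y_L,Y_H]^2$ sweeps $\theta$ continuously across the whole interval $[\theta_L(\mathbf{P}),\theta_H(\mathbf{P})]$.

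The step I expect to be most delicate is sharpness: verifying that the perturbed distributions remain observationally equivalent to the truth in the CAR environment. Concretely, one must confirm that replacing the conditional laws of $Y_i(1)$ on the never-takers and of $Y_i(0)$ on the always-takers leaves the strata distribution and every observed conditional law of $(Y_i,D_i)$ given $(A_i,S_i)$ intact, so that $\mathbf{P}$---which by the discussion following \eqref{eq:outcome} is jointly determined by $\mathbf{Q}$, $\mathbf{G}$, and the potential-outcome relations---is reproduced exactly while the fixed $\mathbf{G}$ (which under Assumption \ref{ass:2}(a) depends on $W^{(n)}$ only through $S^{(n)}$) stays admissible. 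The non-i.i.d.\ assignment makes this bookkeeping more involved than in the classical Manski bounds, since the relevant weights are the full-vector probabilities $P(A_i=1\mid S^{(n)})$ rather than unit-level propensity scores; Assumptions \ref{ass:2}(b)--(c) enter here to keep these weights and the resulting expectations well-defined.
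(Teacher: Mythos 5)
Your proposal is correct, and while your validity direction coincides with the paper's argument, your sharpness construction takes a genuinely different route. For validity, your IPW identity is exactly the paper's derivation in \eqref{eq:representation2}--\eqref{eq:representation3} combined with Lemma \ref{lem:main_outer}, including your (correct) observation that monotonicity plays no role in that direction. For sharpness, the paper's Lemma \ref{lem:main_inner} builds, for each $(\beta_0,\beta_1)\in[0,1]^2$, an explicit joint measure $\mathbf{R}(\cdot;\beta_0,\beta_1)$ written entirely in terms of identified objects---the densities $f_{a,d}$, the propensities $\pi_{D(a)}(z)$, the law of $Z$, and the true $\mathbf{G}$---with compliers' potential outcomes made conditionally independent with marginals $\lambda_1,\lambda_0$ (recovered via Lemma \ref{lem:QnYMap}), and with never-takers' $Y(1)$ and always-takers' $Y(0)$ set to point masses at $Y_L\beta_1+Y_H(1-\beta_1)$ and $Y_L\beta_0+Y_H(1-\beta_0)$; it then verifies by computation that $\mathbf{R}$ is a probability measure, that it reproduces $\mathbf{P}$, and that its ATEs sweep the interval. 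You instead perturb the \emph{true} $\mathbf{Q}$ only on the two components that are never realized in the observed data (never-takers' $Y(1)$, always-takers' $Y(0)$), so observational equivalence is essentially automatic---the observed vector is a function of components of $W^{(n)}$ whose joint law with $A^{(n)}$ is untouched, and $\mathbf{G}$ depends on $W^{(n)}$ only through $S^{(n)}$, whose law is unchanged---and the affine map $(m_1,m_0)\mapsto\theta$ together with the weighting identity pins the endpoints at $\theta_L(\mathbf{P})$ and $\theta_H(\mathbf{P})$. Your route is shorter and avoids the measure-theoretic bookkeeping of validating a from-scratch construction (your two-point laws could even be replaced by point masses at $m_1,m_0$, as in the paper); you should just add the one-line check that the perturbed $\mathbf{Q}'$ still satisfies Assumption \ref{ass:1}(c), which holds because that condition only restricts $V[Y(1)|D(1)=1,S]$ and $V[Y(0)|D(0)=0,S]$, neither of which you modify. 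What the paper's heavier construction buys is independence from the unidentified parts of the true $\mathbf{Q}$: $\mathbf{R}$ is a functional of $\mathbf{P}$ alone, which makes the argument self-contained and directly reusable---the same measure is invoked verbatim for the ATT sharpness proof in Lemma \ref{lem:main_innerATT}---whereas your construction carries along whatever the true complier joint law happens to be. Both arguments use Assumption \ref{ass:1}(b) only for sharpness and neither needs Assumption \ref{ass:1}(c) or \ref{ass:2}(c) for identification (your passing suggestion that \ref{ass:2}(c) keeps the weights well-defined is not quite right; \ref{ass:2}(b) alone does that), exactly as the paper remarks after the theorem.
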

% For each $i=1,\dots ,n$, it is convenient to define the following objects:
% \begin{align}
% \theta _{L,i}(P_{n}) 
% ~& \equiv ~
% E_{n}\left[ \frac{(Y_{i}D_{i}+Y_{L}( 1-D_{i}) ) A_{i}}{P_{n}(A_{i}=1|S^{( n) }) }-\frac{( Y_{i}(1-D_{i}) +Y_{H}D_{i}) ( 1-A_{i}) }{P_{n}(A_{i}=0|S^{( n) }) }\right] , \notag\\
% \theta _{H,i}(P_{n})
% ~& \equiv ~
% E_{n}\left[ \frac{(Y_{i}D_{i}+Y_{H}( 1-D_{i}) ) A_{i}}{P_{n}(A_{i}=1|S^{( n) }) }-\frac{( Y_{i}(1-D_{i}) +Y_{L}D_{i}) ( 1-A_{i}) }{P_{n}(A_{i}=0|S^{( n) })}\right] .\label{eq:bounds_pre}
% \end{align}

% PARAGRAPH SAYS THAT THE RESULT IS NEW
A few remarks are in order. As already mentioned, our setup allows for treatment assignment using CAR methods, which implies that the data may not be i.i.d. Hence, Theorem \ref{thm:main} does not follow from the existing results on the identification of the ATE under i.i.d.\ sampling, e.g., \cite{manski:1990,heckman:2001,huber:2017,kitagawa:2021}. In fact, to our knowledge, Theorem \ref{thm:main} is the first characterization of the identified set of the ATE under the CAR framework with imperfect compliance.

% PARAGRAPH SAYS THAT THE RESULT IS COINCIDES WITH IID, BUT REITERATES THAT THIS IS NEW AND GIVES A TASTE OF THE PROOF.
Theorem \ref{thm:main} shows that our bounds coincide with the sharp bounds on the ATE originally derived by \cite{manski:1990} under the monotonicity condition stated in Assumption \ref{ass:1}(b) and when $\mathcal{S}$ is a singleton. This implies that the departure from the i.i.d.\ conditions allowed by the treatment assignment mechanism under CAR does not affect the structure of the sharp bounds.
While this might suggest that our proof is merely a by-product of existing results, this is not the case. Prior work focuses on the i.i.d.\ setting and therefore relies on identifying information contained in the marginal distribution of a single individual to construct the bounds. In contrast, our approach works with the joint distribution of all $n$ individuals in the sample, allowing us to account for the dependence and heterogeneity permitted by the treatment assignment mechanisms under our CAR framework.

% Manski 1990 derives intersection bounds under mean independence of the instrument i.e. E[Y_d|Z]  = E[Y_d].

% Balke and Pearl (1997) consider full independence of the instrument i.e. (Y_1,Y_0)\independent Z in the case where Y_d is binary. They show that full independence tightens the bounds over mean independence (which is equivalent to marginal independence Y_d\independent Z since Y_d is binary). They note that assuming monotonicity, full independence brings no identifying power (bounds simplify to Manski 1990)

% Heckman and Vytlacil (2001) show that Manski intersection bounds simplify to no-intersection under monotonicity.

% Kitagawa (2021) extends Balke and Pearl to the case where Y_d are not binary. Characterizes sharp identified sets under full independence. Also, under monotonicity, full independence has no identifying power over marginal independence, but simplifies the bounds. (Note: Kitagawa misses the point that Manski 1990 bounds are not intersection bounds under monotonicity)

% HERE WE EXPLAIN THAT THE RHS IS NOT INDIVIDUAL SPECIFIC: THIS POINT IS NOT TRIVIAL IN THE SENSE THAT ARE OUR DATA ARE NOT IID.
It is worth highlighting that the equalities in \eqref{eq:bounds} hold for any $i=1,\dots,n$. That is, the identified set for the ATE is the same for all of the individuals in our sample. This observation would be straightforward in the context of an i.i.d.\ sample, but it requires proof in the CAR framework, where data need not be i.i.d.
% The expressions on the right-hand side of \eqref{eq:bounds} are a function of the treatment assignment probability $P(A_{i}=1 | S_{i})$ (or, alternatively, $P(A_{i}=1 | S^{(n)})$). Under Assumption \ref{ass:2}(a)-(c), this can vary with $i=1,\dots,n$ (in particular, Assumption \ref{ass:2}(d) is not required). While this may give the appearance that the sharp bounds on the ATE are individual specific, the notation in \eqref{eq:bounds} indicates this is not the case. Finally, we observe that Theorem \ref{thm:main} is a finite sample result, i.e., it holds for all $n  \in \mathbb{N}$.
% THIS EXPLAINS THAT SOME ASSUMPTIONS MAY BE DISPENSED WITHOUT CHANGES.
It is also notable that Theorem \ref{thm:main} would not change if we restrict attention to Assumptions \ref{ass:1}(a)-(b) and \ref{ass:2}(a)-(b). In other words, Assumptions \ref{ass:1}(c) and \ref{ass:2}(c) do not provide any additional identifying power for the ATE, though they will become relevant when conducting inference in the next section.\footnote{This also connects to our earlier point: without Assumption \ref{ass:2}(d), the probability $P(A_i = 1|S_i)$ may vary across individuals, and yet, under our assumptions, the identified set for the ATE does not vary with $i$.}
%Finally, it is relevant to point out that do not need to assume the existence of compliers, i.e., LATE may be ill-defined. 

% % THIS EXPLAINS WHY WE HAVE 2 EQUIVALENT EXPRESSIONS OF THE BOUNDS.
% The last sentence in Theorem \ref{thm:main} indicates that the bounds also hold with $P(A_{i}=1|S_{i})$ replaced by $P(A_{i}=1|S^{( n)})$. Our assumptions allow $P(A_{i}=1|S_{i})$ and $P(A_{i}=1|S^{( n)})$ to differ, so the fact that the bounds obtained with these two expression may not be immediately obvious. 
% %(by law of iterated expectations, $P(A_{i}=1|S_{i}) = E[P(A_{i}=1|S^{( n)})|S_{i}]$)
% We view both expressions of the bounds as valuable. The bounds involving $P(A_i = 1 \mid S_i)$ have the appealing feature that they coincide with those derived under the i.i.d.\ assumption. On the other hand, the bounds based on $P(A_i = 1 \mid S^{(n)})$ are more amenable to inference, as this quantity is a primitive object specified directly by the CAR mechanism.
% MENTION THAT IF $P(D=0|A=1)=0$ AND $P(D=1|A=0)=0$, THEN IT IS POINT IDENTIFIED, AND IT COINCIDES WITH LATE.

Finally, the proof of Theorem \ref{thm:main} presented in Appendix \ref{sec:appendix1ATE} reveals that $\theta _{L}({\bf P})$ and $\theta _{H}({\bf P})$ only depend on the underlying data distribution ${\bf Q}$. That is, the treatment assignment mechanism ${\bf G}$ plays no role in determining the sharp bounds on the ATE. 
%In other words, the decisions on the CAR mechanism have no identification power for the ATE. 

\subsection{Inference}\label{sec:inferenceATE}
% INFERENCE STARTS HERE

Our goal in this section is to estimate the identified set for the ATE and provide a confidence interval for its true value. Following Theorem \ref{thm:main}, we propose the following estimators of the sharp bounds in \eqref{eq:bounds},
\begin{align}
\hat{\theta}_{L}~& \equiv ~\frac{1}{n}\sum_{i=1}^{n}\Big[ \frac{( Y_{i}D_{i}+Y_{L}( 1-D_{i}) ) A_{i}}{n_{A}( S_{i}) /n( S_{i}) }-\frac{( Y_{i}( 1-D_{i}) +Y_{H}D_{i}) ( 1-A_{i}) }{1-n_{A}( S_{i}) /n( S_{i}) }\Big] , \notag\\ %\label{eq:LBhat}\\
\hat{\theta}_{H}~& \equiv ~\frac{1}{n}\sum_{i=1}^{n}\Big[ \frac{( Y_{i}D_{i}+Y_{H}( 1-D_{i}) ) A_{i}}{n_{A}( S_{i}) /n( S_{i}) }-\frac{( Y_{i}( 1-D_{i}) +Y_{L}D_{i}) ( 1-A_{i}) }{1-n_{A}( S_{i}) /n( S_{i}) }\Big] .\label{eq:BoundsHat}
\end{align}
Note that \eqref{eq:BoundsHat} is the sample analog of \eqref{eq:bounds}, where the treatment assignment probabilities $\{P(A_{i}=1|S_{i}):i=1,\dots,n\} $ are estimated by the sample treatment frequencies $\{n_{A}( S_{i}) /n( S_{i}):i=1,\dots,n \}$.\footnote{The definitions in \eqref{eq:BoundsHat} require that $n_{A}(S_{i}) /n( S_{i}) \in (0,1)$ for all $i=1,\dots,n$. Lemma \ref{lem:den_not_zero} in the appendix shows that this occurs with probability approaching one (w.p.a.1).} An alternative bounds estimator could be obtained by estimating the treatment assignment probabilities with their target values $\{\pi_{A}( S_{i}) :i=1,\dots,n \}$. In Section \ref{sec:DiscussionATE}, we argue that the bounds estimator in \eqref{eq:BoundsHat} should be preferred, as it generates inference for the ATE that is relatively more robust, efficient, and powerful (all in a sense made precise in that section).

%We first discuss the intricacies related to inference stemming from the partial identification and the CAR treatment assignment. We then provide formal inference results.

% \textcolor{red}{Remind the reader here that $0/0=0$ here?}{If you are not referring to the footnote, I think it is overkill.}
% We comment on why this is our preferred choice in the next in the next few paragraphs.

% For reasons that we explain in the next few paragraphs, this is the best possible choice for this estimator. In particular, this dominates using the true propensity scores in \eqref{eq:BoundsHat}, which are known in the context of an RCT.

We now derive the asymptotic properties of the bounds estimator in \eqref{eq:BoundsHat}. The literature on inference in partially identified models convincingly argues that uniform asymptotic results are necessary to obtain approximations that adequately represent the finite-sample experiment of interest; e.g., see \cite{imbens/manski:2004,andrews/soares:2010}. 
% To this end, we derive uniform asymptotic results in the literature on inference in CAR.
% ; e.g., see \cite{imbens/manski:2004,andrews/soares:2010}. It is relevant to point out that our paper is the first to derive uniform asymptotic results in the literature on inference in CAR.
With this motivation in mind, we establish the uniform asymptotic distribution of our estimated bounds. Let $\mathcal{P}_1 =\mathcal{P}(Y_L,Y_H,\xi,\varepsilon) $ denote the set of probabilities ${\bf P}$ generated by $( {\bf  Q},{\bf  G})$ that satisfy Assumptions \ref{ass:1}(a)-(c) and \ref{ass:2}(a)-(c),(e). Theorem \ref{thm:AsyDist_ATE} in the appendix shows that
\begin{equation}
\sqrt{n}~\left(
\begin{array}{c}
\hat{\theta}_{L}-\theta_{L}({\bf P})\\
\hat{\theta}_{H}-\theta_{H}({\bf P})
\end{array}
\right)~\overset{d}{\to}~N( {\bf 0}_{2} , \Sigma_{\theta}({\bf P}))\text{~~~uniformly in ${\bf P} \in \mathcal{P}_1$}.
\label{eq:asyDistBoundsMainText}
\end{equation}
Two aspects of \eqref{eq:asyDistBoundsMainText} are worth highlighting. First, the asymptotic variance $\Sigma_{\theta}({\bf P})$ accounts for the use of CAR in treatment assignment and typically differs from the asymptotic variance obtained under SRS. Second, the uniformity of our results contrasts with the typical asymptotic approximations in the CAR literature, which are pointwise in nature. As discussed, this aspect of our derivations is crucial in the context of partial identification.
%, but, as explained, this aspect of our derivations is important in the context of partial identification.

As a corollary of \eqref{eq:asyDistBoundsMainText}, we now establish that $\hat{\Theta} _{I} \equiv [\hat{\theta}_{L},\hat{\theta}_{H}]$ is a uniformly consistent estimator of the identified set for the ATE.

\begin{theorem}\label{thm:consist}
%Let Assumptions \ref{ass:1}(a)-(c) and \ref{ass:2}(a)-(c) hold. 
For any $\varepsilon >0$, 
\begin{equation}
\underset{n\to \infty }{\lim}~\inf_{{\bf P} \in \mathcal{P}_1 }~{\bf P}(~d_{H}(\hat{\Theta}_{I},\Theta _{I}({\bf P}))\leq \varepsilon ~)~=~1,  \label{eq:consistency}
\end{equation}
where $d_H(U,V)$ denotes the Hausdorff distance between sets $U,V \subset \mathbb{R}$.
\end{theorem}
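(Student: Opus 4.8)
The plan is to derive the consistency of $\hat{\Theta}_I$ as a direct corollary of the uniform asymptotic normality in \eqref{eq:asyDistBoundsMainText}, which we may assume from Theorem \ref{thm:AsyDist_ATE}. The key observation is that for intervals on the real line, the Hausdorff distance admits the simple characterization
\begin{equation*}
d_H([\hat{\theta}_L,\hat{\theta}_H],[\theta_L({\bf P}),\theta_H({\bf P})]) ~=~ \max\{|\hat{\theta}_L - \theta_L({\bf P})|,~|\hat{\theta}_H - \theta_H({\bf P})|\},
\end{equation*}
provided both intervals are nonempty (which holds for $\hat{\Theta}_I$ w.p.a.1 since $\theta_L({\bf P}) \le \theta_H({\bf P})$ and the estimated endpoints concentrate around them). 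Thus controlling the Hausdorff distance reduces to uniformly controlling the two endpoint deviations $|\hat{\theta}_L - \theta_L({\bf P})|$ and $|\hat{\theta}_H - \theta_H({\bf P})|$.

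First I would invoke \eqref{eq:asyDistBoundsMainText} to obtain that $\sqrt{n}(\hat{\theta}_L - \theta_L({\bf P}), \hat{\theta}_H - \theta_H({\bf P}))'$ converges in distribution to $N({\bf 0}_2, \Sigma_\theta({\bf P}))$ uniformly in ${\bf P} \in \mathcal{P}_1$. The uniform convergence in distribution, combined with the $\sqrt{n}$ scaling, yields uniform consistency of each endpoint: for any fixed $\varepsilon > 0$,
\begin{equation*}
\inf_{{\bf P} \in \mathcal{P}_1} {\bf P}\big(\max\{|\hat{\theta}_L - \theta_L({\bf P})|,~|\hat{\theta}_H - \theta_H({\bf P})|\} \le \varepsilon\big) ~\to~ 1.
\end{equation*}
To make this rigorous from uniform convergence in distribution rather than a stronger uniform moment bound, I would use the standard device of bounding the probability of the deviation event via a continuous bounded test function (or a Portmanteau-type argument applied uniformly), exploiting that the limiting normal law places negligible mass outside any fixed neighborhood once rescaled by $\sqrt{n}$. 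Concretely, $\max\{|\hat{\theta}_L - \theta_L|,|\hat{\theta}_H-\theta_H|\} > \varepsilon$ is implied by $\sqrt{n}\,\|(\hat{\theta}_L-\theta_L,\hat{\theta}_H-\theta_H)\| > \sqrt{n}\,\varepsilon$, and since $\sqrt{n}\,\varepsilon \to \infty$ while the rescaled vector is uniformly asymptotically tight (its limit law is Gaussian with variance bounded uniformly over $\mathcal{P}_1$), this probability vanishes uniformly.

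The main obstacle is the uniformity over $\mathcal{P}_1$: pointwise consistency would follow trivially, but \eqref{eq:consistency} requires an $\inf_{{\bf P}}$ inside the limit. The crux is therefore to ensure that the covariance matrices $\Sigma_\theta({\bf P})$ are uniformly bounded above over $\mathcal{P}_1$, so that the asymptotic tightness of $\sqrt{n}(\hat{\theta}_L-\theta_L,\hat{\theta}_H-\theta_H)'$ is uniform and no sequence ${\bf P}_n \in \mathcal{P}_1$ can escape to produce a nonvanishing deviation probability. This uniform upper bound on $\Sigma_\theta({\bf P})$ should follow from Assumption \ref{ass:1}(a) (the potential outcomes lie in the bounded interval $[Y_L,Y_H]$) together with Assumption \ref{ass:2}(c) (the limiting assignment fractions $\pi_A(s)$ are bounded away from $0$ and $1$ by $\varepsilon$), which jointly bound all the terms entering the summands of $\hat{\theta}_L$ and $\hat{\theta}_H$ and hence their variances. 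Since the argument in the preceding paragraph only uses this uniform tightness, once the uniform bound on $\Sigma_\theta({\bf P})$ is in hand the conclusion \eqref{eq:consistency} follows. I would structure the proof to first state the Hausdorff-distance reduction for intervals, then record the uniform convergence in distribution from Theorem \ref{thm:AsyDist_ATE}, and finally carry out the $\sqrt{n}$-rescaling argument to convert uniform tightness into uniform endpoint consistency.
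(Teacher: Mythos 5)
Your proposal is correct and follows essentially the same route as the paper: reduce the Hausdorff distance for the interval estimates to the two endpoint deviations $|\hat{\theta}_L-\theta_L(\mathbf{P})|$ and $|\hat{\theta}_H-\theta_H(\mathbf{P})|$, then conclude uniform endpoint consistency from the uniform asymptotic normality in Theorem \ref{thm:AsyDist_ATE} (the paper uses only the one-directional inclusion of events rather than the exact $\max$-identity, and leaves the $\sqrt{n}$-rescaling/uniform-tightness step implicit). Your added observation that the argument needs $\Sigma_\theta(\mathbf{P})$ uniformly bounded over $\mathcal{P}_1$ is indeed the ingredient that makes the implicit step rigorous, and it is exactly what the last part of Theorem \ref{thm:AsyDist_ATE} supplies.
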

%%%%

% \textcolor{red}{I am wondering whether this should be a theorem at all. It is obvious to me from (3.4) and Thm B1. But maybe this is just me}{It is the only consistency result in the paper, i.e., consistency to the IS, which is what you care about. I would not dispute that it is obvious from Thm B1.}

Our next result in this section proposes a CI for the ATE under our assumptions. To this end, Theorem \ref{thm:AsyVarEstimation} provides a uniformly consistent estimator of the asymptotic variance $\Sigma_{\theta}({\bf P})$ in \eqref{eq:asyDistBoundsMainText}, given by
\begin{equation}
    \hat{\Sigma}_{\theta} ~\equiv~ \bigg( 
\begin{array}{cc}
\hat\sigma _{L}^{2} &\hat \sigma _{HL} \\ 
\hat\sigma _{HL} & \hat\sigma _{H}^{2}
\end{array}
 \bigg),
 \label{eq:Sigma_Hat}
\end{equation}
where $(\hat\sigma _{L}^{2} ,\hat\sigma _{H}^{2}, \hat\sigma _{HL})$ are defined in \eqref{eq:AsyDistATE_var1}. Following our earlier comments, we note that these estimators account for the use of CAR in treatment assignment.  With these estimators in place, we propose using the CI for the ATE given by $CI_{\alpha }^{2}$ in \cite{stoye:2009}, i.e.,
\begin{equation}
    \hat{C}_{\theta}\left( 1-\alpha \right) ~\equiv~\Big[~ \hat{\theta}_{L}-\frac{\hat{\sigma}_{L} \hat{c}_{L}}{\sqrt{n}}~,~\hat{\theta}_{H}+\frac{\hat{\sigma}_{H} \hat{c}_{H}}{\sqrt{n}}~\Big] ,
\label{eq:CSstoye}
\end{equation}
where $( \hat{c}_{L},\hat{c}_{H}) $ are the minimizers of $\hat{\sigma}_{L}c_{L}+\hat{\sigma}_{H}c_{H}$ subject to
\begin{align}
P\Big( -c_{L}\leq Z_{1}~\cap~\tfrac{\hat{\sigma}_{HL}}{\hat{\sigma}_{H}\hat{\sigma}_{L}}Z_{1}\leq c_{H}+\tfrac{\sqrt{n}( \hat{\theta}_{H}-\hat{\theta}_{L}) }{\hat{\sigma}_{H}}+Z_{2}\big({1-\tfrac{\hat{\sigma}_{HL}^2}{\hat{\sigma}_{H}^2\hat{\sigma}_{L}^2}} \big)^{1/2}\Big| X^{(n)} \Big)  
&\geq 1-\alpha  \notag\\
P\Big( -c_{L}-\tfrac{\sqrt{n}( \hat{\theta}_{H}-\hat{\theta}_{L}) }{\hat{\sigma}_{L}}-Z_{2}\big({1-\tfrac{\hat{\sigma}_{HL}^2}{\hat{\sigma}_{H}^2\hat{\sigma}_{L}^2}} \big)^{1/2}
\leq \tfrac{\hat{\sigma}_{HL}}{\hat{\sigma}_{H}\hat{\sigma}_{L}}Z_{1}~\cap~Z_{1}\leq c_{H} \Big| X^{(n)} \Big)  
&\geq 1-\alpha ,\label{eq:constrainsStoye}
\end{align}
with $(Z_{1},Z_{2}) \sim N({\bf 0}_2,{\bf I}_2)$. Under our assumptions, \cite{stoye:2009} shows that $CI_{\alpha }^{2}$ is the shortest confidence interval with the correct asymptotic nominal size.
%Usually, $c_{L}$ and $c_{H}$ are obtained by the fact that the inequalities in \eqref{eq:constrainsStoye} hold with equality.
% We note that \eqref{eq:CSstoye} corresponds to $CI_{\alpha }^{2}$ in \cite{stoye:2009}. 
The next result establishes that the CI in \eqref{eq:CSstoye} is asymptotically uniformly valid and exact.
\begin{theorem}\label{thm:Stoye}
The CI in \eqref{eq:CSstoye} satisfies
\begin{equation}
    \underset{n\to \infty }{\lim}~\inf_{{\bf P}\in \mathcal{P}_1 }~\inf_{\theta \in \Theta_{I}({\bf P})}~{\bf P}( \theta \in \hat{C}_{\theta}( 1-\alpha )) ~=~ 1-\alpha ,\label{eq:coverage}
\end{equation}
where $\Theta_{I}({\bf P})$ is as in \eqref{eq:bounds}.
\end{theorem}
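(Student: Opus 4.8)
The plan is to combine the two high-level inputs supplied by the appendix results and then upgrade a pointwise coverage analysis to a uniform one. Theorem \ref{thm:AsyDist_ATE} delivers the joint asymptotic normality \eqref{eq:asyDistBoundsMainText} \emph{uniformly} in ${\bf P}\in\mathcal{P}_1$, and Theorem \ref{thm:AsyVarEstimation} provides a uniformly consistent variance estimator $\hat\Sigma_\theta$ for $\Sigma_\theta({\bf P})$. Together with the observation that Assumption \ref{ass:1}(c) (through Assumption \ref{ass:2}(c)) keeps the endpoint variances $\sigma_L^2({\bf P}),\sigma_H^2({\bf P})$ bounded away from zero while $Y_H-Y_L<\infty$ keeps them bounded above, these are precisely the primitive conditions under which \cite{stoye:2009} analyzes $CI_\alpha^2$. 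The effect of these two theorems is to reduce the coverage problem for our non-i.i.d.\ CAR setting to a ``reduced-form'' statement about a bivariate normal limit, which is exactly Stoye's environment; the substantive work is to verify that the reduction holds uniformly.

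To pass from pointwise to uniform coverage I would use a subsequence (contradiction) argument of the type now standard in this literature (e.g.\ \cite{andrews/soares:2010}). Suppose \eqref{eq:coverage} fails. Then there exist sequences ${\bf P}_n\in\mathcal{P}_1$ and $\theta_n\in\Theta_I({\bf P}_n)$ along which the coverage ${\bf P}_n(\theta_n\in\hat C_\theta(1-\alpha))$ does not converge to $1-\alpha$: for the validity half one extracts a subsequence converging to some $c<1-\alpha$, and for the exactness half one exhibits a least-favorable sequence converging to $1-\alpha$. Passing to a further subsequence and using compactness after rescaling, I would arrange that the normalized lengths $\sqrt{n}(\theta_H({\bf P}_n)-\theta_L({\bf P}_n))/\sigma_{L,n}$ and $\sqrt{n}(\theta_H({\bf P}_n)-\theta_L({\bf P}_n))/\sigma_{H,n}$, the correlation $\rho_n\equiv\sigma_{HL,n}/(\sigma_{L,n}\sigma_{H,n})$, the ratio $\sigma_{H,n}/\sigma_{L,n}$, and the normalized position of $\theta_n$ inside $[\theta_L({\bf P}_n),\theta_H({\bf P}_n)]$ all converge to limits, with the lengths possibly diverging to $+\infty$.

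Along such a subsequence, the studentized endpoint statistics $\sqrt{n}(\hat\theta_L-\theta_L({\bf P}_n))/\hat\sigma_L$ and $\sqrt{n}(\hat\theta_H-\theta_H({\bf P}_n))/\hat\sigma_H$ converge jointly to a bivariate normal with correlation $\rho_\infty$, by the uniform CLT of Theorem \ref{thm:AsyDist_ATE}, the uniform consistency $\hat\Sigma_\theta\overset{p}{\to}\Sigma_\theta({\bf P}_n)$ of Theorem \ref{thm:AsyVarEstimation}, and Slutsky's theorem. The data-dependent critical values $(\hat c_L,\hat c_H)$ defined through \eqref{eq:constrainsStoye} then converge to the deterministic solution $(c_L^*,c_H^*)$ of the limiting program, in which $\hat\sigma_{HL}/(\hat\sigma_H\hat\sigma_L)$ and the normalized lengths are replaced by their limits. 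Rewriting the event $\{\theta_n\in\hat C_\theta(1-\alpha)\}$ in terms of these limiting objects, its probability converges to the value guaranteed by Stoye's characterization of $CI_\alpha^2$, namely exactly $1-\alpha$ in the least-favorable configuration. This contradicts $c<1-\alpha$ and simultaneously supplies the exactness direction.

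The hard part will be the behavior across the transition between regimes. When the normalized length diverges the interval is ``long'' and each endpoint is covered one-sidedly (critical value tending to $z_{1-\alpha}$), whereas when it tends to zero the interval collapses to a point and coverage becomes effectively two-sided (critical value tending to $z_{1-\alpha/2}$). The delicate steps are (i) showing that the constrained minimization defining $(\hat c_L,\hat c_H)$ admits a solution that is continuous in the nuisance parameters $(\rho,\sigma_H/\sigma_L)$ and the normalized lengths, and that this solution converges to the right limit even as the lengths diverge; and (ii) verifying that the worst-case location of $\theta_n$ within the identified interval does not slip through the cracks of this interpolation. These are exactly the features \cite{stoye:2009} establishes for $CI_\alpha^2$ in the i.i.d.\ case, so my contribution is to confirm that the CAR setting meets his conditions uniformly, a task that Theorems \ref{thm:AsyDist_ATE} and \ref{thm:AsyVarEstimation} are designed to discharge.
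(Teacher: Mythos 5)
You take essentially the same route as the paper: both proofs reduce \eqref{eq:coverage} to \citet[Proposition 1]{stoye:2009}, with the uniform asymptotic normality of $(\hat{\theta}_{L},\hat{\theta}_{H})$ (Theorem \ref{thm:AsyDist_ATE}) and the uniform consistency of $\hat{\Sigma}_{\theta}$ (Theorem \ref{thm:AsyVarEstimation}) supplying Stoye's Assumption 1(i), and the variance bounds you mention (positive lower bound via Assumption \ref{ass:1}(c), finite upper bound via Assumption \ref{ass:1}(a)) supplying his Assumption 1(ii). The subsequence/contradiction machinery you sketch is not a separate contribution: Stoye's Proposition 1 is itself a uniform statement, and the regime transitions and critical-value continuity that you flag as ``the hard part'' are exactly what his proof already handles, so re-deriving them adds length but no content. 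The paper's proof is correspondingly three lines long.

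There is, however, one genuine omission. Stoye's Proposition 1 requires not only his Assumption 1 but also his Assumption 3, which disciplines the estimator of the interval length $\hat{\theta}_{H}-\hat{\theta}_{L}$ near point identification (ruling out the superefficiency failures he documents); this is the condition that makes the interpolation between the one-sided and two-sided regimes uniformly valid, and it does \emph{not} follow from the two theorems you cite. The paper verifies it by observing that $\hat{\theta}_{H}\geq \hat{\theta}_{L}$ holds by construction --- from \eqref{eq:BoundsHat}, $\hat{\theta}_{H}-\hat{\theta}_{L}$ is a sample average of terms proportional to $Y_{H}-Y_{L}\geq 0$ with positive weights --- and then invoking \citet[Lemma 3]{stoye:2009}, which shows that almost-sure ordering of the estimated bounds implies his Assumption 3. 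Your write-up gestures at the underlying issue in point (ii) of your last paragraph but never verifies this condition; without it, the uniform exactness claim in \eqref{eq:coverage} does not follow from Stoye's result. The fix is one line, but it has to be there.
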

%%%%

%\footnote{We do not consider $CI_{\alpha }^{3}$ because $\hat{\theta}_{H}\geq hat{\theta}_{L}$ and so our bounds cannot cross. In this case, $CI_{\alpha }^{3}=CI_{\alpha }^{2}$, but $CI_{\alpha }^{2}$ has the advantage that it does not require discussion of additional tuning parameters.}

%\footnote{We do not consider $CI_{\alpha }^{3}$ because $\hat{\theta}_{H}\geq hat{\theta}_{L}$ and so our bounds cannot cross. In this case, $CI_{\alpha }^{3}=CI_{\alpha }^{2}$, but $CI_{\alpha }^{2}$ has the advantage that it does not require discussion of additional tuning parameters.}

The proof of Theorem \ref{thm:Stoye} follows from applying \citet[Proposition 1]{stoye:2009} to our CAR setup. The main ingredients for this result are the uniform convergence in distribution of our estimated bounds (Theorem \ref{thm:AsyDist_ATE}) and the uniform consistency of our estimators of the asymptotic variance (Theorem \ref{thm:AsyVarEstimation}). As previously emphasized, the uniform convergence results are novel in the CAR framework, yet essential for inference on partially identified parameters such as the ATE.

\subsection{Discussion}\label{sec:DiscussionATE}

As already mentioned, the bounds in \eqref{eq:BoundsHat} estimate the treatment assignment probabilities $\{P(A_{i}=1|S_{i}):i=1,\dots,n\} $ in \eqref{eq:bounds} with the sample treatment frequencies $\{n_{A}( S_{i}) /n( S_{i}):i=1,\dots,n \}$. In an RCT, it is not uncommon to know the target probabilities $\{\pi_{A}(s):s \in \mathcal{S} \}$. In principle, this information would also allow us to construct {\it alternative estimated bounds} that estimate the treatment assignment probabilities in \eqref{eq:bounds} with 
% either themselves (i.e., plugging in $\{P(A_{i}=1|S_{i}):i=1,\dots,n\} $ directly) or 
their target probabilities $\{\pi_{A}( S_i):i=1,\dots,n \}$. 
% Let alternative estimated bounds 1 refer to the alternative estimated bounds based on $\{P(A_{i}=1|S_{i}):i=1,\dots,n\} $ and alternative estimated bounds 2 refer to those based on $\{\pi_{A}( S_i):i=1,\dots,n \}$. 
This section explains that the estimated bounds in \eqref{eq:BoundsHat} are demonstrably better than either of these alternative estimated bounds in several dimensions.

The first step to analyzing the limiting properties of the alternative estimated bounds is to derive their uniform asymptotic distribution. 
Under Assumptions \ref{ass:1}(a)-(c) and \ref{ass:2}(a)-(c),(e)-(f), Theorem \ref{thm:AsyDist_ATEother} in the Appendix derives the uniform asymptotic distribution of the alternative estimated bounds. This result reveals that the alternative estimated bounds have a uniform asymptotic distribution as in \eqref{eq:asyDistBoundsMainText}, but with ${\Sigma}_{\theta}({\bf P})$ replaced by a (weakly) larger asymptotic variance $\tilde{\Sigma}_{\theta}({\bf P})$ in the sense that $\tilde{\Sigma}_{\theta}({\bf P}) - {\Sigma}_{\theta}({\bf P})$ is positive semidefinite. In particular, $\tilde{\Sigma}_{\theta}({\bf P})$ depends on the parameter $\{\tau(s):s \in \mathcal{S}\}$ in Assumption \ref{ass:2}(f) characterizing the amount of dispersion of the CAR mechanism.\footnote{In fact, $\tilde{\Sigma}_{\theta}({\bf P}) - {\Sigma}_{\theta}({\bf P})$ can be positive definite if $\tau(s)>0$ for some $s\in \mathcal{S}$.} This result implies that the estimated bounds in \eqref{eq:BoundsHat} are better than the alternative estimated bounds in three ways.
% 1 and 2 in a few ways.

First, the asymptotic analysis of the estimated bounds in \eqref{eq:BoundsHat} requires fewer assumptions than that of the alternative estimated bounds. In particular, the analysis of the former is established uniformly for $P \in \mathcal{P}_1$, which does not require Assumption \ref{ass:2}(f). This implies that the asymptotic behavior of the estimated bounds in \eqref{eq:BoundsHat} is robust to the details of the CAR mechanism. On the other hand, the asymptotic analysis of the alternative estimated bounds requires Assumption \ref{ass:2}(f), and so they do not enjoy this robustness property.

Second, the estimated bounds in \eqref{eq:BoundsHat} are (weakly) more efficient than the alternative estimated bounds. This follows from the fact that the estimated bounds in \eqref{eq:BoundsHat} are asymptotically distributed according to $N({\bf 0}_2,{\Sigma}_{\theta}({\bf P}))$, the alternative estimated bounds are asymptotically distributed according to $N({\bf 0}_2,\tilde{\Sigma}_{\theta}({\bf P}))$, and $\tilde{\Sigma}_{\theta}({\bf P}) - {\Sigma}_{\theta}({\bf P})$ is positive semidefinite. That is, estimating treatment probability using sample treatment frequencies is more efficient than plugging in the target probabilities. This finding resembles the semiparametric efficiency results for point-identified ATE estimator in \cite{hahn:1998} and \cite{hirano/imbens/ridder:2003}, but is novel in the context of partially identified inference for the ATE under imperfect compliance. 

Third, inference based on the estimated bounds in \eqref{eq:BoundsHat} is more powerful than the inference based on the alternative estimated bounds. This finding is based on results developed in our related work in \cite{bugni/gao/obradovic/velez:2024b}. In that paper, we study the asymptotic power properties of \cite{stoye:2009}'s CIs for sequences of local alternative hypotheses and show that asymptotically normal bounds estimators with higher variance-covariance matrices (in a positive semidefinite sense) have (weakly) lower rejection rates.\footnote{This result is not obvious in \cite{stoye:2009}'s framework, as the CI based on less efficient estimated bounds does not necessarily include the CI based on the more efficient estimated bounds.} In combination with the efficiency results described in the last paragraph, this allows us to conclude that for all sequences of local alternative hypotheses, the limiting rejection rate of the CI in \eqref{eq:CSstoye} is (weakly) larger than that of the CI related to the alternative estimated bounds.

\section{Average treatment effect on the treated (ATT)}\label{sec:att}

This section studies identification and inference on the ATT, given by
\begin{align}
\upsilon
%( {\bf Q},{\bf G})
~\equiv ~E[ Y_i(1)-Y_i(0)|D_i=1 ]. \label{eq:ATT_defn}
%\theta( {\bm Q},{\bm G}) ~=~E_{n}[ Y_{i}(1)-Y_{i}(0)|D_{i}=0 ] .\label{eq:ATTandU}
\end{align}
By definition, $\upsilon$ represents the treatment effect for the sub-population that (endogenously) adopted the treatment. This sub-population is composed of compliers assigned to treatment and always takers assigned to control.
%individuals assigned to treatment who accepted their assignment and individuals assigned to control who overturned their assignment.

By definition, the ATT is the average treatment effect conditional on the decision $D_i=1$. Since $D_i = D_i(A_i)$, this reveals that the ATT depends on both ${\bf Q}$ and ${\bf G}$. This contrasts with the ATE, which only depends on ${\bf Q}$. 
% The analysis in this section will require the full force of Assumptions \ref{ass:1} and \ref{ass:2}.

There are a few subtleties to the definition of ATT in the context of CAR. First, if we allow CAR mechanisms that determine arbitrary treatment assignment probabilities $\{ P( A_i=1|S^{( n) })\}_{i=1}^{n}$, it is possible for the ATT to vary with $i=1,\dots,n$. Assumption \ref{ass:2}(d) avoids this possibility by assuming that $P( A_i=1|S_i)$ does not depend on $i$. Second, the definition of the ATT requires $P(D_{i}=1)>0$ for all $i=1,\dots,n$. This result can be shown based on Assumptions \ref{ass:1}(d) and \ref{ass:2}(b).
% \begin{align}
% P_{n}(D_{i}=1) &=E_{n}[ P_{n}( D( 1) =1,D( 0) =0|S) P_{n}( A_i=1|S^{( n) }) +P_{n}( D(0) = 1|S) ]>0\notag\\
% P_{n}(D_{i}=0)&= E_{n}[ P_{n}( D( 1) =1,D( 0) =0|S) P_{n}( A_i=0|S^{( n) }) +P_{n}( D(1) = 0|S) ]>0,\label{eq:ATTandU}
% \end{align}

\subsection{Identification}

The following result characterizes the identified set of the ATT.

\begin{theorem}\label{thm:mainATT} 
Under Assumptions \ref{ass:1} and \ref{ass:2}(a)-(d), the identified set for the ATT is $\Upsilon _{I}({\bf P})= [\upsilon_{L}({\bf P}),\upsilon_{H}({\bf P})],$ where 
{\small
\begin{align} 
\upsilon _{L}( {\bf P}) 
& \equiv \frac{1}{G} E\Bigg[  \left( \frac{Y_i A_i}{P(A_i =1 | S_i)} - \frac{Y_i(1-A_i)}{1- P(A_i = 1 | S_i)} \right) P(A_i =1 | S_i) +  \frac{(Y_i-Y_H)D_i(1-A_i)}{1 - P(A_i = 1 | S_i)}  \Bigg],  \notag \\ 
\upsilon_{H}( {\bf P})
& \equiv \frac{1}{G} E\Bigg[  \left( \frac{Y_i A_i}{P(A_i =1 | S_i)} - \frac{Y_i(1-A_i)}{ 1 - P(A_i = 1 | S_i)} \right) P(A_i =1 | S_i) +  \frac{(Y_i-Y_L)D_i(1-A_i)}{1 - P(A_i = 1 | S_i)}  \Bigg],  \label{eq:bounds_ATT}
\end{align}
}
and
{\small
\begin{align} 
G = E \left[ \left( \frac{D_i A_i}{ P(A_i =1 | S_i) } - \frac{D_i (1-A_i)}{1 - P(A_i = 1 | S_i) } \right)  P(A_i =1 | S_i)  + \frac{D_i (1-A_i)}{1 - P(A_i = 1 | S_i)}  \right]. 
\label{eq:bounds_ATT_G}
\end{align}
}
\end{theorem} 

%%%%%%%%%%%%%%%%%%%%%%%%%%
At this point, we reiterate several remarks made after Theorem \ref{thm:main}. 
% PARAGRAPH SAYS THAT THE RESULT IS NEW
Theorem \ref{thm:mainATT} accommodates treatment assignment via CAR methods, implying that the data need not be i.i.d. Consequently, Theorem \ref{thm:mainATT} does not follow from existing results on the identification of the ATT under i.i.d.\ sampling, e.g., \cite{heckman:2000,huber:2017}. In fact, Theorem \ref{thm:mainATT} is the first characterization of the identified set of the ATT under the CAR framework with imperfect compliance.

% PARAGRAPH SAYS THAT THE RESULT IS COINCIDES WITH IID, BUT REITERATES THAT THIS IS NEW AND GIVES A TASTE OF THE PROOF.
When data are indeed i.i.d., the bounds in Theorem \ref{thm:mainATT} coincide with the sharp bounds on the ATT. As in the case of the ATE, this shows that the relaxation of i.i.d.\ sampling permitted by CAR treatment assignment does not alter the structure of the sharp bounds. We emphasize that this result does not follow from existing literature, which derives such bounds exclusively under i.i.d.\ assumptions. Accommodating CAR requires a distinct proof based on the joint data distribution.

% HERE WE EXPLAIN THAT THE RHS IS NOT INDIVIDUAL SPECIFIC: THIS POINT IS NOT TRIVIAL IN THE SENSE THAT ARE OUR DATA ARE NOT IID.
The equalities in \eqref{eq:bounds_ATT} hold for any $i=1,\dots,n$. As with the ATE, all individuals in the sample have the same the identified set for the ATT, and establishing this requires proof when the data need not be i.i.d. We also note that Theorem \ref{thm:mainATT} remains unchanged if we restrict attention to Assumptions \ref{ass:1}(a)–(b),(d) and \ref{ass:2}(a)–(b),(d). In other words, Assumptions \ref{ass:1}(c) and \ref{ass:2}(c),(e) do not contribute to the identification of the ATT, but play a role when conducting inference in the next section. Compared to the result for the ATE, the sharp bounds for the ATT additionally require Assumptions \ref{ass:1}(d) and \ref{ass:2}(d). Assumption \ref{ass:1}(d) ensures the presence of compliers, which guarantees that $G$ in \eqref{eq:bounds_ATT_G} is positive. In turn, we require Assumption \ref{ass:2}(d) to obtain that all individuals share a common ATT.

Finally, the proof of Theorem \ref{thm:mainATT} presented in Appendix \ref{sec:appendix1ATT} reveals that $\upsilon _{L}( {\bf P}) $ and $\upsilon _{H}( {\bf P}) $ depend on both ${\bf Q}$ and ${\bf G}$. In contrast to the ATE, the treatment assignment mechanism ${\bf G}$ plays a role in determining the sharp bounds on the ATT. 
% In other words, the decisions on the CAR mechanism have identification power for the ATT. 

% The proof in Appendix A.3 reveals that the true ATT and the sharp upper/lower bounds only depend on G through { P(A_i =1| S_i =s) }_s. 

\subsection{Inference}\label{sec:inferenceATT}

We now turn to constructing an estimator of the identified set for the ATT and a confidence interval for its true value. The bounds in \eqref{eq:bounds_ATT} motivate the following:
\begin{align}
    \hat{\upsilon}_{L} ~&\equiv~ \frac{1}{\hat{G}n } \sum_{i=1}^n \left[ \left( \tfrac{Y_{i}A_{i}}{n_{A}\left( S_{i}\right) /n\left( S_{i}\right)} - \tfrac{Y_{i} \left(1-A_{i} \right)}{1- n_{A}\left( S_{i}\right) /n\left( S_{i}\right)}  \right) {\pi}_{A}\left( S_{i} \right)  + \tfrac{\left(Y_{i}-Y_{H}\right)D_{i}\left(1-A_{i}\right) }{1-n_{A}\left( S_{i}\right) /n\left( S_{i}\right)} \right]     \notag\\
    \hat{\upsilon}_{H} ~&\equiv~ \frac{1}{\hat{G}n } 
    \sum_{i=1}^n \left[ \left( \tfrac{Y_{i}A_{i}}{n_{A}\left( S_{i}\right) /n\left( S_{i}\right)} - \tfrac{Y_{i} \left(1-A_{i} \right)}{1- n_{A}\left( S_{i}\right) /n\left( S_{i}\right)}  \right){\pi}_{A}\left( S_{i} \right) 
    + \tfrac{\left(Y_{i}-Y_{L}\right)D_{i}\left(1-A_{i}\right) }{1-n_{A}\left( S_{i}\right) /n\left( S_{i}\right)} \right],
    \label{eq:BoundsHat_ATT}
\end{align}
and 
\begin{equation}
    \hat{G}~\equiv~ \frac{1}{n} \sum_{i=1}^n \left[ \left(\tfrac{D_{i}A_{i}}{n_{A}\left( S_{i}\right) /n\left( S_{i}\right)} - \tfrac{D_{i}\left(1-A_{i}\right)}{1-n_{A}\left( S_{i}\right) /n\left( S_{i}\right)}\right) {\pi}_{A}\left( S_{i} \right) + \tfrac{D_{i}\left(1-A_{i}\right)}{1-n_{A}\left( S_{i}\right) /n\left( S_{i}\right)} \right].
   \label{eq:G_hat}
\end{equation}
It is clear that the expression in \eqref{eq:BoundsHat_ATT} are the sample analog of those in \eqref{eq:bounds_ATT}, where the treatment assignment probabilities $\{P(A_{i}=1|S_i):i=1,\dots,n\} $ that appear in the denominator are replaced by the sample treatment frequencies $\{n_{A}( S_{i})/n(S_i):i=1,\dots,n \}$ and the ones in the numerator are replaced by $\{\pi_{A}( S_{i}):i=1,\dots,n \}$. This construction requires the researcher to know $\{\pi_{A}( s):s \in \mathcal{S}\}$, which is reasonable in the context of an RCT. If these were unknown, one would replace them with $\{n_{A}( S_{i})/n(S_i):i=1,\dots,n \}$ but, as we explain in Section \ref{sec:DiscussionATT}, this leads to inference for the ATT that is relatively less robust, efficient, and powerful (in a sense made precise in that section).

We now derive the uniform asymptotic properties of the bounds estimator in \eqref{eq:bounds_ATT}.
%If the limiting treatment assignment probability $\{\pi_A(s):s\in \mathcal{S}\}$ is known, then we recommend setting $\hat{\pi}_A(S_i) = \pi_A(S_i)$ for all $i=1,\dots,n$. 
To this end, let $\mathcal{P}_2=\mathcal{P}(Y_{L},Y_{H},\xi ,\varepsilon )$ denote the set of probabilities $\mathbf{P}$ generated by $(\mathbf{Q},\mathbf{G})$ that satisfy Assumptions \ref{ass:1} and \ref{ass:2}(a)-(e).\footnote{Since $\mathcal{P}_2 \subset \mathcal{P}_1$, our analysis for ATT requires stronger assumptions than that of the ATE.} Theorem \ref{thm:AsyDist_ATT} shows that
\begin{equation}
\sqrt{n}\left(
\begin{array}{c}
\hat{\upsilon}_{L}-\upsilon _{L}(\mathbf{P})\\\hat{\upsilon}_{H}-\upsilon _{H}(\mathbf{P})
\end{array}
\right)~\overset{d}{\to }~N(\mathbf{0}_{2},\Sigma _{\upsilon }(\mathbf{P}))\text{~~~uniformly in $\mathbf{P}\in \mathcal{P}_2$.}
\label{eq:asydist_ATT_main_text}
\end{equation}
As in the previous section, we note that $\Sigma_{\upsilon}(\mathbf{P})$ accounts for the use of CAR in treatment assignment and generally does not coincide with the asymptotic variance under i.i.d.\ sampling.
Based on \eqref{eq:asydist_ATT_main_text}, we conclude that $\hat{\Upsilon}_{I}\equiv [\hat{\upsilon}_{L},\hat{\upsilon}_{H}]$ is a uniformly consistent estimate of the identified set of the ATT. 

\begin{theorem}\label{thm:consist_ATT}
For any $\varepsilon >0$, 
\begin{equation}
\underset{n\to \infty }{\lim}~\inf_{{\bf P} \in \mathcal{P}_2 }~{\bf P}(~d_{H}(\hat{\Upsilon}_{I},\Upsilon _{I}(\mathbf{P}))\leq \varepsilon
~)~=~1.  \label{eq:consist_ATT}
\end{equation}
where $d_H(A,B)$ denotes the Hausdorff distance between sets $A, B\subset \mathbb{R}$.
\end{theorem}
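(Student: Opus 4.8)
The plan is to obtain Theorem \ref{thm:consist_ATT} as a direct corollary of the uniform asymptotic normality in \eqref{eq:asydist_ATT_main_text}, exactly paralleling the derivation of the analogous ATE statement in Theorem \ref{thm:consist}. The starting point is the elementary identity that, for two nonempty compact intervals on the real line, the Hausdorff distance reduces to the larger of the two endpoint discrepancies: on the event $\{\hat{\upsilon}_L \leq \hat{\upsilon}_H\}$ one has $d_H([\hat{\upsilon}_L,\hat{\upsilon}_H],[\upsilon_L(\mathbf{P}),\upsilon_H(\mathbf{P})]) = \max\{|\hat{\upsilon}_L - \upsilon_L(\mathbf{P})|,\,|\hat{\upsilon}_H - \upsilon_H(\mathbf{P})|\}$. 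First I would record this identity and note that $\{\hat{\upsilon}_L \leq \hat{\upsilon}_H\}$ holds with probability approaching one uniformly over $\mathcal{P}_2$ (since $\hat{\upsilon}_H - \hat{\upsilon}_L$ is consistent for $\upsilon_H(\mathbf{P}) - \upsilon_L(\mathbf{P}) \geq 0$). This reduces the complement of the target event to a union bound, $\mathbf{P}(d_H > \varepsilon) \leq \mathbf{P}(|\hat{\upsilon}_L - \upsilon_L(\mathbf{P})| > \varepsilon) + \mathbf{P}(|\hat{\upsilon}_H - \upsilon_H(\mathbf{P})| > \varepsilon) + o(1)$, uniformly in $\mathbf{P}$.

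Second, I would show that each endpoint difference is uniformly $o_p(1)$, i.e. $\lim_{n}\sup_{\mathbf{P}\in\mathcal{P}_2}\mathbf{P}(|\hat{\upsilon}_L - \upsilon_L(\mathbf{P})| > \varepsilon) = 0$ and likewise for $\hat{\upsilon}_H$. This is where \eqref{eq:asydist_ATT_main_text} does the work: the uniform convergence of $\sqrt{n}(\hat{\upsilon}_L - \upsilon_L(\mathbf{P}),\,\hat{\upsilon}_H - \upsilon_H(\mathbf{P}))$ to a mean-zero Gaussian with variance $\Sigma_\upsilon(\mathbf{P})$ that is bounded uniformly over $\mathcal{P}_2$ renders the sequence $\sqrt{n}(\hat{\upsilon}_\cdot - \upsilon_\cdot(\mathbf{P}))$ uniformly tight. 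Writing $\mathbf{P}(|\hat{\upsilon}_L - \upsilon_L(\mathbf{P})| > \varepsilon) = \mathbf{P}(\sqrt{n}\,|\hat{\upsilon}_L - \upsilon_L(\mathbf{P})| > \sqrt{n}\,\varepsilon)$ and letting $n\to\infty$ so that $\sqrt{n}\,\varepsilon$ eventually exceeds any fixed tightness threshold, the tail probability is forced to zero uniformly. To make the uniformity airtight I would argue by contradiction along subsequences: were the supremum not to vanish, there would exist $\delta > 0$, a subsequence $n_k$, and $\mathbf{P}_{n_k}\in\mathcal{P}_2$ with $\mathbf{P}_{n_k}(\sqrt{n_k}\,|\hat{\upsilon}_L - \upsilon_L(\mathbf{P}_{n_k})| > \sqrt{n_k}\,\varepsilon) \geq \delta$; but \eqref{eq:asydist_ATT_main_text} gives convergence in distribution along this sequence to a (tight) Gaussian, and since $\sqrt{n_k}\,\varepsilon \to \infty$ the tail must tend to zero, a contradiction. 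Feeding the two endpoint statements back into the union bound yields $\lim_{n}\sup_{\mathbf{P}\in\mathcal{P}_2}\mathbf{P}(d_H(\hat{\Upsilon}_I,\Upsilon_I(\mathbf{P})) > \varepsilon) = 0$, which is exactly \eqref{eq:consist_ATT}.

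The main obstacle is the step that converts uniform convergence in distribution at the $\sqrt{n}$ scale into uniform convergence in probability of the unscaled endpoints; this is not automatic and hinges on the uniform boundedness of $\Sigma_\upsilon(\mathbf{P})$ over $\mathcal{P}_2$. I would verify that the limiting variance cannot blow up anywhere in the class by tracing it to the primitive restrictions built into $\mathcal{P}_2$: the outcomes satisfy $Y_i(d)\in[Y_L,Y_H]$, the assignment frequencies and their targets keep every denominator bounded away from zero through $\varepsilon$ in Assumption \ref{ass:2}(c), and the normalizing quantity $\hat{G}$ converges to a limit bounded away from zero by Assumptions \ref{ass:1}(d) and \ref{ass:2}(d). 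Granting this uniform variance bound, every remaining step is the same bookkeeping already used for the ATE in Theorem \ref{thm:consist}, and the conclusion follows.
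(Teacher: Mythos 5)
Your proposal is correct and takes essentially the same route as the paper: the paper's own proof (written out for Theorem \ref{thm:consist} and declared analogous for Theorem \ref{thm:consist_ATT}) likewise reduces the Hausdorff-distance event to the two endpoint events $\{|\hat{\upsilon}_L-\upsilon_L(\mathbf{P})|\leq\varepsilon\}$ and $\{|\hat{\upsilon}_H-\upsilon_H(\mathbf{P})|\leq\varepsilon\}$ and then invokes the uniform asymptotic normality of Theorem \ref{thm:AsyDist_ATT}. Your tightness/subsequence argument and the appeal to the uniform positivity and finiteness of $\varpi_L^2(\mathbf{P})$ and $\varpi_H^2(\mathbf{P})$ merely spell out the step the paper compresses into ``both equalities follow from'' the asymptotic distribution theorem.
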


Next, we propose a CI for the ATT. For this purpose, Theorem \ref{thm:AsyVarEstimation_ATT} proposes a uniformly consistent estimator for the variance $\Sigma _{\upsilon }(\mathbf{P})$ in \eqref{eq:asydist_ATT_main_text}, given by
\begin{equation}
\hat{\Sigma}_{\upsilon } ~=~    \frac{1}{\hat{G}^{2}} \left( 
\begin{array}{cc}
\hat\varpi _{L}^{2} & \hat\varpi _{HL} \\ 
\hat\varpi _{HL}  & \hat\varpi _{H}^{2}
\end{array}
\right),
\end{equation}
where $\left(\hat\varpi _{L},\hat\varpi _{H},\hat\varpi _{HL}\right) $ are as in \eqref{eq:AsyVar_ATT} and $\hat{G}$ is as in \eqref{eq:G_hat}. We can then construct a CI for the ATT as in \eqref{eq:CSstoye} but with $( \hat{\theta}_{L},\hat{\theta}_{H},\hat{\Sigma}_{\theta  }) $ replaced by $( \hat{\upsilon}_{L},\hat{\upsilon}_{H},\hat{\Sigma}_{\upsilon }) $, i.e.,
\begin{equation}
    \hat{C}_{\upsilon}\left( 1-\alpha \right) ~\equiv~\Big[~ \hat{\upsilon}_{L}-\frac{\hat\varpi _{L} \hat{c}_{L}}{\hat{G}\sqrt{n}}~,~\hat{\upsilon}_{H}+\frac{\hat\varpi _{H} \hat{c}_{H}}{\hat{G}\sqrt{n}}~\Big] .
\label{eq:CSstoye_ATT}
\end{equation}
where $(\hat{c}_L,\hat{c}_H)$ are the minimizers of $\hat\varpi _{L} c_{L} + \hat\varpi _{H} c_{H}$ subject to
\begin{align*}
P\Big( -c_{L}\leq Z_{1}~\cap~\tfrac{\hat{\varpi}_{HL}}{\hat{\varpi}_{H}\hat{\varpi}_{L}}Z_{1}\leq c_{H}+\tfrac{\sqrt{n}( \hat{\theta}_{H}-\hat{\theta}_{L}) }{\hat{\varpi}_{H}/\hat{G}}+Z_{2}\big({1-\tfrac{\hat{\varpi}_{HL}^2}{\hat{\varpi}_{H}^2\hat{\varpi}_{L}^2}} \big)^{1/2}~\Big|~ X^{(n)} \Big)  
&\geq 1-\alpha  \notag\\
P\Big( -c_{L}-\tfrac{\sqrt{n}( \hat{\theta}_{H}-\hat{\theta}_{L}) }{\hat{\varpi}_{L}/\hat{G}}-Z_{2}\big({1-\tfrac{\hat{\varpi}_{HL}^2}{\hat{\varpi}_{H}^2\hat{\varpi}_{L}^2}} \big)^{1/2}
\leq \tfrac{\hat{\varpi}_{HL}}{\hat{\varpi}_{H}\hat{\varpi}_{L}}Z_{1}~\cap~Z_{1}\leq c_{H} ~\Big|~ X^{(n)} \Big) 
&\geq 1-\alpha ,
%\label{eq:constrainsStoye_ATT}
\end{align*}%
and $Z_{1},Z_{2}$ are i.i.d.\ $N( 0,1) $. The next result establishes that this CI is asymptotically uniformly valid.

\begin{theorem}\label{thm:StoyeATT} 
The CI in \eqref{eq:CSstoye_ATT} satisfies
%for the ATT satisfies 
\begin{equation}
\underset{n\to \infty }{\lim}~\inf_{\mathbf{P}\in \mathcal{P}_2}~\inf_{\upsilon \in \Upsilon _{I}(\mathbf{P})}~{\bf P}(\upsilon \in \hat{C}_{\upsilon}(1-\alpha ))~=~1-\alpha ,  \label{eq:coverageATT}
\end{equation}
where $\Upsilon _{I}(\mathbf{P})$ is as in \eqref{eq:bounds_ATT}.
\end{theorem}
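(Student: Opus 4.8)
The plan is to prove Theorem \ref{thm:StoyeATT} by invoking \citet[Proposition 1]{stoye:2009}, exactly as was done for the ATE in Theorem \ref{thm:Stoye}. The interval $\hat{C}_{\upsilon}(1-\alpha)$ in \eqref{eq:CSstoye_ATT} is precisely Stoye's $CI^2_\alpha$ construction applied to the pair of estimated ATT bounds $(\hat{\upsilon}_L,\hat{\upsilon}_H)$, with associated standard deviations $\hat{\varpi}_L/\hat{G}$ and $\hat{\varpi}_H/\hat{G}$ and sample correlation $\hat{\varpi}_{HL}/(\hat{\varpi}_L\hat{\varpi}_H)$ entering the two probability constraints. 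It therefore suffices to verify that Stoye's hypotheses hold, and that they hold uniformly over $\mathbf{P}\in\mathcal{P}_2$.

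The two substantive inputs are already available. First, Theorem \ref{thm:AsyDist_ATT} supplies the uniform asymptotic normality in \eqref{eq:asydist_ATT_main_text}, i.e.\ $\sqrt{n}((\hat{\upsilon}_L,\hat{\upsilon}_H)-(\upsilon_L,\upsilon_H))\overset{d}{\to}N(\mathbf{0}_2,\Sigma_\upsilon(\mathbf{P}))$ uniformly in $\mathbf{P}\in\mathcal{P}_2$, which is the limiting distribution of the studentized bound statistics that Stoye's construction targets. Second, Theorem \ref{thm:AsyVarEstimation_ATT} gives the uniform consistency of $\hat{\Sigma}_\upsilon$ for $\Sigma_\upsilon(\mathbf{P})$; this guarantees that the estimated correlation and the ratios $\hat{\varpi}_L/\hat{G}$ and $\hat{\varpi}_H/\hat{G}$ appearing in the constraints converge uniformly to their population analogs, so that the data-dependent critical values $(\hat{c}_L,\hat{c}_H)$ converge to the population-optimal values underlying the nominal-size calculation.

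I would then discharge the remaining regularity conditions Stoye imposes, chiefly that the limiting variances are uniformly bounded away from zero so that studentization is well behaved. The relevant lower bound comes from Assumption \ref{ass:1}(c), which guarantees a strictly positive conditional variance for at least one subgroup; I would trace how this, combined with the $\hat{G}^{-2}$ normalization whose denominator satisfies $\hat{G}\to G=P(D_i=1)$ bounded away from zero uniformly by Assumptions \ref{ass:1}(d) and \ref{ass:2}(d), yields a uniform lower bound on the diagonal entries of $\Sigma_\upsilon(\mathbf{P})$. Given these, \citet[Proposition 1]{stoye:2009} delivers that $\hat{C}_\upsilon(1-\alpha)$ is the shortest interval attaining the correct asymptotic nominal size, which is exactly the uniform exactness statement \eqref{eq:coverageATT}.

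The main obstacle I anticipate is not the appeal to Stoye's proposition, which is essentially mechanical once its inputs are uniform, but rather confirming that uniformity genuinely propagates through the ratio structure specific to the ATT. Unlike the ATE, the ATT bounds are normalized by $\hat{G}$, so the studentized pivots involve $\hat{\varpi}_\bullet/\hat{G}$ rather than a single estimated standard deviation. I would therefore verify that the joint uniform convergence of $(\hat{\upsilon}_L,\hat{\upsilon}_H,\hat{G})$ and of $(\hat{\varpi}_L^2,\hat{\varpi}_H^2,\hat{\varpi}_{HL})$ from Theorems \ref{thm:AsyDist_ATT} and \ref{thm:AsyVarEstimation_ATT} combines, via a uniform Slutsky/continuous-mapping argument that exploits the uniform lower bound on $\hat{G}$, into uniform consistency of the full studentized statistic that Stoye's result takes as given.
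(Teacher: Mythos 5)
Your proposal is correct and follows essentially the same route as the paper: the paper's proof of Theorem \ref{thm:StoyeATT} simply states it is analogous to Theorem \ref{thm:Stoye}, which applies \citet[Proposition 1]{stoye:2009} with its conditions verified by the uniform asymptotic normality of the estimated bounds (Theorem \ref{thm:AsyDist_ATT}) and the uniform consistency of the variance estimators (Theorem \ref{thm:AsyVarEstimation_ATT}), plus the nondegeneracy/ordering conditions ($\hat{\upsilon}_H \geq \hat{\upsilon}_L$ and positive limiting variances). Your additional care about propagating uniformity through the $\hat{G}$-normalized studentized statistics is exactly the content the paper compresses into the word ``analogous,'' so it elaborates rather than departs from the paper's argument.
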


The argument for this result is analogous to that of the ATE. The main ingredients for this proof are the uniform convergence in distribution of the ATT bounds (Theorem \ref{thm:AsyDist_ATT}) and the uniform consistency of our estimators of the asymptotic variance (Theorem \ref{thm:AsyVarEstimation_ATT}). We reiterate that the uniformity of our results is crucial for partially identified analysis and is new in the context of CAR.

\subsection{Discussion}\label{sec:DiscussionATT}

As previously pointed out, the bounds in \eqref{eq:BoundsHat_ATT} are constructed by replacing the treatment assignment probabilities $\{P(A_{i}=1|S_i):i=1,\dots,n\}$ in the denominator of \eqref{eq:bounds_ATT} with the sample treatment frequencies $\{n_{A}(S_{i})/n(S_i):i=1,\dots,n\}$, and those in the numerator of \eqref{eq:bounds_ATT} with $\{\pi_{A}(S_{i}):i=1,\dots,n\}$. In principle, one could consider alternative bounds that substitute either of these probabilities with different estimators. Our results show that all of these alternative bounds lead to inference for the ATT that is worse relative to that based on the bounds in \eqref{eq:BoundsHat_ATT}. For brevity, we focus on the alternative bounds that estimate both the probabilities in the numerator and denominator of \eqref{eq:bounds_ATT} using the sample treatment frequencies. Compared to the bounds in \eqref{eq:BoundsHat_ATT}, these alternative bounds have the advantage of not requiring knowledge of $\{P(A_{i}=1|S_i):i=1,\dots,n\}$ or $\{\pi_{A}(S_{i}):i=1,\dots,n\}$. However, we now argue that using the alternative bounds can be costly in terms of efficiency.

Under Assumptions \ref{ass:1} and \ref{ass:2}, Theorem \ref{thm:AsyDist_ATT_other} in the Appendix provides the uniform limiting distribution of the alternative estimated bounds where the probabilities in the numerator and denominator are both replaced by the sample treatment frequencies. This result shows that the alternative estimated bounds have a uniform asymptotic distribution as in \eqref{eq:asyDistBoundsMainText}, but with ${\Sigma}_{\upsilon}({\bf P})$ replaced by with a (weakly) larger asymptotic variance $\tilde{\Sigma}_{\upsilon}({\bf P})$ in a positive semidefinite sense. We can use this result to argue that the bounds in \eqref{eq:BoundsHat_ATT} are better than the alternative estimated bounds along the three dimensions discussed in Section \ref{sec:DiscussionATE}. We describe these briefly at the risk of some repetition.

First, the asymptotic analysis of the estimated bounds in \eqref{eq:BoundsHat_ATT} requires fewer assumptions than that of the alternative estimated bounds. The former is derived for $P \in \mathcal{P}_2$, while the latter also invokes Assumption \ref{ass:2}(f). Thus, the asymptotic behavior of the estimated bounds in \eqref{eq:BoundsHat_ATT} is robust to the details of the CAR mechanism, while the asymptotic behavior of the alternative estimated bounds is not.

Second, the estimated bounds in \eqref{eq:BoundsHat_ATT} are more efficient than either of the alternative estimated bounds. This follows directly from comparing the variance-covariance matrix of the asymptotic distribution of the two sets of bounds. That is, for the estimated bounds on the ATT, estimating treatment probability using target probabilities is more efficient than plugging in the sample treatment frequencies. Notice that this recommendation is the exact opposite of the one obtained in Section \ref{sec:DiscussionATE} for the estimated bounds on the ATE. We also note that these findings resemble the semiparametric efficiency results for point-identified ATT estimator in \cite{hirano/imbens/ridder:2003}, but are novel in our partially identified setting.

Finally, inference based on the estimated bounds in \eqref{eq:BoundsHat_ATT} is more powerful than the inference based on the alternative estimated bounds. That is, for all sequences of local alternative hypotheses, the limiting rejection rate of the CI in \eqref{eq:CSstoye_ATT} is (weakly) larger than the limiting rejection rate of the CI related to the alternative estimated bounds. This result is a consequence of the relative efficiency comparison described in the previous paragraph and our related work in \cite{bugni/gao/obradovic/velez:2024b}.

\section{Monte Carlo simulations}\label{sec:MC}

In this section, we illustrate the finite-sample performance of our inference methods using Monte Carlo simulations. This exercise has two goals. First, we aim to demonstrate that our asymptotic results are accurate in finite samples. This ensures that confidence intervals cover each point of the identified sets with a minimum prespecified probability. We particularly focus on the extremes of the identified set, where coverage is more challenging. Second, we seek to confirm that our recommended estimated bounds are preferable to alternative estimated bounds in terms of the statistical power of the related confidence intervals.

We consider three simulation designs. Each simulated dataset has $n=500$ i.i.d.\ individuals. All designs have four strata, i.e., $\mathcal{S} = \{1,2,3,4\}$, and we set $P(S_i=s) = 0.25$ for all $s \in \mathcal{S}$. As explained in Section \ref{sec:setup}, each individual can be a compiler (C), an always taker (AT), or a never taker (NT). We set $P(C|S_i=s) = 0.85$, $P(AT| S_i=s) = 0.05$, and $P(NT | S_i=s) = 0.10$ for all $s \in \mathcal{S}$. For each decision $d\in \{0,1\}$, type $t \in \{C, AT,NT\}$, and strata $s \in \mathcal{S}$, the potential outcome satisfies
\begin{equation}
    (Y_i(d) \mid T_i =t,~ S_i=s) ~\sim~ \text{Beta}(\alpha_{d,t,s}, 10-\alpha_{d,t,s}),
    \label{eq:MC_Yformula}
\end{equation}
where $\{\alpha_{d,t,s}: (d,t,s) \in \{0,1\} \times \{NT, C, AT\} \times\mathcal{S}\} $ varies with the design. Note that \eqref{eq:MC_Yformula} implies $Y_i(d) \in [0,1]$, and so we set $Y_L =0$ and $Y_H=1$. Given the realized strata, we simulate treatment assignment according to SRS or SBR, with target probability $(\pi_A(s):s \in \mathcal{S})$ that depends on the design. The description of the designs is completed as follows:
\begin{itemize}
%    \item[] \textbf{Design 1:} For all $s \in \{1,2,3,4\}$, $\alpha_{0,NT,s} = 3$, $\alpha_{0,C,s} = 4$, $\alpha_{1,C,s} = 5$, $\alpha_{1,AT,s} = 6$, and $\pi_A(s) = 0.5$.

% old design 3
    \item \textbf{Design 1:} For all $s \in \mathcal{S}$, $\alpha_{0,C,s} = 2$, $\alpha_{1,C,s} = 8$, $\alpha_{1,AT,s} = 3$, $\alpha_{0,NT,s} = 5$, and $\pi_A(s) = 0.5$.

% old design 2
    \item \textbf{Design 2:} For all $s \in \mathcal{S}$, $\alpha_{0,C,s} = 3 + (s-1)/3$, $\alpha_{1,C,s} = 4+(s-1)/3$, $\alpha_{1,AT,s} = 5+(s-1)/3$, $\alpha_{0,NT,s} = 2+(s-1)/3$, and $\pi_A(s) = 0.5$.

% old design 4
    \item \textbf{Design 3:} As in Design 2, but $(\pi_A(s): s\in \mathcal{S}) = (0.3,0.7,0.6,0.8)$. 
\end{itemize}
It is not hard to show that all designs satisfy Assumptions \ref{ass:1} and \ref{ass:2}.

For each design, we compute the identified sets for the ATE and ATT. For the ATE, the identified sets are $[0.425, 0.575]$ for Design 1 and $[0.040, 0.190]$ for Designs 2 and 3. For the ATT, the identified sets are $[0.463, 0.568]$ for Design 1, $[0.047, 0.153]$ for Design 2, and $[0.055, 0.145]$ for Design 3. We note that the LATE is 0.6 for Design 1 and 0.1 for Designs 2 and 3, indicating that this parameter may or may not be included in the identified sets of the ATE and ATT.

Tables \ref{tab:ate} and \ref{tab:att} provide results with $n=500$, $\alpha = 5\%$, and $5,000$ replications. We begin with Table \ref{tab:ate}, which presents the ATE results. In this case, recall that we recommend estimating the bounds using sample analog treatment assignment probabilities (i.e., ``sample'') rather than target probabilities (i.e., ``target''). Our asymptotic theory predicts that the rejection rates for $H: \theta = \theta_0$ for any $\theta_0$ in the identified set should not exceed $5\%$ as the sample size grows. Rejection rates in the interior of the identified set are typically much smaller than $5\%$, so we focus on the boundary, i.e., $\theta_0 = \theta_L$ or $\theta_0 = \theta_H$. The results show that the rejection rate at either boundary point is close to $5\%$, regardless of which CAR method is used or how treatment assignment probabilities are estimated. This is indicative that our asymptotic analysis is accurate with $n=500$. Next, we turn to the rejection rates for points outside of the identified set, such as $\theta_0 = \theta_L \times 0.9$ or $\theta_0 = \theta_H \times 1.1$. As expected, our simulations show that the rejection rate at either of these points is much larger than $5\%$. Notably, our recommended bounds estimator (i.e., ``sample'') delivers a higher rejection rate than the alternative bounds estimator (i.e., ``target''), especially with SRS.\footnote{\label{foot:SBR}Our results imply that both bounds estimators have equal asymptotic distribution under SBR.} Relatedly, the confidence interval of our recommended bounds estimator is, on average, shorter than that of the alternative bounds estimator.

\begin{table}[ht!]
    \centering
    \setlength{\tabcolsep}{4.0pt} % Default value: 6pt
     \renewcommand{\arraystretch}{0.5} % Default value: 1     
        \scalebox{0.85}{\begin{tabular}{ccccccccc}
\midrule     
\toprule
        \multirow{2}[1]{*}{Design} & \multirow{2}[1]{*}{$[\theta_L,~ \theta_H]$} & \multirow{2}[1]{*}{CAR} & \multirow{1}[1]{*}{Estim.\ of} & \multicolumn{4}{c}{Rej.\ rates for $H:\theta = {\theta}_0$} & \multirow{2}{*}{Avg. CI} \\
        \cmidrule(lr){5-8}
        & & & ${P}(A_{i}|S_i)$ & $ \theta_0 = \theta_L$ & $ \theta_0 = \theta_H$ & $ \theta_0 = \theta_L$\text{\small $\times$}0.9  & $ \theta_0 = \theta_H$\text{\small $\times$}1.1 & length \\
\midrule    
%        1 & [0.035 , 0.185] & SRS & target & 5.3 & 5.1 & 6.1 & 11.9 & 0.297 \\
%          & & SRS & sample & 5.2 & 5.0 & 7.8 & 26.7 & 0.213 \\
%          & & SBR & target & 5.8 & 5.4 & 7.5 & 26.9 & 0.213 \\
%          & & SBR & sample & 5.5 & 5.2 & 7.3 & 26.6 & 0.213 \\       % \hline   
\multirow{4}[1]{*}{1}  & \multirow{4}[1]{*}{[0.425, 0.575]}  & SRS & target & 5.4 & 5.3 & 22.8 & 34.4 & 0.309 \\
          & & SRS & sample & 5.7 & 4.3 & 59.3 & 99.2 & 0.211 \\
          & & SBR & target & 6.5 & 4.5 & 59.6 & 99.1 & 0.211 \\
          & & SBR & sample & 6.2 & 4.0 & 59.8 & 99.1 & 0.211 \\    
\midrule   
\multirow{4}[1]{*}{2} &  \multirow{4}[1]{*}{[0.040, 0.190]}  & SRS & target & 5.3 & 5.3 & 6.4 & 12.6 & 0.283 \\
          & & SRS & sample & 5.8 & 5.7 & 8.8 & 30.6 & 0.207 \\
          & & SBR & target & 5.4 & 5.8 & 8.6 & 30.9 & 0.207 \\
          & & SBR & sample & 5.4 & 5.5 & 8.3 & 30.9 & 0.207 \\   
\midrule            
\multirow{4}[1]{*}{3}  & \multirow{4}[1]{*}{[0.040, 0.190]} & SRS & target & 5.0 & 5.4 & 6.0 & 11.1 & 0.294 \\
          & & SRS & sample & 6.1 & 5.6 & 8.7 & 28.3 & 0.212 \\
          & & SBR & target & 6.1 & 5.8 & 9.0 & 27.5 & 0.212 \\
          & & SBR & sample & 6.1 & 5.9 & 8.8 & 27.4 & 0.212 \\          
\bottomrule\midrule     
    \end{tabular}}
     \caption{ \footnotesize Simulation results for inference on the ATE with $n=500$, $\alpha = 5\%$, and $5,000$ replications. CAR method can be implemented via SRS or SBR, and ${P}(A_i|S_i)$ in the bounds can be estimated using ``target'', i.e., target probabilities $\pi_A(S_i)$, or ``sample'', i.e., sample analogs $n_A(S_i)/n(S_i)$. We compute rejection rates (in \%) for $H:\theta = {\theta}_0$ with  ${\theta}_0 = \theta_L$, ${\theta}_0 = \theta_H$, ${\theta}_0 = \theta_L$\text{\small $\times$}0.9 and ${\theta}_0 = \theta_H$\text{\small $\times$}1.1, and average confidence interval length.  
     %Simulation results for inference on the ATE with $\alpha = 5\%$, $n=500$, and $S=5,000$ replications. CAR method can be implemented via SRS or SBR, and $\hat{P}(A_i|S^{(n)})$ can be estimated using ``sample'', i.e., sample analogs ($n_A(S_i)/n(S_i)$), or ``target'', i.e., target probabilities ($\pi_A(S_i)$). We compute rejection rates (in \%) for $H:\theta = {\theta}_0$ with  ${\theta}_0 = \theta_L$, ${\theta}_0 = \theta_H$, ${\theta}_0 = \theta_L$\text{\small $\times$}0.9 and ${\theta}_0 = \theta_H$\text{\small $\times$}1.1, and average confidence interval length.     
     % Treatment assignment implemented via SRS simple random sampling (SRS) and stratified block randomization (SBR). Option (a) is XYZ, Option (b) is YYY.
     }
     \label{tab:ate}
\end{table}
%
%% Comment %%
% design 1 = design 3 in python
% design 2 = design 4 in python
% design 3 = design 5 in python

We now turn to Table \ref{tab:att}, which provides results for the ATT. In this case, recall that we recommend estimating the bounds for the ATT by using the target probabilities for the treatment assignment probabilities in the numerator, and the sample analogs for those in the denominator (i.e., the ``target/sample'' combination). Any other choice of estimators for the treatment assignment probabilities may result in lower-quality inference. 
To verify this, one could consider alternative combinations of estimators for the treatment assignment probabilities. For brevity, we focus on the most natural alternative, resulting from using the sample analog to estimate both probabilities (i.e., the ``sample/sample'' combination).

The empirical findings for the ATT are qualitatively similar to those for the ATE. The rejection rate at either boundary point is close to $5\%$, regardless of how we implement the CAR or how we estimate the treatment assignment probabilities. The recommended bounds estimator produces confidence intervals that are slightly shorter and have a marginally higher rejection probability outside the identified set compared to the alternative bounds estimator. The main quantitative distinction with Table \ref{tab:ate} is that the differences between the recommended bounds estimator and the alternative one are very small. By inspecting our formal results, we can see that the differences between the asymptotic variances of the two bounds estimators are indeed small in magnitude across simulation designs.

\begin{table}[ht!]
    \centering
    \setlength{\tabcolsep}{4.0pt} % Default value: 6pt
\renewcommand{\arraystretch}{0.5} % Default value: 1     
        \scalebox{0.8}{\begin{tabular}{ccccccccc}\midrule     
        \toprule
        \multirow{2}[1]{*}{Design} & \multirow{2}[1]{*}{$[\upsilon_L, ~\upsilon_H]$} & \multirow{2}[1]{*}{CAR} & \multirow{1}[1]{*}{Estim.\ of ${P}(A_{i}|S_i)$} & \multicolumn{4}{c}{Rej.\ rates for $H:\upsilon = {\upsilon}_0$} & \multirow{2}{*}{Avg. CI} \\
        \cmidrule(lr){5-8}
        & & & in num/den & $ \upsilon_0 = \upsilon_L$ & $ \upsilon_0 = \upsilon_H$ & $ \upsilon_0 = \upsilon_L$\text{\small $\times$}0.9  & $ \upsilon_0 = \upsilon_H$\text{\small $\times$}1.1 & length \\
\midrule         
%        1 & [0.047,0.153] & SRS & target & 5.6 & 5.4 & 8.4 & 19.3 & 0.175 \\
%          & & SRS & sample & 5.7 & 5.4 & 8.4 & 18.6 & 0.176 \\
%          & & SBR & target & 5.7 & 5.1 & 8.3 & 18.7 & 0.176 \\
%          & & SBR & sample & 5.7 & 5.1 & 8.3 & 18.8 & 0.176 \\        
%        \midrule  
  \multirow{4}[1]{*}{1}  & \multirow{4}[1]{*}{[0.463, 0.568]}  & SRS & target/sample & 6.9 & 4.8 & 44.5 & 97.2 & 0.183 \\
          & & SRS & sample/sample & 6.7 & 4.9 & 43.8 & 97.1 & 0.184 \\
          & & SBR & target/sample & 6.2 & 4.5 & 44.6 & 96.7 & 0.184 \\
          & & SBR & sample/sample & 6.1 & 4.6 & 44.8 & 96.7 & 0.184 \\    
        \midrule  
      \multirow{4}[1]{*}{2}  & \multirow{4}[1]{*}{[0.047, 0.153]}   & SRS & target/sample & 6.5 & 5.4 & 9.8 & 22.4 & 0.169 \\
          & & SRS & sample/sample & 6.3 & 5.3 & 9.6 & 21.8 & 0.170 \\
          & & SBR & target/sample & 6.0 & 5.3 & 8.9 & 21.0 & 0.170 \\
          & & SBR & sample/sample & 6.0 & 5.2 & 8.9 & 21.1 & 0.170 \\   
        \midrule           
        \multirow{4}[1]{*}{3}  & \multirow{4}[1]{*}{[0.055, 0.145]}  & SRS & target/sample & 6.8 & 6.0 & 9.9 & 18.9 & 0.162 \\
          & & SRS & sample/sample & 6.7 & 5.7 & 9.9 & 18.4 & 0.163 \\
          & & SBR & target/sample & 6.8 & 5.6 & 10.3 & 18.4 & 0.162 \\
          & & SBR & sample/sample & 6.8 & 5.7 & 10.2 & 18.4 & 0.162 \\    
        \bottomrule\midrule     
    \end{tabular}}
     \caption{ \footnotesize Simulation results for inference on the ATT with $n=500$, $\alpha = 5\%$, and $5,000$ replications. CAR method can be implemented via SRS or SBR, and ${P}(A_i|S_i)$ in the numerator and denominator of the bounds can be estimated using ``target'', i.e., target probabilities $\pi_A(S_i)$, or ``sample'', i.e., sample analogs $n_A(S_i)/n(S_i)$.  We compute rejection rates (in \%) for $H:\upsilon={\upsilon }_0$ with  ${\upsilon }_0=\upsilon _L$, ${\upsilon }_0=\upsilon _H$, ${\upsilon }=\upsilon _L$\text{\small $\times$}0.9 and ${\upsilon }_0=\upsilon _H$\text{\small $\times$}1.1, and average confidence interval length.  %
     % Treatment assignment implemented via SRS simple random sampling (SRS) and stratified block randomization (SBR). Option sample is XYZ, Option target is YYY.
     }\label{tab:att}
\end{table}
%
%% Comment %%
% design 1 = design 3 in python
% design 2 = design 4 in python
% design 3 = design 5 in python

\section{Empirical application}\label{sec:application}

In this section, we revisit \cite{dupas/karlan/robinson/ubfal:2018}, who conducted an RCT to assess the economic impact of expanding basic bank account access in three countries: Malawi, Uganda, and Chile.\footnote{The data are available at \url{https://www.aeaweb.org/articles?id=10.1257/app.20160597}.} Recently, \cite{bugni/gao:2023} focused on the RCT in Uganda and conducted inference on the (point-identified) LATE. We now use the analysis in this paper to perform inference on the ATE and ATT for this RCT.

We now briefly summarize the empirical setting of the RCT in Uganda; see \cite{dupas/karlan/robinson/ubfal:2018} for a more detailed description. \cite{dupas/karlan/robinson/ubfal:2018} randomly selected 2,159 Ugandan households without bank accounts in 2011 and assigned them to treatment or control groups. Treated households received a voucher for a free savings account and assistance with the paperwork, while control group households did not receive these vouchers. We use the binary variable $A_i \in \{0,1\}$ to indicate if household $i$ was given the voucher for the free savings account. Households were stratified by gender, occupation, and bank branch, creating 41 strata, i.e., $s \in \mathcal{S} = \{1, \ldots, 41\}$, and were assigned to treatment or control using SBR with $\pi_A(s) = 1/2$ for all $s \in \mathcal{S}$. We define the binary variable $D_i = D_i(A_i) \in \{0,1\}$ to indicate if household $i$ has opened and used the free savings account in this RCT. 

This RCT featured one-sided noncompliance. \citet[Table 1]{dupas/karlan/robinson/ubfal:2018} reveals that none of the 1,079 households in the control group accessed a free savings account. In contrast, among the 1,080 treated households, only 54\% actually opened a bank account, and only 42\% made at least one deposit during the RCT. We use three outcomes measured in 2010 US dollars: savings in formal financial institutions, savings in cash at home or in a secret place, and expenditures in the last month.

There are two relevant aspects of this empirical exercise worth discussing. First, due to one-sided noncompliance and the constant target assignment probability, the ATT and the LATE coincide (both point identified). As a corollary, our results for the ATT should align with those obtained for the LATE by \cite{bugni/gao:2023}. Second, our inference methods require knowledge of the logical lower and upper bounds for the outcomes of interest. Assuming that the outcome variables are bounded, we consistently estimate these bounds using the sample minimum and maximum. The minimum value of all variables is zero dollars. The maximum values for savings in formal financial institutions, savings in cash at home or in a secret place, and expenditures in the last month are 293.3, 366.6, and 256.6 dollars, respectively.

Table \ref{tab:app} presents the empirical results for the ATE and ATT. For the sake of comparison, we also include the results for the LATE obtained by \cite{bugni/gao:2023}. As expected, the identified set of the ATT is estimated to be a point, which coincides with the LATE estimator obtained by \cite{bugni/gao:2023}.\footnote{The differences between the CIs are caused by slight variations in the estimators for the standard errors and are asymptotically negligible.} In contrast, the ATE is partially identified.
We estimate that the ATE of opening and using these savings accounts on the amount saved in formal financial institutions is between 4.03 and 167.86 dollars, with a 95\% CI between 1.36 and 175.18 dollars. Additionally, we estimate that the ATE of opening and using these savings accounts on the amount saved in cash or at home is between -14.54 and 190.25 dollars, with a CI spanning between -17.51 and 199.70 dollars. Finally, we estimate that the ATE of opening and using these savings accounts on expenditures is between -18.26 and 125.09 dollars, with a CI spanning between -20.92 and 131.21 dollars. While these intervals are admittedly wide, it is relevant to note that the corresponding identified sets are sharp, meaning they cannot be restricted further without adding additional information. Put differently, the only way to shrink the identified set—and consequently the resulting estimates and confidence sets—is to impose additional assumptions on the model.

\begin{table}[htbp]
\centering
% \caption{Add caption}
\scalebox{0.95}{\begin{tabular}{ccccc}
\midrule    \toprule
\multirow{2}{*}{Parameter} & \multirow{2}{*}{Object}  & {Savings in formal}
& Savings in cash at  & Expenditures in  \\
& & fin.\ institutions & home or secret place & last month \\
\midrule
\multirow{2}[2]{*}{ATE} &
Estimated IS &
[4.03, 167.85] &
[-14.54, 190.25] &
[-18.26, 125.09] \\
& 95\% CI &
[1.36, 175.18] &
[-17.51, 199.70] &
[-20.92, 131.21] \\
\midrule
\multirow{2}[2]{*}{ATT} &
Estimated IS &
[17.57, 17.57] &
[-7.32, -7.32] &
[-2.43, -2.43] \\
& 95\% CI &
[9.61, 25.53] &
[-16.77, 2.12] &
[-10.01, 5.16] \\
\midrule
\multirow{2}[2]{*}{LATE} &
Estimator &
17.57 &
-7.32 &
-2.43 \\
& 95\% CI &
[9.61, 25.53] &
[-16.78, 2.13] &
[-10.05, 5.19] \\
\bottomrule\midrule    
\end{tabular}}%
\caption{\footnotesize Estimated identified set and CI based on data from \cite{dupas/karlan/robinson/ubfal:2018}. The ATE and ATT results were obtained by the recommended methods described in Sections \ref{sec:ate} and \ref{sec:att}, respectively.  The LATE results were copied from \citet[Section 8]{bugni/gao:2023}. ``Estimated IS'' denotes the estimate of the identified set, ``95\% CI'' denotes the CI with $\alpha = 5\%$, and ``Estimator'' denotes the LATE point estimate for LATE.}
\label{tab:app}
\end{table}

\section{Conclusions}\label{sec:conclusions}

This paper studied the identification and inference of the ATE and ATT in RCTs that have CAR and imperfect compliance. The interplay between these two aspects of the RCT makes our analysis novel relative to the existing literature. The imperfect compliance implies that our estimands are partially identified, contrasting with the CAR literature focusing on point-identified parameters. In fact, to our knowledge, ours is the first paper to consider partial identification analysis in RCTs with CAR. In turn, the treatment assignment using CAR implies that our data may not be i.i.d., which is non-standard in the identification analysis of the ATE and ATT.

We derive the identified set of the ATE and ATT. Due to CAR, this characterization does not follow from the existing literature under i.i.d.\ assumptions. In fact, our results appear to be the first characterization of these identified sets in the CAR literature with imperfect compliance. Our identified sets are intervals with endpoints given by sharp bounds. We provide estimators for these sharp bounds and demonstrate that they are asymptotically normally distributed, uniformly across a relevant class of probability distributions. This result enables us to (i) estimate the identified sets of the ATE and ATT uniformly consistently and (ii) provide uniformly valid confidence sets for the ATE and ATT. 

An important aspect of our inference is the estimator of the treatment assignment probabilities used to construct the estimated sharp bounds. The two natural options are (i) the target values of the treatment assignment probabilities (known in an RCT) or (ii) the sample analog of the treatment assignment probabilities. Our asymptotic analysis yields concrete practical recommendations. For the ATE, we recommend using the sample analog estimators for these probabilities. In the case of the ATT, we recommend using the target values in the numerator and the sample analogs in the denominator. There are three reasons behind these recommendations. First, our recommended option produces estimated bounds that are relatively more efficient. Second, and relatedly, our recommended option leads to confidence sets that reject local alternative hypotheses with relatively higher probability than the alternative option. Finally, our recommended estimated bounds can be obtained under weaker assumptions than the alternative options.

Using Monte Carlo simulations, we confirm that our asymptotic predictions and recommendations are relevant in finite samples. Finally, we illustrate our methodology with an empirical application to the RCT implemented in \cite{dupas/karlan/robinson/ubfal:2018}.

A natural extension of our results is considering parameters of interest beyond the ATE and the ATT. In line with this, we study the identification and inference for the average treatment effect on the untreated (ATU). Qualitatively speaking, the results resemble those obtained for the ATT. For brevity, we place them in Section \ref{sec:appendixE} of the Appendix. An interesting avenue for future research is to generalize the framework to encompass parameters representable via marginal treatment effects of \cite{bjorklund/moffit:1987} and \cite{heckman/vytlacil:1999}. We are actively pursuing research in this direction.

% - -- - - - - - - - - - - 
% HERE STARTS THE APPENDIX 
% - -- - - - - - - - - - - 
% \newpage
\renewcommand{\theequation}{\Alph{section}-\arabic{equation}}
\begin{appendix} 

% \begin{small}
\section{Appendix on identification}\label{sec:appendix1}

Throughout all the appendices, we use LHS, RHS, and s.t.\ to abbreviate ``left-hand side'', ``right-hand side'', and ``such that'', respectively.
%All results in Appendix \ref{sec:appendix1} are derived for an arbitrary sample size $n \in \mathbb{N}$ and distribution $P$ that is jointly determined by \eqref{eq:outcome}, $Q$, and $G$.
The results in Appendix \ref{sec:appendix1} are derived for an arbitrary sample size $n \in \mathbb{N}$ and a distribution ${\bf P}$ determined by \eqref{eq:outcome}, ${\bf Q}$, and ${\bf G}$.

\subsection{Preliminary results}

For any $( a,z) \in \{ 0,1\} \times \mathcal{Z}$, define
\begin{equation}
\pi _{D( a)}( z) ~\equiv~ P(D( a) =1|Z=z) .
\label{eq:pi_def}
\end{equation}
This object differs from $\pi _{D( a)}( s)$ in \eqref{eq:keyDefns} in that it conditions on $z \in \mathcal{Z}$ instead of $s \in \mathcal{S}$. By Assumption \ref{ass:1}, this object does not depend on $i=1,\dots,n$. %The following result shows that this object is identified.

\begin{lemma}\label{lem:QnMap} 
Let Assumptions \ref{ass:1}(a)-(c) and \ref{ass:2}(a)-(c) hold. Then, for any $i=1,\ldots ,n$ and $z\in \mathcal{Z}$,
\begin{align*}
P( D_{i}( 0) =0,D_{i}( 1) =0|Z_{i}=z) & ~=~1-\pi _{D( 1) }( z) \\
P( D_{i}( 0) =1,D_{i}( 1) =1|Z_{i}=z) & ~=~\pi _{D( 0) }( z) \\
P( D_{i}( 0) =0,D_{i}( 1) =1|Z_{i}=z) & ~=~\pi _{D( 1)}( z) -\pi _{D( 0) }( z) .
\end{align*}
\end{lemma}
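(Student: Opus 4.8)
The plan is to exploit the four-type partition of the population together with the no-defiers restriction. Fix $i$ and $z\in\mathcal{Z}$. The four events $\{D_i(0)=d_0,D_i(1)=d_1\}$ for $(d_0,d_1)\in\{0,1\}^2$ partition the sample space into never takers, compliers, always takers, and defiers, and Assumption \ref{ass:1}(b) forces the defier cell to have conditional probability zero, so the conditional law of $(D_i(0),D_i(1))$ given $Z_i=z$ is supported on the remaining three cells.

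First I would establish the never-taker identity by decomposing the marginal of $D_i(1)$. By the law of total probability,
\[
P(D_i(1)=0\mid Z_i=z)=P(D_i(0)=0,D_i(1)=0\mid Z_i=z)+P(D_i(0)=1,D_i(1)=0\mid Z_i=z),
\]
and the second term vanishes by no-defiers; since $P(D_i(1)=0\mid Z_i=z)=1-\pi_{D(1)}(z)$ by \eqref{eq:pi_def}, the first equation follows. The always-taker identity is symmetric: decomposing the marginal of $D_i(0)$ writes $\pi_{D(0)}(z)=P(D_i(0)=1\mid Z_i=z)$ as the sum of the always-taker and defier cells, and dropping the (zero) defier cell yields the second equation. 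The complier identity then follows either by subtracting the two probabilities just obtained from one (the three surviving cells sum to one) or, equivalently, by decomposing $\pi_{D(1)}(z)=P(D_i(1)=1\mid Z_i=z)$ into the complier and always-taker cells and solving.

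There is no substantive obstacle here: the result is a direct consequence of the type partition and Assumption \ref{ass:1}(b), with independence from $i$ inherited from the i.i.d.\ structure of Assumption \ref{ass:1} as already noted below \eqref{eq:pi_def}. The only point requiring minor care is the bookkeeping—when decomposing each marginal one must track both coordinates of $(D_i(0),D_i(1))$ to be sure it is precisely the defier cell that is eliminated.
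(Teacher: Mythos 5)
Your proposal is correct and follows essentially the same route as the paper's proof: both arguments fix $i$ and $z$, partition $(D_i(0),D_i(1))$ into the four compliance types, and invoke Assumption \ref{ass:1}(b) to annihilate the defier cell, so that each joint-probability cell collapses to a marginal of $D_i(0)$ or $D_i(1)$ (the paper phrases this as the event identity $\{D_i(0)=0,D_i(1)=0\}=\{D_i(1)=0\}$ under no defiers, which is exactly your law-of-total-probability decomposition with the defier term set to zero). No gap; your write-up is just a slightly more explicit version of the paper's one-line argument.
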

%%%%
\begin{proof}
Fix $z\in \mathcal{Z}$ and $i=1,\ldots ,n$ arbitrarily. To get the first statement,
\begin{equation*}
P( D_{i}( 0) =0,D_{i}( 1) =0|Z_{i}=z) ~\overset{(1)}{=}~P( D_{i}( 1) =0|Z_{i}=z) ~=~1-\pi _{D( 1)}( z) ,
\end{equation*}
where (1) holds by Assumption \ref{ass:1}(b). The other statements hold by analogous arguments.
% To get the second statement,
% \begin{equation*}
% P( D_{i}( 0) =1,D_{i}( 1) =1|Z_{i}=z) ~\overset{(1)}{=}~P( D_{i}( 0) =1|Z_{i}=z) ~=~\pi _{D( 0) }( z) ,
% \end{equation*}
% where (1) holds by Assumption \ref{ass:1}(b). The last statement follows from the first two and Assumption \ref{ass:1}(b). 
\end{proof}

For any $( a,d,z,y) \in \{ 0,1\} \times \{ 0,1\} \times \mathcal{Z}\times \mathbb{R}$ and $i=1,\ldots ,n$, we define
\begin{align}
f_{a,d}( y,z) &~\equiv~ \left\{
\begin{tabular}{ll}
$dP( Y( d) =y|D( a) =d,Z=z) $ & if $\pi _{D( a) }( z) ^{d}( 1-\pi _{D( a)}( z) ) ^{1-d}>0$ \\
$0$ & if $\pi _{D( a) }( z) ^{d}( 1-\pi _{D( a)}( z) ) ^{1-d}=0$.
\end{tabular}
\right. 
\label{eq:f_def}
\end{align}
By Assumption \ref{ass:1}, this object does not depend on $i=1,\dots,n$. Note that \eqref{eq:pi_def} implies that $dP( D_{i}( a) =d,Z_{i}=z) =\pi _{D( a) }( z) ^{d}( 1-\pi _{D( a)}( z) ) ^{1-d} dP(Z=z) $, and so $z\in \mathcal{Z}$ and $\pi _{D( a)}( z) ^{d}( 1-\pi _{D( a) }( z) ) ^{1-d}>0$ implies that $dP( Y( d) =y|D( a) =d,Z=z) $ is well defined. %The next result shows that this object is identified.
The next result connects some of these objects.

\begin{lemma} \label{lem:QnYMap}
Let Assumptions \ref{ass:1}(a)-(c) and \ref{ass:2}(a)-(c) hold. For any $i=1,\ldots ,n$ and $( z,y) \in \mathcal{Z}\times \mathbb{R}$,
\begin{enumerate}
\item If $P( D_{i}( 0) =1,D_{i}( 1) =1|Z_{i}=z) =\pi _{D( 0) }( z) >0$,
\begin{equation}
dP( Y_{i}( 1) =y|D_{i}( 1) =1,D_{i}( 0) =1,Z_{i}=z) ~=~f_{0,1}( y,z) . \label{eq:QnYMap_1}
\end{equation}
\item If $P( D_{i}( 0) =0,D_{i}( 1) =0|Z_{i}=z) =1-\pi _{D( 1) }( z) >0$,
\begin{equation}
dP( Y_{i}( 0) =y|D_{i}( 1) =0,D_{i}( 0) =0,Z_{i}=z) ~=~f_{1,0}( y,z) . \label{eq:QnYMap_2}
\end{equation}
\item If $P( D_{i}( 0) =0,D_{i}( 1) =1|Z_{i}=z) =\pi _{D( 1) }( z) -\pi _{D( 0)}( z) >0$,
\begin{align}
\text{\small $dP( Y_{i}( 1) =y|D_{i}( 1) =1,D_{i}( 0) =0,Z_{i}=z)$}
&=
\tfrac{f_{1,1}( y,z) \pi _{D( 1)}( z) -f_{0,1}( y,z) \pi _{D( 0)}( z) }{\pi _{D( 1) }( z) -\pi _{D( 0) }( z) } \label{eq:QnYMap_3} \\
\text{\small $dP( Y_{i}( 0) =y|D_{i}( 1) =1,D_{i}( 0) =0,Z_{i}=z)$} 
& =
\tfrac{f_{0,0}( y,z) ( 1-\pi _{D( 0)}( z) ) -f_{1,0}( y,z) ( 1-\pi _{D( 1) }( z) ) }{\pi _{D( 1) }( z) -\pi _{D( 0) }( z) }.\label{eq:QnYMap_4}
\end{align}
\end{enumerate}
\end{lemma}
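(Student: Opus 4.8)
The plan is to exploit the monotonicity restriction in Assumption~\ref{ass:1}(b) to re-express each conditioning event in terms of a single potential treatment decision, and then to decompose the relevant joint laws of $(Y_i(d),D_i(a))$ over the three remaining types (compliers, always takers, never takers) already identified in Lemma~\ref{lem:QnMap}.

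For the first two statements I would begin with the almost-sure set identities that monotonicity delivers conditional on $Z_i=z$: since $D_i(1)\geq D_i(0)$, we have $\{D_i(0)=1\}=\{D_i(0)=1,D_i(1)=1\}$ and $\{D_i(1)=0\}=\{D_i(1)=0,D_i(0)=0\}$. Under the stated positivity hypotheses, conditioning on the always-taker (resp.\ never-taker) event therefore coincides with conditioning on $\{D_i(0)=1\}$ (resp.\ $\{D_i(1)=0\}$), so that the conditional law of $Y_i(1)$ in \eqref{eq:QnYMap_1} matches $f_{0,1}(y,z)$ and the conditional law of $Y_i(0)$ in \eqref{eq:QnYMap_2} matches $f_{1,0}(y,z)$, directly from the definitions in \eqref{eq:f_def}.

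The third statement is the substantive one, and here I would isolate the complier contribution by subtraction. Writing the joint law of $(Y_i(1),D_i(1)=1)$ given $Z_i=z$ as the sum over its two constituent types, compliers $\{D_i(0)=0,D_i(1)=1\}$ and always takers $\{D_i(0)=1,D_i(1)=1\}$, and using $dP(Y_i(1)=y,D_i(1)=1|Z_i=z)=f_{1,1}(y,z)\pi_{D(1)}(z)$ together with the always-taker identity $dP(Y_i(1)=y,D_i(1)=1,D_i(0)=1|Z_i=z)=f_{0,1}(y,z)\pi_{D(0)}(z)$ from the previous step, the complier piece equals $f_{1,1}(y,z)\pi_{D(1)}(z)-f_{0,1}(y,z)\pi_{D(0)}(z)$. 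Dividing by the complier probability $\pi_{D(1)}(z)-\pi_{D(0)}(z)$ from Lemma~\ref{lem:QnMap} yields \eqref{eq:QnYMap_3}. A symmetric decomposition of the joint law of $(Y_i(0),D_i(0)=0)$ over compliers and never takers, using $dP(Y_i(0)=y,D_i(0)=0|Z_i=z)=f_{0,0}(y,z)(1-\pi_{D(0)}(z))$ and the never-taker piece $f_{1,0}(y,z)(1-\pi_{D(1)}(z))$, gives \eqref{eq:QnYMap_4}.

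The main point to watch is well-definedness at the boundary rather than any deep obstacle. The denominator is positive by hypothesis, and $\pi_{D(1)}(z)-\pi_{D(0)}(z)>0$ forces both $\pi_{D(1)}(z)>0$ and $1-\pi_{D(0)}(z)>0$, so $f_{1,1}$ and $f_{0,0}$ are genuine conditional laws. The remaining terms $f_{0,1}$ and $f_{1,0}$ can be degenerate (for instance when $\pi_{D(0)}(z)=0$, i.e.\ no always takers), but in exactly those cases the convention in \eqref{eq:f_def} sets them to zero while the accompanying factor $\pi_{D(0)}(z)$ (resp.\ $1-\pi_{D(1)}(z)$) also vanishes, so the subtraction remains valid. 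I expect the only real care needed is to treat the $dP$ objects as measures (Radon--Nikodym derivatives against a common dominating measure), so that the additive decomposition over disjoint type events is legitimate.
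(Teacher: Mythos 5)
Your proposal is correct and follows essentially the same route as the paper: parts 1 and 2 use monotonicity to collapse the two-decision conditioning event into a single-decision one, and part 3 decomposes the law of $Y_i(1)$ (resp.\ $Y_i(0)$) given $D_i(1)=1$ (resp.\ $D_i(0)=0$) over compliers and always takers (resp.\ never takers) and solves for the complier component — your joint-measure subtraction is just the paper's mixture identity multiplied through by $\pi_{D(1)}(z)$ (resp.\ $1-\pi_{D(0)}(z)$). The only cosmetic difference is that you absorb the degenerate case $\pi_{D(0)}(z)=0$ via the convention in \eqref{eq:f_def}, whereas the paper runs an explicit two-case argument; both are valid.
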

%%%
\begin{proof}
Fix $( z,y) \in \mathcal{Z}\times \mathbb{R}$ and $i=1,\ldots ,n$ arbitrarily. Also, set $z_i=z$. 

\noindent \underline{Part 1.} First, note that $P( D_{i}( 0) =1,D_{i}( 1) =1|Z_{i}=z) =\pi _{D( 0)}( z) >0$ implies that the LHS expression in \eqref{eq:QnYMap_1} is well defined. To show \eqref{eq:QnYMap_1}, note that
{\begin{align*}
dP( Y_{i}( 1) =y|D_{i}( 0) =D_{i}( 1) =1,Z_{i}=z) ~\overset{(1)}{=}~dP( Y_{i}( 1) =y|D_{i}( 0) =1,Z_{i}=z) ~\overset{(2)}{=}~f_{0,1}( y,z),
\end{align*}}
where (1) holds by Assumption \ref{ass:1}(b) and (2) holds by \eqref{eq:f_def} and $\pi _{D( 0) }( z) >0$.

\noindent \underline{Part 2.} This follows from an analogous argument to part 1, and is therefore omitted.

\noindent \underline{Part 3.} We only show \eqref{eq:QnYMap_3}, as \eqref{eq:QnYMap_4} follows from a similar derivation.

Note that $P( D_{i}( 0) =0,D_{i}( 1) =1|Z_{i}=z_{i}) =\pi _{D( 1)}( z_{i}) -\pi _{D( 0)}( z_{i}) >0$ implies that the LHS expressions in \eqref{eq:f_def} and \eqref{eq:QnYMap_1} are well defined.

We now consider two cases. In the first case, $\pi _{D( 0) }( z_{i}) >0$. By Lemma \ref{lem:QnMap}, this implies $P( D_{i}( 0) =1,D_{i}( 1) =1|Z_{i}=z_{i}) >0$. In that case, we have
\begin{align}
f_{1,1}( y,z_{i})~&=~dP( Y_{i}( 1) =y|D_{i}( 1) =1,Z_{i}=z_{i}) \notag \\
% & ~\overset{(1)}{=}~\tfrac{\left\{
% {\small\begin{array}{c}  
% dP( Y_{i}( 1) =y|D_{i}( 1) =D_{i}( 0) =1,Z_{i}=z_{i})  P( D_{i}( 0) =D_{i}( 1) =1|Z_{i}=z_{i}) + \\
% dP( Y_{i}( 1) =y|D_{i}( 1) =1,D_{i}( 0) =0,Z_{i}=z_{i})  P( D_{i}( 0) =0,D_{i}( 1) =1|Z_{i}=z_{i})
% \end{array}}
% \right\} }{P( D_{i}( 0) =1,D_{i}( 1) =1|Z_{i}=z_{i}) +P( D_{i}( 0) =0,D_{i}( 1) =1|Z_{i}=z_{i}) } \notag \\
& ~\overset{(1)}{=}~\tfrac{\left\{
{\scriptsize\begin{array}{c}
dP( Y_{i}( 1) =y|D_{i}( 1) =D_{i}( 0) =1,Z_{i}=z_{i})  \pi _{D( 0)}( z_{i}) + \\
dP( Y_{i}( 1) =y|D_{i}( 1) =1,D_{i}( 0) =0,Z_{i}=z_{i})  ( \pi _{D( 1)}( z_{i}) -\pi _{D( 0)}( z_{i}) )
\end{array}}
\right\} }{\pi _{D( 1) }( z_{i}) } \notag \\
& ~\overset{(2)}{=}~\tfrac{f_{0,1}( y,z_{i})  \pi _{D( 0) }( z_{i}) +dP( Y_{i}( 1) =y|D_{i}( 1) =1,D_{i}( 0) =0,Z_{i}=z_{i})  ( \pi _{D( 1) }( z_{i}) -\pi _{D( 0)}( z_{i}) ) }{\pi _{D( 1)}( z_{i}) }, \label{eq:QnYMap_5}
\end{align}
where (1) holds by Assumption \ref{ass:1}(b) and Lemma \ref{lem:QnMap}, and (2) by \eqref{eq:QnYMap_1}. In the second case, $\pi _{D( 0)}( z_{i}) =0$, and we can repeat the derivation to get
\begin{equation*}
f_{1,1}( y,z_{i}) ~=~dP( Y_{i}( 1) =y|D_{i}( 1) =1,D_{i}( 0) =0,Z_{i}=z_{i}) . 
%\label{eq:QnYMap_6}
\end{equation*}
Note that this coincides with the RHS of \eqref{eq:QnYMap_5} for $\pi _{D( 0)}( z_{i}) =0$. Thus, we can use the RHS of \eqref{eq:QnYMap_5} to summarize both cases.
% \begin{equation}
% f_{1,1}( y,z_{i}) ~=~\tfrac{f_{0,1}( y,z_{i}) \pi _{D( 0)}( z_{i}) +dP( Y_{i}( 1) =y|D_{i}( 1) =1,D_{i}( 0) =0,Z_{i}=z_{i}) ( \pi _{D( 1)}( z_{i}) -\pi _{D( 0) }( z_{i}) ) }{\pi _{D( 1)}( z_{i}) }. \label{eq:QnYMap_7}
% \end{equation}
By solving for $dP( Y_{i}( 1) =y|D_{i}( 1) =1,D_{i}( 0) =0,Z_{i}=z_{i}) $ in \eqref{eq:QnYMap_5}, \eqref{eq:QnYMap_3} follows.
\end{proof}

\subsection{ATE}\label{sec:appendix1ATE}
%%%%%%%%%%%%%%%%

\begin{proof}[Proof of Theorem \ref{thm:main}]
First, we show that the RHS of \eqref{eq:bounds} do not depend on $i=1,\dots ,n$. To this end, for any $i=1,\dots ,n$ and $\bar{Y}\in \{Y_{L},Y_{H}\}$, consider the following argument.
\begin{align}
E[ {(Y_{i}D_{i}+\bar{Y}(1-D_{i}))A_{i}}|S_{i}] 
&=E[E[ {(Y_{i}D_{i}+\bar{Y}(1-D_{i}))A_{i}}|S^{(n)}] |S_{i}]
\notag \\
&\overset{(1)}{=}E[E[ {(Y_{i}(1)D_{i}(1)+\bar{Y}(1-D_{i}(1)))A_{i}} |S^{(n)}] |S_{i}]  \nonumber \\
&\overset{(2)}{=}E[E[ {(Y_{i}(1)D_{i}(1)+\bar{Y}(1-D_{i}(1)))}|S^{(n)} ] E[ {A_{i}}|S^{(n)}] |S_{i}]  \nonumber \\
&\overset{(3)}{=}E[E[ {(Y_{i}(1)D_{i}(1)+\bar{Y}(1-D_{i}(1)))}|S_{i}] E[ {A_{i}}|S^{(n)}] |S_{i}]  \notag \\
&\overset{(4)}{=}E[ {(Y_{i}(1)D_{i}(1)+\bar{Y}(1-D_{i}(1)))}|S_{i}] P(A_{i}=1|S_{i}),\label{eq:representation1}
\end{align}
where (1) holds by $Y_{i}(1)D_{i}=Y_{i}D_{i}$ and $D_{i}(1)A_{i}=D_{i}A_{i}$, (2) by Assumption \ref{ass:2}(a), (3) by the i.i.d.\ condition in Assumption \ref{ass:1}, and (4) by $E[E[ {A_{i}}|S^{(n)}] |S_{i}]=P(A_{i}=1|S_{i})$. 
Next consider the following argument,
\begin{align}
E\left[ {(Y_{i}D_{i}+\bar{Y}(1-D_{i}))A_{i}}/P(A_{i}=1|S_{i})\right] 
&=E\left[ E\left[ {(Y_{i}D_{i}+\bar{Y}(1-D_{i}))A_{i}}|S_{i}\right] /P(A_{i}=1|S_{i})\right]  \notag\\
&\overset{(1)}{=}E\left[ {(Y_{i}(1)D_{i}(1)+\bar{Y}(1-D_{i}(1)))}\right]  \nonumber \\
&\overset{(2)}{=}E\left[ {(Y(1)D(1)+\bar{Y}(1-D(1)))}\right] , \label{eq:representation2} 
\end{align}
where (1) holds by \eqref{eq:representation1} and $P(A_{i}=1|S_{i}) = E[E[ {A_{i}}|S^{(n)}] |S_{i}] >0$ (which, in turn, follows from Assumptions \ref{ass:1}(a) and \ref{ass:2}(b), and \citet[Theorem D, page 104]{halmos:1974}) and (2) by the i.i.d.\ condition in Assumption \ref{ass:1}. Analogously, we can get 
\begin{equation}
E[Y(0)(1-D(0))+\bar{Y}D(0)]=E[ {(Y_{i}(1-D_{i})+\bar{Y}D_{i})(1-A_{i})}/P(A_{i}=0|S_{i})] .  \label{eq:representation3}
\end{equation}
The desired result follows from \eqref{eq:bounds}, and \eqref{eq:representation2} and \eqref{eq:representation3} with $\bar{Y}\in \{Y_{L},Y_{H}\}$.

Second, we show that
\begin{equation}
    \theta \in [\theta _{L}(\mathbf{P}),\theta _{H}(\mathbf{P})].\label{eq:representation4} 
\end{equation}
We only show the lower bound, as the upper bound is analogous. 
To this end, consider the following argument.
\begin{align*}
\theta  
% &=E[Y_{i}(1)-Y_{i}(0)] \notag\\
&~=~E[Y_{i}(1)D_{i}+Y_{i}(1)(1-D_{i})]-E[Y_{i}(0)D_{i}+Y_{i}(0)(1-D_{i})] 
\notag \\
&~\overset{(1)}{\geq }~E[Y_{i}(1)D_{i}+Y_{L}(1-D_{i})]-E[Y_{i}(0)D_{i}+Y_{H}(1-D_{i})]  \notag \\
&~\overset{(2)}{=}~\theta _{L}(\mathbf{P}), 
\end{align*}
as desired, where (1) holds by $Y_{i}(d)\in [Y_{L},Y_{H}]$ and (2) by \eqref{eq:bounds}, \eqref{eq:representation2}, and \eqref{eq:representation3} with $\bar{Y}\in \{Y_{L},Y_{H}\}$.

% We now present the derivation of the version of \eqref{eq:bounds} in which $P(A_{i}=1 \mid S_{i})$ is replaced by $P(A_{i}=1 \mid S^{(n)})$. This argument is relatively simpler than the one previously used to derive \eqref{eq:bounds}. For $\bar{Y}\in \{ Y_{L},Y_{H}\} $, consider the following argument.
% \begin{align}
% E[ Y( 1) D( 1) +\bar{Y}( 1-D( 1) ) ] 
% %&~\overset{(1)}{=}~E[ Y_{i}( 1) D_{i}( 1) +\bar{Y}( 1-D_{i}( 1) ) ] \notag \\ 
% &~\overset{(1)}{=}~E[ E[ Y_{i}( 1) D_{i}( 1) +\bar{Y} ( 1-D_{i}( 1) ) |S^{( n) }] ] \notag \\
% & ~\overset{(2)}{=}~E[ E[ Y_{i}( 1) D_{i}( 1) +\bar{Y}( 1-D_{i}( 1) ) |S^{( n) },A_{i}=1] ] \notag \\
% & ~\overset{(3)}{=}~E[ E[ (Y_{i}D_{i}+\bar{Y}( 1-D_{i}))A_{i}|S^{( n) },A_{i}=1] ] \notag \\
% &~=~E\left[ { ( Y_{i}D_{i}+\bar{Y}(1-D_{i}) ) A_{i}}/{P(A_{i}=1|S^{( n) }) }\right] ,\label{eq:representation2_B}
% \end{align}
% where (1) holds by i.i.d.\ condition in Assumption \ref{ass:1}, (2) by Assumptions \ref{ass:2}(b), and (3) by $Y_{i}( 1) D_{i}=Y_{i} D_{i}$ and $D_{i}( 1) A_{i}=D_{i}A_{i}$. An analogous argument yields
% \begin{equation}
% E[ Y( 0) ( 1-D( 0) ) +\bar{Y}D( 0)] =E\left[ { ( Y_{i}( 1-D_{i}) +\bar{Y}D_{i}) ( 1-A_{i}) }/{P( A_{i}=0|S^{( n) })}\right] . \label{eq:representation3_B}
% \end{equation}
% From here, the result follows from \eqref{eq:bounds}, and \eqref{eq:representation2_B} and \eqref{eq:representation3_B} with $\bar{Y}\in \{ Y_{L},Y_{H}\}$.

Next, we establish that $[\theta _{L}(\mathbf{P}),\theta _{H}(\mathbf{P})]$ are the sharp bounds for the ATE. 
% Lemma \ref{lem:main_outer} establishes that $[\theta_{L}({\bf P}),\theta_{H}({\bf P})]$ is an outer identified set (i.e., a superset of the identified set).
First, \eqref{eq:representation4} implies that $[\theta _{L}(\mathbf{P}),\theta _{H}(\mathbf{P})]$ is an outer identified set (i.e., a superset of the identified set). Lemma \ref{lem:main_inner} proposes a hypothetical distribution for all underlying variables that (i) is observationally equivalent to the data distribution $\mathbf{P}$ and (ii) produces ATEs that span $[\theta _{L}(\mathbf{P}),\theta _{H}(\mathbf{P})]$. As a corollary, $[\theta _{L}(\mathbf{P}),\theta _{H}(\mathbf{P})]$ is a subset of the identified set. By combining these observations, the desired result follows.
\end{proof}

\begin{lemma}\label{lem:main_inner} 
Let Assumptions \ref{ass:1}(a)-(c) and \ref{ass:2}(a)-(c) hold. For any $\beta _{0},\beta _{1}\in [ 0,1] $ and $ w^{(n)} = \{ (y_{1,i},y_{0,i},d_{1,i},d_{0,i},z_{i})\} _{i=1}^{n}$, let ${\bf R}(\cdot;\beta_0,\beta_1)$ denote the following measure for $( W^{( n) },A^{( n) })$ at $( w^{(n)},a^{( n) })$:
\begin{align}
& d{\bf R}( ( W^{( n) },A^{( n) }) =( w^{(n)},a^{( n) }) ;\beta _{0},\beta _{1})  ~\equiv~\notag \\
& 
\prod_{i=1}^{n}\left[
{\scriptsize\begin{array}{c}
\text{\small $f_{0,1}( y_{1,i},z_{i})  I[ y_{0,i}=Y_{L}\beta _{0}+Y_{H}( 1-\beta _{0}) ]  \pi _{D( 0)}( z_{i})  I[ d_{0,i}=d_{1,i}=1]$} \\
\text{\small $+\lambda _{1}( y_{1,i},z_{i})  \lambda _{0}( y_{0,i},z_{i})  ( \pi _{D( 1)}( z_{i}) -\pi _{D( 0)}( z_{i}) )  I[ d_{0,i}=0,d_{1,i}=1] +$} \\
\text{\small $I[ y_{1,i}=Y_{L}\beta _{1}+Y_{H}( 1-\beta _{1}) ]  f_{1,0}( y_{0,i},z_{i})  ( 1-\pi _{D( 1)}( z_{i}) )  I[ d_{0,i}=d_{1,i}=0 ]$}
\end{array}}
\right]{\text{\small $dP( Z_{i}=z_{i})$}} \notag\\
&~\times~ P( A^{( n) }=a^{( n) }|S^{( n) }=\{ S( z_{i}) \} _{i=1}^{n}) \label{eq:Rn_dist_defn}
\end{align}
where $\pi _{D( 0)}( z) $ and $\pi _{D( 1)}( z) $ are as in \eqref{eq:pi_def}, $f_{1,0}( y,z) $, $f_{0,1}( y,z) $, $ f_{0,0}( y,z) $, and $f_{1,1}( y,z) $ are as in \eqref{eq:f_def}, and
\begin{align}
\lambda _{1}( y,z)  &~\equiv~\left\{ 
\begin{tabular}{ll}
$\frac{f_{1,1}( y,z) \pi _{D( 1) }( z) -f_{0,1}( y,z) \pi _{D( 0)}( z) }{\pi _{D( 1)}( z) -\pi _{D( 0)}( z) }$ & if $\pi _{D( 1)}( z) -\pi _{D( 0)}( z) >0$ \\
$0$ & if $\pi _{D( 1) }( z) -\pi _{D( 0)}( z) =0$,
\end{tabular}
\right. \notag\\
\lambda _{0}( y,z)  &~\equiv~ \left\{ 
\begin{tabular}{ll}
$\frac{f_{0,0}( y,z) ( 1-\pi _{D( 0) }( z) ) -f_{1,0}( y,z) ( 1-\pi _{D( 1) }( z) ) }{\pi _{D( 1) }( z) -\pi _{D( 0)}( z) }$ & if $\pi _{D( 1)}( z) -\pi _{D( 0)}( z) >0$ \\
$0$ & if $\pi _{D( 1)}( z) -\pi _{D( 0)}( z) =0$.
\end{tabular}
\right.   \label{eq:mu_defn} 
\end{align}
Then, for any $\beta _{0},\beta _{1}\in [ 0,1] $,
\begin{enumerate}
\item ${\bf R}(( W^{( n) },A^{( n) });\beta _{0},\beta _{1}) $ is a probability measure.
\item ${\bf R}(( W^{( n) },A^{( n) });\beta _{0},\beta _{1}) $ induces a probability distribution for $X^{( n)}$ that is observationally equivalent to ${\bf P}(X^{( n) }) $, i.e., $ {\bf R}( X^{( n) };\beta _{0},\beta _{1}) ={\bf P}(X^{( n) }) $.
\item For every $i=1,\dots,n$, $E_{{\bf R}}[Y_{i}(1)-Y_{i}(0);\beta _{0},\beta _{1}] = E[Y( 1) D( 1) +( Y_{L}\beta
_{1}+Y_{H}( 1-\beta _{1}) ) ( 1-D( 1)) -Y( 0) ( 1-D( 0) ) - (Y_{L}\beta _{0}+Y_{H}( 1-\beta _{0}) ) D(
0)  ]] $. Therefore,
\begin{align*}
    &\underset{\beta _{0},\beta _{1} \in [0,1]}{\bigcup} E_{{\bf R}}[Y_{i}(1)-Y_{i}(0);\beta _{0},\beta _{1}]\\
    &~=~  \left[
\begin{array}{c}
E[ Y( 1) D( 1) +Y_{L}( 1-D( 1) ) -Y( 0) ( 1-D( 0) ) -Y_{H}D( 0)] , \notag\\
E[ Y( 1) D( 1) +Y_{H}( 1-D( 1) ) -Y( 0) ( 1-D( 0) ) -Y_{L}D( 0) ]
\end{array}
\right].
\end{align*}
\end{enumerate}
\end{lemma}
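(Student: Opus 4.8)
The plan is to verify the three numbered claims in turn, treating the measure ${\bf R}(\cdot;\beta_0,\beta_1)$ in \eqref{eq:Rn_dist_defn} as a per-unit mixture over the three compliance types (always takers, compliers, never takers) that is independent across $i$ and multiplied by the true assignment kernel $P(A^{(n)}=a^{(n)}|S^{(n)})$. Throughout I would lean on Lemmas \ref{lem:QnMap} and \ref{lem:QnYMap}, which recover the type probabilities and the within-type potential-outcome densities from ${\bf P}$.

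For Part 1, I would integrate out $(y_{1,i},y_{0,i})$ and sum over $(d_{0,i},d_{1,i})$ inside each bracketed factor. The densities $f_{0,1}$, $\lambda_1$, $\lambda_0$, $f_{1,0}$ integrate to one in $y$ wherever their coefficients are positive, and the coefficient vanishes exactly where they are set to zero, handling the degenerate cases $\pi_{D(0)}(z)=0$ and $\pi_{D(1)}(z)=\pi_{D(0)}(z)$; the two point masses integrate to one trivially. Each factor then collapses to $\pi_{D(0)}(z_i)+(\pi_{D(1)}(z_i)-\pi_{D(0)}(z_i))+(1-\pi_{D(1)}(z_i))=1$, which is precisely the statement that the three type probabilities of Lemma \ref{lem:QnMap} sum to one. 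Integrating $dP(Z_i=z_i)$ and summing the assignment kernel over $a^{(n)}$ (each equal to one) shows ${\bf R}$ is a probability measure.

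For Part 2 I would show that for every observed cell $(A_i=a,D_i=d)$ the induced conditional law of $Y_i$ and the cell probability coincide with those under ${\bf P}$, using $D_i=D_i(A_i)$, $Y_i=Y_i(1)D_i+Y_i(0)(1-D_i)$, and $W^{(n)}\perp A^{(n)}\mid S^{(n)}$. The representative computation is $(A_i=1,D_i=1)$, which under ${\bf R}$ collects always takers and compliers; its mass is $f_{0,1}(y,z)\pi_{D(0)}(z)+\lambda_1(y,z)(\pi_{D(1)}(z)-\pi_{D(0)}(z))$, and the defining identity for $\lambda_1$ in \eqref{eq:mu_defn} telescopes this to $f_{1,1}(y,z)\pi_{D(1)}(z)=dP(Y(1)=y,D(1)=1\mid Z=z)$, which is exactly $dP(Y_i=y,D_i=1\mid A_i=1,Z_i=z)$ under ${\bf P}$ by Assumption \ref{ass:2}(a). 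The $(A_i=0,D_i=0)$ cell follows via the dual identity for $\lambda_0$, and the $A_i\neq D_i$ cells directly from $f_{0,1}$ and $f_{1,0}$. Since the marginal of $Z^{(n)}$ and the assignment kernel are the true ones by construction, matching all four cells gives ${\bf R}(X^{(n)};\beta_0,\beta_1)={\bf P}(X^{(n)})$. I expect this to be the main obstacle: the bookkeeping must confirm that $\beta_0,\beta_1$ enter only through the never-observed outcomes $Y(0)$ for always takers and $Y(1)$ for never takers, so that they drop out of every observed cell, which is what makes observational equivalence hold uniformly in $(\beta_0,\beta_1)$.

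For Part 3 I would compute $E_{\bf R}[Y_i(1)]$ and $E_{\bf R}[Y_i(0)]$ by conditioning on $Z_i=z$ and summing the three type contributions. For $Y_i(1)$, the always-taker and complier means recombine through the same $\lambda_1$ identity into $E[Y(1)D(1)\mid Z=z]$, while never takers contribute $(Y_L\beta_1+Y_H(1-\beta_1))(1-\pi_{D(1)}(z))$; integrating over $z$ gives $E[Y(1)D(1)+(Y_L\beta_1+Y_H(1-\beta_1))(1-D(1))]$, and symmetrically for $Y_i(0)$. Subtracting yields the stated formula, which is affine and monotone in each of $Y_L\beta_1+Y_H(1-\beta_1)\in[Y_L,Y_H]$ and $Y_L\beta_0+Y_H(1-\beta_0)\in[Y_L,Y_H]$. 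The corners $(\beta_1,\beta_0)=(1,0)$ and $(0,1)$ deliver the lower and upper endpoints of the displayed interval, and continuity in $(\beta_0,\beta_1)$ fills in everything between, so the union equals the claimed interval.
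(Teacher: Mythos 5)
Your proposal is correct and follows essentially the same route as the paper's proof: Part 1 via the same ``type probabilities sum to one'' collapse, Part 2 via the telescoping identities for $\lambda_1$ and $\lambda_0$ together with Assumption \ref{ass:2}(a) and the i.i.d.\ structure, and Part 3 via the same three-type decomposition of $E_{\bf R}[Y_i(1)]$ and $E_{\bf R}[Y_i(0)]$ with the corners $(\beta_1,\beta_0)=(1,0)$ and $(0,1)$ giving the endpoints. The only organizational difference is that you match the joint law of $(Y_i,D_i)$ given $(A_i,Z_i)$ cell by cell, whereas the paper factors it as $D^{(n)}$ given $(Z^{(n)},A^{(n)})$ times $Y^{(n)}$ given $(D^{(n)},A^{(n)},Z^{(n)})$; the underlying computation, including the key observation that $\beta_0,\beta_1$ enter only through unobserved potential outcomes, is the same.
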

%%%%%%%%%%
\begin{proof}
We fix $\beta _{0},\beta _{1}\in [ 0,1] $ arbitrarily throughout this proof.

\noindent \underline{Part 1.} The desired result follows from establishing:
\begin{align}
d{\bf R}(( W^{( n) },A^{( n) }) =(w^{(n)},a^{( n) }) ;\beta _{0},\beta _{1})& ~\geq ~0, \label{eq:main_inner1} \\
\sum\nolimits_{a^{( n) }}\int\nolimits_{w^{(n)}}d{\bf R}( ( W^{( n) },A^{( n) }) =(w^{(n)},a^{( n) }) ;\beta _{0},\beta _{1}) &~=~1.\label{eq:main_inner2}
\end{align}
First, \eqref{eq:main_inner1} follows from the fact that all objects in \eqref{eq:Rn_dist_defn} are non-negative. In particular, $\lambda _{1}( y,z) \geq 0$ and $\lambda _{0}( y,z) \geq 0$ follow from \eqref{eq:f_def}, \eqref{eq:mu_defn}, and the arguments in the proof of Lemma \ref{lem:QnYMap}.
Second, we now show \eqref{eq:main_inner2}. Note that \eqref{eq:Rn_dist_defn} implies
\begin{equation*}
{\bf R}(A^{( n) }=a^{( n) }|W^{( n) } = w^{(n)};\beta _{0},\beta _{1}) ~=~{\bf P}( A^{( n) }=a^{( n) }|S^{( n) }= (S( z_{i}) )_{i=1}^{n}) .
\end{equation*}
and, for $w_i = (y_{1,i},y_{0,i},d_{1,i},d_{0,i},z_{i})$,
\begin{align*}
&d{\bf R}(W_{i}=w_i ;\beta _{0},\beta _{1})~=~ \\
&\left[ 
{\small\begin{array}{c}
f_{0,1}( y_{1,i},z_{i})  I[ y_{0,i}=Y_{L}\beta _{0}+Y_{H}( 1-\beta _{0}) ]  \pi _{D( 0)}( z_{i})  I[ d_{0,i}=1,d_{1,i}=1]\\
+\lambda _{1}( y_{1,i},z_{i})  \lambda _{0}( y_{0,i},z_{i})  ( \pi _{D( 1)}( z_{i}) -\pi _{D( 0)}( z_{i}) )  I[ d_{0,i}=0,d_{1,i}=1] + \\
I[ y_{1,i}=Y_{L}\beta _{1}+Y_{H}( 1-\beta _{1}) ]  f_{1,0}( y_{0,i},z_{i})  ( 1-\pi _{D( 1) }( z_{i}) )  I[ d_{0,i}=0,d_{1,i}=0 ]
\end{array}}
\right] dP( Z_{i}=z_{i}) .
\end{align*}
To conclude \eqref{eq:main_inner2}, note that
\begin{equation*}
\sum\nolimits_{a^{( n) }}d{\bf R}(A^{( n) }=a^{( n) }|W^{( n) };\beta _{0},\beta _{1}) ~=~\sum\nolimits_{a^{( n) }}P( A^{( n) }=a^{( n) }|S^{( n) }=\{ S( z_{i}) \} _{i=1}^{n}) ~=~1 
\end{equation*}
and
\begin{align*}
&\sum_{d_{0,i},d_{1,i}\in \{ 0,1\} }\int_{y_{1,i},y_{0,i},z_{i}} 
\left[
{\scriptsize\begin{array}{c}
f_{0,1}( y_{1,i},z_{i})  I[ y_{0,i}=Y_{L}\beta _{0}+Y_{H}( 1-\beta _{0}) ]  \pi _{D( 0)}( z_{i})  I[ d_{0,i}=1,d_{1,i}=1] \\
+\lambda _{1}( y_{1,i},z_{i})  \lambda _{0}( y_{0,i},z_{i})  ( \pi _{D( 1)}( z_{i}) -\pi _{D( 0)}( z_{i}) )  I[ d_{0,i}=0,d_{1,i}=1]+ \\
 I[ y_{1,i}=Y_{L}\beta _{1}+Y_{H}( 1-\beta _{1}) ]  f_{1,0}( y_{0,i},z_{i})  ( 1-\pi _{D( 1)}( z_{i}) )  I[ d_{0,i}=0,d_{1,i}=0 ]
\end{array}}
\right]\\
&\times dP(Z_{i}=z_{i}) \\
% &\overset{(1)}{=}\sum_{d_{0,i},d_{1,i}\in \{ 0,1\} }\int_{y_{1,i},y_{0,i},z_{i}}\left[
% \begin{array}{c}
% f_{0,1}( y_{1,i},z_{i}) I[ y_{0,i}=Y_{L}\beta _{0}+Y_{H}( 1-\beta _{0}) ]  \pi _{D( 0) }( z_{i}) I[ d_{0,i}=1,d_{1,i}=1] \\
% + \frac{\left[ 
% \begin{array}{c}
% ( f_{1,1}( y_{1,i},z_{i}) \pi _{D( 1)}( z_{i}) -f_{0,1}( y_{1,i},z_{i}) \pi _{D( 0)}( z_{i}) ) \times \\
% ( f_{0,0}( y_{0,i},z_{i}) ( 1-\pi _{D( 0)}( z_{i}) ) -f_{1,0}( y_{0,i},z_{i}) ( 1-\pi _{D( 1)}( z_{i}) ) )
% \end{array}
% \right] }{( \pi _{D( 1)}( z_{i}) -\pi _{D( 0)}( z_{i}) ) } I[ d_{0,i}=0,d_{1,i}=1 ] \\
% +I[ y_{1,i}=Y_{L}\beta _{1}+Y_{H}( 1-\beta _{1}) ] f_{1,0}( y_{0,i},z_{i}) ( 1-\pi _{D( 1)}( z_{i}) ) I[ d_{0,i}=0,d_{1,i}=0 ]
% \end{array}
% \right]  \\
% &\times dP(Z_{i}=z_{i}) \\
&\overset{(1)}{=}\sum_{d_{0,i},d_{1,i}\in \{ 0,1\} }\left[
{\scriptsize\begin{array}{c}
\int_{z_{i}}\pi _{D( 0)}( z_{i}) dP(Z_{i}=z_{i})  I[ d_{0,i}=1,d_{1,i}=1] \\
+ \int_{z_{i}}( \pi _{D( 1)}( z_{i}) -\pi _{D( 0)}( z_{i}) ) dP(Z_{i}=z_{i}) dz_{i} I[ d_{0,i}=0,d_{1,i}=1] \\
+\int_{z_{i}}( 1-\pi _{D( 1)}( z_{i}) ) dP(Z_{i}=z_{i}) dz_{i} I[ d_{0,i}=0,d_{1,i}=0 ]
\end{array}}
\right]~\overset{(2)}{=}~1,
\end{align*}
where (1) holds by Lemma \ref{lem:QnMap} and \eqref{eq:f_def} and (2) by Lemma \ref{lem:QnMap} and Assumption \ref{ass:1}(b). 
%In the case of (1), Lemma \ref{lem:QnMap} is used to state that $\{ D_{i}( 0) =1,D_{i}( 1) =1,Z_{i}=z\} $ implies $\pi_{D(0)}(z)>0$, $\{ D_{i}( 0) =0,D_{i}( 1) =1,Z_{i}=z\} $ implies that $\pi _{D( 1)}( z) -\pi _{D( 0)}( z) >0$, and $\{ D_{i}( 0) =0,D_{i}( 1) =0,Z_{i}=z\} $ implies that $1-\pi _{D( 1)}( z) >0$.

\noindent \underline{Part 2.} We now derive $d{\bf R}(X^{( n) };\beta _{0},\beta _{1}) $ and verify that it coincides with $dP(X^{( n) }) $.

First, note that 
\begin{align} 
d{\bf R}(A^{( n) }=a^{( n) },Z^{( n) }=z^{( n) };\beta _{0},\beta _{1}) &\overset{(1)}{=}P(A^{( n) }=a^{( n) }|S^{( n) }=\{ S( z_{i}) \} _{i=1}^{n}) \prod\limits_{i=1}^{n}dP(Z_{i}=z_{i})\notag \\
&\overset{(2)}{=}dP(A^{( n) }=a^{( n) },Z^{( n) }=z^{( n) }) ,\label{eq:distEqual_1}
\end{align}
where (1) holds by \eqref{eq:Rn_dist_defn} and (2) by Assumption \ref{ass:2}(a).

Second, for any $( a^{( n) },z^{( n) }) $ s.t.\ $d{\bf R}(A^{( n) }=a^{( n) },Z^{( n) }=z^{( n) };\beta _{0},\beta _{1}) >0$, \eqref{eq:Rn_dist_defn} implies that
\begin{align} 
  \text{\small${\bf R}(D^{( n) }=d^{( n) }|Z^{( n) }=z^{( n) },A^{( n) }=a^{( n) },S^{( n) };\beta _{0},\beta _{1})=$} 
  % \notag \\
%& =\sum_{b^{( n) }\in \{ 0,1\} ^{n}}{\bf R}( \{ D_{i}( a_{i}) =d_{i},D_{i}( 1-a_{i}) =b_{i}\} _{i=1}^{n}|Z^{( n) }=z^{( n) },A^{( n) }=a^{( n) },S^{( n) };\beta _{0},\beta _{1})   \notag \\
 % =
%\overset{(1)}{=}~
% \sum_{b^{( n) }\in \{ 0,1\}^{n}}\prod_{i=1}^{n}\{ 
% \begin{array}{c}
% {\bf R}( D_{i}( 1) =d_{i},D_{i}( 0)=b_{i}|Z_{i}=z_{i}) I[ a_{i}=1] \\
% +{\bf R}( D_{i}( 0) =d_{i},D_{i}( 1)=b_{i}|Z_{i}=z_{i}) I[ a_{i}=0]
% \end{array}
% \}   \notag \\
% & =
% \prod_{i=1}^{n}\sum_{b_{i}\in \{ 0,1\} }\left[ 
% \begin{array}{c}
% {\bf R}(D_{i}( 1) =d_{i},D_{i}( 0) =b_{i}|Z_{i}=z_{i}) I[ a_{i}=1]\\
% +{\bf R}( D_{i}( 0) =d_{i},D_{i}( 1) =b_{i}|Z_{i}=z_{i}) I[ a_{i}=0]
% \end{array}
% \right]   \notag \\
% & =~\prod_{i=1}^{n}\left[ 
% \begin{array}{c}
% {\bf R}(D_{i}( 1) =1,D_{i}( 0) =1|Z_{i}=z_{i}) I[ d_{i}=1] +{\bf R}(D_{i}( 0) =0,D_{i}( 1) =0|Z_{i}=z_{i}) I[ d_{i}=0] + \\
% {\bf R}( D_{i}( 1) =1,D_{i}( 0) =0|Z_{i}=z_{i}) I[ a_{i}=1,d_{i}=1] +{\bf R}(D_{i}( 0) =0,D_{i}( 1) =1|Z_{i}=z_{i}) I[ a_{i}=0,d_{i}=0]
% \end{array}
% \right] \notag \\
% & \overset{(2)}{=}~ 
% \prod_{i=1}^{n}
% \{ 
% \begin{array}{c}
% ( 1-\pi _{D( 0) }( z_{i}) ) I[ a_{i}=0,d_{i}=0] +\pi _{D( 0) }( z_{i}) I[ a_{i}=0,d_{i}=1]  \\ 
% +( 1-\pi _{D( 1) }( z_{i}) ) I[ a_{i}=1,d_{i}=0] +\pi _{D( 1) }( z_{i}) I[ a_{i}=1,d_{i}=1] 
% \end{array}%
% \} = 
\text{\small $\prod_{i=1}^{n} \pi _{D( a_{i}) }( z_{i}) ^{d_{i}}( 1-\pi _{D( a_{i})}( z_{i}) ) ^{1-d_{i}}$ }, \label{eq:main_inner4}
\end{align}
Also, for any $( a^{( n) },z^{( n) }) $ s.t.\ $dP(A^{( n) }=a^{( n) },Z^{( n) }=z^{( n) }) >0$,
\begin{align}
& P(D^{( n) }=d^{( n) }|Z^{( n)}=z^{( n) },A^{( n) }=a^{( n) },S^{( n) }) \notag \\
% & =\sum_{b^{( n) }\in \{ 0,1\} ^{n}}P( \{ D_{i}( a_{i}) =d_{i},D_{i}( 1-a_{i}) =b_{i}\} _{i=1}^{n}|Z^{( n) }=z^{( n) },A^{( n) }=a^{( n) },S^{( n) })   \notag \\
& \overset{(1)}{=}~
% \sum_{b^{( n) }\in \{ 0,1\} ^{n}}\prod_{i=1}^{n}\{ 
% \begin{array}{c}
% Q_{n}( D_{i}( a_{i}) =d_{i},D_{i}( 1-a_{i}) =b_{i}|Z_{i}=z_{i}) I[ a_{i}=1]  \\ 
% +Q_{n}( D_{i}( a_{i}) =d_{i},D_{i}( 1-a_{i}) =b_{i}|Z_{i}=z_{i}) I[ a_{i}=0] 
% \end{array}
% \}   \notag \\
% & =
\prod_{i=1}^{n}\sum_{b_{i}\in \{ 0,1\} }\left[
\begin{array}{c}
P( D_{i}( a_{i}) =d_{i},D_{i}( 1-a_{i}) =b_{i}|Z_{i}=z_{i}) I[ a_{i}=1]\\
+P( D_{i}( a_{i}) =d_{i},D_{i}( 1-a_{i}) =b_{i}|Z_{i}=z_{i}) I[ a_{i}=0]
\end{array}
\right]  \notag \\
% & =\prod_{i=1}^{n}\{ 
% \begin{array}{c}
% Q_{n}( D_{i}( 0) =1,D_{i}( 1)=1|Z_{i}=z_{i}) I[ d_{i}=1]  \\ 
% +Q_{n}( D_{i}( 0) =0,D_{i}( 1)=0|Z_{i}=z_{i}) I[ d_{i}=0]  \\ 
% +Q_{n}( D_{i}( 0) =0,D_{i}( 1)=1|Z_{i}=z_{i}) I[ a_{i}=1,d_{i}=1]  \\ 
% +Q_{n}( D_{i}( 0) =0,D_{i}( 1)=1|Z_{i}=z_{i}) I[ a_{i}=0,d_{i}=0] 
% \end{array}%
% \}   \notag \\
& \overset{(2)}{=}~
% \prod_{i=1}^{n} 
% \{ 
% \begin{array}{c}
% ( 1-\pi _{D( 0) }( z_{i}) ) I[ a_{i}=0,d_{i}=0] +\pi _{D( 0) }( z_{i}) I[ a_{i}=0,d_{i}=1]  \\ 
% +( 1-\pi _{D( 1) }( z_{i}) ) I[ a_{i}=1,d_{i}=0] +\pi _{D( 1) }( z_{i}) I[ a_{i}=1,d_{i}=1] 
% \end{array}%
% \} = 
\prod_{i=1}^{n} \pi _{D( a_{i})}( z_{i}) ^{d_{i}}( 1-\pi _{D( a_{i})}( z_{i}) ) ^{1-d_{i}} ,  \label{eq:main_inner5}
\end{align}
where (1) holds by Assumption \ref{ass:1} and (2) by Lemma \ref{lem:QnMap}. Note that \eqref{eq:distEqual_1}, \eqref{eq:main_inner4}, and \eqref{eq:main_inner5} imply that 
\begin{equation}
{\bf R}(D^{( n) },A^{( n) },Z^{( n) };\beta _{0},\beta _{1}) ~=~P(D^{( n) },A^{( n) },Z^{( n) }) . \label{eq:distEqual_2}
\end{equation}

Third, for any $( d^{( n) },a^{( n) },z^{( n) }) $ s.t.\ $d{\bf R}( D^{( n) }=d^{( n) },A^{( n) }=a^{( n) },Z^{( n) }=z^{( n) };\beta _{0},\beta _{1}) >0$. Then, \eqref{eq:Rn_dist_defn} implies
{\begin{align}
&d{\bf R}(D^{( n) }=d^{( n) },A^{( n) }=a^{( n) },Z^{( n) }=z^{( n) };\beta _{0},\beta _{1})= \notag \\
%=dR( \{ D_{i}( a_{i}) \} _{i=1}^{n}=d^{( n) },A^{( n) }=a^{( n) },Z^{( n) }=z^{( n) };\beta _{0},\beta _{1})   \notag \\
&\Big( \prod_{i=1}^{n}d{\bf R}(D_{i}( a_{i}) =d_{i}|Z_{i}=z_{i};\beta _{0},\beta _{1}) dP(Z_{i}=z_{i}) \Big)  P(A^{( n) }=a^{( n) }|S^{( n) }=\{ S( z_{i}) \} _{i=1}^{n}).\label{eq:main_inner5b}
\end{align}}
% where (1) holds by \eqref{eq:Rn_dist_defn}. 
Note that \eqref{eq:main_inner5b} and $d{\bf R}( D^{( n) }=d^{( n) },A^{( n) }=a^{( n) },Z^{( n) }=z^{( n) };\beta _{0},\beta _{1}) >0$ implies that
\begin{equation}
    d{\bf R}(D_{i}( a_{i})=d_{i}|Z_{i}=z_{i};\beta _{0},\beta _{1})~>~0~~~\text{for all}~i=1,\dots,n.\label{eq:main_inner6}
\end{equation}
For $( d^{( n) },a^{( n) },z^{( n) }) $ mentioned earlier, we have 
\begin{align}
& d{\bf R}(Y^{( n) }=y^{( n) }|D^{( n) }=d^{( n) },A^{( n) }=a^{( n) },Z^{( n) }=z^{( n) };\beta _{0},\beta _{1})\notag\\
% & =\frac{d{\bf R}( \{ Y_{i}( d_{i}) =y_{i}\} _{i=1}^{n},\{ D_{i}( a_{i}) =d_{i}\} _{i=1}^{n}|A^{( n) }=a^{( n) },Z^{( n) }=z^{( n) };\beta _{0},\beta _{1}] }{d{\bf R}( \{ D_{i}( a_{i}) =d_{i}\} _{i=1}^{n}|A^{( n) }=a^{( n) },Z^{( n) }=z^{( n) }) }  \notag \\
% & \overset{(1)}{=}\prod\limits_{i=1}^{n}\frac{d{\bf R}( Y_{i}( d_{i})=y_{i},D_{i}( a_{i}) =d_{i}|Z_{i}=z_{i};\beta _{0},\beta_{1}) }{d{\bf R}( D_{i}( a_{i}) =d_{i}|Z_{i}=z_{i};\beta_{0},\beta _{1}] }  \notag \\
& \overset{(1)}{=}~\prod\limits_{i=1}^{n}\left[
\begin{array}{c}
\frac{d{\bf R}( Y_{i}( 1) =y_{i},D_{i}( 1) =1|Z_{i}=z_{i};\beta _{0},\beta _{1}) }{d{\bf R}( D_{i}( 1) =1|Z_{i}=z_{i};\beta _{0},\beta _{1}) }I[ d_{i}=1,a_{i}=1 ] \\
+\frac{d{\bf R}( Y_{i}( 0) =y_{i},D_{i}( 0) =0|Z_{i}=z_{i};\beta _{0},\beta _{1}) }{d{\bf R}( D_{i}( 0) =0|Z_{i}=z_{i};\beta _{0},\beta _{1}) }I[ d_{i}=0,a_{i}=0 ] \\
+\frac{d{\bf R}( Y_{i}( 1) =y_{i},D_{i}( 0) =1|Z_{i}=z_{i};\beta _{0},\beta _{1}) }{d{\bf R}( D_{i}( 0) =1|Z_{i}=z_{i};\beta _{0},\beta _{1}) }I[ d_{i}=1,a_{i}=0 ]\\
+\frac{d{\bf R}( Y_{i}( 0) =y_{i},D_{i}( 1) =0|Z_{i}=z_{i};\beta _{0},\beta _{1}) }{d{\bf R}( D_{i}( 1) =0|Z_{i}=z_{i};\beta _{0},\beta _{1}) }I[ d_{i}=0,a_{i}=1 ]
\end{array}
\right] \notag \\
& \overset{(2)}{=}~\prod\limits_{i=1}^{n}\left[
\begin{array}{c}
\frac{{\scriptsize\left(
\begin{array}{c}
d{\bf R}( Y_{i}( 1) =y_{i},D_{i}( 0) =0,D_{i}( 1) =1|Z_{i}=z_{i};\beta _{0},\beta _{1}) \\
+d{\bf R}( Y_{i}( 1) =y_{i},D_{i}( 0) =1,D_{i}( 1) =1|Z_{i}=z_{i};\beta _{0},\beta _{1})
\end{array}
\right)} }{ d{\bf R}( D_{i}( 0) =0,D_{i}( 1) =1|Z_{i}=z_{i};\beta _{0},\beta _{1}) +d{\bf R}( D_{i}( 0) =1,D_{i}( 1) =1|Z_{i}=z_{i};\beta _{0},\beta _{1}) }I[ d_{i}=1,a_{i}=1] \\
+\frac{\left(
{\scriptsize\begin{array}{c}
d{\bf R}( Y_{i}( 0) =y_{i},D_{i}( 0) =0,D_{i}( 1) =1|Z_{i}=z_{i};\beta _{0},\beta _{1}) \\+d{\bf R}( Y_{i}( 0) =y_{i},D_{i}( 0) =0,D_{i}( 1) =0|Z_{i}=z_{i};\beta _{0},\beta _{1})
\end{array}}
\right) }{ d{\bf R}( D_{i}( 0) =0,D_{i}( 1) =1|Z_{i}=z_{i};\beta _{0},\beta _{1}) +d{\bf R}( D_{i}( 0) =0,D_{i}( 1) =0|Z_{i}=z_{i};\beta _{0},\beta _{1}) }I[ d_{i}=0,a_{i}=0] \\
+\frac{d{\bf R}( Y_{i}( 1) =y_{i},D_{i}( 0) =1,D_{i}( 1) =1|Z_{i}=z_{i};\beta _{0},\beta _{1}) }{ d{\bf R}( D_{i}( 0) =1,D_{i}( 1) =1|Z_{i}=z_{i};\beta _{0},\beta _{1}) }I[ d_{i}=1,a_{i}=0] \\
+\frac{d{\bf R}( Y_{i}( 0) =y_{i},D_{i}( 0) =0,D_{i}( 1) =0|Z_{i}=z_{i};\beta _{0},\beta _{1}) }{ d{\bf R}( D_{i}( 0) =0,D_{i}( 1) =0|Z_{i}=z_{i};\beta _{0},\beta _{1}) }I[ d_{i}=0,a_{i}=1]
\end{array}
\right] \notag \\
& \overset{(3)}{=}~\prod\limits_{i=1}^{n}f_{a_{i},d_{i}}( y_{i},z_{i}) ,
\end{align}
where (1) holds by \eqref{eq:Rn_dist_defn}, \eqref{eq:main_inner6}, and the fact that $\{ D_{i}=d,A_{i}=a\} $ implies that $d{\bf R}(D_{i}( a) =d|Z_{i}=z;\beta _{0},\beta _{1}) =\pi _{D( a)}( z_{i}) ^{d_{i}}( 1-\pi _{D( a) }( z_{i}) ) ^{1-d_{i}}>0$ for all $i=1,\ldots ,n$, (2) by Assumption \ref{ass:1}(b), and (3) by \eqref{eq:Rn_dist_defn} and \eqref{eq:mu_defn}.

Next, consider any $( d^{( n) },a^{( n) },z^{( n) }) $ s.t. $dP( D^{( n) }=d^{( n) },A^{( n) }=a^{( n) },Z^{( n) }=z^{( n) }) >0$. Consider any arbitrary $ i=1,\ldots ,n$ arbitrarily. Note that $dP(D^{( n) }=d^{( n) },A^{( n) }=a^{( n) },Z^{( n) }=z^{( n) }) >0$ implies that
\begin{equation}
dP(D_{i}=d_{i},A_{i}=a_{i},Z^{( n) }=z^{( n) }) ~>~0.\label{eq:main_inner7}
\end{equation}%
Also, note that
\begin{align}
&dP(D_{i}=d_{i},A_{i}=a_{i},Z^{( n) }=z^{( n) }) \notag\\
%&=dP( D_{i}( a_{i}) =d_{i}|A_{i}=a_{i},Z^{( n) }=z^{( n) }) P( A_{i}=a_{i}|Z^{( n) }=z^{( n) }) dQ( Z^{( n) }=z^{( n) })   \notag \\
% &=~\left\{ 
% \begin{array}{l}
% P(D_{i}( a_{i}) =d_{i}|A_{i}=a_{i},S^{( n) }=\{ S( z_{i}) \} _{i=1}^{n},Z^{( n) }=z^{( n) }) \times\\
% P(A_{i}=a_{i}|Z^{( n) }=z^{( n) },S^{( n) }=\{ S( z_{i}) \} _{i=1}^{n}) dP( Z^{( n) }=z^{( n) })
% \end{array}\right\}
% \notag \\
% &=P( D_{i}( a_{i}) =d_{i}|A_{i}=a_{i},S^{(n) }=\{ S( z_{i}) \} _{i=1}^{n},Z^{(n) }=z^{( n) }) P( A_{i}=a_{i}|S^{(n) }=\{ S( z_{i}) \} _{i=1}^{n}) dQ(Z^{( n) }=z^{( n) })   \notag \\
&\overset{(1)}{=}~
% P(D_{i}( a_{i}) =d_{i}|S^{( n) }=\{ S( z_{i}) \} _{i=1}^{n},Z^{( n) }=z^{( n) }) P(A_{i}=a_{i}|S^{( n) }=\{ S( z_{i}) \} _{i=1}^{n}) dP( Z^{( n) }=z^{( n) }) \notag \\
% &=~
P(D_{i}( a_{i}) =d_{i}|Z^{( n) }=z^{( n) }) P(A_{i}=a_{i}|S^{( n) }=\{ S( z_{i}) \} _{i=1}^{n}) dP( Z^{( n) }=z^{( n) }) \notag \\
&\overset{(2)}{=}~P(D_{i}( a_{i}) =d_{i}|Z_{i}=z_{i}) P(A_{i}=a_{i}|S^{( n) }=\{ S( z_{i}) \} _{i=1}^{n}) dP( Z^{( n) }=z^{( n) }) \notag \\
&\overset{(3)}{=}~\pi _{D( a_{i}) }( z_{i}) ^{d_{i}}( 1-\pi _{D( a_{i}) }( z_{i}) ) ^{1-d_{i}} P( A_{i}=a_{i}|S^{( n) }=\{ S( z_{i}) \} _{i=1}^{n}) dP( Z^{( n) }=z^{( n) }) , \label{eq:main_inner8}
\end{align}
where (1) holds by Assumption \ref{ass:2}(a), (2) by the i.i.d.\ condition in Assumption \ref{ass:1}, and (3) by \eqref{eq:pi_def}. Then, \eqref{eq:main_inner7} and \eqref{eq:main_inner8} imply that
\begin{equation}
\pi _{D( a_{i})}( z_{i}) ^{d_{i}}( 1-\pi _{D( a_{i}) }( z_{i}) ) ^{1-d_{i}}~>~0~~\text{for all}~i=1,\ldots ,n.\label{eq:main_inner9}
\end{equation}
Then,
\begin{align}
& dP(Y^{( n) }=y^{( n) }|D^{( n) }=d^{( n) },A^{( n) }=a^{( n) },Z^{( n) }=z^{( n) }) \notag \\
% & =~
%dP(\{ Y_{i}( d_{i}) =y_{i}\} _{i=1}^{n}|\{ D_{i}( a_{i}) =d_{i}\} _{i=1}^{n},A^{( n) }=a^{( n) },S^{( n) }=\{ S( z_{i}) \} _{i=1}^{n},Z^{( n) }=z^{( n) }) \notag \\
& \overset{(1)}{=}~dP(\{ Y_{i}( d_{i}) =y_{i}\} _{i=1}^{n}|\{ D_{i}( a_{i}) =d_{i}\} _{i=1}^{n},S^{( n) }=\{ S( z_{i}) \} _{i=1}^{n},Z^{( n) }=z^{( n) }) \notag \\
% & =~dP(\{ Y_{i}( d_{i}) =y_{i}\} _{i=1}^{n}|\{ D_{i}( a_{i}) =d_{i}\} _{i=1}^{n},Z^{( n) }=z^{( n) }) \notag \\
& \overset{(2)}{=}~\textstyle\prod_{i=1}^{n}dP(Y_{i}( d_{i}) =y_{i}|D_{i}( a_{i}) =d_{i},Z_{i}=z_{i}) \notag \\
& \overset{(3)}{=}~\textstyle\prod_{i=1}^{n}f_{a_{i},d_{i}}( y_{i},z_{i}) , \label{eq:main_inner10}
\end{align}
where (1) by Assumption \ref{ass:2}(a), (2) by the i.i.d.\ condition in Assumption \ref{ass:1}, and (3) by \eqref{eq:f_def} and \eqref{eq:main_inner9}. To conclude the proof, note that \eqref{eq:distEqual_2}, \eqref{eq:main_inner8}, and \eqref{eq:main_inner10} imply that ${\bf R}(X^{( n) };\beta _{0},\beta _{1})={\bf P}(X^{( n) })$,
as desired.

\noindent \underline{Part 3.}  
The desired results follow immediately from showing that:
\begin{align}
    E_{{\bf R}}[ Y_{i}( 1) ;\beta _{0},\beta _{1}] ~&=~ E[ Y( 1) D( 1) +( Y_{L}\beta _{1}+Y_{H}(1-\beta _{1})) ( 1-D( 1) ) ]\label{eq:main_inner11}\\
     E_{{\bf R}}[ Y_{i}( 0) ;\beta _{0},\beta _{1}] ~&=~ E [ Y( 0) ( 1-D( 0) ) +( Y_{L}\beta _{0}+Y_{H}( 1-\beta _{0}) ) D( 0) ] .\label{eq:main_inner12}
\end{align}
We only show \eqref{eq:main_inner11}, as \eqref{eq:main_inner12} follows from an analogous argument. To this end, 
\begin{align}
&E_{{\bf R}}[ Y_{i}( 1) ;\beta _{0},\beta _{1}]\notag\\
&~=~\textstyle\sum\nolimits_{a^{( n) }}\int_{w^{( n) }}y_{1,i}d{\bf R}((W^{(n)},A^{(n)})=(w^{(n)},a^{(n)});\beta _{0},\beta _{1}) dw^{( n) } \notag\\
&~\overset{(1)}{=}~
\text{\scriptsize $\int_{z_{i},y_{1,i}}\sum_{d_{0,i},d_{1,i}}y_{1,i}$}\left[
{\scriptsize\begin{array}{c}
f_{0,1}(y_{1,i},z_{i})  \pi _{D(0)}(z_{i})  I[d_{0,i}=d_{1,i}=1]\\
+\lambda _{1}(y_{1,i},z_{i})  (\pi _{D(1)}(z_{i})-\pi _{D(0)}(z_{i}))  I[d_{0,i}=0,d_{1,i}=1]+ \\
I[y_{1,i}=Y_{L}\beta _{1}+Y_{H}(1-\beta _{1})] 
(1-\pi _{D(1)}(z_{i}))  I[d_{0,i}=d_{1,i}=0]
\end{array}}
\right] \text{\scriptsize $dy_{1,i}dP(Z_{i}=z_{i})$}  \notag \\
&~\overset{(2)}{=}~
%\int_{z_{i}}\bigg[ \int_{y_{1,i}}y_{1,i}f_{1,1}(y_{1,i},z_{i})\pi _{D(1)}(z_{i})dy_{1,i}+( Y_{L}\beta _{1}+Y_{H}(1-\beta _{1})) (1-\pi _{D(1)}(z_{i}))\bigg] dP(Z_{i}=z_{i}) \notag \\
%&~\overset{(3)}{=}~
E[ Y_{i}( 1) D_{i}( 1) +( Y_{L}\beta _{1}+Y_{H}(1-\beta _{1})) ( 1-D_{i}( 1) ) ] \notag \\
&~\overset{(3)}{=}~E[ Y( 1) D( 1) +( Y_{L}\beta _{1}+Y_{H}(1-\beta _{1})) ( 1-D( 1) ) ] ,\notag 
\end{align}
as desired, where (1) holds by \eqref{eq:Rn_dist_defn}, $\sum_{a^{( n) }}P(A^{(n)}=a^{(n)}|S^{(n)}=\{S(z_{i})\}_{i=1}^{n})=1$, for all $ j=1,\ldots ,n$ with $j\not=i$,
\begin{equation*}
\text{\scriptsize $\int_{z_{j},y_{1,j}}$}\left[ 
{\scriptsize\begin{array}{c}
f_{0,1}(y_{1,j},z_{j})  I[y_{0,j}=Y_{L}\beta _{0}+Y_{H}(1-\beta _{0})]  \pi _{D(0)}(z_{j})  I[d_{0,j}=d_{1,j}=1] +\\
\lambda _{1}(y_{1,j},z_{j})  \lambda _{0}(y_{0,j},z_{j})  (\pi _{D(1)}(z_{j})-\pi _{D(0)}(z_{j}))  I[d_{0,j}=0,d_{1,j}=1]+ \\
I[y_{1,j}=Y_{L}\beta _{1}+Y_{H}(1-\beta _{1})]  f_{1,0}(y_{0,j},z_{j})  (1-\pi _{D(1)}(z_{j}))  I[d_{0,j}=d_{1,j}=0]
\end{array}}
\right] \text{\scriptsize $dy_{1,j}dP(Z_{j}=z_{j})$}=1,
\end{equation*}
and $\sum_{y_{0,i}}I[y_{0,i}=Y_{L}\beta _{0}+Y_{H}(1-\beta _{0})]=1$, $\int_{y_{0,i}}\lambda _{0}(y_{0,i},z_{i})dy_{0,i}=1$ if $(\pi _{D(1)}(z_{i})-\pi _{D(0)}(z_{i}))>0$, and $ \int_{y_{0,i}}f_{1,0}(y_{0,i},z_{i})dy_{0,i}=1$ if $(1-\pi _{D(1)}(z_{j}))>0$, (2) by \eqref{eq:pi_def} and \eqref{eq:f_def}, and (3) by the i.i.d.\ condition in Assumption \ref{ass:1}.
\end{proof}

\subsection{ATT}\label{sec:appendix1ATT}

\begin{proof}[Proof of Theorem \ref{thm:mainATT}]
Fix $i=1,\dots ,n$ arbitrarily throughout this proof. We begin by showing that $W_{i}\perp A_{i}|S_{i}$, where $W_{i}=(Y_{i}(1),Y_{i}(0),D_{i}(1),D_{i}(0),Z_{i})$. For any fixed $s\in \mathcal{S}$, and $w\in \mathbb{R}^{5}$ and $a\in \{0,1\}$,
\begin{align*}
P[W_{i}\leq w,A_{i}=a|S_{i}=s] &~=~P[P[W_{i}\leq w,A_{i}=a|S^{(n)}]|S_{i}=s] \\
&~\overset{(1)}{=}~P[P[W_{i} \leq w|S^{(n)}]P[A_{i}=a|S^{(n)}]|S_{i}=s] \\
&~\overset{(2)}{=}~P[P[W_{i} \leq w|S_{i}=s]P[A_{i}=a|S^{(n)}]|S_{i}=s] \\
&~=~P[W_{i}\leq w|S_{i}=s]P[A_{i}=a|S_{i}=s],
\end{align*}%
as desired, where (1) holds by Assumption \ref{ass:2}(a), and (2) by the i.i.d.\ condition in Assumption \ref{ass:1}.

Second, consider the following derivation.
\begin{align}
G&~\overset{(1)}{=}~E\left[ \left( \frac{D_{i}A_{i}}{P(A_{i}=1|S_{i})}-\frac{D_{i}(1-A_{i})}{1-P(A_{i}=1|S_{i})}\right) P(A_{i}=1|S_{i})+\frac{D_{i}(1-A_{i})}{1-P(A_{i}=1|S_{i})}\right]  \notag \\
&~\overset{(2)}{=}~E\left[ \left( \frac{D_{i}(1) A_{i}}{P(A_{i}=1|S_{i})}-\frac{D_{i}(0) (1-A_{i})}{1-P(A_{i}=1|S_{i})}\right) P(A_{i}=1|S_{i})+\frac{D_{i}(0) (1-A_{i})}{1-P(A_{i}=1|S_{i})}\right]  \notag \\
&~=E\left[ (D_{i}(1) -D_{i}(0) )A_{i}+D_{i}(0) \right]  \notag \\
&~\overset{(3)}{=}~E\left[ E[D_{i}(1)|S_{i}]P(A_{i}=1|S_{i})+E[D_{i}(0) |S_{i}](1-P(A_{i}=1|S_{i}))\right]  \notag \\
&~\overset{(4)}{=}~E\left[ E[D(1) |S]P(A=1|S_{j})+E[D(0) |S](1-P(A=1|S))\right]  \notag \\
&~=~E\left[ (D(1) -D(0) )A+D(0) \right] , \label{eq:mainATT_proof_1}
\end{align}%
where (1) holds by \eqref{eq:bounds_ATT_G}, (2) by the fact that $D_{i}=D_{i}(A_{i})$, (3) by $W_{i}\perp A_{i}|S_{i}$, and (4) by Assumption \ref{ass:2}(e) and the i.i.d.\ condition in Assumption \ref{ass:1}. Note that \eqref{eq:mainATT_proof_1} shows that $G=E [ (D_{i}(1) -D_{i}(0) )A_{i}+D_{i}(0) ]$, and that this expression is the same for all $i=1,\ldots ,n$.

Third, for $\bar{Y}\in \left\{ Y_{L},Y_{H}\right\} $, consider the following argument.
\begin{align}
& E\left[ \left( \frac{Y_{i}A_{i}}{P(A_{i}=1|S_{i})}-\frac{Y_{i}(1-A_{i})}{1-P(A_{i}=1|S_{i})}\right) P(A_{i}=1|S_{i})+\frac{(Y_{i}(1) -\tilde{Y})D_{i}(0) (1-A_{i})}{1-P(A_{i}=1|S_{i})}\right] \notag \\
& ~\overset{(1)}{=}~E\left[ 
\begin{array}{c}
Y_{i}(1) D_{i}(1) A_{i}+Y_{i}(0) (1-D_{i}(1) )A_{i} \\ 
-\frac{(Y_{i}(1)D_{i}(0)+Y_{i}(0)(1-D_{i}(0)))(1-A_{i})}{1-P(A_{i}=1|S_{i})}P(A_{i}=1|S_{i})+\frac{(Y_{i}(1) -\tilde{Y})D_{i}(0) (1-A_{i})}{1-P(A_{i}=1|S_{i})}
\end{array}
\right] \notag \\
& ~\overset{(2)}{=}~E\left[ 
\begin{array}{c}
E[Y_{i}(1) D_{i}(1) +Y_{i}(0) (1-D_{i}(1) )|S_{i}]P(A_{i}=1|S_{i})+E[(Y_{i}(1) -\tilde{Y})D_{i}(0) |S_{i}] \\ 
-\frac{E[(Y_{i}(1)D_{i}(0)+Y_{i}(0)(1-D_{i}(0)))|S_{i}]E[(1-A_{i})|S_{i}]}{ 1-P(A_{i}=1|S_{i})}P(A_{i}=1|S_{i})
\end{array}
\right] \notag \\
& ~\overset{(3)}{=}~E\left[ E[(Y_{i}(1) -Y_{i}(0) )(D_{i}(1) -D_{i}(0))|S_{i}]P(A_{i}=1|S_{i})+E[(Y_{i}(1) -\tilde{Y})D_{i}(0) |S_{i}]\right] \notag \\
& ~\overset{(4)}{=}~E\left[ E[(Y(1) -Y(0) )(D(1) -D(0))|S]P(A=1|S)+E[(Y(1) -\tilde{Y})D(0) |S]\right] \notag \\
& ~\overset{(5)}{=}~E[(Y(1)-Y(0))(D(1)-(0))A+(Y(1) -\tilde{Y})D(0) ], \label{eq:mainATT_proof_2}
\end{align}
where (1) holds by $Y_{i}=Y_{i}(D_{i})$ and $D_{i}=D_{i}(A_{i})$, (2) and (5) by $W_{i}\perp A_{i}|S_{i}$, (3) by $ P(A_{i}=1|S_{i})=E[E[A_{i}|S^{(n)}]|S_{i}]\in (0,1)$ (which, in turn, follows from Assumption \ref{ass:1}(a) and \ref{ass:2}(b), and \citet[Theorem D, page 104]{halmos:1974}, and (4) by Assumption \ref{ass:2}(e) and the i.i.d.\ condition in Assumption \ref{ass:1}. By \eqref{eq:mainATT_proof_1} and \eqref{eq:mainATT_proof_2}, we conclude that the expressions of $\upsilon_{L}(P)$ and $\upsilon_{H}(P)$ are the same for all $i=1,\ldots ,n$, and they satisfy 
\begin{align}
G\upsilon_{L}(\mathbf{P}) &~=~E[(Y_{i}(1)-Y_{i}(0))(D_{i}(1)-D_{i}(0))A_{i}+(Y_{i}(1) -Y_{H})D_{i}(0) ] \notag\\
G\upsilon_{H}(\mathbf{P}) &~=~E[(Y_{i}(1)-Y_{i}(0))(D_{i}(1)-D_{i}(0))A_{i}+(Y_{i}(1) -Y_{L})D_{i}(0) ]. \label{eq:mainATT_proof_3}
\end{align}

Fourth, we show that the ATT $\upsilon$ satisfies
\begin{equation}
G\upsilon~=~E[(Y_{i}(1)-Y_{i}(0))(D_{i}(1)-D_{i}(0))A_{i}+(Y_{i}(1) -Y_{i}(0))D_{i}(0) ]. \label{eq:mainATT_proof_4}
\end{equation}%
To this end, consider the following derivation.
\begin{equation}
P(D_{i}=1)~=~E[D_{i}]~\overset{(1)}{=}~E[(D_{i}(1) -D_{i}(0))A_{i}+D_{i}(0)A_{i}]~\overset{(2)}{=}~G, \label{eq:mainATT_proof_5}
\end{equation}%
where (1) holds by $D_{i}=D_{i}(A_{i})$ and (2) by \eqref{eq:mainATT_proof_1}. Note that Assumption \ref{ass:2}(d) and $P(A_{i}=1|S_{i})=E[E[A_{i}|S^{(n)}]|S_{i}]>0$ implies that $G=P(D_{i}=1)>0$. Then, we have
\begin{align*}
Gv &~\overset{(1)}{=}~P(D_{i}=1)E[Y_{i}(1)-Y_{i}(0)|D_{i}=1] \\
% &~=~E[(Y_{i}(1)-Y_{i}(0))D_{i}] \\
% &~=~E[(Y_{i}(1)-Y_{i}(0))D_{i}(1) A_{i}+(Y_{i}(1)-Y_{i}(0))D_{i}(0) (1-A_{i})] \\
&~=~E[(Y_{i}(1)-Y_{i}(0))(D_{i}(1) -D_{i}(0))A_{i}+(Y_{i}(1)-Y_{i}(0))D_{i}(0) ],
\end{align*}
as desired, where (1) holds by \eqref{eq:mainATT_proof_5}.

Next, we establish that $[\upsilon_{L}(\mathbf{P}),\upsilon_{H}(\mathbf{P})]$ are the sharp bounds for the ATT. By \eqref{eq:mainATT_proof_3}, \eqref{eq:mainATT_proof_4}, \eqref{eq:mainATT_proof_5}, and $ Y_{i}(0)\in \lbrack Y_{L},Y_{H}]$, it follows immediately that $v(\mathbf{P} )\in [\upsilon_{L}(\mathbf{P}),\upsilon_{H}(\mathbf{P})]$. That is, $[\upsilon_{L}(\mathbf{ P}),\upsilon_{H}(\mathbf{P})]$ is an outer identified set. In turn, Lemma \ref{lem:main_innerATT} proposes a hypothetical distribution for all underlying variables that: (i) is observationally equivalent to the data distribution $\mathbf{P}$ and (ii) produces ATTs that span $[\upsilon _{L}(\mathbf{P}),\upsilon _{H}(\mathbf{P})]$. As a corollary, $[\upsilon _{L}(\mathbf{P}),\upsilon _{H}(\mathbf{P})]$ is a subset of the identified set. By combining these observations, the desired result follows.
\end{proof}

\begin{lemma}\label{lem:main_innerATT}
Let Assumptions \ref{ass:1} and \ref{ass:2}(a)-(e) hold.
For any $\beta _{0},\beta _{1}\in [ 0,1] $ and $w^{(n)} = \{ (y_{1,i},y_{0,i},d_{1,i},d_{0,i},z_{i})\} _{i=1}^{n}$, let $\mathbf{R}(\cdot;\beta_0,\beta_1)$ denote the measure for $( W^{( n) },A^{( n) })$ defined in Lemma \ref{lem:main_inner}. Then, for any $\beta _{0},\beta _{1}\in [ 0,1] $,

\begin{enumerate}
\item $\mathbf{R}(( W^{( n) },A^{( n) });\beta _{0},\beta _{1}) $ is a probability measure.

\item $\mathbf{R}(( W^{( n) },A^{( n) });\beta _{0},\beta _{1}) $ induces a  probability distribution for $X^{( n)}$ that is observationally equivalent to $\mathbf{P}(X^{( n) }) $, i.e., $\mathbf{R}( X^{( n) };\beta _{0},\beta _{1}) =\mathbf{P}(X^{( n) }) $.

\item For every $i=1,\dots ,n$, 
\begin{align*}
& E_{\mathbf{R}}[Y_{i}(1)-Y_{i}(0)|D_{i}=1;\beta _{0},\beta _{1}]= \\
& \frac{E[(Y_{i}(1)-Y_{i}(0))(D_{i}(1)-D_{i}(0))A_{i}+(Y_{i}(1)-Y_{L}\beta
_{1}-Y_{H}(1-\beta _{1})))D_{i}(0)]}{E[(D_{i}(1)-D_{i}(0))A_{i}+D_{i}(0)]}.
\end{align*}
Therefore, 
\begin{equation*}
\underset{\beta _{0},\beta _{1}\in [ 0,1]}{\bigcup }E_{\mathbf{R}}[Y_{i}(1)-Y_{i}(0)|D_{i}=1;\beta _{0},\beta _{1}]~=~[\upsilon _{L}(\mathbf{P}),\upsilon _{H}(\mathbf{P})]. 
\end{equation*}
\end{enumerate}
\end{lemma}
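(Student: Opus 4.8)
Since $\mathbf{R}(\cdot;\beta_0,\beta_1)$ is the \emph{same} measure on $(W^{(n)},A^{(n)})$ as the one constructed in Lemma \ref{lem:main_inner}, and since Assumptions \ref{ass:1} and \ref{ass:2}(a)-(e) imply Assumptions \ref{ass:1}(a)-(c) and \ref{ass:2}(a)-(c), Parts 1 and 2 are immediate: they are precisely Parts 1 and 2 of Lemma \ref{lem:main_inner}. In particular, observational equivalence gives $P_{\mathbf{R}}(D_i=1)=P(D_i=1)>0$ (using Assumptions \ref{ass:1}(d) and \ref{ass:2}(d)), so $E_{\mathbf{R}}[Y_i(1)-Y_i(0)\mid D_i=1;\beta_0,\beta_1]$ is well defined.

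\textbf{Part 3.} The plan is to evaluate the ATT decomposition \eqref{eq:mainATT2} -- derived for an arbitrary distribution in the proof of Theorem \ref{thm:mainATT} -- under the measure $\mathbf{R}$. The denominator of \eqref{eq:mainATT2} depends only on $P(D_i(1)>D_i(0)\mid S_i)$, $P(D_i(0)=1\mid S_i)$, and $P(A_i=1\mid S^{(n)})$; by Part 2 and Lemma \ref{lem:QnMap} these coincide under $\mathbf{R}$ and $\mathbf{P}$, so the denominator is unchanged. For the numerator I would read the two relevant conditional moments off the block structure of \eqref{eq:Rn_dist_defn}. On the complier block $\{d_{0,i}=0,d_{1,i}=1\}$ the measure assigns $Y_i(1)\sim\lambda_1(\cdot,z_i)$ and $Y_i(0)\sim\lambda_0(\cdot,z_i)$; by \eqref{eq:mu_defn} together with \eqref{eq:QnYMap_3}-\eqref{eq:QnYMap_4} these are exactly the $\mathbf{P}$-conditional laws of $Y(1),Y(0)$ given $\{D(1)=1,D(0)=0,Z\}$, so $E_{\mathbf{R}}[Y_i(1)-Y_i(0)\mid D_i(1)=1,D_i(0)=0,S_i]$ equals its value under $\mathbf{P}$. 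On the always-taker block $\{d_{0,i}=d_{1,i}=1\}$ the measure assigns $Y_i(1)\sim f_{0,1}(\cdot,z_i)$, which by \eqref{eq:QnYMap_1} matches $E_{\mathbf{P}}[Y_i(1)\mid D_i(0)=1,S_i]$, and places $Y_i(0)$ at the point mass $Y_L\beta_0+Y_H(1-\beta_0)$. Since defiers are excluded (Assumption \ref{ass:1}(b)), $\{D_i(0)=1\}$ is exactly the always-taker set, so $E_{\mathbf{R}}[Y_i(0)\mid D_i(0)=1,S_i]=Y_L\beta_0+Y_H(1-\beta_0)$. Substituting these pieces into \eqref{eq:mainATT2} yields the displayed formula.

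To show that the union spans $[\upsilon_L(\mathbf{P}),\upsilon_H(\mathbf{P})]$, I would observe that the displayed ATT is affine in $\beta_0$ and independent of $\beta_1$: the only object $\beta_1$ governs is the never-taker value $Y_i(1)$, but never-takers satisfy $D_i=0$ and never enter $\{D_i=1\}$. As $\beta_0$ increases from $0$ to $1$, the term $Y_L\beta_0+Y_H(1-\beta_0)$ decreases continuously from $Y_H$ to $Y_L$; comparing with \eqref{eq:bounds_ATT}, $\beta_0=0$ reproduces $\upsilon_L(\mathbf{P})$ (the $Y_H$ case) and $\beta_0=1$ reproduces $\upsilon_H(\mathbf{P})$ (the $Y_L$ case). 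By continuity of $\beta_0\mapsto E_{\mathbf{R}}[Y_i(1)-Y_i(0)\mid D_i=1;\beta_0,\beta_1]$, its image over $\beta_0\in[0,1]$ is exactly $[\upsilon_L(\mathbf{P}),\upsilon_H(\mathbf{P})]$.

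\textbf{Main obstacle.} The one step requiring genuine care is extracting the conditional moments $E_{\mathbf{R}}[\cdot\mid D_i(1)=1,D_i(0)=0,S_i]$ and $E_{\mathbf{R}}[\cdot\mid D_i(0)=1,S_i]$ from \eqref{eq:Rn_dist_defn}: one must marginalize correctly within each type block and verify that the data-determined complier and always-taker $Y(1)$ contributions are preserved while only the always-taker $Y(0)$ contribution is the free parameter $Y_L\beta_0+Y_H(1-\beta_0)$. Everything else is bookkeeping inherited from Lemma \ref{lem:main_inner} and the decomposition \eqref{eq:mainATT2}.
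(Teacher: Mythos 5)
Your proof is correct, but it takes a genuinely different route from the paper's for Part 3. The paper writes $D_i = D_i(1)A_i + D_i(0)(1-A_i)$, reduces $E_{\mathbf{R}}[Y_i(d)\mid D_i=1;\beta_0,\beta_1]$ to ratios of unconditional moments, and evaluates each of these by direct integration against the explicit density \eqref{eq:Rn_dist_defn} (its equations \eqref{eq:ATT01_1}--\eqref{eq:ATT01_3}); the matching of conditional laws that you perform is done there implicitly, inside the integrals. You instead re-apply the decomposition \eqref{eq:mainATT2} to the measure $\mathbf{R}$ itself and then identify each ingredient with its $\mathbf{P}$-counterpart block by block via Lemmas \ref{lem:QnMap} and \ref{lem:QnYMap}, the only free ingredient being the always-taker $Y(0)$ point mass $Y_L\beta_0+Y_H(1-\beta_0)$. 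Your route is more modular, and it makes the spanning claim cleaner than the paper does: you spell out affineness in $\beta_0$, irrelevance of $\beta_1$, and the endpoint matching ($\beta_0=0$ giving $\upsilon_L$, $\beta_0=1$ giving $\upsilon_H$), which the paper leaves as an unremarked ``Therefore.'' The one obligation your route carries that the paper's direct computation avoids is this: \eqref{eq:mainATT2} is \emph{not} valid for ``an arbitrary distribution,'' as you assert --- its derivation \eqref{eq:mainATT1} uses the i.i.d.\ structure of $W^{(n)}$, Assumptions \ref{ass:2}(a) and \ref{ass:2}(d), Assumption \ref{ass:1}(b) (no defiers), and $P(D_i=1)>0$ --- so you must verify that $\mathbf{R}$ satisfies these hypotheses before invoking it. The verification is immediate from the construction ($\mathbf{R}$ has product form over $i$, places zero mass on the defier block, and shares the assignment mechanism $P(A^{(n)}\mid S^{(n)})$ with $\mathbf{P}$), and you do address the positivity part, but you should state this check explicitly rather than claim generality that \eqref{eq:mainATT2} does not have; with that sentence added, the proof is complete.
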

\begin{proof}
Parts 1-2 were shown in Lemma \ref{lem:main_inner}, so we focus on part 3.

\noindent \underline{Part 3.} It suffices to show that 
\begin{align}
& E_{\mathbf{R}}[Y_{i}(1)|D_{i}=1;\beta _{0},\beta _{1}]=\tfrac{E\left[ Y_{i}(1)[(D_{i}(1)-D_{i}(0))A_{i}+D_{i}(0)]\right] }{E[(D_{i}(1)-D_{i}(0))A_{i}+D_{i}(0)]}  \label{eq:ATT1} \\
& E_{\mathbf{R}}[Y_{i}(0)|D_{i}=1;\beta _{0},\beta _{1}]=\tfrac{E\left[ Y_{i}(0)(D_{i}(1)-D_{i}(0))A_{i}+(Y_{L}\beta _{1}+Y_{H}(1-\beta _{1}))D_{i}(0)\right] }{E[(D_{i}(1)-D_{i}(0))A_{i}+D_{i}(0)]}.
\label{eq:ATT0}
\end{align}

In turn, \eqref{eq:ATT1} and \eqref{eq:ATT0} follow from showing
\begin{align}
E_{\mathbf{R}}[D_{i};\beta _{0},\beta _{1}] &=E[(D_{i}(1)-D_{i}(0))A_{i}+D_{i}(0)]  \label{eq:ATT01_1} \\
E_{\mathbf{R}}[Y_{i}(1)D_{i};\beta _{0},\beta _{1}]
&=E[Y_{i}(1)[(D_{i}(1)-D_{i}(0))A_{i}+D_{i}(0)]] \label{eq:ATT01_2}\\
E_{\mathbf{R}}[Y_{i}(0)D_{i};\beta _{0},\beta _{1}] &=E\left[ Y_{i}(0)(D_{i}(1)-D_{i}(0))A_{i}+(Y_{L}\beta _{1}+Y_{H}(1-\beta _{1}))D_{i}(0)\right] .\label{eq:ATT01_3}
\end{align}

To show \eqref{eq:ATT01_1}, consider the following derivation.
\begin{align*}
&=E_{\mathbf{R}}[D_{i};\beta _{0},\beta _{1}]=E_{\mathbf{R}}[D_{i}(1)A_{i}+D_{i}(0)(1-A_{i});\beta _{0},\beta _{1}] \\
&=\sum\nolimits_{a^{(n)}}\int_{w^{(n)}}(d_{1,i}a_{i}+d_{0,i}(1-a_{i}))d\mathbf{R}((W^{(n)},A^{(n)})=(w^{(n)},a^{(n)});\beta _{0},\beta _{1})dw^{(n)}
\\
&\overset{(1)}{=}\int_{w^{(n)}}\left[ 
\begin{array}{c}
(d_{1,i}P(A_{i}=1|S^{(n)}=\{S(z_{i})\}_{i=1}^{n})+d_{0,i}P(A_{i}=0|S^{(n)}=\{S(z_{i})\}_{i=1}^{n})) \\ 
\times \prod\limits_{s=1}^{n}\left[ {\scriptsize 
\begin{array}{c}
1[y_{0,s}=Y_{L}\beta _{0}+Y_{H}(1-\beta _{0})]f_{0,1}(y_{1,s},z_{s})\pi_{D(0)}(z_{s})1[d_{0,s}=d_{1,s}=1] \\ 
+\lambda _{1}(y_{1,s},z_{s})\lambda _{0}(y_{0,s},z_{s})(\pi_{D(1)}(z_{s})-\pi _{D(0)}(z_{s}))1[d_{0,s}=0,d_{1,s}=1] \\ 
+1[y_{1,s}=Y_{L}\beta _{1}+Y_{H}(1-\beta _{1})]f_{1,0}(y_{0,s},z_{s})(1-\pi_{D(1)}(z_{s}))\times  \\ 
1[d_{0,s}=d_{1,s}=0]
\end{array}
}\right] \text{{\scriptsize $dP(Z_{s}=z_{s})$}}
\end{array}
\right] dw^{(n)} \\
&\overset{(2)}{=}\int_{z^{(n)}}\left[ 
\begin{array}{c}
P(A_{i}=1|S^{(n)}=\{S(z_{i})\}_{i=1}^{n})\pi _{D(1) }\left(z_{i}\right)  \\ 
+P(A_{i}=0|S^{(n)}=\{S(z_{i})\}_{i=1}^{n})\pi _{D(0)}(z_{i})
\end{array}
\right] \prod\limits_{s=1}^{n}\text{$dP(Z_{s}=z_{s})$}dz^{(n)} \\
&\overset{(3)}{=}\int_{z^{(n)}}\left[ 
\begin{array}{c}
P(A_{i}=1|S^{(n)}=\{S(z_{i})\}_{i=1}^{n})E(D_{i}(1)=1|Z_{i}=z_{i}) \\ 
+P(A_{i}=0|S^{(n)}=\{S(z_{i})\}_{i=1}^{n})E(D_{i}(0)=1|Z_{i}=z_{i})
\end{array}
\right] \prod\limits_{s=1}^{n}\text{$dP(Z_{s}=z_{s})$}dz^{(n)} \\
&\overset{(4)}{=}E[P(A_{i}=1|S^{(n)})E(D_{i}(1)|Z_{i})+P(A_{i}=0|S^{(n)})E(D_{i}(0)|Z_{i})] \\
&\overset{(5)}{=}E[P(A_{i}=1|S^{(n)})E(D_{i}(1)|S_{i})+P(A_{i}=0|S^{(n)})E(D_{i}(0)|S_{i})] \\
&=E[(D_{i}(1)-D_{i}(0))A_{i}+D_{i}(0)],
\end{align*}%
as desired, where (1) holds by \eqref{eq:Rn_dist_defn}, (2) by \eqref{eq:f_def} and \eqref{eq:mu_defn}, (3) by \eqref{eq:pi_def}, (4) by the i.i.d.\ condition in Assumption \ref{ass:1} and law of iterated expectations (LIE), and (5) by another use of LIE, which gives $ E[P(A_{i}=1|S^{(n)})|S_{i}]=P(A_{i}=1|S_{i})$.

To show \eqref{eq:ATT01_2}, consider the following derivation.
\begin{align*}
&E_{\mathbf{R}}[Y_{i}(1)D_{i};\beta _{0},\beta _{1}] \\
% &=E_{\mathbf{R}}[Y_{i}(1)D_{i}(1)A_{i}+Y_{i}(1)D_{i}(0)(1-A_{i});\beta_{0},\beta _{1}] \\
&=\text{{\footnotesize $\sum\nolimits_{a^{(n)}}%
\int_{w^{(n)}}y_{1,i}(d_{1,i}a_{i}+d_{0,i}(1-a_{i}))d\mathbf{R}((W^{(n)},A^{(n)})=(w^{(n)},a^{(n)});\beta _{0},\beta _{1})dw^{(n)}$}} \\
&\overset{(1)}{=}\int_{w^{(n)}}\left[ 
\begin{array}{c}
y_{1,i}\left[ 
\begin{array}{c}
d_{1,i}P(A_{i}=1|S^{(n)}=\{S(z_{i})\}_{i=1}^{n})+ \\ 
d_{0,i}P(A_{i}=0|S^{(n)}=\{S(z_{i})\}_{i=1}^{n})
\end{array}
\right] \times  \\ 
{\scriptsize \prod\limits_{s=1}^{n}}\left[ {\scriptsize 
\begin{array}{c}
f_{0,1}(y_{1,s},z_{s})1[y_{0,s}=Y_{L}\beta _{0}+Y_{H}(1-\beta _{0})]\pi_{D(0)}(z_{s})1[d_{0,s}=d_{1,s}=1] \\ 
+\lambda _{1}(y_{1,s},z_{s})\lambda _{0}(y_{0,s},z_{s})(\pi_{D(1)}(z_{s})-\pi _{D(0)}(z_{s}))1[d_{0,s}=0,d_{1,s}=1] \\ 
+1[y_{1,s}=Y_{L}\beta _{1}+Y_{H}(1-\beta _{1})]\times  \\ 
f_{1,0}(y_{0,s},z_{s})(1-\pi _{D(1)}(z_{s}))1[d_{0,s}=d_{1,s}=0]
\end{array}
}\right] \text{{\scriptsize $dP(Z_{s}=z_{s})$ }}
\end{array}
\right] dw^{(n)} \\
% &=\left[ 
% \begin{array}{c}
% \int_{z^{(n)}}\int_{y_{1,i}}P(A_{i}=1|S^{(n)}=\{S(z_{i})%
% \}_{i=1}^{n})y_{1,i}f_{1,1}(y_{1,i},z_{i})\pi _{D(1)}(z_{i})\text{$dP(Z_{s}=z_{s})$}dy_{1,i}dz^{(n)}+ \\ 
% \int_{z^{(n)}}\int_{y_{1,i}}P(A_{i}=0|S^{(n)}=\{S(z_{i})\}_{i=1}^{n})y_{1,i}f_{0,1}(y_{1,i},z_{i})\pi _{D(0)}(z_{i})\text{$dP(Z_{s}=z_{s})$}dy_{1,i}z^{(n)}
% \end{array}%
% \right]  \\
% &=\left[ 
% \begin{array}{c}
% \int_{z^{(n)}}P(A_{i}=1|S^{(n)}=\{S(z_{i})\}_{i=1}^{n})E\left[ Y_{i}\left(1\right) |Z_{i}=z_{i},D(1) =1\right] E(D_{i}(1)=1|Z_{i}=z_{i})\text{{\scriptsize $dP(Z_{s}=z_{s})$}}dw^{(n)}+ \\ 
% \int_{z^{(n)}}P(A_{i}=0|S^{(n)}=\{S(z_{i})\}_{i=1}^{n})E\left[ Y_{i}\left(1\right) |Z_{i}=z_{i},D(0) =1\right] E(D_{i}(1)=1|Z_{i}=z_{i})\text{{\scriptsize $dP(Z_{s}=z_{s})$}}dw^{(n)}
% \end{array}
% \right]  \\
&\overset{(2)}{=}\int_{z^{(n)}}\left[ 
\begin{array}{c}
\left[\begin{array}{c}
E\left[ Y_{i}(1) |Z_{i}=z_{i},D_{i}(1) =1\right]P(A_{i}=1|S^{(n)}=\{S(z_{i})\}_{i=1}^{n})\pi _{D(1) }\left(z_{i}\right)  \\ 
+E\left[ Y_{i}(1) |Z_{i}=z_{i},D_{i}(0) =1\right]P(A_{i}=0|S^{(n)}=\{S(z_{i})\}_{i=1}^{n})\pi _{D(0)}(z_{i})
\end{array}\right]
\\
\prod\limits_{s=1}^{n}\text{$dP(Z_{s}=z_{s})$}
\end{array}\right]dz^{(n)}\\
&\overset{(3)}{=}\int_{z^{(n)}}\left[ 
\begin{array}{c}
E\left[ Y_{i}(1) |Z_{i}=z_{i},D_{i}(1) =1\right]E(D_{i}(1)=1|Z_{i}=z_{i})\times\\
P(A_{i}=1|S^{(n)}=\{S(z_{i})\}_{i=1}^{n}) \\ 
+E\left[ Y_{i}(1) |Z_{i}=z_{i},D_{i}(0) =1\right]E(D_{i}(0)=1|Z_{i}=z_{i})\times\\
P(A_{i}=0|S^{(n)}=\{S(z_{i})\}_{i=1}^{n})
\end{array}
\right] \prod\limits_{s=1}^{n}\text{$dP(Z_{s}=z_{s})$}dz^{(n)} \\
&\overset{(4)}{=}
E\left[
\begin{array}{c}
E\left[ Y_{i}(1) |Z_{i},D_{i}\left(1\right) =1\right] P(A_{i}=1|S^{(n)})E(D_{i}(1)|Z_{i})\\
+E\left[ Y_{i}\left(1\right) |Z_{i},D_{i}(0) =1\right]P(A_{i}=0|S^{(n)})E(D_{i}(0)|Z_{i})
\end{array}
\right]\\
&\overset{(5)}{=}
E\left[
\begin{array}{c}
E\left[ Y_{i}(1) |S_{i},D_{i}\left(1\right) =1\right] P(A_{i}=1|S^{(n)})E(D_{i}(1)|S_{i})\\
+E\left[ Y_{i}\left(1\right) |S_{i},D_{i}(0) =1\right]P(A_{i}=0|S^{(n)})E(D_{i}(0)|S_{i})
\end{array}
\right]
\\
&=E[Y_{i}(1)[(D_{i}(1)-D_{i}(0))A_{i}+D_{i}(0)]],
\end{align*}
where (1) holds by \eqref{eq:Rn_dist_defn}, (2) by \eqref{eq:f_def} and \eqref{eq:mu_defn}, (3) by \eqref{eq:pi_def} and \eqref{eq:f_def}, (4) by the i.i.d.\ condition in Assumption \ref{ass:1} and LIE, and (5) by another use of LIE, which gives $E[P(A_{i}=1|S^{(n)})|S_{i}]=P(A_{i}=1|S_{i})$. To conclude, we note that \eqref{eq:ATT01_3} follows from an analogous argument to the one used to show \eqref{eq:ATT01_2}.
\end{proof}

%%%%%%%%%%%%%%%%%%%%%%%%%%%%%%

\section{Appendix on inference}\label{sec:appendix2}

This appendix uses LLN and CLT to denote the ``law of large numbers for triangular arrays'' (e.g., \citet[page 59]{durrett:2019}) and ``Lyapunov central limit theorem for triangular arrays'' (e.g., \citet[page 148]{durrett:2019}), respectively.
%, and  ``continuous mapping theorem'', respectively. 
Also, we use ``w.p.a.1 to denote ``with probability approaching one''.

Recall from the main text that $\mathcal{P}_1$ is the set of probabilities $\mathbf{P}$ generated by $(\mathbf{Q},\mathbf{G})$ satisfying Assumptions \ref{ass:1}(a)-(c) and \ref{ass:2}(a)-(c), and $\mathcal{P}_2$ is the set of probabilities $\mathbf{P}$ generated by $(\mathbf{Q},\mathbf{G})$ that satisfy Assumptions \ref{ass:1} and \ref{ass:2}(a)-(e). In addition, we define $\mathcal{P}_3$ as the set of probabilities $\mathbf{P}$ generated by $(\mathbf{Q},\mathbf{G})$ satisfying Assumptions \ref{ass:1}(a)-(c) and \ref{ass:2}(a)-(c) and (f), and $\mathcal{P}_4$ as the set of probabilities $\mathbf{P}$ generated by $(\mathbf{Q},\mathbf{G})$ that satisfy Assumptions \ref{ass:1} and \ref{ass:2}. Note that $\mathcal{P}_4 \subset  \mathcal{P}_j\subset \mathcal{P}_1$ for $j=2,3$.

For $X^{(n) }=\left( \left\{ Y_{i},D_{i},A_{i},Z_{i}\right\}:i=1,\ldots ,n\right) \sim \mathbf{P}$ and $s \in \mathcal{S}$, we define
\begin{align}
    n_D(s)  \equiv \sum_{i=1}^n I[ D_i = 1, S_i = s ]~~~\text{ and }~~~n_{AD}(s) \equiv \sum_{i=1}^n I[ A_i = 1, D_i = 1, S_i = s ]\label{eq:n_{AD}(s)}
\end{align}
and
\begin{align}
T_{n,1}& \equiv \Big( \frac{1}{\sqrt{n}}\sum_{i=1}^{n}I[D_{i}=d,A_{i}=a,S_{i}=s](Y_{i}(d)-\mu_{\mathbf{P}} (d,a,s)):(d,a,s)\in\{0,1\}^{2}\times \mathcal{S}\Big)   \notag \\
T_{n,2}& \equiv \Big( [ \sqrt{n}( \tfrac{n_{AD}(s)}{n_{A}(s)}-\pi _{D(1),\mathbf{P}}(s)) ,\sqrt{n}( \tfrac{n_{D}(s)-n_{AD}(s)}{n(s)-n_{A}(s)}-\pi _{D(0),\mathbf{P}}(s)) ] ^{\prime }:s\in S\Big) \notag \\
T_{n,3}& \equiv \Big( \sqrt{n}( \tfrac{n(s)}{n}-p_{\mathbf{P}}(s)) :s\in S\Big)  \notag \\
T_{n,4}& \equiv \Big( {\small\text{$\frac{1}{\sqrt{n}}\sum_{i=1}^{n}I[D_{i}=d,A_{i}=a,S_{i}=s](Y_{i}(d)-\mu_{\mathbf{P}}(d,a,s))^{2}:(d,a,s)\in \{0,1\}^{2}\times \mathcal{S}$}}\Big)  \notag \\
T_{n,A}& \equiv \Big( {\sqrt{n}}(\tfrac{n_A(s)}{n(s)}-\pi_{A,{\mathbf{P}}}(s)):s\in \mathcal{S}\Big) ,\label{eq:Tn_defn}
\end{align}
with $((\mu_{\mathbf{P}} (d,a,s),\pi_{D(a),\mathbf{P}}(s),p_{\mathbf{P}}(s)):(d,a,s)\in\{0,1\}^{2}\times \mathcal{S})$ is as in \eqref{eq:keyDefns} and $(\pi_{A,{\mathbf{P}}}(s):s\in\mathcal{S})$ is as in Assumption \ref{ass:2}. We make the dependence on $\mathbf{P}$ explicit as this is relevant in proving uniform convergence results.

 \subsection{Preliminary results}

\begin{lemma}\label{lem:den_not_zero}
We have
\begin{equation*}
\underset{n\to \infty }{\lim }\inf_{\mathbf{P}\in \mathcal{P}_1}\mathbf{P}(n_{A}(s)/n(s)\in (0,1)\text{ for all }s\in \mathcal{S})~=~1.
\end{equation*}
\end{lemma}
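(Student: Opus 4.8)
The plan is to reduce the uniform statement to a finite collection of per-stratum bounds and to control separately the two ways in which $n_A(s)/n(s)\in(0,1)$ can fail: the denominator $n(s)$ being zero, and the stratum being assigned entirely to treatment or entirely to control. Since $\mathcal{S}$ is finite by Assumption \ref{ass:1}(a), a union bound gives
$$\mathbf{P}\big(\exists\, s\in\mathcal{S}:\, n_A(s)/n(s)\notin(0,1)\big)\;\leq\;\sum_{s\in\mathcal{S}}\mathbf{P}\big(n_A(s)/n(s)\notin(0,1)\big),$$
so it suffices to show $\sup_{\mathbf{P}\in\mathcal{P}_1}\mathbf{P}(n_A(s)/n(s)\notin(0,1))\to0$ for each fixed $s$, where I read the complement of $\{n_A(s)/n(s)\in(0,1)\}$ as containing the event $\{n(s)=0\}$.

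For the denominator, note that under Assumption \ref{ass:1} the $S_i=S(Z_i)$ are i.i.d.\ with $P(S_i=s)=p(s)\geq\xi>0$ by Assumption \ref{ass:1}(a), so $n(s)\sim\mathrm{Binomial}(n,p(s))$ and $\mathbf{P}(n(s)=0)=(1-p(s))^n\leq(1-\xi)^n$. Because $\xi$ is a fixed constant across $\mathcal{P}_1$, this tends to zero uniformly. For the boundary cases, observe that on $\{n(s)\geq1\}$ the only way to have $n_A(s)/n(s)\notin(0,1)$ is $n_A(s)/n(s)\in\{0,1\}$; since $\pi_A(s)\in(\varepsilon,1-\varepsilon)$ by Assumption \ref{ass:2}(c), either value forces $|n_A(s)/n(s)-\pi_A(s)|>\varepsilon$. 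Hence
$$\mathbf{P}\big(n_A(s)/n(s)\notin(0,1)\big)\;\leq\;(1-\xi)^n+\mathbf{P}\big(|n_A(s)/n(s)-\pi_A(s)|\geq\varepsilon\big),$$
and the convergence $n_A(s)/n(s)=\pi_A(s)+o_p(1)$ in Assumption \ref{ass:2}(c) drives the last term to zero.

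The one point requiring care is uniformity: Assumption \ref{ass:2}(c) delivers $n_A(s)/n(s)-\pi_A(s)=o_p(1)$, whereas the claim needs $\sup_{\mathbf{P}\in\mathcal{P}_1}\mathbf{P}(|n_A(s)/n(s)-\pi_A(s)|\geq\varepsilon)\to0$. I would discharge this through the standard equivalence between uniform convergence over $\mathcal{P}_1$ and convergence along every sequence $\mathbf{P}_n\in\mathcal{P}_1$: if the supremum did not vanish, some subsequence and distributions $\mathbf{P}_{n_k}$ would violate it, contradicting the $o_p(1)$ statement of Assumption \ref{ass:2}(c), which this paper imposes along the (triangular-array) sequences comprising $\mathcal{P}_1$. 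Everything else is an elementary binomial tail bound, so this uniformity bookkeeping is the only substantive step.
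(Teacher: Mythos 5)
Your proof is correct and takes essentially the same approach as the paper's: fix each stratum (using that $\mathcal{S}$ is finite), then combine the concentration $n_A(s)/n(s)=\pi_A(s)+o_p(1)$ from Assumption \ref{ass:2}(c) with $\pi_A(s)\in(\varepsilon,1-\varepsilon)$ to force the ratio into $(0,1)$ with probability approaching one, uniformly. Your two refinements --- the binomial tail bound $(1-\xi)^n$ for the event $\{n(s)=0\}$ and the subsequence argument converting the $o_p(1)$ statement into uniformity over $\mathcal{P}_1$ --- simply make explicit what the paper's proof leaves implicit when it invokes Assumption \ref{ass:2}(c) directly as a uniform-over-$\mathcal{P}_1$ statement, justified by the same sequence (triangular-array) reading you describe.
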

%%%%%%%%
\begin{proof}
Fix $s\in \mathcal{S}$ arbitrarily. By Assumption \ref{ass:2}(c), for any $\delta >0$,
\begin{equation}
\underset{n\to \infty }{\lim }\inf_{\mathbf{P}\in \mathcal{P}_1}\mathbf{P}\left( \left\vert n_{A}(s)/n(s)-\pi _{A}(s) \right\vert \leq \delta \right) ~=~1,  \label{eq:den_not_zero}
\end{equation}%
where $\pi _{A}(s) \in \left( \varepsilon ,1-\varepsilon \right) $. By \eqref{eq:den_not_zero} with $ \delta =\varepsilon /2$,
\begin{equation}
\underset{n\to \infty }{\lim }\inf_{\mathbf{P}\in \mathcal{P}_1 }\mathbf{P}\left( n_{A}(s)/n(s)\in (0,1)\right) ~=~1.  \label{eq:den_not_zero2}
\end{equation}%
The desired result follows from \eqref{eq:den_not_zero2} and that $\mathcal{S}$ is a finite set.
\end{proof}

%%%%%%%% DIVIDER %%%%%%%%%%%%

\begin{lemma}\label{lem:AsyDist}
Consider any sequence of $\left\{ \mathbf{P}_{n}\in \mathcal{P}_1\right\} _{n\geq 1}$ s.t.\ for all $(d,a,s)\in
\{0,1\}^{2}\times \mathcal{S}$,
\begin{align}
 p_{\mathbf{P}_{n}}(s)& ~\to ~p(s) \notag\\
\pi _{D(a),\mathbf{P}_{n}}(s)& ~\to ~\pi _{D(a)}(s) \notag\\
 \mu _{\mathbf{P}_{n}}(d,a,s)& ~\to ~\mu(d,a,s) \notag\\
 \sigma _{\mathbf{P}_{n}}^{2}(d,a,s)&~\to ~\sigma ^{2}(d,a,s)\notag\\
\pi _{A,\mathbf{P}_{n}}(s)& ~\to ~\pi _{A}(s), \label{eq:keyconvergence}
\end{align}%
where objects are as defined as in \eqref{eq:keyDefns}. Then, along $\left\{ \mathbf{P}_{n}\right\} _{n\geq 1}$, 
\begin{equation*}
\left( 
\begin{array}{c}
T_{n,1} \\ 
T_{n,2} \\ 
T_{n,3}%
\end{array}%
\right) ~\overset{d}{\to }~N\left( \left( 
\begin{array}{c}
\mathbf{0} \\ 
\mathbf{0} \\ 
\mathbf{0}%
\end{array}%
\right) ,\left( 
\begin{array}{ccc}
\Sigma _{1} & \mathbf{0} & \mathbf{0} \\ 
\mathbf{0} & \Sigma _{2} & \mathbf{0} \\ 
\mathbf{0} & \mathbf{0} & \Sigma _{3}%
\end{array}%
\right) \right) ,
\end{equation*}%
where $(T_{n,1}',T_{n,2}',T_{n,3}')$ is as in \eqref{eq:Tn_defn} but with distribution $\mathbf{P}_{n}$, and
\begin{align*}
\Sigma _{1}& ~\equiv ~diag\left( \left[ 
{\scriptsize  \begin{array}{c}
I[(d,a)=(0,0)](1-\pi _{D(0)}(s))(1-\pi _{A}(s)) \\ 
+I[(d,a)=(1,0)]\pi _{D(0)}(s)(1-\pi _{A}(s)) \\ 
+I[(d,a)=(0,1)](1-\pi _{D(1)}(s))\pi _{A}(s) \\ 
+I[(d,a)=(1,1)]\pi _{D(1)}(s)\pi _{A}(s)
\end{array}}
\right] \text{\scriptsize $p(s)\sigma ^{2}(d,a,s):(d,a,s)\in \{0,1\}^{2}\times \mathcal{S}$}
\right)  \\
\Sigma _{2}& ~\equiv ~diag\left( \left[ 
\begin{array}{cc}
\frac{(1-\pi _{D(1)}(s))\pi _{D(1)}(s)}{\pi _{A}(s)} & 0 \\ 
0 & \frac{(1-\pi _{D(0)}(s))\pi _{D(0)}(s)}{(1-\pi _{A}(s))}%
\end{array}%
\right] \tfrac{1}{p(s)}:s\in \mathcal{S}\right)  \\
\Sigma _{3}& ~\equiv ~diag\left( (p(s):s\in \mathcal{S})-(p(s):s\in \mathcal{%
S})(p(s):s\in \mathcal{S})^{\prime }\right) .
\end{align*}
\end{lemma}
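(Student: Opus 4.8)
The plan is to run a \emph{conditional} central limit theorem given the strata and assignment information $\mathcal{G}_n \equiv \sigma(S^{(n)},A^{(n)})$ for the pair $(T_{n,1}',T_{n,2}')'$, and then stitch it together with the behavior of $T_{n,3}$ through characteristic functions. The enabling observation is that, by the i.i.d.\ condition in Assumption \ref{ass:1} together with Assumption \ref{ass:2}(a), the vectors $\{(Y_i(1),Y_i(0),D_i(0),D_i(1))\}_{i=1}^n$ are independent across $i$ conditional on $\mathcal{G}_n$, the conditional law of the $i$-th vector depending only on $S_i$. Since $1[A_i=a,S_i=s]$ is $\mathcal{G}_n$-measurable, the $(d,a,s)$ coordinate of $T_{n,1}$ equals $\tfrac{1}{\sqrt n}\sum_{i:A_i=a,S_i=s}1[D_i(a)=d](Y_i(d)-\mu_{\mathbf{P}_n}(d,a,s))$, a sum over a $\mathcal{G}_n$-fixed index set of conditionally independent, bounded (Assumption \ref{ass:1}(a)), mean-zero terms, the mean vanishing by the definition of $\mu_{\mathbf{P}_n}$ in \eqref{eq:keyDefns}. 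For $T_{n,2}$ I would first linearize the ratios, writing $\sqrt n(\tfrac{n_{AD}(s)}{n_A(s)}-\pi_{D(1),\mathbf{P}_n}(s))=\tfrac{n}{n_A(s)}\cdot\tfrac{1}{\sqrt n}\sum_{i:A_i=1,S_i=s}(1[D_i(1)=1]-\pi_{D(1),\mathbf{P}_n}(s))$ and analogously for the second coordinate; the $\mathcal{G}_n$-measurable prefactors $\tfrac{n}{n_A(s)}$ and $\tfrac{n}{n(s)-n_A(s)}$ converge in probability to $1/(p(s)\pi_A(s))$ and $1/(p(s)(1-\pi_A(s)))$ by Assumption \ref{ass:2}(c) and the LLN for $n(s)/n$.

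Next I would establish the conditional CLT for $(T_{n,1}',T_{n,2}')'$ given $\mathcal{G}_n$ via the CLT for triangular arrays. Boundedness of the summands together with $n_A(s)/n\to p(s)\pi_A(s)>0$ and $(n(s)-n_A(s))/n\to p(s)(1-\pi_A(s))>0$ (so the effective sample sizes diverge) make the Lyapunov condition immediate. The conditional covariances converge, w.p.a.1 along $\{\mathbf{P}_n\}$, to the claimed limits: summands indexed by distinct $(a,s)$ sit on disjoint unit sets and are thus conditionally independent; within a fixed $(a,s)$ the two $T_{n,1}$ coordinates $d=0,1$ have mutually exclusive indicators $1[D_i(a)=d]$, giving the diagonal form of $\Sigma_1$, while the cross term between a $T_{n,1}$ coordinate and the matching $T_{n,2}$ coordinate vanishes because $E[1[D_i(a)=d](Y_i(d)-\mu_{\mathbf{P}_n}(d,a,s))(1[D_i(a)=1]-\pi_{D(a),\mathbf{P}_n}(s))\mid S_i=s]=0$, again by the definition of $\mu_{\mathbf{P}_n}$. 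Folding the $\mathcal{G}_n$-measurable prefactors into the conditional variances yields exactly $\Sigma_2$, the per-unit second moments converging by \eqref{eq:keyconvergence}. Hence $E[\exp(i t'(T_{n,1}',T_{n,2}')')\mid\mathcal{G}_n]\to\exp(-\tfrac12 t'\,\mathrm{diag}(\Sigma_1,\Sigma_2)\,t)$ w.p.a.1, a \emph{deterministic} limit.

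Finally, $T_{n,3}$ is a function of $S^{(n)}$ alone, hence $\mathcal{G}_n$-measurable, and the multinomial CLT along $\{\mathbf{P}_n\}$ gives $T_{n,3}\overset{d}{\to}N(\mathbf 0,\Sigma_3)$. To obtain joint convergence with block-diagonal covariance I would condition: with $U_n=(T_{n,1}',T_{n,2}')'$ and $V_n=T_{n,3}$,
\[
E[e^{i t'U_n+i s'V_n}]=E\big[e^{i s'V_n}\,E[e^{i t'U_n}\mid\mathcal{G}_n]\big],
\]
and since $E[e^{it'U_n}\mid\mathcal{G}_n]$ is bounded by one and converges in probability to the constant $\phi_U(t)=\exp(-\tfrac12 t'\mathrm{diag}(\Sigma_1,\Sigma_2)t)$, bounded convergence lets me replace it by $\phi_U(t)$ at no asymptotic cost, leaving $\phi_U(t)\,E[e^{is'V_n}]\to\phi_U(t)\exp(-\tfrac12 s'\Sigma_3 s)$. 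The joint characteristic function thus factorizes into the three Gaussian pieces, which is the assertion. \emph{The main obstacle} I anticipate is precisely this conditional CLT step: the conditional variances are themselves random (through $n(s)$ and $n_A(s)$) and the underlying parameters drift with $n$, so I must verify that the conditional covariance matrix converges in probability to its deterministic target and that the CLT applies conditionally w.p.a.1 — passing, if needed, to a.s.-convergent subsequences and leaning on the uniform boundedness of the summands to dominate the Lyapunov remainder along $\{\mathbf{P}_n\}$.
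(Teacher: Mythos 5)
Your proof is correct, but it takes a genuinely different route from the paper's for the hardest block, $T_{n,1}$. The paper conditions on the full vector $((D_i,A_i,S_i))_{i=1}^n$, so that the indicators in $T_{n,1}$ become deterministic; it then replaces the conditional sample by an auxiliary array $\check{Y}_i(d,a,s)$ independent of the conditioning data, reorders units into blocks, and uses a partial-sum (Brownian motion) process argument, Lemma \ref{lem:Bugni_et_al2018_seq}, to show that block sums with random endpoints $N_m/n$ agree asymptotically with block sums at deterministic endpoints $F_m$, the latter being \emph{exactly} independent of $(T_{n,2},T_{n,3})$; joint convergence then follows from these exact distributional equalities. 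You instead condition only on $\mathcal{G}_n=\sigma(S^{(n)},A^{(n)})$, keep the compliance randomness $1[D_i(a)=d]$ inside the summands, and run one joint conditional Lyapunov CLT for $(T_{n,1}',T_{n,2}')'$, getting block-diagonality from the exact conditional orthogonality $E[1[D_i(a)=d](Y_i(d)-\mu_{\mathbf{P}_n}(d,a,s))(1[D_i(a)=1]-\pi_{D(a),\mathbf{P}_n}(s))\mid S_i=s]=0$ together with the mutual exclusivity of the indicators; your characteristic-function stitching of the $\mathcal{G}_n$-measurable $T_{n,3}$ via bounded convergence parallels, in c.f.\ form, the paper's CDF-splitting argument in its Step 2, where the paper itself uses a conditional CLT for $T_{n,2}$ (Lemma \ref{lem:A1and2_impliesold3}). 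Your route is more elementary and shorter: it needs no functional CLT, no Brownian scaling, and no coupling construction. What the paper's construction buys is exact (rather than asymptotic) independence statements and the partial-sum apparatus, which are then reused essentially verbatim for $T_{n,4}$ in Lemma \ref{lem:AsyDist2} and for the extension including $T_{n,A}$ in Lemma \ref{lem:AsyDist_ext}.

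Two details to tighten. First, the Lyapunov ratio is controlled only when the limiting conditional variance of the chosen linear combination is strictly positive; under $\mathcal{P}_1$ (Assumption \ref{ass:1}(c) constrains just one $((d,a),s)$ cell) some coordinates may be degenerate, so you should split those off and dispose of them by conditional Chebyshev, exactly as the paper's Lemma \ref{lem:Bugni_et_al2018_seq} does in its Case 1; the c.f.\ limit $\exp(-\tfrac12 t'\Sigma t)$ with singular $\Sigma$ is still the right target. Second, the in-probability conditional CLT (random conditional variances, drifting parameters, and the $\mathcal{G}_n$-measurable prefactors $n/n_A(s)$) should be formalized by the subsequence/almost-sure device you already flag; with the summands uniformly bounded by Assumption \ref{ass:1}(a), this is routine.
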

%%%%%%%%%%%%
\begin{proof}
%\begin{proof}[Proof of Lemma \ref{lem:AsyDist}]
Let $\zeta _{j}\sim N(\mathbf{0},\Sigma _{j})$ for $j=1,2,3$, with $(\zeta _{1}^{\prime },\zeta _{2}^{\prime },\zeta _{3}^{\prime })$ are independent. Our goal is to show that $(T_{n,1}^{\prime },T_{n,2}^{\prime },T_{n,3}^{\prime })\overset{d}{\to }(\zeta _{1}^{\prime },\zeta _{2}^{\prime },\zeta _{3}^{\prime })$. We divide the argument into 3 steps.

\underline{Step 1.} For random vectors $T_{n,1}^{C}$ and $T_{n,1}^{D}$, we
show that 
\begin{align}
& (T_{n,1}^{\prime },T_{n,2}^{\prime },T_{n,3}^{\prime })~\overset{d}{=}~({T_{n,1}^{C}}^{\prime },T_{n,2}^{\prime },T_{n,3}^{\prime })\label{eq:ad_step1_eq1} \\
& T_{n,1}^{D}~\perp ~(T_{n,2}^{\prime },T_{n,3}^{\prime })^{\prime }\label{eq:ad_step1_eq2} \\
& T_{n,1}^{D}~\overset{d}{\to }~\zeta _{1}  \label{eq:ad_step1_eq3}\\
& T_{n,1}^{C}~=~T_{n,1}^{D}+o_{p}(1).  \label{eq:ad_step1_eq4}
\end{align}

For any $((y_{i})_{i=1}^{n},(d_{i})_{i=1}^{n},(a_{i})_{i=1}^{n},(s_{i})_{i=1}^{n})\in \mathbb{R}^{n}\times  \{0,1\}^{2n}\times \mathcal{S}^{n}$, consider first the following derivation. Provided that the conditioning event has a positive probability, 
\begin{align}
& d\mathbf{P}_{n}((Y_{i}(d_{i})=y_{i})_{i=1}^{n}|((D_{i},A_{i},S_{i})=(d_{i},a_{i},s_{i}))_{i=1}^{n})\notag \\
%& \overset{(1)}{=}~\frac{d\mathbf{P}_{n}(((Y_{i}(d_{i}),D_{i}(a_{i}))=(y_{i},d_{i}))_{i=1}^{n}|((S_{i},A_{i})=(s_{i},a_{i}))_{i=1}^{n})}{\mathbf{P}_{n}((D_{i}(a_{i})=d_{i})_{i=1}^{n}|((S_{i},A_{i})=(s_{i},a_{i}))_{i=1}^{n})}\notag \\
& \overset{(1)}{=}~\frac{d\mathbf{P}_{n}(((Y_{i}(d_{i}),D_{i}(a_{i}))=(y_{i},d_{i}))_{i=1}^{n}|(S_{i}=s_{i})_{i=1}^{n})}{\mathbf{P}_{n}((D_{i}(a_{i})=d_{i})_{i=1}^{n}|(S_{i}=s_{i})_{i=1}^{n})} \notag \\
%& \overset{(3)}{=}~\frac{\int_{(z_{i}:S(z_{i})=s_{i})_{i=1}^{n}}d\mathbf{P}_{n}(((Y_{i}(d_{i}),D_{i}(a_{i}),Z_{i})=(y_{i},d_{i},z_{i}))_{i=1}^{n})}{\int_{(z_{i}:S(z_{i})=s_{i})_{i=1}^{n}}\mathbf{P}_{n}(((D_{i}(a_{i}),Z_{i})=(d_{i},z_{i}))_{i=1}^{n})}  \notag \\
& \overset{(2)}{=}~\frac{\prod_{i=1}^{n}\int_{z_{i}:S(z_{i})=s_{i}}d\mathbf{P}_{n}((Y_{i}(d_{i}),D(a_{i}),Z_{i})=(y_{i},d_{i},z_{i}))}{\prod_{i=1}^{n}\int_{z_{i}:S(z_{i})=s_{i}}\mathbf{P}_{n}((D_{i}(a_{i}),Z_{i})=(d_{i},z_{i}))}  \notag \\
& =~\textstyle\prod_{i=1}^{n}d\mathbf{P}_{n}(Y_{i}(d_{i})=y_{i}|D_{i}(a_{i})=d_{i},S_{i}=s_{i}),
\label{eq:ad_step1_eq5}
\end{align}
where (1) by $D_{i}=D_{i}(A_{i})$ and Assumption \ref{ass:2}(a) and (2) by $S_{i}=S(Z_{i})$ and the i.i.d.\ condition in Assumption \ref{ass:1}. Note that \eqref{eq:ad_step1_eq5} implies that $ ((Y_{i}(d_{i}))_{i=1}^{n}|((D_{i},A_{i},S_{i})=(d_{i},a_{i},s_{i}))_{i=1}^{n})$ are independent observations across $i=1,\dots ,n$ distributed according to $(Y(d_{i})|D(a_{i})=d_{i},S=s_{i})$. So, if we condition on  $((D_{i},A_{i},S_{i})=(d_{i},a_{i},s_{i}))_{i=1}^{n}$, $(Y_{i}(d_{i})-\mu _{\mathbf{P}_{n}}(d_{i},a_{i},s_{i}))_{i=1}^{n}$ is independent with $( Y_{i}(d_{i})-\mu_{\mathbf{P}_{n}}(d_{i},a_{i},s_{i}) |((D_{i},A_{i},S_{i})=(d_{i},a_{i},s_{i}))_{i=1}^{n})\overset{d}{=}(Y(d_{i})-\mu _{\mathbf{P}_{n}}(d_{i},a_{i},s_{i})|D(a_{i})=d_{i},S=s_{i})$.

Conditional on $((D_{i},A_{i},S_{i})=(d_{i},a_{i},s_{i}))_{i=1}^{n}$, consider the matrix 
\begin{equation}
\text{\small $(((1[D_{i}=d,A_{i}=a,S_{i}=s](Y_{i}(d)-\mu _{\mathbf{P}_{n}}(d,a,s))):(d,a,s)\in \{0,1\}^{2}\times \mathcal{S})^{\prime }:i=1,\dots ,n)$}.
\label{eq:ad_step1_eq6}
\end{equation}
For any row $i=1,\dots ,n$ of the matrix \eqref{eq:ad_step1_eq6}, we have the following. Row $i$ has one and only one indicator $(1[D_{i}=d,A_{i}=a,S_{i}=s]:(d,a,s)\in \{0,1\}^{2}\times \mathcal{S})$ that is turned on, corresponding to $(d,a,s)=(d_{i},a_{i},s_{i})$. For this entry, we have that $1[D_{i}=d,A_{i}=a,S_{i}=s](Y_{i}(d)-\mu_{\mathbf{P}_{n}}(d,a,s))=(Y_{i}(d_{i})-\mu_{\mathbf{P}_{n}} (d_{i},a_{i},s_{i}))$. The remaining observations in row $i$ are equal to zero and, thus, independent of $ (Y_{i}(d_{i})-\mu _{\mathbf{P}_{n}}(d_{i},a_{i},s_{i}))$. In this sense, the elements of the row are independent. By the derivation in \eqref{eq:ad_step1_eq5}, conditional on $((D_{i},A_{i},S_{i})=(d_{i},a_{i},s_{i}))_{i=1}^{n}$, the rows are independent. As a consequence, conditional on $ ((D_{i},A_{i},S_{i}))_{i=1}^{n}$, \eqref{eq:ad_step1_eq6} has the same distribution as 
\begin{equation}
(((1[D_{i}=d,A_{i}=a,S_{i}=s]\breve{Y}_{i}(d,a,s)):(d,a,s)\in \{0,1\}^{2}\times \mathcal{S})^{\prime }:i=1,\dots ,n),
\label{eq:ad_step1_eq7}
\end{equation}
where $((\breve{Y}_{i}(d,a,s):(d,a,s)\in \{0,1\}^{2}\times \mathcal{S})^{\prime }:i=1,\dots ,n)$ denotes a matrix of $4|\mathcal{S}|\times n$ independent random variables, independent of $((D_{i},A_{i},S_{i}))_{i=1}^{n} $, with $\breve{Y}_{i}(d,a,s)\overset{d}{=}(Y(d)-\mu _{\mathbf{P}_{n}}(d,a,s)|D(a)=d,S=s)$ for each $(d,a,s)\in \{0,1\}^{2}\times \mathcal{S}$. As a corollary, 
\begin{equation}
(T_{n,1}|((D_{i},A_{i},S_{i}))_{i=1}^{n})~\overset{d}{=}%
~(T_{n,1}^{B}|((D_{i},A_{i},S_{i}))_{i=1}^{n}),  \label{eq:ad_step1_eq8}
\end{equation}%
where 
\begin{equation*}
T_{n,1}^{B}~\equiv ~\Big( \frac{1}{\sqrt{n}}%
\sum_{i=1}^{n}1[D_{i}=d,A_{i}=a,S_{i}=s]\breve{Y}_{i}(d,a,s):(d,a,s)\in
\{0,1\}^{2}\times \mathcal{S}\Big) .
\end{equation*}

Consider the following classification of observations. Let $m=1$ represent $(d,a,s)=(0,0,1)$, $m=2$ represent $(d,a,s)=(1,0,1)$, $m=3$ represent $(d,a,s)=(0,1,1)$, $m=4$ represent $(d,a,s)=(1,1,1)$, $m=5$ represent $(d,a,s)=(0,0,2)$, and so on, until $m=4|\mathcal{S}|$, which represents $(d,a,s)=(1,1,|\mathcal{S}|)$. Let $M:(d,a,s)\to \mathcal{M}\equiv (1,\dots ,4|\mathcal{S}|)$ denote the function that maps each $(d,a,s)$ into the index $m\in \mathcal{M}$. For each $m\in \mathcal{M}$, let $N_{m}\equiv \sum_{i=1}^{n}1[ M(D_{i},A_{i},S_{i}) < m]$. Also, let $N_{4|\mathcal{S}|+1}=N_{|\mathcal{M}|+1}=n$. Conditional on $((D_{i},A_{i},S_{i}))_{i=1}^{n}$, $(N_{M(d,a,s)}:(d,a,s)\in \{0,1\}^{2}\times \mathcal{S})$ is nonstochastic. Now consider a reordering of the units $i=1,\dots ,n$ as follows: first by strata $s\in \mathcal{S}$, then by treatment assignment $a\in \{0,1\}$, and then by decision $d\in \{0,1\}$, i.e., in increasing order according to $m\in \mathcal{M}$. Let $T_{n,1}^{C}$ denote this reordered sum and $(\eta (i):i=1,\dots ,n)$ denote the permutation of the units described by this reordering. Since $\breve{Y}_{i}(d,a,s)\overset{d}{=}(Y(d)-\mu _{\mathbf{P}_{n}}(d,a,s)|D(a)=d,S=s)$, 
\begin{align}
&((\breve{Y}_{\eta (i)}(d,a,s):(d,a,s)\in \{0,1\}^{2}\times \mathcal{S} )^{\prime }:i=1,\dots ,n)\notag \\
&~\overset{d}{=}~((\breve{Y}_{i}(d,a,s):(d,a,s)\in \{0,1\}^{2}\times \mathcal{S})^{\prime }:i=1,\dots ,n).
\label{eq:ad_step1_eq10}
\end{align}
By \eqref{eq:ad_step1_eq10}, 
\begin{equation}
(T_{n,1}^{B}| ((D_{i},A_{i},S_{i}))_{i=1}^{n})~\overset{d}{=}~(T_{n,1}^{C}| ((D_{i},A_{i},S_{i}))_{i=1}^{n}),  \label{eq:ad_step1_eq11}
\end{equation}
where 
\begin{equation}
T_{n,1}^{C}~\equiv ~\Big( \frac{1}{\sqrt{n}}\sum_{i=N_{M(d,a,s)}+1}^{N_{(M(d,a,s)+1)}}\check{Y}_{i}(d,a,s):(d,a,s)\in \{0,1\}^{2}\times \mathcal{S}\Big)   \label{eq:ad_step1_eq11B}
\end{equation}%
and $((\check{Y}_{i}(d,a,s):(d,a,s)\in \{0,1\}^{2}\times \mathcal{S} )^{\prime }:i=1,\dots ,n)$ denotes a matrix of $4|\mathcal{S}|\times n$ independent random variables, independent of $((D_{i},A_{i},S_{i}))_{i=1}^{n}  $, with $\check{Y}_{i}(d,a,s)\overset{d}{=}(Y(d)-\mu _{\mathbf{P}_{n}}(d,a,s)| D(a)=d,S=s)$.
By \eqref{eq:ad_step1_eq8} and \eqref{eq:ad_step1_eq11}, 
\begin{equation}
(T_{n,1}| ((D_{i},A_{i},S_{i}))_{i=1}^{n})~\overset{d}{=}~(T_{n,1}^{C}| ((D_{i},A_{i},S_{i}))_{i=1}^{n}).  \label{eq:ad_step1_eq12}
\end{equation}

For any $(h_{1},h_{2},h_{3})\in \mathbb{R}^{4|\mathcal{S}|}\times \mathbb{R} ^{|\mathcal{S}|}\times \mathbb{R}^{|\mathcal{S}|}$, we have
{\begin{align*}
&\mathbf{P}_{n}(T_{n,1}\leq h_{1},T_{n,2}\leq h_{2},T_{n,3}\leq h_{3}) \notag\\
&~=~E_{ \mathbf{P}_{n}}[\mathbf{P}_{n}(T_{n,1}\leq h_{1},T_{n,2}\leq h_{2},T_{n,3}\leq h_{3}| ((D_{i},A_{i},S_{i}))_{i=1}^{n})] \\
& ~\overset{(1)}{=}~E_{\mathbf{P}_{n}}[\mathbf{P}_{n}(T_{n,1}\leq h_{1}| ((D_{i},A_{i},S_{i}))_{i=1}^{n})1[T_{n,2}\leq h_{2},T_{n,3}\leq h_{3}]] \\
& ~\overset{(2)}{=}~E_{\mathbf{P}_{n}}[\mathbf{P}_{n}(T_{n,1}^{C}\leq h_{1}| ((D_{i},A_{i},S_{i}))_{i=1}^{n})1[T_{n,2}\leq h_{2},T_{n,3}\leq h_{3}]] \\
& ~\overset{(3)}{=}~E_{\mathbf{P}_{n}}[\mathbf{P}_{n}(T_{n,1}^{C}\leq h_{1},T_{n,2}\leq h_{2},T_{n,3}\leq h_{3}| ((D_{i},A_{i},S_{i}))_{i=1}^{n})] \\
& ~=~\mathbf{P}_{n}(T_{n,1}^{C}\leq h_{1},T_{n,2}\leq h_{2},T_{n,3}\leq h_{3}),
\end{align*}}
where (1) and (3) hold because $(T_{n,2},T_{n,3})$ is a nonstochastic function of $((D_{i},A_{i},S_{i}):i=1,\dots,n)$, and (2) by \eqref{eq:ad_step1_eq12}. Since $(h_{1},h_{2},h_{3})$ was arbitrary, \eqref{eq:ad_step1_eq1} follows.

For each $(d,a,s)\in \{0,1\}^{2}\times \mathcal{S}$, let $M(d,a,s)\in \mathcal{M}$ be as defined earlier. Then, we define $F_{M(d,a,s)}\in [ 0,1]$ in the following recursive fashion. First, we set $F_{M(0,0,1)}=F_{1}=0 $. Second, given $F_{M(d,a,s)}$, define 
\begin{equation}
F_{M(d,a,s)+1}~\equiv ~\left[ 
{\scriptsize  \begin{array}{c}
1[(d,a)=(0,0)](1-\pi _{D(0)}(s))(1-\pi _{A}(s)) \\ 
+1[(d,a)=(1,0)]\pi _{D(0)}(s)(1-\pi _{A}(s)) \\ 
+1[(d,a)=(0,1)](1-\pi _{D(1)}(s))\pi _{A}(s) \\ 
+1[(d,a)=(1,1)]\pi _{D(1)}(s)\pi _{A}(s)
\end{array}}
\right] p(s)+F_{M(d,a,s)}.  \label{eq:ad_step1_eq20}
\end{equation}%
Note that \eqref{eq:ad_step1_eq20} implies that $F_{|\mathcal{M}|+1}=F_{4|\mathcal{S}|+1}=1$. Then, we define 
\begin{equation}
T_{n,1}^{D}~\equiv ~\Bigg( \frac{1}{\sqrt{n}}\sum_{i=\lfloor n F_{M(d,a,s)}\rfloor +1}^{\lfloor n F_{M(d,a,s)+1}\rfloor }\check{Y}_{i}(d,a,s):(d,a,s)\in \{0,1\}^{2}\times \mathcal{S}\Bigg) .
\label{eq:ad_step1_eq15}
\end{equation}%
Since $T_{n,1}^{D}$ is a nonstochastic function of $((\check{Y} _{i}(d,a,s):(d,a,s)\in \{0,1\}^{2}\times \mathcal{S})^{\prime }:i=1,\dots ,n) $, $(T_{n,2},T_{n,3})$ is a nonstochastic function of $ ((D_{i},A_{i},S_{i}))_{i=1}^{n}$, and $((\check{Y}_{i}(d,a,s):(d,a,s)\in \{0,1\}^{2}\times \mathcal{S})^{\prime }:i=1,\dots ,n)\perp ((D_{i},A_{i},S_{i}))_{i=1}^{n}$, \eqref{eq:ad_step1_eq2} holds.

For each $(u,d,a,s)\in [ 0,1]\times \{0,1\}^{2}\times \mathcal{S}$, we consider the partial-sum process in \citet[page 397]{durrett:2019}, 
\begin{equation}
L_{n}(u,d,a,s)~=~\frac{1}{\sqrt{n}}\sum_{i=1}^{\lfloor nu\rfloor }\check{Y} _{i}(d,a,s)+\frac{nu-\lfloor nu\rfloor }{\sqrt{n}}\check{Y}_{\lfloor
nu\rfloor +1}(d,a,s),  \label{eq:durett_interp}
\end{equation}%
where we set $\check{Y}_{n+1}(d,a,s)=0$. Note that $\check{Y}_{i}(d,a,s) \overset{d}{=}(Y(d)-\mu _{\mathbf{P}_{n}}(d,a,s)|D(a)=d,S=s)$ implies that $E_{\mathbf{P}_{n}}[\check{Y} _{i}(d,a,s)]=0$ and $V_{\mathbf{P}_{n}}[\check{Y}_{i}(d,a,s)]=\sigma _{\mathbf{P}_{n}}^{2}(d,a,s)$. By Lemma \ref{lem:Bugni_et_al2018_seq}, 
\begin{equation}
(L_{n}(u,d,a,s):u\in [ 0,1])~\overset{d}{\to }~\sigma (d,a,s)\times (B(u):u\in [ 0,1]),  \label{eq:ad_step1_eq16}
\end{equation}%
where $(B(u):u\in [ 0,1])$ denotes a Brownian motion with $B(0)=0$.
Then, 
\begin{align}
\frac{1}{\sqrt{n}}\sum_{i=\lfloor nF_{M(d,a,s)}\rfloor +1}^{\lfloor nF_{M(d,a,s)+1}\rfloor }\check{Y}_{i}(d,a,s)& ~\overset{(1)}{=}~L_{n}(F_{M(d,a,s)+1},d,a,s)-L_{n}(F_{M(d,a,s)},d,a,s)+o_{p}(1)  \notag \\
& ~\overset{(2)}{\overset{d}{\to }}~N(0,(F_{M(d,a,s)+1}-F_{M(d,a,s)})\sigma ^{2}(d,a,s)).
\label{eq:ad_step1_eq17}
\end{align}%
where (1) holds by \eqref{eq:durett_interp} and $|nu-\lfloor nu\rfloor |\leq 1$ and $\check{Y}_{\lfloor nu\rfloor +1}(d,a,s)=O_{p}(1)$ for all $ u\in [ 0,1]$, and (2) by \eqref{eq:ad_step1_eq16} and the Brownian scaling relation (e.g., \citet[Equation (7.1.1)]{durrett:2019}).

Since $((\check{Y}_{i}(d,a,s):(d,a,s)\in \{0,1\}^{2}\times \mathcal{S}%
)^{\prime }:i=1,\dots ,n)$ are independent random variables, we have
that for any $(d,a,s),(\tilde{d},\tilde{a},\tilde{s})\in \{0,1\}^{2}\times 
\mathcal{S}$ with $(d,a,s)\neq (\tilde{d},\tilde{a},\tilde{s})$, 
\begin{equation}
\frac{1}{\sqrt{n}}\sum_{i=\lfloor nF_{M(d,a,s)}\rfloor +1}^{\lfloor
nF_{M(d,a,s)+1}\rfloor }\check{Y}_{i}(d,a,s)~\perp ~\frac{1}{\sqrt{n}}%
\sum_{i=\lfloor nF_{M(\tilde{d},\tilde{a},\tilde{s})}\rfloor +1}^{\lfloor
nF_{M(\tilde{d},\tilde{a},\tilde{s})+1}\rfloor }\check{Y}_{i}(\tilde{d},%
\tilde{a},\tilde{s}).  \label{eq:ad_step1_eq18}
\end{equation}%
By \eqref{eq:ad_step1_eq17} and \eqref{eq:ad_step1_eq18}, 
\begin{equation}
T_{n,1}^{D}~\overset{d}{\to }~N(\mathbf{0}%
,diag((F_{M(d,a,s)+1}-F_{M(d,a,s)})\sigma ^{2}(d,a,s):(d,a,s)\in
\{0,1\}^{2}\times \mathcal{S})).  \label{eq:ad_step1_eq19}
\end{equation}%
Finally, note that \eqref{eq:ad_step1_eq20} and \eqref{eq:ad_step1_eq19}
imply \eqref{eq:ad_step1_eq3}.

As a preliminary result to step \eqref{eq:ad_step1_eq4}, note that for any $(d,a,s)\in \{0,1\}^{2}\times \mathcal{S}$. 
\begin{align}
\frac{N_{M(d,a,s)+1}}{n}-\frac{N_{M(d,a,s)}}{n}& ~\overset{(1)}{=}~
% \left[ 
% {\scriptsize \begin{array}{c}
% 1[(a,d)=(0,0)](n(s)-n_{D}(s)-n_{A}(s)+n_{AD}(s)) \\ 
% +1[(a,d)=(0,1)](n_{D}(s)-n_{AD}(s)) \\ 
% +1[(a,d)=(1,0)](n_{A}(s)-n_{AD}(s)) \\ 
% +1[(a,d)=(1,1)]n_{AD}(s)
% \end{array}}
% \right] \frac{1}{n}  \notag \\
% & ~=~
\left[ 
{\scriptsize \begin{array}{c}
1[(a,d)=(0,0)](1-\frac{n_{D}(s)-n_{AD}(s)}{n(s)-n_{A}(s)})(1-\frac{n_{A}(s)}{n(s)}) \\ 
+1[(a,d)=(0,1)](\frac{n_{D}(s)-n_{AD}(s)}{n(s)-n_{A}(s)})(1-\frac{n_{A}(s)}{n(s)}) \\ 
+1[(a,d)=(1,0)](1-\frac{n_{AD}(s)}{n_{A}(s)})\frac{n_{A}(s)}{n(s)} \\ 
+1[(a,d)=(1,1)]\frac{n_{AD}(s)}{n_{A}(s)}\frac{n_{A}(s)}{n(s)}
\end{array}}
\right] \frac{n(s)}{n}  \notag \\
& ~\overset{(2)}{=}~F_{M(d,a,s)+1}-F_{M(d,a,s)}+o_{p}(1),
\label{eq:ad_step1_eq21}
\end{align}%
where (1) holds by definition of $(N_{m}:m=1,\dots ,|\mathcal{M}|+1)$, and (2) by the definition of $(F_{m}:m=1,\dots ,|\mathcal{M}|+1)$, Lemma \ref{lem:A1and2_impliesold3}, $p_{\mathbf{P}_{n}}(s) \to p(s)$, and ${n(s)}/{n}=p_{\mathbf{P}_{n}}(s)+o_{p}(1)$, which holds by the LLN (that applies by the i.i.d.\ condition in Assumption \ref{ass:1}). By \eqref{eq:ad_step1_eq21}, $F_{1}=N_{1}/n=0$, and an inductive argument, we get that for all $(d,a,s)\in \{0,1\}^{2}\times \mathcal{S}$, 
\begin{equation}
({N_{M(d,a,s)+1}}/{n},{N_{M(d,a,s)}}/{n})~=~(F_{M(d,a,s)+1},F_{M(d,a,s)})+o_{p}(1).\label{eq:ad_step1_eq22}
\end{equation}
To conclude the step, note that \eqref{eq:ad_step1_eq4} follows from the
next derivation. 
\begin{align*}
T_{n,1}^{C}~& \overset{(1)}{=}~(L_{n}(N_{M(d,a,s)+1}/n,d,a,s)-L_{n}(N_{M(d,a,s)}/n,d,a,s):(d,a,s)\in \{0,1\}^{2}\times \mathcal{S}) \\
~& \overset{(2)}{=}~(L_{n}(F_{M(d,a,s)+1},d,a,s)-L_{n}(F_{M(d,a,s)},d,a,s):(d,a,s)\in \{0,1\}^{2}\times \mathcal{S})+o_{p}(1) \\
~& \overset{(3)}{=}~T_{n,1}^{D}+o_{p}(1),
\end{align*}%
where (1) holds by \eqref{eq:ad_step1_eq11B} and $N_{m}=\lfloor n(N_{m}/n)\rfloor $ for all $m=1,\dots ,|\mathcal{M}|+1$, (2) by \eqref{eq:ad_step1_eq22} and that $(L_{n}(u,d,a,s):u\in \lbrack 0,1])$ is tight (implied by Lemma \ref{lem:Bugni_et_al2018_seq}), and (3) by \eqref{eq:ad_step1_eq15}, \eqref{eq:durett_interp}, and $|nu-\lfloor nu\rfloor |\leq 1$ and $\check{Y}_{\lfloor nu\rfloor +1}(d,a,s)=O_{p}(1)$ for all $u\in [ 0,1]$.

\underline{Step 2.} We show that $(T_{n,2}^{\prime },T_{n,3}^{\prime }) \overset{d}{\to }(\zeta _{2}^{\prime },\zeta _{3}^{\prime })$.

By definition, $\zeta _{2}$ and $\zeta _{3}$ are continuously distributed. Then, $(h_{2}^{\prime },h_{3}^{\prime })$ is a continuity point of the CDF of $(\zeta _{2}^{\prime },\zeta _{3}^{\prime })$. For any $(h_{2}^{\prime },h_{3}^{\prime })$,
\begin{align}
&\mathbf{P}_{n}(T_{n,2}\leq h_{2},T_{n,3}\leq h_{3})\notag\\
~& =~E_{\mathbf{P}_{n}}[E_{\mathbf{P}_{n}}[1(T_{n,2}\leq h_{2})1(T_{n,3}\leq h_{3})| (A_{i})_{i=1}^{n}, (S_{i})_{i=1}^{n}]]  \notag \\
& \overset{(1)}{=}~E_{\mathbf{P}_{n}}[E_{\mathbf{P}_{n}}[1(T_{n,2}\leq h_{2})|((A_{i},S_{i}))_{i=1}^{n}]1(T_{n,3}\leq h_{3})]  \notag \\
& =~\left[ 
{\small \begin{array}{c}
E_{\mathbf{P}_{n}}[(\mathbf{P}_{n}(T_{n,2}\leq h_{2}|((A_{i},S_{i}))_{i=1}^{n})-P(\zeta _{2}\leq h_{2}))1(T_{n,3}\leq h_{3})] +\\ 
P(\zeta _{2}\leq h_{2})(\mathbf{P}_{n}(T_{n,3}\leq h_{3})-P(\zeta _{3}\leq h_{3}))+P(\zeta _{2}\leq h_{2})P(\zeta _{3}\leq h_{3})
\end{array}}
\right] ,  \label{eq:ad_step2_eq1}
\end{align}
where (1) holds because $T_{n,3}$ is nonstochastic conditional on $((A_{i},S_{i}))_{i=1}^{n}$. By \eqref{eq:ad_step2_eq1}, 
\begin{align}
& \text{ \small $|\lim \mathbf{P}_{n}(T_{n,2}\leq h_{2},T_{n,3}\leq h_{3})-P(\zeta _{2}\leq h_{2})P(\zeta _{3}\leq h_{3})|  \leq$} \notag \\
&  \text{\small $\lim E_{\mathbf{P}_{n}}[|\mathbf{P}_{n}(T_{n,2}\leq h_{2}|((A_{i},S_{i}))_{i=1}^{n})-P(\zeta _{2}\leq h_{2})|]+\lim |\mathbf{P}_{n}(T_{n,3}\leq h_{3})-P(\zeta _{3}\leq h_{3})|$}.
\label{eq:ad_step2_eq2}
\end{align}
To complete the proof, we now show that both limits on the RHS of \eqref{eq:ad_step2_eq2} are zero. We begin with the first limit. Fix $\varepsilon >0$ arbitrarily. It suffices to find $N\in \mathbb{N}$ s.t.\ $\forall n\geq N$, $E_{\mathbf{P}_{n}}[|\mathbf{P}_{n}(T_{n,2}\leq h_{2}|((A_{i},S_{i}))_{i=1}^{n})-P(\zeta _{2}\leq h_{2})|]\leq \varepsilon $. By Assumption \ref{ass:2}(c) and Lemma \ref{lem:A1and2_impliesold3}, there is a set of values of $ ((A_{i},S_{i}))_{i=1}^{n}$ denoted by $M_{n}$ s.t.\ $ \mathbf{P}_{n}(((A_{i},S_{i}))_{i=1}^{n}\in M_{n})\to 1$ and for all $ ((a_{i},s_{i}))_{i=1}^{n}\in M_{n}$, $\mathbf{P}_{n}(T_{n,2}\leq
h_{2}|((A_{i},S_{i}))_{i=1}^{n}=((a_{i},s_{i}))_{i=1}^{n})\to P(\zeta _{2}\leq h_{2})$.
%, where we are using that $\zeta _{2}$ is continuously distributed. 
Then, $\exists N\in \mathbb{N}$ s.t.\ $\forall n\geq N$ and $\forall ((a_{i},s_{i}))_{i=1}^{n}\in M_{n}$, 
\begin{align}
& |\mathbf{P}_{n}(T_{n,2}\leq
h_{2}|((A_{i},S_{i}))_{i=1}^{n}=((a_{i},s_{i}))_{i=1}^{n})-P(\zeta _{2}\leq
h_{2})|~\leq ~\varepsilon /2  \label{eq:ad_step2_eq3} \\
& \mathbf{P}_{n}(((A_{i},S_{i}))_{i=1}^{n}\in M_{n})~\geq ~1-\varepsilon /2.
\label{eq:ad_step2_eq4}
\end{align}
Then, 
\begin{align*}
& E_{\mathbf{P}_{n}}[|\mathbf{P}_{n}(T_{n,2}\leq h_{2}|((A_{i},S_{i}))_{i=1}^{n})-P(\zeta _{2}\leq h_{2})|] \\%=\\
% & \left[ 
% \begin{array}{c}
% \int_{((a_{i},s_{i}))_{i=1}^{n}\in M_{n}}|\mathbf{P}_{n}(T_{n,2}\leq h_{2}|((a_{i},s_{i}))_{i=1}^{n})-P(\zeta _{2}\leq h_{2})| d\mathbf{P}_{n}(((A_{i},S_{i}))_{i=1}^{n}=((a_{i},s_{i}))_{i=1}^{n}) \\ 
% +\int_{((a_{i},s_{i}))_{i=1}^{n}\in M_{n}^{c}}|\mathbf{P}_{n}(T_{n,2}\leq h_{2}|((a_{i},s_{i}))_{i=1}^{n})-P(\zeta _{2}\leq h_{2})| d\mathbf{P}_{n}(((A_{i},S_{i}))_{i=1}^{n}=((a_{i},s_{i}))_{i=1}^{n})
% \end{array}
% \right]  \\
& \overset{(1)}{\leq }~\mathbf{P}_{n}(((A_{i},S_{i}))_{i=1}^{n}\in M_{n})\varepsilon /2+\mathbf{P}_{n}(((A_{i},S_{i}))_{i=1}^{n}\in M_{n}^{c})~\overset{(2)}{\leq }~\varepsilon ,
\end{align*}
where (1) holds by \eqref{eq:ad_step2_eq3} and (2) by \eqref{eq:ad_step2_eq4}. This completes the proof for the first limit in \eqref{eq:ad_step2_eq2}. The argument for the second limit holds because $\zeta _{3}$ is continuously distributed and $T_{n,3}\overset{d}{\to }\zeta _{3}$, which holds by the i.i.d.\ assumption in Assumption \ref{ass:1}, \ref{eq:keyconvergence}, and the CLT.

\underline{Step 3.} We now combine steps 1 and 2 to complete the proof. Let $ (h_{1}^{\prime },h_{2}^{\prime },h_{3}^{\prime })$ be a continuity point of the CDF of $(\zeta _{1}^{\prime },\zeta _{2}^{\prime },\zeta _{3}^{\prime })$. Under these conditions, we have
{\small \begin{align*}
\lim \mathbf{P}_{n}(T_{n,1}\leq h_{1},T_{n,2}\leq h_{2},T_{n,3}\leq h_{3})& ~\overset{(1)}{=}~\lim \mathbf{P}_{n}(T_{n,1}^{C}\leq h_{1},T_{n,2}\leq h_{2},T_{n,3}\leq h_{3}) \\
& ~\overset{(2)}{=}~\lim \mathbf{P}_{n}(T_{n,1}^{D}\leq h_{1},T_{n,2}\leq h_{2},T_{n,3}\leq h_{3}) \\
& ~\overset{(3)}{=}~\lim \mathbf{P}_{n}(T_{n,1}^{D}\leq h_{1})\lim \mathbf{P} _{n}(T_{n,2}\leq h_{2},T_{n,3}\leq h_{3}) \\
&~ \overset{(4)}{=}~P(\zeta _{1}\leq h_{1})P(\zeta _{2}\leq h_{2})P(\zeta _{3}\leq h_{3}),
\end{align*}}
as desired, where (1) holds by \eqref{eq:ad_step1_eq1} in step 1, (2) by \eqref{eq:ad_step1_eq4} in step 1, (3) by \eqref{eq:ad_step1_eq2} in step 1,
and (4) by \eqref{eq:ad_step1_eq3} in step 1 and step 2.
%\end{proof}
\end{proof}

%%%%%%%% DIVIDER %%%%%%%%%%%%
\begin{lemma}\label{lem:AsyDist2}
Assume the conditions in Lemma \ref{lem:AsyDist}. Then, along $\left\{ \mathbf{P}_{n}\right\} _{n\geq 1}$, 
\begin{align}
T_{n,4}
& \overset{p}{\to}\left(\left[
{\scriptsize\begin{array}{c}
I[ ( a,d) =( 0,0) ] ( 1-\pi _{A}( s) ) ( 1-\pi _{D( 0) }( s) ) \\
+I[ ( a,d) =( 0,1) ] ( 1-\pi _{A}( s) ) \pi _{D( 0) }( s) \\
+I[ ( a,d) =( 1,0) ] \pi _{A}( s) ( 1-\pi _{D( 1) }( s) ) \\
+I[ ( a,d) =( 1,1) ] \pi _{A}( s) \pi _{D( 1) }( s)
\end{array}}
\right] \text{ \scriptsize $p( s) \sigma ^{2}( d,a,s) :(d,a,s) \in \{0,1\}^2 \times \mathcal{S}$ } \right) ,\label{eq:AsyLimit2_eq2}
\end{align}
where $T_{n,4}$ is as in \eqref{eq:Tn_defn} but with distribution $\mathbf{P}_{n}$.
\end{lemma}
%%%%%%%% DIVIDER %%%%%%%%%%%%
\begin{proof}
Fix $( d,a,s) \in \{0,1\}^{2}\times \mathcal{S}$ arbitrarily throughout this proof. By repeating arguments used in step 1 of the proof of Lemma \ref{lem:AsyDist}, we can show that
\begin{equation}
T_{n,4}(d,a,s)~\overset{d}{=}~T_{n,4}^{D}(d,a,s)+o_{p}( 1) , \label{eq:AsyDist2_1}
\end{equation}
where $T_{n,4}^{D}(d,a,s)\equiv\frac{1}{n}\sum_{i=\lfloor nF_{M( d,a,s) }\rfloor +1}^{\lfloor nF_{M( d,a,s) +1}\rfloor }U_{i}( d,a,s)$ and $( U_{i}( d,a,s) :i=1,\ldots ,n) $ is an i.i.d.\ sequence with $U_{i}( d,a,s) \overset{d}{=}\{(Y(d)-\mu_{\mathbf{P}_{n}} (d,a,s))^{2}|D(a)=d,S=s\}$. To show \eqref{eq:AsyLimit2_eq2}, consider the following argument.
\begin{align}
T_{n,4}^{D}(d,a,s) 
&~=~1[F_{M( d,a,s) +1}>F_{M( d,a,s)}]\tfrac{\lfloor nF_{M( d,a,s) +1}\rfloor -\lfloor nF_{M( d,a,s) }\rfloor }{n} \tfrac{\sum_{i=\lfloor nF_{M( d,a,s) }\rfloor +1}^{\lfloor nF_{M( d,a,s) +1}\rfloor }U_{i}( d,a,s)}{ \lfloor nF_{M( d,a,s) +1}\rfloor -\lfloor nF_{M( d,a,s) }\rfloor }\notag \\
&~\overset{(1)}{=}~\text{\small $1[F_{M( d,a,s) +1}>F_{M( d,a,s)}]( F_{M( d,a,s) +1}-F_{M( d,a,s) }+o( 1) ) (\sigma ^{2}(d,a,s)+o_{p}( 1) )$} \notag\\
&~\overset{(2)}{=}~\left[
{\scriptsize\begin{array}{c}
1[(a,d)=(0,0)](1-\pi _{A}(s))(1-\pi _{D(0)}(s)) \\ 
+1[(a,d)=(0,1)](1-\pi _{A}(s))\pi _{D(0)}(s) \\ 
+1[(a,d)=(1,0)]\pi _{A}(s)(1-\pi _{D(1)}(s)) \\ 
+1[(a,d)=(1,1)]\pi _{A}(s)\pi _{D(1)}(s)
\end{array}}
\right] p(s)\sigma ^{2}(d,a,s)+o_{p}( 1) ,\label{eq:AsyDist2_2}
\end{align}
where (1) holds by $ E_{\mathbf{P}_{n}}[U_{i}( d,a,s) ]=\sigma_{\mathbf{P}_{n}} ^{2}(d,a,s) \to \sigma^{2}(d,a,s)$ and the LLN, and (2) by \eqref{eq:ad_step1_eq20}. The desired result follows from \eqref{eq:AsyDist2_1} and \eqref{eq:AsyDist2_2}.
\end{proof}

%%%%%%%% DIVIDER %%%%%%%%%%%%

\begin{lemma} \label{lem:A1and2_impliesold3}
Assume the conditions in Lemma \ref{lem:AsyDist}. Then, along $\left\{ \mathbf{P}_{n}\right\} _{n\geq 1}$, 
\begin{equation}
 \{ T_{n,2} \vert S^{(n)},A^{(n)}\} 
%\left. \left\{ \left( \sqrt{n}\left( \frac{n_{AD}(s)}{n_{A}(s)}-\pi _{D(1),n}(s),\frac{n_{D}(s)-n_{AD}(s)}{n(s)-n_{A}(s)}-\pi _{D(0),n}(s)\right) ^{\prime }:s\in \mathcal{S}\right) \right\vert ((S_{i},A_{i}))_{i=1}^{n}\right\} 
~\overset{d}{\to }~N( \mathbf{0},\Sigma _{D})\text{ w.p.a.1,} \label{eq:A3A_statement}
\end{equation}
where $T_{n,2}$ is as in \eqref{eq:Tn_defn} but with distribution $\mathbf{P}_{n}$, and 
\begin{equation}
\Sigma _{D}~\equiv ~diag\left(\left[
\begin{array}{cc}
\tfrac{\pi _{D(1)}(s)(1-\pi _{D(1)}(s))}{\pi _{A}(s)} & 0 \\
0 & \tfrac{\pi _{D(0)}(s)(1-\pi _{D(0)}(s))}{1-\pi _{A}(s)}
\end{array}
\right] \tfrac{1}{p(s)}:s\in \mathcal{S}\right).
\end{equation}
In addition, along $\left\{ \mathbf{P}_{n}\in \mathcal{P}_1\right\} _{n\geq 1}$, 
\begin{equation}
\left(\tfrac{n_{AD}(s)}{n_{A}( s) },\tfrac{ n_{D}(s)-n_{AD}(s) }{ n( s) -n_{A}(s) }\right) ~\overset{p}{\to}~( \pi _{D( 1) }( s) ,\pi _{D( 0) }( s) ) .
\label{eq:A3B_statement}
\end{equation} 
\end{lemma}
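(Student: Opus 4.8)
The plan is to prove the two displays separately, both resting on the observation that, conditional on $(S^{(n)},A^{(n)})$, the realized decisions are independent Bernoulli draws. First I would establish this conditional structure. For any unit $i$ with $A_i=a$ and $S_i=s$ we have $D_i=D_i(a)$, and by Assumption \ref{ass:2}(a) together with the i.i.d.\ condition in Assumption \ref{ass:1}, the same conditioning computation used in \eqref{eq:ad_step1_eq5} (with decisions in place of outcomes) shows that $(D_i:i=1,\dots,n)$ are mutually independent given $(S^{(n)},A^{(n)})$ with $D_i\mid (S^{(n)},A^{(n)})\sim\mathrm{Bernoulli}(\pi_{D(A_i),\mathbf{P}_n}(S_i))$. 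Hence, for each $s$, the sum $n_{AD}(s)$ is, conditionally on $(S^{(n)},A^{(n)})$, a sum of $n_A(s)$ i.i.d.\ $\mathrm{Bernoulli}(\pi_{D(1),\mathbf{P}_n}(s))$ variables, and $n_D(s)-n_{AD}(s)$ is a sum of $n(s)-n_A(s)$ i.i.d.\ $\mathrm{Bernoulli}(\pi_{D(0),\mathbf{P}_n}(s))$ variables; since these $2|\mathcal{S}|$ sums involve disjoint units, they are conditionally independent, which will deliver the block-diagonal (indeed diagonal) form of $\Sigma_D$.

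To prove \eqref{eq:A3A_statement}, I would work on the event $E_n\equiv\{n_A(s)\geq 1\text{ and }n(s)-n_A(s)\geq 1\text{ for all }s\in\mathcal{S}\}$, which has probability tending to one: by Assumption \ref{ass:2}(c), $n_A(s)/n(s)\to\pi_A(s)\in(\varepsilon,1-\varepsilon)$, and by the LLN $n(s)/n\to p(s)\geq\xi$, so $n_A(s)$ and $n(s)-n_A(s)$ diverge w.p.a.1. On $E_n$ I would apply the CLT for triangular arrays to each conditional Bernoulli sum; the Lyapunov condition is immediate since the summands are bounded. This yields, conditionally and w.p.a.1, $\sqrt{n_A(s)}(n_{AD}(s)/n_A(s)-\pi_{D(1),\mathbf{P}_n}(s))\overset{d}{\to}N(0,\pi_{D(1)}(s)(1-\pi_{D(1)}(s)))$ and its control-group analog. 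I would then convert the normalization from $\sqrt{n_A(s)}$ to $\sqrt{n}$ using $n/n_A(s)=(n(s)/n)^{-1}(n_A(s)/n(s))^{-1}\overset{p}{\to}(p(s)\pi_A(s))^{-1}$, which holds by Assumption \ref{ass:2}(c), the LLN, and the convergences in \eqref{eq:keyconvergence}, and likewise $n/(n(s)-n_A(s))\overset{p}{\to}(p(s)(1-\pi_A(s)))^{-1}$. Together with the conditional independence from the first step, this produces exactly the limiting covariance $\Sigma_D$.

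For \eqref{eq:A3B_statement} I would use a conditional second-moment argument that needs neither \eqref{eq:keyconvergence} nor Assumption \ref{ass:2}(f). Given $(S^{(n)},A^{(n)})$, the first step gives $E[n_{AD}(s)/n_A(s)\mid S^{(n)},A^{(n)}]=\pi_{D(1),\mathbf{P}_n}(s)$ and $V[n_{AD}(s)/n_A(s)\mid S^{(n)},A^{(n)}]=\pi_{D(1),\mathbf{P}_n}(s)(1-\pi_{D(1),\mathbf{P}_n}(s))/n_A(s)\leq 1/(4n_A(s))$. Since $n_A(s)/n\to\pi_A(s)p(s)\geq\varepsilon\xi>0$ forces $n_A(s)\overset{p}{\to}\infty$, the conditional variance is $o_p(1)$, so conditional Chebyshev yields $n_{AD}(s)/n_A(s)-\pi_{D(1),\mathbf{P}_n}(s)\overset{p}{\to}0$, i.e.\ \eqref{eq:A3B_statement}; the control component is identical.

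The main obstacle is the first step: carefully justifying that conditioning on $(S^{(n)},A^{(n)})$ renders the decisions independent Bernoulli with the stated parameters (transferring the independence from Assumption \ref{ass:2}(a) and the i.i.d.\ assumption, as in \eqref{eq:ad_step1_eq5}), and correctly maintaining the ``w.p.a.1'' bookkeeping so that the conditional CLT is invoked only on a set of realizations of $(S^{(n)},A^{(n)})$ whose probability tends to one. The subsequent rescaling and the second-moment bound are routine once that structure is in place.
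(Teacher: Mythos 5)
Your proposal is correct, and for the main display \eqref{eq:A3A_statement} it is essentially the paper's own argument: you first use Assumption \ref{ass:2}(a) plus the i.i.d.\ condition in Assumption \ref{ass:1} (the computation behind \eqref{eq:ad_step1_eq5}, which is \eqref{eq:pf_A3_4} in the paper) to show that, conditional on $(S^{(n)},A^{(n)})$, the counts $(n_{AD}(s),\,n_D(s)-n_{AD}(s))$ are independent binomials $Bi(n_A(s),\pi_{D(1),\mathbf{P}_n}(s))$ and $Bi(n(s)-n_A(s),\pi_{D(0),\mathbf{P}_n}(s))$ built from disjoint units; you then apply the triangular-array CLT at the natural normalizations $\sqrt{n_A(s)}$ and $\sqrt{n(s)-n_A(s)}$ on the w.p.a.1 event where these counts diverge, and finally rescale to $\sqrt{n}$ via $n/n_A(s)\overset{p}{\to}(p(s)\pi_A(s))^{-1}$ and $n/(n(s)-n_A(s))\overset{p}{\to}(p(s)(1-\pi_A(s)))^{-1}$, exactly the paper's Steps 1 and 2.

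The one place you depart from the paper is \eqref{eq:A3B_statement}: the paper obtains it from \eqref{eq:A3A_statement} by ``elementary convergence arguments,'' whereas you give a direct conditional Chebyshev bound, $V[n_{AD}(s)/n_A(s)\mid S^{(n)},A^{(n)}]\leq 1/(4n_A(s))=o_p(1)$. This is a harmless and arguably cleaner route, since it avoids invoking the distributional limit for a pure consistency claim. One small caveat: your Chebyshev argument delivers $n_{AD}(s)/n_A(s)-\pi_{D(1),\mathbf{P}_n}(s)\overset{p}{\to}0$, which is weaker than \eqref{eq:A3B_statement} as literally written (convergence to the fixed limits $\pi_{D(a)}(s)$); to conclude the stated display you still need $\pi_{D(a),\mathbf{P}_n}(s)\to\pi_{D(a)}(s)$ from \eqref{eq:keyconvergence}, so your claim that the argument ``needs neither \eqref{eq:keyconvergence} nor Assumption \ref{ass:2}(f)'' should be tempered: it is the centering, not the limit identification, that is free of those conditions. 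This is a wrinkle shared by the paper's own statement of \eqref{eq:A3B_statement} for arbitrary sequences in $\mathcal{P}_1$, and it does not affect correctness where the lemma is used.
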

%%%%%%%% DIVIDER %%%%%%%%%%%%
\begin{proof}
%%%%%%%% DIVIDER %%%%%%%%%%%%

We only show \eqref{eq:A3A_statement}, as \eqref{eq:A3B_statement} follows from \eqref{eq:A3A_statement} and elementary convergence arguments. We split the proof of \eqref{eq:A3A_statement} into two steps. First, we show
\begin{align}
&\left. \left\{ \left(\left(  
{\scriptsize\begin{array}{c}
\sqrt{n_{A}( s) }\left( \tfrac{n_{AD}(s)}{n_A(s)}-\pi _{D(1),\mathbf{P}_{n}}(s)\right) \\
\sqrt{n( s) -n_{A}( s) }\left( \tfrac{n_{0D}(s)}{n(s)-n_A(s)}-\pi _{D(0),\mathbf{P}_{n}}(s)\right) 
\end{array}}
\right):s\in \mathcal{S}\right) \right\vert S^{(n)},A^{(n)}\right\} \notag \\
&~\overset{d}{\to}~N\left( \mathbf{0},diag\left( \left[
{\scriptsize
\begin{array}{cc}
\pi _{D(1)}(s)(1-\pi _{D(1)}(s)) & 0 \\
0 & \pi _{D(0)}(s)(1-\pi _{D(0)}(s))
\end{array}}
\right] :s\in \mathcal{S}\right) \right) \text{ w.p.a.1,}\label{eq:pf_A3_6}
\end{align}
where $n_{0D}( s) \equiv n_D( s) - n_{AD}( s) $
% $= \sum_{i=1}^{n}1[ A_{i}=0,D_{i}=1,S_{i}=s] $ 
for any $s\in \mathcal{S}$. Second, we show
\begin{equation}
\left. \left\{ \left(\tfrac{\sqrt{n}}{\sqrt{n_{A}( s) }}, \tfrac{\sqrt{n}}{\sqrt{n( s) -n_{A}( s) }}\right) \right\vert S^{(n)},A^{(n)}\right\} \to \left( \tfrac{1}{\sqrt{p( s)\pi _{A}( s) }},\tfrac{1}{ \sqrt{p( s)(1-\pi _{A}( s)) }}\right) \text{ w.p.a.1.}
\label{eq:pf_A3_8}
\end{equation}
Then, \eqref{eq:A3A_statement} follows from \eqref{eq:pf_A3_6} and \eqref{eq:pf_A3_8} via elementary convergence arguments.

\underline{Step 1: Show \eqref{eq:pf_A3_6}.} Conditional on $(S^{(n)},A^{(n)})$, note that $(( n_{A}(s),n( s) ) :s\in \mathcal{S}) $ is non-stochastic, and so the only source of randomness in \eqref{eq:pf_A3_6} is $( ( n_{AD}(s),n_{0D}( s) ) :s\in \mathcal{S})$. Also, it is relevant to note that
\begin{align}
n_{AD}( s) &~=~\sum_{i=1}^{n}1[ A_{i}=1,D_{i}( 1) =1,S_{i}=s] ~=~\sum_{i=1}^{n}1[ A_{i}=1,S_{i}=s] D_{i}( 1) \notag\\
n_{0D}( s) &~=~\sum_{i=1}^{n}1[ A_{i}=0,D_{i}( 0) =1,S_{i}=s] ~=~\sum_{i=1}^{n}1[ A_{i}=0,S_{i}=s] D_{i}( 0) . \label{eq:pf_A3_2}
\end{align}
According to \eqref{eq:pf_A3_2}, each component of $(( n_{AD}(s),n_{0D}( s) ):s\in \mathcal{S})$ is determined by a different subset of individuals in the random sample.

As a next step, consider the following derivation for any $(d_{0,i}) _{i=1}^{n}\times ( d_{1,i}) _{i=1}^{n}\times ( a_{i}) _{i=1}^{n}\times (s_{i})_{i=1}^{n}\in \{ 0,1\} ^{n}\times \{ 0,1\} ^{n}\times \{ 0,1\} ^{n}\times \mathcal{S}^{n}$.
\begin{align}
&\mathbf{P}_{n}( ( ( D_{i}( 0) ,D_{i}( 1) ) ) _{i=1}^{n}=( ( d_{0,i},d_{1,i}) ) _{i=1}^{n} | ( ( A_{i},S_{i}) ) _{i=1}^{n}=( ( a_{i},s_{i}) ) _{i=1}^{n}) \notag \\
&\overset{(1)}{=}~\mathbf{P}_{n}( ( ( D_{i}( 0) ,D_{i}( 1) ) ) _{i=1}^{n}=( ( d_{0,i},d_{1,i}) ) _{i=1}^{n} | ( S_{i}) _{i=1}^{n}=( s_{i}) _{i=1}^{n}) \notag \\
%&=~\frac{\mathbf{P}_{n}( ( ( D_{i}( 0) ,D_{i}( 1) ,S_{i}) ) _{i=1}^{n}=( ( d_{0,i},d_{1,i},s_{i}) ) _{i=1}^{n}) }{\mathbf{P}_{n}( ( S_{i}) _{i=1}^{n}=( s_{i}) _{i=1}^{n}) } \notag \\ 
&\overset{(2)}{=}~
\textstyle\prod_{i=1}^{n}\mathbf{P}_{n}( ( D( 0) ,D( 1) ) =( d_{0,i},d_{1,i}) | S=s_{i}) , \label{eq:pf_A3_4}
\end{align}
where (1) follows from Assumption \ref{ass:2}(a) and (2) follows from the i.i.d.\ condition in Assumption \ref{ass:1}. Conditionally on $( ( A_{i},S_{i}) ) _{i=1}^{n}=( ( a_{i},s_{i}) ) _{i=1}^{n}$, \eqref{eq:pf_A3_4} reveals that $( ( D_{i}( 0) ,D_{i}( 1) ) ) _{i=1}^{n}$ is an independent sample with $( (  D_{i}( 0) ,D_{i}( 1) ) |( ( A_{i},S_{i}) ) _{i=1}^{n}=( ( a_{i},s_{i}) ) _{i=1}^{n}) \overset{d}{=}( ( D( 0) ,D( 1) ) |S=s_{i}) $.

By \eqref{eq:pf_A3_2}, $(( n_{AD}(s),n_{0D}( s) ):s\in \mathcal{S}) $ are the sum of binary observations from different individuals. Conditional on $( S^{(n)},A^{(n)})$, \eqref{eq:pf_A3_2} and \eqref{eq:pf_A3_4} imply that
\begin{equation}
 \{ ( ( n_{AD}(s),n_{D}(s)-n_{AD}(s)):s\in \mathcal{S}) \vert S^{(n)},A^{(n)} \} ~\overset{d}{=}~( ( B_n( 1,s) ,B_n( 0,s) ) :s\in \mathcal{S} ) , \label{eq:pf_A3_5}
\end{equation}
where $( ( B_n( 1,s) ,B_n( 0,s) ) :s\in \mathcal{S}) $ are independent random variables with $ B_n( 1,s) \sim Bi( n_{A}( s) ,\pi _{D(1),\mathbf{P}_{n}}(s)) $ and $B_n( 0,s) \sim Bi( n( s) -n_{A}( s) ,\pi _{D(0),\mathbf{P}_{n}}(s)) $. For sequences of $(( S_{i},A_{i}) )_{i=1}^{n}$ with $n_{A}( s) \to \infty $ and $n( s) -n_{A}( s) \to \infty $ for all $s\in \mathcal{S}$, \eqref{eq:pf_A3_6} follows immediately from \eqref{eq:pf_A3_5}, the CLT, and $\pi _{D(a),\mathbf{P}_{n}}(s) \to \pi _{D(a)}(s)$. %the normal approximation to the binomial.

To conclude the step, it suffices to show that $n_{A}( s) \to \infty $ and $n( s) -n_{A}( s) \to \infty $ for all $s\in \mathcal{S}$ w.p.a.1. In turn, note that this is a consequence of $n( s) /n - p_{\mathbf{P}_{n}}(s)=o_p(1)$ by the LLN (which applies by the i.i.d.\ condition in Assumption \ref{ass:1}), $p_{\mathbf{P}_{n}}(s) - p(s)=o(1)$,  and $n_{A}( s) /n( s) \overset{p}{\to}\pi _{A}( s) \in ( 0,1) $ by Assumption \ref{ass:2}(c).

\underline{Step 2: Show \eqref{eq:pf_A3_8}.} Fix $s\in \mathcal{S}$ arbitrarily. By Assumptions \ref{ass:1}(a)-(c) and \ref{ass:2}(c),
\begin{align}
\left( \tfrac{\sqrt{n}}{\sqrt{n_{A}( s) }}, \tfrac{\sqrt{n}}{\sqrt{n( s) -n_{A}( s) }}\right) 
%~&=~\left[ \tfrac{n}{n( s) } \left( \tfrac{1}{n_{A}( s) /n( s) },\tfrac{1}{ 1-n_{A}( s) /n( s) }\right) \right] ^{1/2}\notag\\
~&\overset{p}{\to}~\left( \tfrac{1}{\sqrt{p( s)  \pi _{A}( s) }},\tfrac{1}{\sqrt{p( s) (1-\pi _{A}( s)) }} \right) .\label{eq:pf_A3_7}
\end{align}
From \eqref{eq:pf_A3_7} and the fact that $( ( n_{A}(s),n( s) ) :s\in \mathcal{S}) $ is non-stochastic conditional on $( S^{(n)},A^{(n)})$, \eqref{eq:pf_A3_8} follows.
\end{proof}

%%%%%%%%%%%%%%%%%%%%%%%%%% DIVIDER %%%%%%%%%%%%%%%%%%%%%%%%%%%%%%%%%%%%
 
\begin{lemma}\label{lem:Bugni_et_al2018_seq}
Assume the conditions in Lemma \ref{lem:AsyDist}. For any $( d,a,s) \in \{ 0,1\}^2 \times \mathcal{S}$ and $u\in [0,1]$, consider the partial-sum process in \citet[page 397]{durrett:2019},
\begin{equation}
L_{n}(u,d,a,s)~=~\frac{1}{\sqrt{n}}\sum_{i=1}^{\lfloor nu\rfloor }\check{Y} _{i}(d,a,s) + \frac{nu - \lfloor nu \rfloor}{\sqrt{n}} \check{Y} _{\lfloor nu \rfloor+1}(d,a,s),
\label{eq:durett_interp2}
\end{equation}
where $\check{Y} _{n+1}(d,a,s)= 0$ and $( \check{Y}_{i}(d,a,s):i=1,\dots ,n)$ is a vector of independent random variables, independent of $(( D_{i},A_{i},S_{i}):i=1,\dots,n )$, with $\check{Y}_{i}(d,a,s) \overset{d}{=}( Y_i(d)-\mu_{\mathbf{P}_{n}} (d,a,s)|D( a) =d,S=s) $. Then, along $\left\{ \mathbf{P}_{n}\right\} _{n\geq 1}$, 
\begin{equation}
    (L_{n}(u,d,a,s):u\in [0,1])~\overset{d}{\to}~\sigma(d,a,s) \times (B(u):u \in [0,1]),
    \label{eq:CLT0}
\end{equation}
where $(B(u) : u \in [0,1])$ is a standard Brownian motion with $B(0)=0$.
\end{lemma}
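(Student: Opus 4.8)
The plan is to prove a functional central limit theorem (invariance principle) for the piecewise-linear partial-sum process $L_n(\cdot,d,a,s)$ by establishing weak convergence in $C[0,1]$ through the two standard ingredients: convergence of the finite-dimensional distributions and tightness. Throughout I fix $(d,a,s)\in\{0,1\}^2\times\mathcal{S}$ and abbreviate $\check Y_{n,i}\equiv\check Y_i(d,a,s)$, $\sigma_n^2\equiv\sigma^2_{\mathbf{P}_n}(d,a,s)$, and $\sigma^2\equiv\sigma^2(d,a,s)$. The central structural fact I would exploit is that, for each $n$, the variables $(\check Y_{n,i})_{i=1}^n$ are i.i.d.\ with $E_{\mathbf{P}_n}[\check Y_{n,i}]=0$, $V_{\mathbf{P}_n}[\check Y_{n,i}]=\sigma_n^2\to\sigma^2$ by \eqref{eq:keyconvergence}, and---crucially---uniformly bounded, $|\check Y_{n,i}|\le Y_H-Y_L$, by Assumption \ref{ass:1}(a). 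Because $\{\mathbf{P}_n\}$ varies with $n$, this is a row-wise i.i.d.\ triangular array, so classical Donsker cannot be cited verbatim; the uniform boundedness is what lets me supply the array-version hypotheses cheaply.

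First I would establish finite-dimensional convergence. Fix $0=u_0\le u_1<\cdots<u_k\le1$ and consider the increments $L_n(u_j)-L_n(u_{j-1})$. Up to the linear-interpolation correction terms $\tfrac{nu-\lfloor nu\rfloor}{\sqrt n}\check Y_{n,\lfloor nu\rfloor+1}$, which are $O_p(1/\sqrt n)=o_p(1)$ since the summands are bounded, each increment equals $n^{-1/2}\sum_{i=\lfloor nu_{j-1}\rfloor+1}^{\lfloor nu_j\rfloor}\check Y_{n,i}$. These blocks involve disjoint index sets and are therefore independent. Applying the Lyapunov CLT for triangular arrays to each block---the Lyapunov ratio being controlled because $\sup_i E_{\mathbf{P}_n}|\check Y_{n,i}|^3\le(Y_H-Y_L)^3$ while the block size is $\approx n(u_j-u_{j-1})$---yields an $N(0,\sigma^2(u_j-u_{j-1}))$ limit for each increment. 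Independence across blocks together with the Cram\'er--Wold device then delivers the joint limit, whose covariance structure is $\sigma^2\min(u_i,u_j)$; this matches the finite-dimensional distributions of $\sigma\times B$.

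Second I would prove tightness of $(L_n(\cdot,d,a,s))_{n\ge1}$ in $C[0,1]$. Because the summands are independent, mean zero, and bounded by $Y_H-Y_L$, the standard fourth-moment bound for sums of independent random variables gives $E_{\mathbf{P}_n}|L_n(u)-L_n(v)|^4\le C(u-v)^2$ for all $0\le v\le u\le1$ and all $n$, with $C$ depending only on $Y_H-Y_L$ and $\sup_n\sigma_n^2<\infty$ (the interpolation terms contribute only lower-order corrections). Billingsley's moment criterion for tightness then applies uniformly in $n$. Combining finite-dimensional convergence with tightness via Prohorov's theorem gives $L_n(\cdot,d,a,s)\overset{d}{\to}\sigma(d,a,s)\times B$ in $C[0,1]$, hence in $D[0,1]$, which is \eqref{eq:CLT0}.

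The main obstacle is precisely the triangular-array nature of the problem: the law of the summands drifts with $n$ along $\{\mathbf{P}_n\}$, so I must invoke the Lindeberg/Lyapunov array versions of both the CLT and the increment moment bound rather than their fixed-distribution analogues. Fortunately Assumption \ref{ass:1}(a) makes the arrays uniformly bounded, so the Lyapunov/Lindeberg conditions hold trivially and the fourth-moment tightness bound is uniform in $n$; the only case needing a separate remark is the degenerate limit $\sigma^2(d,a,s)=0$, where the same moment bound forces $\sup_{u\in[0,1]}|L_n(u,d,a,s)|\overset{p}{\to}0$, i.e., convergence to the identically zero scaled Brownian motion. This parallels the invariance principle in \citet[page 397]{durrett:2019}, adapted to sequences of distributions.
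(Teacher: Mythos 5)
Your proposal is correct, but it follows a genuinely different route from the paper. You prove weak convergence by the classical Billingsley recipe: (i) finite-dimensional convergence via independent disjoint blocks, the Lyapunov CLT for triangular arrays, and Cram\'er--Wold, and (ii) tightness in $C[0,1]$ via the fourth-moment bound $E_{\mathbf{P}_n}|L_n(u)-L_n(v)|^4\leq C(u-v)^2$, which is available precisely because Assumption \ref{ass:1}(a) makes the summands uniformly bounded. The paper instead splits on whether $\sigma(d,a,s)$ is zero: in the degenerate case it uses Kolmogorov's maximal inequality to force $\sup_u|L_n(u,d,a,s)|\overset{p}{\to}0$; in the non-degenerate case it normalizes by $\sigma_{\mathbf{P}_n}(d,a,s)$ and invokes the martingale functional CLT (\citet[Theorem 8.2.4]{durrett:2019}), verifying that the normalized summands form a martingale difference array with variance function converging to $u$ and a conditional Lindeberg condition (checked through a $2+\delta$ moment bound, again from boundedness). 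The trade-off: the paper's route outsources tightness entirely to the packaged martingale FCLT, so only pointwise conditions need checking, but it requires the normalization step and hence the two-case split; your route is more self-contained and elementary, and your tightness bound is valid uniformly in $n$ including when $\sigma^2(d,a,s)=0$, so the degenerate case exits through the same machinery (zero finite-dimensional limits plus tightness) rather than a separate maximal-inequality argument. Two small points you gloss over but that do go through: the blocks with $|u-v|<1/n$ must be handled by the piecewise-linear structure of $L_n$ (giving $|L_n(u)-L_n(v)|\leq\sqrt{n}\,|u-v|\,(Y_H-Y_L)$, which is compatible with the $C(u-v)^2$ bound), and the Lyapunov ratio argument for the block CLT requires $\sigma^2(d,a,s)>0$, which is exactly why your separate remark on the degenerate case is needed and not merely cosmetic.
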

%%%%%%%% DIVIDER %%%%%%%%%%%%
\begin{proof}
We divide the proof into two cases.

\underline{Case 1:} $\sigma (d,a,s)=0$. Fix $\varepsilon >0$ arbitrarily.
Then,
\begin{align}
\mathbf{P}_{n}\big(\sup_{u\in [ 0,1]}\vert L_{n}(u,d,a,s)\vert >\varepsilon \big)& ~{=}~\mathbf{P}_{n}\big(\max_{1\leq m\leq n}\vert n^{-1/2}\sum_{i=1}^{m}\check{Y}_{i}(d,a,s)\vert >\epsilon \big) \notag\\
& ~\overset{(1)}{\leq }~\varepsilon ^{-2}E_{\mathbf{P}_{n}}\big[ \big( n^{-1/2}\sum_{i=1}^{n}\check{Y}_{i}(d,a,s)\big) ^{2}\big] \nonumber \\
& ~\overset{(2)}{=}~\varepsilon ^{-2}\sigma _{\mathbf{P}_{n}}^{2}(d,a,s), \label{eq:CLT1}
\end{align}
where (1) holds by Kolmogorov's maximal inequality (\citet[Theorem 2.5.5]{durrett:2019})) and (2) because $(\check{Y} _{i}(d,a,s):i=1,\dots,n)$ is an i.i.d.\ sequence with $E_{\mathbf{P}_{n}}[ \check{Y }_{i}(d,a,s)] =0$ and $V_{\mathbf{P}_{n}}[ \check{Y}_{i}(d,a,s)] =\sigma _{\mathbf{P}_{n}}^{2}(d,a,s)$. Since $\sigma _{\mathbf{P}_{n}}^{2}(d,a,s)\to \sigma^{2} ( d,a,s) =0$, \eqref{eq:CLT1} implies that $(L_{n}(u,d,a,s):u\in [ 0,1 ] )\overset{p}{\to }0\times (B(u):u\in [ 0,1] )$, as desired.

\underline{Case 2:} $\sigma (d,a,s)>0$. Since $\sigma _{\mathbf{P}_{n}}(d,a,s)\to \sigma ( d,a,s)$, there is $N$ s.t.\ $\forall n\geq N$, $\sigma _{\mathbf{P}_{n}}(d,a,s)\geq \sigma ( d,a,s) /2>0$. We focus on $n\geq N$ for the rest of the proof. Then,
\begin{equation}
L_{n}(u,d,a,s)~=~\sigma (d,a,s)\tfrac{L_{n}(u,d,a,s)}{\sigma_{\mathbf{P}_{n}}(d,a,s)} +\delta _{\mathbf{P}_{n}}( u) , \label{eq:CLT2}
\end{equation}
where
\begin{equation}
\delta _{\mathbf{P}_{n}}( u) ~\equiv~ \sigma (d,a,s)\tfrac{L_{n}(u,d,a,s)}{\sigma _{\mathbf{P}_{n}}(d,a,s)}( \tfrac{\sigma _{\mathbf{P}_{n}}(d,a,s)}{\sigma (d,a,s)}-1) . \label{eq:CLT3}
\end{equation}

The proof is completed by showing that
\begin{equation}
( L_{n}(u,d,a,s)/\sigma _{\mathbf{P}_{n}}(d,a,s):u\in [ 0,1] ) ~\overset{d}{\to }~( B(u):u\in [ 0,1] ) .
\label{eq:CLT5}
\end{equation}
To see why, note that \eqref{eq:CLT3}, \eqref{eq:CLT5}, and $\sigma _{\mathbf{P}_{n}}(d,a,s)\to \sigma ( d,a,s) >0$ implies that $(\delta _{\mathbf{P}_{n}}( u) :u\in [ 0,1 ]) \overset{p}{\to }(0:u\in [ 0,1] )$, which, when combined with \eqref{eq:CLT2} and \eqref {eq:CLT5}, implies \eqref{eq:CLT0}. To show \eqref{eq:CLT5}, note that 
\begin{equation}
\frac{L_{n}(u,d,a,s)}{\sigma _{\mathbf{P}_{n}}(d,a,s)}~=~\sum_{i=1}^{\lfloor nu\rfloor }W_{n,i}+(nu-\lfloor nu\rfloor )W_{n,\lfloor nu\rfloor +1},
\label{eq:durett_interp3}
\end{equation}%
where, for all $i=1,\ldots ,n$, 
\begin{equation}
W_{n,i}~\equiv ~\frac{\check{Y}_{i}(d,a,s)}{\sqrt{n}\sigma _{\mathbf{P}_{n}}(d,a,s)},  \label{eq:CLT4}
\end{equation}%
and $W_{n,n+1}=0$. Then, \eqref{eq:CLT5} is a corollary of \citet[Theorem 8.2.4]{durrett:2019} under the following conditions:
\begin{enumerate}[(i)]
\item $\{(W_{n,i},\mathcal{F}_{n,i}):i=1,\dots ,n\}$ is a martingale difference array, where $\mathcal{F}_{n,i}$ denotes the $\sigma $-algebra generated by $(W_{n,j}:j=1,\dots ,i)$.
\item For all $u\in [ 0,1]$, $V_{n,\lfloor nu\rfloor }\equiv\sum_{i=1}^{\lfloor nu\rfloor }E_{\mathbf{P}_{n}}[W_{n,i}^{2}]~\to~u.$
\item For all $\epsilon >0$, $\sum_{i=1}^{n}E_{\mathbf{P}_{n}}[W_{n,i}^{2}I\{|W_{n,i}|>\epsilon \}| \mathcal{F}_{n,i-1}]~=~o_{p}(1).$
\end{enumerate}
We now verify these conditions. For (i), it suffices to check that $W_{n,i}\in \mathcal{F}_{n,i}$ and $E_{\mathbf{P}_{n}}[W_{n,i}| \mathcal{F} _{n,i-1}]=0$ for any $i=1,\dots ,n$. The first one follows by definition of $\mathcal{F}_{n,i}$. The second one follows from \eqref{eq:CLT4} and that $(\check{Y}_{i}(d,a,s):i=1,\dots ,n)$ is an i.i.d.\ sequence with $E_{\mathbf{P}_{n}}[\check{Y}_{i}(d,a,s)]=0$, which gives $E_{\mathbf{P} _{n}}[W_{n,i}| \mathcal{F}_{n,i-1}]=E_{\mathbf{P}_{n}}[W_{n,i}]=0$.
To verify condition (ii), note that \eqref{eq:CLT4} implies $E_{\mathbf{P}_{n}}[W_{n,i}^{2}] = \sigma _{\mathbf{P}_{n}}^{-2}(d,a,s)E_{\mathbf{P}_{n}}[\check{Y}_{i}^{2}(d,a,s)]/n = 1/n$. From this, we get that $V_{n,\lfloor nu\rfloor }=\lfloor nu\rfloor /n\to u$, as desired.
Before proving condition (iii), consider the following preliminary argument. 
\begin{align}
E_{\mathbf{P}_{n}}[|\check{Y}_{i}(d,a,s)|^{2+\delta }]& ~\overset{(1)}{=}~E_{\mathbf{P}_{n}}[|Y_{i}(d)-\mu _{\mathbf{P}_{n}}(d,a,s)|^{2+\delta }|
D_{i}(a)=d,S_{i}=s]  \notag \\
& ~\overset{(2)}{\leq }~2^{1+\delta }(E_{\mathbf{P}_{n}}[|Y_{i}(d)|^{2+\delta }| D_{i}(a)=d,S_{i}=s]+|\mu _{\mathbf{P}_{n}}(d,a,s)|^{2+\delta })  \notag \\
& ~\overset{(3)}{\leq }~2^{2+\delta }E_{\mathbf{P}_{n}}[|Y_{i}(d)|^{2+\delta}| D_{i}(a)=d,S_{i}=s]  \notag \\
& ~\overset{(4)}{\leq }~2^{2+\delta }\max\{|Y_L|,|Y_H|\}^{2+\delta }.  \label{eq:CLT6}
\end{align}
where (1) holds by $\check{Y}_{i}(d,a,s)\overset{d}{=}\{Y_{i}(d)-\mu_{\mathbf{P}_{n}}(d,a,s)|D_{i}(a)=d,S_{i}=s\}$, (2) by Minkowski's inequality, (3) by H\"{o}lder's inequality, and (4) by Assumption \ref{ass:1}(a). Then, condition
(iii) holds by the next derivation. 
\begin{align*}
\sum_{i=1}^{n}E_{\mathbf{P}_{n}}\{W_{n,i}^{2}1[|W_{n,i}|>\epsilon \}| 
\mathcal{F}_{n,i-1}]
& ~\overset{(1)}{=}~nE_{\mathbf{P}_{n}}[W_{n,i}^{2}1[|W_{n,i}|>\epsilon ]] \\
& ~\leq ~nE_{\mathbf{P}_{n}}[|{W_{n,i}|^{2+\delta }}{\epsilon ^{-\delta }}1[|W_{n,i}|>\epsilon ]] \\
%& ~\overset{(2)}{=}~nE_{\mathbf{P}_{n}}\big[\tfrac{|\check{Y}_{i}(d,a,s)|^{2+\delta }}{n^{(2+\delta )/2}\sigma _{\mathbf{P}_{n}}^{2+\delta}(d,a,s)\epsilon ^{\delta }}1\big[\big\vert\tfrac{\check{Y}_{i}(d,a,s)}{\sqrt{n}\sigma _{\mathbf{P}_{n}}(d,a,s)}\big\vert>\epsilon \big]\big] \\
%& ~\leq ~\frac{n^{-\delta /2}E_{\mathbf{P}_{n}}[|\check{Y}_{i}(d,a,s)|^{2+\delta }]}{\sigma _{\mathbf{P}_{n}}^{2+\delta }(d,a,s)\epsilon ^{\delta }} \\
& ~\overset{(2)}{\leq }~\tfrac{2^{4+2\delta }\max\{|Y_L|,|Y_H|\}^{2+\delta }}{n^{\delta /2}\sigma
^{2+\delta }(d,a,s)\epsilon ^{\delta }}\to 0,
\end{align*}%
as desired, where (1) holds by \eqref{eq:CLT4} and that $(\check{Y}_{i}(d,a,s):i=1,\dots ,n)$ is an i.i.d.\ sequence, (2) by \eqref{eq:CLT4},
% and (3) by 
\eqref{eq:CLT6}, and $\sigma _{\mathbf{P}_{n}}(d,a,s)\geq \sigma (d,a,s)/2>0$
(as $n\geq N$).
\end{proof}

\subsection{ATE}

%%%%%%%%%%%%

\begin{proof}[Proof of Theorem \ref{thm:consist}]
Fix $\varepsilon > 0$ arbitrarily. By definition of $\Theta_I({\bf P})$ and $\hat{\Theta}_I$, 
\[
\{ |\theta_L({\bf P}) - \hat{\theta}_L| \leq \varepsilon \} ~\cap~ \{ |\theta_H({\bf P}) - \hat{\theta}_H| \leq \varepsilon \}~~ \subseteq~~ \{ d_H(\hat{\Theta}_{I}, \Theta_I({\bf P})) \leq \varepsilon \}.
\]
Thus, it suffices to show
\begin{align*}
\liminf_{n \to \infty} \inf_{{\bf P} \in \mathcal{P}_1} \mathbf{P}(|\theta_L({\bf P}) - \hat{\theta}_L| \leq \varepsilon) =  \liminf_{n \to \infty} \inf_{{\bf P} \in \mathcal{P}_1} \mathbf{P}(|\theta_H({\bf P}) - \hat{\theta}_H| \leq \varepsilon) = 1 . 
%\label{eq:consist1}
\end{align*}
To conclude the proof, note that both equalities follow from Theorem \ref{thm:AsyDist_ATE}.
\end{proof}

\begin{proof}[Proof of Theorem \ref{thm:Stoye}.]
The result follows from \citet[Proposition 1]{stoye:2009}. To apply this result, it suffices to verify its conditions. Assumption 1(i) follows from Theorems \ref{thm:AsyDist_ATE} and \ref{thm:AsyVarEstimation}, Assumption 1(ii) holds by Theorem \ref{thm:AsyDist_ATE} and Assumption \ref{ass:1}(a), and Assumption 3 holds by $\hat{\theta}_{H}\geq \hat{\theta}_{L}$ and \citet[Lemma 3]{stoye:2009}.
\end{proof}

\begin{lemma}\label{lem:expressionBounds} 
Let
\begin{align}
\check{\theta}_{L}& ~\equiv ~\frac{1}{n}\sum_{i=1}^{n}\bigg[\frac{(Y_{i}D_{i}+Y_{L}(1-D_{i}))A_{i}}{\hat{\pi}_A(S_i)}-\frac{(Y_{i}(1-D_{i})+Y_{H}D_{i})(1-A_{i})}{1-\hat{\pi}_A(S_i)}\bigg] \notag\\
\check{\theta}_{H}& ~\equiv ~\frac{1}{n}\sum_{i=1}^{n}\bigg[\frac{(Y_{i}D_{i}+Y_{H}(1-D_{i}))A_{i}}{\hat{\pi}_A(S_i)}-\frac{(Y_{i}(1-D_{i})+Y_{L}D_{i})(1-A_{i})}{1-\hat{\pi}_A(S_i)}\bigg],
\label{eq:expressionBounds1}
\end{align}
where $\{\hat{\pi}_A(s):s \in \mathcal{S}\} $ is an arbitrary estimator of $\{{\pi}_A(s):s \in \mathcal{S}\} $ that satisfies
\begin{equation}
\underset{n\to  \infty }{\lim }\inf_{\mathbf{P}\in \mathcal{P}_1}\mathbf{P}\big( \hat{\pi}_A(s)\in (0,1)\text{ for all }s \in \mathcal{S}\big)~=~1.
\label{eq:expressionBounds2}
\end{equation}

Then,
\begin{equation}
\sqrt{n}\left( 
\begin{array}{c}
\check{\theta}_{L}-\theta _{L}(\mathbf{P}) \\ 
\check{\theta}_{H}-\theta _{H}(\mathbf{P})
\end{array}
\right) =\left( 
\begin{array}{c}
U_{n}+V_{L,n}+W_{L,n} \\ 
U_{n}+V_{H,n}+W_{H,n}
\end{array}
\right) +\delta _{n},  \label{eq:expressionBounds3}
\end{equation}
where $\theta _{L}(\mathbf{P})$ and $\theta _{H}(\mathbf{P})$ are as in \eqref{eq:bounds}, $\delta _{n}=o_{p}(1)$ uniformly in $\mathbf{P}\in \mathcal{P}_1$, and
\begin{align}
U_{n}& \equiv \sum_{s\in \mathcal{S}}\left[ 
\begin{array}{c}
T_{n,1}\left( 1,1,s\right) /\pi _{A}(s) -T_{n,1}\left(0,0,s\right) /(1-\pi _{A}(s) ) \\ 
+p(s)\mu (1,1,s)T_{n,2}\left( 1,s\right) +p(s)\mu (0,0,s)T_{n,2}\left(2,s\right)  \\ 
+[\pi _{D(1)}(s) \mu (1,1,s)-(1-\pi _{D(0) }(s) )\mu (0,0,s)]T_{n,3}(s) 
\end{array}
\right]   \notag \\
V_{L,n}& \equiv \sum_{s\in \mathcal{S}}\left[ 
\begin{array}{c}
-p(s)Y_{L}T_{n,2}\left( 1,s\right) -p(s)Y_{H}T_{n,2}\left( 2,s\right)  \\ 
+[(1-\pi _{D(1)}(s) )Y_{L}-\pi _{D(0)
}(s) Y_{H}]T_{n,3}(s) 
\end{array}
\right]   \notag \\
W_{L,n}& \equiv \sum_{s\in \mathcal{S}}\sqrt{n}(\tfrac{n_{A}(s)}{n(s)}-\hat{\pi}_A(s) )p(s)\left[ 
\begin{array}{c}
\pi _{D(1)}(s) \mu (1,1,s)/\pi _{A}(s) +\\ 
(1-\pi _{D(0) }(s) )\mu (0,0,s)/(1-\pi _{A}(s) ) \\ 
+(1-\pi _{D(1)}(s) )Y_{L}/\pi _{A}(s) \\ 
+\pi _{D(0) }(s) Y_{H}/(1-\pi _{A}(s) )
\end{array}
\right] ,  \label{eq:expressionBounds5}
\end{align}
with $(T_{n,1},T_{n,2},T_{n,3})$ as in \eqref{eq:Tn_defn}, and $V_{H,n}$ and $W_{H,n}$ are defined as in $V_{L,n}$ and $W_{L,n}$ but with $Y_{L}$ and $ Y_{H}$ interchanged. 
% The definitions in \eqref{eq:expressionBounds1} and \eqref{eq:expressionBounds5} implicitly use that $n_{A}(s)/n(s)\in (0,1) $ and $\hat{\pi}_A(s)\in (0,1) $ for all $s\in \mathcal{S}$. By Lemma \ref{lem:den_not_zero} and \eqref{eq:expressionBounds2}, this occurs w.p.a.1 uniformly in $\mathcal{P}$.
\end{lemma}
%%%%%%%%%%%%
\begin{proof}
Let $E_{n}=\{ n_{A}(s)/n(s)\in ( 0,1) \text{ and }\hat{\pi}_A(s)\in (0,1)\text{ for all }s\in \mathcal{S}\} $ and 
\begin{align*}
\tilde{U}_{n}&  ~\equiv~\sum_{s\in \mathcal{S}}\left[ 
\begin{array}{c}
T_{n,1}\left( 1,1,s\right) /\frac{n_{A}(s)}{n(s)}-T_{n,1}\left( 0,0,s\right)/(1-\frac{n_{A}(s)}{n(s)}) \\ 
+p(s)\mu (1,1,s)T_{n,2}\left( 1,s\right) +p(s)\mu (0,0,s)T_{n,2}\left(2,s\right)  \\ 
+[\frac{n_{AD}(s)}{n_{A}(s)}\mu (1,1,s)-(1-\frac{n_{D}(s)-n_{AD}(s)}{n(s)-n_{A}(s)})\mu (0,0,s)]T_{n,3}(s) 
\end{array}
\right]  \\
\tilde{V}_{L,n}&  ~\equiv~\sum_{s\in \mathcal{S}}\left[ 
\begin{array}{c}
-p(s)Y_{L}\sqrt{n}(\frac{n_{AD}(s)}{n_{A}(s)}-\pi _{D(1)}(s))
-p(s)Y_{H}\sqrt{n}(\frac{n_{D}(s)-n_{AD}(s)}{n(s)-n_{A}(s)}-\pi _{D(0)}(s)) \\ 
+[(1-\frac{n_{AD}(s)}{n_{A}(s)})Y_{L}-\frac{n_{D}(s)-n_{AD}(s)}{n(s)-n_{A}(s)}Y_{H}]\sqrt{n}(\frac{n(s)}{n}-p(s))
\end{array}
\right]   \notag \\
\tilde{W}_{L,n}& ~\equiv~ \sum_{s\in \mathcal{S}}\left[ 
\begin{array}{c}
\sqrt{n}(\frac{n_{A}(s)}{n(s)}-\hat{\pi}_A(s)) )\times  \\ 
\left[ 
\begin{array}{c}
\frac{T_{n,1}\left( 1,1,s\right) }{\sqrt{n}}/(\frac{n_{A}(s)}{n(s)}\hat{\pi}_{A}(s))+\frac{T_{n,1}\left( 0,0,s\right) }{\sqrt{n}}/((1-\frac{n_{A}(s)}{n(s)})(1-\hat{\pi}_{A}(s))) \\ 
+\frac{n(s)}{n}\frac{n_{AD}(s)}{n_{A}(s)}\mu (1,1,s)/\hat{\pi}_{A}(s) \\ 
+\frac{n(s)}{n}(1-\frac{n_{D}(s)-n_{AD}(s)}{n(s)-n_{A}(s)})\mu (0,0,s)/(1-\hat{\pi}_{A}(s)) \\ 
+\frac{n(s)}{n}(1-\frac{n_{AD}(s)}{n_{A}(s)})Y_{L}/\hat{\pi}_{A}(s)+\frac{n(s)}{n}\frac{n_{D}(s)-n_{AD}(s)}{n(s)-n_{A}(s)}Y_{H}/(1-\hat{\pi}_{A}(s))
\end{array}
\right] 
\end{array}
\right] .
\end{align*}
Some algebra shows that \eqref{eq:expressionBounds3} holds with $\delta _{n}=\delta _{n,1}+\delta _{n,2}$, where
\begin{align*}
\delta _{n,1} &~\equiv~ \left( 
\begin{array}{c}
( U_{n}-\tilde{U}_{n}) +( V_{L,n}-\tilde{V}_{L,n})+( W_{L,n}-\tilde{W}_{L,n})  \\ 
( U_{n}-\tilde{U}_{n}) +( V_{H,n}-\tilde{V}_{H,n})+( W_{H,n}-\tilde{W}_{H,n}) 
\end{array}
\right) \\
\delta _{n,2} &~\equiv~ 1\left[ E_{n}^{c}\right] \left( \sqrt{n}\left(
\begin{array}{c}
\tilde{\theta}_{L}-\theta _{L}(\mathbf{P}) \\ 
\tilde{\theta}_{H}-\theta _{H}(\mathbf{P})
\end{array}
\right) -\left(
\begin{array}{c}
\tilde{U}_{n}+\tilde{V}_{L,n}+\tilde{W}_{L,n} \\ 
\tilde{U}_{n}+\tilde{V}_{H,n}+\tilde{W}_{H,n}
\end{array}
\right) \right) .
\end{align*}
To complete the proof, it suffices to show $\delta _{n,1}=o_{p}(1) $ and $\delta _{n,2}=o_{p}(1)$ uniformly in $\mathbf{P}\in \mathcal{P}_1$. The first one follows from Lemma \ref{lem:AsyDist}, which implies that $U_{n}-\tilde{U}_{n}=o_{p}(1)$, $V_{L,n}-\tilde{V}_{L,n}=o_{p}(1)$, $W_{L,n}-\tilde{W}_{L,n}=o_{p}(1)$, $V_{H,n}-\tilde{V}_{H,n}=o_{p}(1)$, and $W_{H,n}-\tilde{W}_{H,n}=o_{p}(1)$, all uniformly in $\mathbf{ P}\in \mathcal{P}_1$. The second one follows from the fact that $ E_{n}\subseteq \left\{ \delta _{n,2}=0\right\} $ and $\lim_{n\to  \infty }\inf_{\mathbf{P}\in \mathcal{P}_1}P(E_{n})=1$ which, in turn, follows from Lemma \ref{lem:den_not_zero} and \eqref{eq:expressionBounds2}.
\end{proof}

\begin{theorem}\label{thm:AsyDist_ATE} 
We have
\begin{equation}
\sqrt{n}\left( 
\begin{array}{c}
\hat{\theta}_{L}-\theta _{L}(\mathbf{P}) \\ 
\hat{\theta}_{H}-\theta _{H}(\mathbf{P})
\end{array}
\right) ~\overset{d}{\to }~N\left( {\bf 0}_2 ,\left( 
\begin{array}{cc}
\sigma _{L}^{2}(\mathbf{P})  & \sigma _{HL}\left( \mathbf{P}\right)  \\ 
\sigma _{HL}(\mathbf{P})  & \sigma _{H}^{2}\left( \mathbf{P}\right) 
\end{array}%
\right) \right) ,  \label{eq:AsyDist_ATE1}
\end{equation}%
uniformly in $\mathcal{P}_1$, where
\begin{align}
\sigma _{L}^{2}(\mathbf{P}) & =\text{\small$\sum_{s\in \mathcal{S}}p(s)$}\left[ 
{\scriptsize\begin{array}{c}
(\sigma ^{2}(1,1,s)+(\mu (1,1,s)-Y_{L})^{2}(1-\pi _{D(1)}(s)))\pi_{D(1)}(s)/\pi _{A}(s)+ \\ 
(\sigma ^{2}(0,0,s)+(\mu (0,0,s)-Y_{H})^{2}\pi _{D(0)}(s))(1-\pi_{D(0)}(s))/(1-\pi _{A}(s)) \\ 
+\left[ \left( 
\begin{array}{c}
\pi _{D(1)}(s)\mu (1,1,s)+(1-\pi _{D(1)}(s))Y_{L} \\ 
-(1-\pi _{D(0)}(s))\mu (0,0,s)-\pi _{D(0)}(s)Y_{H}
\end{array}
\right) -\theta _{L}(\mathbf{P})\right] ^{2}
\end{array}}
\right]   \notag \\
% \sigma _{H}^{2}(\mathbf{P}) & =\sum_{s\in \mathcal{S}}p(s)\left[ 
% \begin{array}{c}
% (\sigma ^{2}(1,1,s)+(\mu (1,1,s)-Y_{H})^{2}(1-\pi _{D(1)}(s)))\pi
% _{D(1)}(s)/\pi _{A}(s)+ \\ 
% (\sigma ^{2}(0,0,s)+(\mu (0,0,s)-Y_{L})^{2}\pi _{D(0)}(s))(1-\pi
% _{D(0)}(s))/(1-\pi _{A}(s)) \\ 
% +\left( \left( 
% \begin{array}{c}
% \pi _{D(1)}(s)\mu (1,1,s)+(1-\pi _{D(1)}(s))Y_{H} \\ 
% -(1-\pi _{D(0)}(s))\mu (0,0,s)-\pi _{D(0)}(s)Y_{L}%
% \end{array}%
% \right) -\theta _{H}(\mathbf{P})\right) ^{2}%
% \end{array}%
% \right]   \notag  \\
\sigma _{HL}(\mathbf{P}) & =\text{\small$\sum_{s\in \mathcal{S}}p(s)$}\left[ 
{\scriptsize
\begin{array}{c}
(\sigma ^{2}(1,1,s)+(\mu (1,1,s)-Y_{L})(\mu (1,1,s)-Y_{H})
(1-\pi _{D(1)}(s)))\pi _{D(1)}(s)/\pi _{A}(s)+ \\ 
(\sigma ^{2}(0,0,s)+(\mu (0,0,s)-Y_{H})(\mu (0,0,s)-Y_{L})\pi
_{D(0)}(s))(1-\pi _{D(0)}(s))/(1-\pi _{A}(s)) \\ 
~+ 
\left[\left(
\begin{array}{c}
\pi _{D(1)}(s)\mu (1,1,s)+(1-\pi _{D(1)}(s))Y_{L} \\ 
-(1-\pi _{D(0)}(s))\mu (0,0,s)-\pi _{D(0)}(s)Y_{H}%
\end{array}%
\right)  -\theta _{L}(\mathbf{P})\right] \times  \\ 
\left[\left( 
\begin{array}{c}
\pi _{D(1)}(s)\mu (1,1,s)+(1-\pi _{D(1)}(s))Y_{H} \\ 
-(1-\pi _{D(0)}(s))\mu (0,0,s)-\pi _{D(0)}(s)Y_{L}
\end{array}
\right) -\theta _{H}(\mathbf{P})\right]
\end{array}}
\right],\label{eq:AsyDist_ATE2} 
\end{align}
and $\sigma _{H}^{2}(\mathbf{P}) $ is as $\sigma _{L}^{2}(\mathbf{P}) $ but with $(\theta_{L}(\mathbf{P}),Y_{H})$ replaced by $(\theta_{H}(\mathbf{P}),Y_{L})$.
% Note that $\left\{ \left( \sigma ^{2}(d,a,s),\mu (d,a,s),\pi _{D(a)}(s),\pi _{A}(s),p(s)\right) :\left( d,a,s\right) \in \{ 0,1\} \times \{ 0,1\} \times S\right\} $ are implicit functions of $\mathbf{P}$. 
Moreover, $\sigma _{L}^{2}(\mathbf{P}) $ and $\sigma_{H}^{2}(\mathbf{P}) $ are positive and finite, uniformly in $\mathcal{P}_1$.
\end{theorem}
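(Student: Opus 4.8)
The plan is to combine the linear representation from Lemma~\ref{lem:expressionBounds} with the joint limit theory of Lemma~\ref{lem:AsyDist}, and then promote the resulting sequencewise statement to a uniform one via a subsequence argument. First I would invoke Lemma~\ref{lem:expressionBounds} with the specific estimator $\hat{\pi}_A(s)=n_A(s)/n(s)$, which is precisely the one used in the definition of $(\hat{\theta}_L,\hat{\theta}_H)$ in \eqref{eq:BoundsHat}; condition \eqref{eq:expressionBounds2} for this choice is exactly Lemma~\ref{lem:den_not_zero}. With $\hat{\pi}_A(s)=n_A(s)/n(s)$ the factor $\sqrt{n}(n_A(s)/n(s)-\hat{\pi}_A(s))$ vanishes, so $W_{L,n}=W_{H,n}=0$ and the representation collapses to
\[
\sqrt{n}\begin{pmatrix}\hat{\theta}_L-\theta_L(\mathbf{P})\\ \hat{\theta}_H-\theta_H(\mathbf{P})\end{pmatrix}=\begin{pmatrix}U_n+V_{L,n}\\ U_n+V_{H,n}\end{pmatrix}+\delta_n,
\]
with $\delta_n=o_p(1)$ uniformly in $\mathbf{P}\in\mathcal{P}_1$. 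Here $U_n$ and $V_{L,n},V_{H,n}$ are fixed linear functionals of $(T_{n,1},T_{n,2},T_{n,3})$ whose coefficients are continuous functions of the population moments appearing in \eqref{eq:keyconvergence}.

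Next I would establish the uniform central limit theorem through the subsequence characterization of uniform weak convergence: it suffices to show that along every sequence $\{\mathbf{P}_n\in\mathcal{P}_1\}$ the law of the left-hand side and $N(\mathbf{0}_2,\Sigma_\theta(\mathbf{P}_n))$ become arbitrarily close in a metric metrizing weak convergence. Given any subsequence, the bounds of Assumptions~\ref{ass:1}(a),(c) and \ref{ass:2}(c) confine all moments $(\mu_{\mathbf{P}_n},\sigma^2_{\mathbf{P}_n},\pi_{D(a),\mathbf{P}_n},p_{\mathbf{P}_n},\pi_{A,\mathbf{P}_n})$ to a compact set (in particular $p(s)\in[\xi,1]$ and $\pi_A(s)\in[\varepsilon,1-\varepsilon]$ keep the denominators away from zero), so a further subsequence exists along which \eqref{eq:keyconvergence} holds. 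Along that subsequence Lemma~\ref{lem:AsyDist} yields joint convergence of $(T_{n,1},T_{n,2},T_{n,3})$ to independent Gaussians $(\zeta_1,\zeta_2,\zeta_3)$ with covariances $(\Sigma_1,\Sigma_2,\Sigma_3)$; the continuous mapping theorem applied to the limiting linear functional then delivers the bivariate normal limit, while continuity of $\Sigma_\theta(\cdot)$ in the moments gives $\Sigma_\theta(\mathbf{P}_n)\to$ the limiting covariance. Since every subsequence admits a further subsequence along which the distance vanishes, it vanishes along the whole sequence, which is the desired uniform statement.

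The covariance identification is then a variance computation: because $\zeta_1,\zeta_2,\zeta_3$ are independent with covariances $\Sigma_1,\Sigma_2,\Sigma_3$, I would evaluate $\mathrm{Var}(U+V_L)$, $\mathrm{Var}(U+V_H)$, and $\mathrm{Cov}(U+V_L,U+V_H)$ block by block, summing the contributions across $s\in\mathcal{S}$ and using identities such as $P(D(1)=1,S=s)=\pi_{D(1)}(s)p(s)$. Matching terms reproduces exactly the expressions for $\sigma_L^2(\mathbf{P})$, $\sigma_{HL}(\mathbf{P})$, and $\sigma_H^2(\mathbf{P})$ in \eqref{eq:AsyDist_ATE2}. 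This step is the most laborious, although conceptually routine.

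Finally, finiteness of $\sigma_L^2(\mathbf{P})$ and $\sigma_H^2(\mathbf{P})$ uniformly follows from boundedness of every ingredient: $Y\in[Y_L,Y_H]$ bounds $\mu$ and $\sigma^2$, all probabilities lie in $[0,1]$, and $\pi_A(s)\in(\varepsilon,1-\varepsilon)$ keeps the denominators controlled. For uniform positivity I would isolate the single nonnegative summand guaranteed by Assumption~\ref{ass:1}(c): for the $((d,a),s)$ with $V[Y(d)\mid D(a)=d,S=s]\,P(D(a)=d,S=s)\geq\xi$, the corresponding term in $\sigma_L^2(\mathbf{P})$ equals $\sigma^2(d,a,s)\,P(D(a)=d,S=s)$ divided by $\pi_A(s)$ or $1-\pi_A(s)$, hence is at least $\xi$; as the remaining summands are nonnegative, $\sigma_L^2(\mathbf{P})\geq\xi$, and the same argument applies to $\sigma_H^2(\mathbf{P})$ since its variance terms are unchanged. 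The conceptual crux is the compactness-plus-continuity subsequence argument that converts Lemma~\ref{lem:AsyDist} into a uniform statement, while the main obstacle in execution is the bookkeeping of the block-by-block variance computation.
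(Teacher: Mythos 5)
Your proposal is correct and follows essentially the same route as the paper's proof: both apply Lemma \ref{lem:expressionBounds} with $\hat{\pi}_A(s)=n_A(s)/n(s)$ (killing the $W$-terms, with Lemma \ref{lem:den_not_zero} supplying \eqref{eq:expressionBounds2}), both reduce uniformity to a compactness-of-moments subsequence extraction combined with Lemma \ref{lem:AsyDist} and continuity of the normal law in its covariance, and both obtain positivity from the Assumption \ref{ass:1}(c) variance term and finiteness from boundedness; your subsequence-of-subsequences formulation is just the contrapositive phrasing of the paper's argument by contradiction.
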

%%%%%%%%%%%%%%%%%%%%%%%%%%%%
\begin{proof}
% We use $\left\{ \left( \sigma _{\mathbf{P}}^{2}(d,a,s),\mu _{ \mathbf{P}}(d,a,s),\pi _{D(a),\mathbf{P}}(s),\pi _{A,\mathbf{P}}(s),p_{ \mathbf{P}}(s)\right) :\left( d,a,s\right) \in \{ 0,1\} ^2 \times S\right\} $ to make explicit the dependence of  these parameters on $\mathbf{P}\in \mathcal{P}$. 
We begin by proving the convergence in \eqref{eq:AsyDist_ATE1}, uniformly in $\mathcal{P}_1$. We proceed by contradiction i.e., suppose the uniform convergence fails. Let $\zeta _{n}(\mathbf{P})$ and $\zeta (\mathbf{P})$ denote the distributions on the LHS and RHS of \eqref{eq:AsyDist_ATE1}, respectively. Then, there must exist a subsequence $\left\{
w_{n}\right\} _{n\geq 1}$ of $\left\{
n\right\} _{n\geq 1}$ s.t., for some $h\in \mathbb{R} ^{2}$ and $\{ \mathbf{P}_{w_{n}}\} _{n\geq 1}$,
\begin{equation}
\lim_{n\to \infty }\left\vert \mathbf{P}_{w_{n}}\left( \zeta _{w_{n}}\left(  \mathbf{P}_{w_{n}}\right) \leq h\right) -\mathbf{P}_{w_{n}}\left( \zeta \left( \mathbf{P}_{w_{n}}\right) \leq h\right) \right\vert >0.
\label{eq:thm:AsyDist_ATE3}
\end{equation}
We can then find a further subsequence $\left\{ k_{n}\right\} _{n\geq 1}$ of $\left\{ w_{n}\right\} _{n\geq 1}$ s.t.\ $\{(\theta _{L}(\mathbf{P}_{k_{n}}),\theta _{H}(\mathbf{P}_{k_{n}}))_{n\geq 1}$ and $\{ ( \sigma _{\mathbf{P}_{k_{n}}}^{2}(d,a,s),\mu _{\mathbf{P}_{k_{n}}}(d,a,s),\pi _{D(a),\mathbf{P}_{k_{n}}}(s),\pi _{A,\mathbf{P}_{k_{n}}}(s),p_{\mathbf{P}_{k_{n}}}(s)) :( d,a,s) \in \{ 0,1\} \times \{ 0,1\} \times S\} $ converge. Since $\sigma _{L}^{2}\left( \mathbf{P}_{k_{n}}\right) $, $\sigma _{H}^{2}\left( \mathbf{P} _{k_{n}}\right) $, and $\sigma _{HL}\left( \mathbf{P}_{k_{n}}\right) $ are continuous functions of these objects, $( \sigma _{L}^{2}\left( \mathbf{P}_{k_{n}}) ,\sigma _{L}^{2}( \mathbf{P}_{k_{n}}) ,\sigma _{HL}( \mathbf{P}_{k_{n}}\right) ) $ converges as well. Denote the respective limits by $(\theta _{L},\theta _{H})$, $\{ ( {\sigma}^{2}(d,a,s),{\mu}(d,a,s),{\pi}_{D(a)}(s),{\pi}_{A}(s),{p}(s)) :( d,a,s) \in \{ 0,1\} ^2 \times S\} $, and $\left( {\sigma}_{L}^{2},{\sigma}_{H}^{2},{\sigma}_{HL}\right) $. By some algebra, $\left( {\sigma}_{L}^{2},{\sigma}_{H}^{2},{\sigma}_{HL}\right) $ coincides with \eqref{eq:AsyDist_ATE2}.

Since the normal CDF is continuous in its variance-covariance matrix,
\begin{equation}
\lim_{n\to \infty }\mathbf{P}_{k_{n}}\left( \zeta \left( \mathbf{P}_{k_{n}}\right) \leq h\right) =P\left( {\zeta}\leq h\right) ,
\label{eq:thm:AsyDist_ATE4}
\end{equation}
where
\begin{equation*}
{\zeta}\sim N\left( {\bf 0}_{2} ,\left( 
\begin{array}{cc}
{\sigma}_{L}^{2} & {\sigma}_{HL} \\ 
{\sigma}_{HL} & {\sigma}_{H}^{2}
\end{array}
\right) \right) .
\end{equation*}

Under the current assumptions, Lemmas \ref{lem:AsyDist} and \ref{lem:expressionBounds} (with $\hat{\pi}_n(S_i) = n_A(S_i)/n(S_i)$ for all $i=1,\dots,n$) imply that
\begin{equation}
\lim_{n\to \infty }\mathbf{P}_{k_{n}}\left( \zeta _{k_{n}}\left( \mathbf{P}_{k_{n}}\right) \leq h\right) =P\left( {\zeta}\leq h\right) .
\label{eq:thm:AsyDist_ATE5}
\end{equation}
By \eqref{eq:thm:AsyDist_ATE4} and \eqref{eq:thm:AsyDist_ATE5}, 
\begin{equation}
\lim_{n\to \infty }\left\vert \mathbf{P}_{k_{n}}\left( \zeta _{k_{n}}\left( \mathbf{P}_{k_{n}}\right) \leq h\right) -\mathbf{P}_{k_{n}}\left( \zeta \left( \mathbf{P}_{k_{n}}\right) \leq h\right) \right\vert =0.
\label{eq:thm:AsyDist_ATE6}
\end{equation}
Since $\left\{ k_{n}\right\} _{n\geq 1}$ is a subsequence of $\left\{w_{n}\right\} _{n\geq 1}$, \eqref{eq:thm:AsyDist_ATE6} contradicts \eqref{eq:thm:AsyDist_ATE3}, as desired.

To complete the proof, it suffices to show that $\sigma _{L}^{2}\left( \mathbf{P}\right) $ and $\sigma _{H}^{2}(\mathbf{P}) $ are positive and finite, uniformly for all $\mathbf{P}\in \mathcal{P}_1$. By Assumptions \ref{ass:1}(c) and \ref{ass:2}(c), we have that 
\begin{align}
% V\left[ Y_{i}(1)|D_{i}(1)=1,S_{i}=s\right] P\left( D_{i}(1)=1,S_{i}=s\right)  =
&p(s) \sigma ^{2}(1,1,s)\pi _{D(1)}(s)/\pi _{A}(s)\geq \xi /\left( 1-\varepsilon \right)\text{ for some } s \in \mathcal{S}\notag, \text{ or }\\
% V\left[ Y_{i}(0) |D_{i}(0) =0,S_{i}=s\right] P\left( D_{i}(0) =0,S_{i}=s\right)  =
&p(s) \sigma ^{2}(0,0,s)\pi _{D(0)}(s)/\pi _{A}(s)\geq \xi /\left( 1-\varepsilon \right) \text{ for some } s \in \mathcal{S}.\label{eq:thm:AsyDist_ATE7}
\end{align}%
By \eqref{eq:AsyDist_ATE2} and \eqref{eq:thm:AsyDist_ATE7}, $\sigma _{L}^{2}(\mathbf{P}) \geq \xi /\left( 1-\varepsilon \right) >0$ and $\sigma _{H}^{2}(\mathbf{P}) \geq \xi /\left( 1-\varepsilon \right) >0$. Finally, Assumptions \ref{ass:1}(a) and \ref{ass:2}(c) imply $\sigma _{L}^{2}(\mathbf{P}) \leq ( 10/\varepsilon +12)  \max\{|Y_L|,|Y_H|\}^{2}$ and $\sigma _{H}^{2}(\mathbf{P}) \leq ( 10/\varepsilon +12) \max\{|Y_L|,|Y_H|\}^{2}$.
\end{proof}

%%%%%%%%%%%%%%%%%%% THEOREM B.2 %%%%%%%%%%%%%%%%%%%%%%
\begin{theorem}\label{thm:AsyVarEstimation} 
For all $(d,a,s)\in \{0,1\}\times \{0,1\}\times \mathcal{S}$, 
\begin{align}
\hat{p}(s)& = n(s)/n, \notag\\
\hat{\pi}_{D(1)}(s)& = {n_{AD}(s)}/{n_{A}(s)},\notag \\
\hat{\pi}_{D(0)}(s)& ={(n_{D}(s)-n_{AD}(s))}/{(n(s)-n_{A}(s))},\notag\\
\hat{\mu}(d,a,s)& = \left\{ 
\begin{array}{cc}
\frac{\frac{1}{n}\sum_{i=1}^{n}I[D_{i}=d,A_{i}=a,S_{i}=s]Y_{i}}{\frac{1}{n}\sum_{i=1}^{n}I[D_{i}=d,A_{i}=a,S_{i}=s]} & \text{\small if $\sum_{i=1}^{n}I[D_{i}=d,A_{i}=a,S_{i}=s]>0$,} \\ 
0 &  \text{\small if $\sum_{i=1}^{n}I[D_{i}=d,A_{i}=a,S_{i}=s]=0$,} 
\end{array}%
\right. \notag\\
\hat{\sigma}^{2}(d,a,s)& = \left\{ 
\begin{array}{cc}
\frac{\frac{1}{n}\sum_{i=1}^{n}I[D_{i}=d,A_{i}=a,S_{i}=s](Y_{i}-\hat{\mu}(d,a,s))^{2}}{\frac{1}{n}\sum_{i=1}^{n}I[D_{i}=d,A_{i}=a,S_{i}=s]} & \text{\small if $\sum_{i=1}^{n}I[D_{i}=d,A_{i}=a,S_{i}=s]>0$,} \\ 
0 &  \text{\small if $\sum_{i=1}^{n}I[D_{i}=d,A_{i}=a,S_{i}=s]=0$.} 
\end{array}%
\right. \label{eq:AsyDistATE_var8}
\end{align}%
The definitions in \eqref{eq:AsyDistATE_var1} and \eqref{eq:AsyDistATE_var8} implicitly use that $n_{A}(s)/n(s)\in (0,1) $. By Lemma \ref{lem:den_not_zero}, this occurs w.p.a.1 uniformly in $\mathcal{P}_1$. Let
\begin{align}
\hat{\sigma}_{L}^{2}& =\sum_{s\in \mathcal{S}}\hat{p}(s)\left[ 
{\scriptsize\begin{array}{c}
( \hat{\sigma}^{2}(1,1,s)+(\hat{\mu}(1,1,s)-Y_{L})^{2}(1-\hat{\pi}_{D(1)}(s)))\hat{\pi}_{D(1)}(s)/(n_{A}(s)/n(s)) +\\ 
(\hat{\sigma}^{2}(0,0,s)+(\hat{\mu}(0,0,s)-Y_{H})^{2}\hat{\pi}_{D(0)}(s))(1-\hat{\pi}_{D(0)}(s))/(1-n_{A}(s)/n(s)) \\ 
+ \left[ \left(
\begin{array}{c}
\hat{\pi}_{D(1)}(s)\hat{\mu}(1,1,s)+(1-\hat{\pi}_{D(1)}(s))Y_{L} \\ 
-(1-\hat{\pi}_{D(0)}(s))\hat{\mu}(0,0,s)-\hat{\pi}_{D(0)}(s)Y_{H}
\end{array}
\right) -\hat{\theta}_{L}\right] ^{2}
\end{array}}
\right]  \notag \\
% \hat{\sigma}_{H}^{2}& ~\equiv ~\sum_{s\in \mathcal{S}}\hat{p}(s)\left[ 
% \begin{array}{c}
% [ \hat{\sigma}^{2}(1,1,s)+(\hat{\mu}(1,1,s)-Y_{H})^{2}(1-\hat{\pi}%
% _{D(1)}(s))] \\ 
% \times \hat{\pi}_{D(1)}(s)/(n_{A}(s)/n(s)) \\ 
% +[\hat{\sigma}^{2}(0,0,s)+(\hat{\mu}(0,0,s)-Y_{L})^{2}\hat{\pi}_{D(0)}(s)]
% \\ 
% \times (1-\hat{\pi}_{D(0)}(s))/(1-n_{A}(s)/n(s)) \\ 
% +\left( \left[ 
% \begin{array}{c}
% \hat{\pi}_{{D(1)}}(s)\hat{\mu}(1,1,s)+(1-\hat{\pi}_{D(1)}(s))Y_{H} \\ 
% -(1-\hat{\pi}_{D(0)}(s))\hat{\mu}(0,0,s)-\hat{\pi}_{D(0)}(s)Y_{L}%
% \end{array}%
% \right] -\hat{\theta}_{H}\right) ^{2}%
% \end{array}%
% \right] \notag \\
\hat{\sigma}_{HL}& =\sum_{s\in \mathcal{S}}\hat{p}(s)\left[ 
{\scriptsize\begin{array}{c}
( \hat{\sigma}^{2}(1,1,s)+(\hat{\mu}(1,1,s)-Y_{L})(\hat{\mu}(1,1,s)-Y_{H})(1-\hat{\pi}_{D(1)}(s)))  \hat{\pi}_{D(1)}(s)/(n_{A}(s)/n(s))+ \\ 
(\hat{\sigma}^{2}(0,0,s)+(\hat{\mu}(0,0,s)-Y_{H})(\hat{\mu}(0,0,s)-Y_{L})\hat{\pi}_{D(0)}(s)) (1-\hat{\pi}_{D(0)}(s))/(1-n_{A}(s)/n(s)) \\ 
+
\left[ \left(
\begin{array}{c}
\hat{\pi}_{D(1)}(s)\hat{\mu}(1,1,s)+(1-\hat{\pi}_{D(1)}(s))Y_{L} \\ 
-(1-\hat{\pi}_{D(0)}(s))\hat{\mu}(0,0,s)-\hat{\pi}_{D(0)}(s)Y_{H}%
\end{array}%
\right)-\hat{\theta}_{L} \right] \times \\ 
\left[ \left(
\begin{array}{c}
\hat{\pi}{D(1)}(s)\hat{\mu}(1,1,s)+(1-\hat{\pi}_{D(1)}(s))Y_{H} \\ 
-(1-\hat{\pi}_{D(0)}(s))\hat{\mu}(0,0,s)-\hat{\pi}_{D(0)}(s)Y_{L}
\end{array}
\right]- \hat{\theta}_{H}\right]
\end{array}}
\right] ,  \label{eq:AsyDistATE_var1}
\end{align}
and $\hat{\sigma}_{H}^{2}$ is defined as in $\hat{\sigma}_{L}^{2}$ but with $(\hat{\theta}_L,Y_{L},Y_{H}) $ replaced by $(\hat{\theta}_H,Y_{H},Y_{L})$. Then, 
\begin{equation*}
(\hat{\sigma}_{L}^{2},\hat{\sigma}_{H}^{2},\hat{\sigma}_{HL})~\overset{p}{\to  }~(\sigma _{L}^{2}(\mathbf{P}) ,\sigma _{H}^{2}(\mathbf{P}) ,\sigma _{HL}(\mathbf{P}) )
\end{equation*}
uniformly in $\mathcal{P}_1$.
\end{theorem}
%%%%%%%%%%%%%%%%%%%
\begin{proof}
Fix $\delta >0$ arbitrarily. We proceed by contradiction i.e., suppose that the desired uniform convergence fails. Then, there must exist a subsequence $\left\{ w_{n}\right\} _{n\geq 1}$ of $\mathbb{N}$ s.t., for some $h\in \mathbb{R}^{2}$ and $\{ \mathbf{P}_{w_{n}}\} _{n\geq 1}$,
\begin{equation}
\lim_{n\to  \infty }\mathbf{P}_{w_{n}}\left( \left\Vert \hat{\sigma}_{L}^{2}-\sigma _{L}^{2}\left( \mathbf{P}_{w_{n}}\right) ,\hat{\sigma}_{H}^{2}-\sigma _{H}^{2}\left( \mathbf{P}_{w_{n}}\right) ,\hat{\sigma}_{HL}-\sigma _{HL}\left( \mathbf{P}_{w_{n}}\right) \right\Vert >\delta\right) >0.  \label{eq:AsyDist3_1}
\end{equation}
We can then find a subsequence $\left\{ k_{n}\right\} _{n\geq 1}$ of $ \left\{ w_{n}\right\} _{n\geq 1}$ s.t., for all $(d,a,s)\in
\{0,1\}^{2}\times \mathcal{S}$, the convergence in \eqref{eq:keyconvergence} occurs but with $\mathbf{P}_{n}$ replaced by $\mathbf{P}_{k_{n}}$.
% , i.e.,
% \begin{align}
%  p_{\mathbf{P}_{k_{n}}}(s)& ~\to ~p(s) \notag\\
% \pi _{D(a),\mathbf{P}_{k_{n}}}(s)& ~\to ~\pi _{D(a)}(s) \notag\\
%  \mu _{\mathbf{P}_{k_{n}}}(d,a,s)& ~\to ~\mu(d,a,s) \notag\\
%  \sigma _{\mathbf{P}_{k_{n}}}^{2}(d,a,s)&~\to ~\sigma ^{2}(d,a,s)\notag\\
% \pi _{A,\mathbf{P}_{k_{n}}}(s)& ~\to ~\pi _{A}(s), \label{eq:AsyDist3_2}
% \end{align}
% where objects are as defined as in \eqref{eq:keyDefns}. 
By Assumption \ref{ass:2}(c) and the convergence in \eqref{eq:keyconvergence},
\begin{align}
\left( \theta _{L}(\mathbf{P}_{k_{n}}),\theta _{H}(\mathbf{P} _{k_{n}})\right)  &\to  (\theta _{L},\theta _{H})
\label{eq:AsyDist3_3} \\
\left( \sigma _{L}^{2}\left( \mathbf{P}_{k_{n}}\right) ,\sigma _{L}^{2}\left( \mathbf{P}_{k_{n}}\right) ,\sigma _{HL}\left( \mathbf{P} _{k_{n}}\right) \right)  &\to  \left( {\sigma }_{L}^{2},{\sigma } _{H}^{2},{\sigma }_{HL}\right) .\label{eq:AsyDist3_4}
\end{align}

First, note that Theorem \ref{thm:AsyDist_ATE} and \eqref{eq:AsyDist3_3} implies that along $\left\{ \mathbf{P}_{k_{n}}\in \mathcal{P}\right\} _{n\geq 1}$, 
\begin{equation}
(\hat{\theta}_{L},\hat{\theta}_{H})\overset{p}{\to  }(\theta _{L},\theta _{H}).  \label{eq:AsyDist3_5}
\end{equation}

Second, Lemma \ref{lem:AsyDist} implies that $( T_{n,2},T_{n,3}) =O_{p}(1)$. This and the convergence in \eqref{eq:keyconvergence} implies that for all $\left( a,s\right) \in \{ 0,1\} \times \mathcal{S}$ and along $\left\{ \mathbf{P}_{k_{n}}\in \mathcal{P}_1\right\} _{n\geq 1}$, 
\begin{equation}
(\hat{p}(s),\hat{\pi}_{D(a)}(s))\overset{p}{\to  }(p(s),\pi _{D(a)}(s)).\label{eq:AsyDist3_6}
\end{equation}

Third, we show that for any $s\in \mathcal{S}$ and $b\in \left\{ 1,2\right\} 
$ and along $\left\{ \mathbf{P}_{k_{n}}\in \mathcal{P}_1\right\} _{n\geq 1}$,  
\begin{align}
\hat{p}(s)\hat{\pi}_{D(1)}(s)\hat{\mu}(1,1,s)^{b}&\overset{p}{\to  } p(s)\pi _{D(1)}(s)\mu (1,1,s)^{b} \notag\\
\hat{p}(s)\left( 1-\hat{\pi}_{D(0)}(s)\right) \hat{\mu}(0,0,s)^{b}&\overset{ p}{\to  }p(s)\left( 1-\pi _{D(1)}(s)\right) \mu (0,0,s)^{b}  \label{eq:AsyDist3_7}
\end{align}
We only show the first one, as the second one follows from analogous arguments. There are two cases: $\mathbf{P}(D(1)=1,S=s)=0$ and $\mathbf{P} (D(1)=1,S=s)>0 $. In the first case, $\mu (1,1,s)=0$ and $\mathbf{P}_{k_{n}}(D(1)=1,S=s)\to  0$. Since $\hat{p}(s)\hat{\pi}_{D(1)}(s)\to  p(s)\pi _{D(1)}(s)=\mathbf{P}(D(1)=1,S=s)=0$ and $\hat{\mu}(1,1,s)$ is either a weighted average of $Y_{i}\in \left[ Y_{L},Y_{H}\right] $ or zero, we get that $\hat{p}(s)\hat{\pi}_{D(1)}(s)\hat{\mu}(1,1,s)^{b}\overset{p}{ \to  }0=p(s)\pi _{D(1)}(s)\mu (1,1,s)^{b}$, as desired. In the second case, $\mathbf{P}_{k_{n}}(D(1)=1,S=s)\to  \mathbf{P}(D(1)=1,S=s)>0$. By \eqref{eq:AsyDist3_6}, Assumption \ref{ass:2}(c), and the convergence in \eqref{eq:keyconvergence}, 
\begin{equation}
\frac{n_{AD}(s) }{n_{A}(s) }\frac{n_{A}\left( s\right) }{n(s) }\frac{n(s) }{n}=\hat{\pi}_{D(1)}(s) \frac{n_{A}(s) }{n(s) }\hat{p}(s)\overset{p}{\to  }p(s)\pi _{D(1)}(s)\pi _{A}(s)>0,\label{eq:AsyDist3_8}
\end{equation}
which implies that $1\left[ n_{AD}(s) >0\right] =1+o_{p}(1) $. Therefore,%
\begin{align}
&\hat{\mu}(1,1,s)-\mu \left( 1,1,s\right) \notag \\
&\overset{(1)}{=}\left\{ 
\begin{array}{cc}
\frac{\frac{1}{n}\sum_{i=1}^{n}I[D_{i}=1,A_{i}=1,S_{i}=s]Y_{i}(1)}{n_{AD}(s) /n}-\mu \left( 1,1,s\right)  & \text{if }n_{AD}\left( s\right) >0 \\ 
-\mu \left( 1,1,s\right)  & \text{if }n_{AD}(s) =0
\end{array}
\right.   \notag \\
&\overset{(2)}{=}\tfrac{\frac{1}{n}\sum_{i=1}^{n}I[D_{i}=1,A_{i}=1,S_{i}=s] \left( Y_{i}(1)-\mu _{k_{n}}\left( 1,1,s\right) \right) }{\left( \hat{\pi}_{D(1)}(s)\frac{n_{A}(s) }{n(s) }\hat{p}(s)-p(s)\pi _{D(1)}(s)\pi _{A}(s)+p(s)\pi _{D(1)}(s)\pi _{A}(s) \right)}+\mu _{k_{n}}\left( 1,1,s\right) -\mu
\left( 1,1,s\right) +o_{p}(1)  \notag \\
&\overset{(3)}{=}o_{p}(1),\label{eq:AsyDist3_9}
\end{align}
where (1) holds by \eqref{eq:AsyDistATE_var8} and $n_{AD}(s) =\sum_{i=1}^{n}I[D_{i}=1,A_{i}=1,S_{i}=s]$, (2) by $1 \left[ n_{AD}(s) >0\right] =1+o_{p}(1)$, and (3) by \eqref{eq:AsyDist3_8}, Lemma \ref{lem:AsyDist}, and the convergence in \eqref{eq:keyconvergence}, which implies that $T_{n,1}=O_{p}(1)$. By combining \eqref{eq:AsyDist3_6} and \eqref{eq:AsyDist3_8}, we have that $\hat{p}(s)\hat{\pi}_{D(1)}(s)\hat{\mu}(1,1,s)^{b}\overset{p}{\to  }p(s)\pi _{D(1)}(s)\mu (1,1,s)^{b}$, as desired.

Fourth, we show that for any $s\in \mathcal{S}$ and along $\left\{ \mathbf{P}_{k_{n}}\right\} _{n\geq 1}$, 
\begin{align}
& \hat{p}(s)\hat{\pi}_{D(1)}(s)\hat{\sigma}^{2}(1,1,s)~\overset{p}{\to  }~p(s)\pi _{D(1)}(s)\sigma ^{2}(1,1,s)  \notag \\
& \hat{p}(s)\hat{\pi}_{D(1)}(s)\hat{\sigma}^{2}(0,0,s)~\overset{p}{\to  }~p(s)( 1-\pi _{D(1)}(s)) \sigma ^{2}(0,0,s).\label{eq:AsyDist3_10}
\end{align}%
We only show the first one, as the second one follows from analogous arguments. There are two cases: $\mathbf{P}(D(1)=1,S=s)=0$ and $\mathbf{P}(D(1)=1,S=s)>0 $. In the first case, $\sigma ^{2}(1,1,s)=0$ and $\mathbf{P} _{k_{n}}(D(1)=1,S=s)\to  0$. Since $\hat{p}(s)\hat{\pi} _{D(1)}(s)\to  p(s)\pi _{D(1)}(s)=\mathbf{P}(D(1)=1,S=s)=0$ and $\hat{ \sigma}^{2}(1,1,s)$ is either a weighted average of $(Y_{i}-\hat{\mu} (1,1,s))^{2}$ (with $(Y_{i}-\hat{\mu}(1,1,s))^{2}\leq 4\max \left\{ \left\vert Y_{L}\right\vert ,\left\vert Y_{H}\right\vert \right\} ^{2}$) or zero, we conclude that $\hat{p}(s)\hat{\pi}_{D(1)}(s)\hat{\sigma}^{2}(1,1,s)\overset{p}{\to  }0=p(s)\pi _{D(1)}(s)\hat{\sigma}^{2}(1,1,s)$, as desired. In the second case, $\mathbf{P}_{k_{n}}(D(1)=1,S=s)\to \mathbf{P}(D(1)=1,S=s)>0$. By repeating previous arguments, we conclude that $1\left[ n_{AD}(s) >0\right] =1+o_{p}(1)$. Therefore,
\begin{align}
&\hat{\sigma}^{2}(1,1,s)-\sigma ^{2}(1,1,s) \notag\\
&\overset{(1)}{=}\left\{ 
\begin{array}{cc}
\frac{\frac{1}{n}\sum_{i=1}^{n}I[D_{i}=1,A_{i}=1,S_{i}=s](Y_{i}-\hat{\mu}(1,1,s))^{2}}{n_{AD}(s) /n}-\sigma ^{2}(1,1,s) & \text{if }n_{AD}(s) >0 \\ 
-\sigma ^{2}(1,1,s) & \text{if }n_{AD}(s) =0
\end{array}
\right.   \notag \\
&\overset{(2)}{=}\tfrac{T_{k_n,4}(1,1,s)}{\left( \hat{\pi}_{D(1)}(s)\frac{n_{A}(s) }{n(s) }\hat{p}(s)-p(s)\pi _{A}(s)\pi _{D(1)}(s)+p(s)\pi _{D(1)}(s)\pi _{A}(s) \right) }-\sigma ^{2}(1,1,s) +o_{p}(1) %\notag \\
% +\mu _{k_{n}}\left( 1,1,s\right) -\mu \left( 1,1,s\right) +o_{p}(1)  \notag \\
%&
\overset{(3)}{=}o_{p}(1),\label{eq:AsyDist3_11}
\end{align}
where (1) holds by \eqref{eq:AsyDistATE_var8} and $n_{AD}(s) =\sum_{i=1}^{n}I[D_{i}=1,A_{i}=1,S_{i}=s]$, (2) by $\left[ n_{AD}(s) >0\right] =1+o_{p}(1)$, and (3) by \eqref{eq:AsyDist3_8}, Lemma \ref{lem:AsyDist2}, and the convergence in \eqref{eq:keyconvergence}. By \eqref{eq:AsyDist3_6} and \eqref{eq:AsyDist3_8}, $\hat{p}(s)\hat{\pi}_{D(1)}(s)\hat{\sigma}^2(1,1,s)\overset{p}{\to  }p(s)\pi _{D(1)}(s)\sigma^2 (1,1,s)$, as desired.

Finally, by \eqref{eq:AsyDistATE_var1}, \eqref{eq:AsyDist3_5}, \eqref{eq:AsyDist3_6}, \eqref{eq:AsyDist3_7}, \eqref{eq:AsyDist3_10}, and \eqref{eq:AsyDist3_11}, we get
\begin{equation}
\lim_{n\to  \infty }\mathbf{P}_{k_{n}}\left( \left\Vert \hat{\sigma}_{L}^{2}-\sigma _{L}^{2},\hat{\sigma}_{H}^{2}-\sigma _{H}^{2},\hat{\sigma}_{HL}-\sigma _{HL}\right\Vert >\delta /2\right) =0.  \label{eq:AsyDist3_12}
\end{equation}
By the convergence in \eqref{eq:keyconvergence} and \eqref{eq:AsyDist3_12},
\begin{equation}
\lim_{n\to  \infty }\mathbf{P}_{k_{n}}\left( \left\Vert \hat{\sigma}_{L}^{2}-\sigma _{L}^{2}\left( \mathbf{P}_{k_{n}}\right) ,\hat{\sigma}_{H}^{2}-\sigma _{H}^{2}\left( \mathbf{P}_{k_{n}}\right) ,\hat{\sigma}_{HL}-\sigma _{HL}\left( \mathbf{P}_{k_{n}}\right) \right\Vert >\delta \right) =0.  \label{eq:AsyDist3_13}
\end{equation}
Since $\left\{ k_{n}\right\} _{n\geq 1}$ is a subsequence of $\left\{ w_{n}\right\} _{n\geq 1}$, \eqref{eq:AsyDist3_13} contradicts \eqref{eq:AsyDist3_1}, as desired.
\end{proof}

\subsection{ATT}

\begin{lemma}\label{lem:expressionBoundsATT} 
Let 
\begin{align}
\check{\upsilon}_{L}~& \equiv ~\frac{1}{n\check{G}}\sum_{i=1}^{n}\left[
\left( \tfrac{Y_{i}A_{i}}{\tilde{\pi}_{A}\left( S_{i}\right) }-\tfrac{%
Y_{i}\left( 1-A_{i}\right) }{1-\tilde{\pi}_{A}\left( S_{i}\right) }\right) 
\hat{\pi}_{A}\left( S_{i}\right) +\tfrac{\left( Y_{i}-Y_{H}\right)
D_{i}\left( 1-A_{i}\right) }{1-\tilde{\pi}_{A}\left( S_{i}\right) }\right]  
\notag \\
\check{\upsilon}_{H}~& \equiv ~\frac{1}{n\check{G}}\sum_{i=1}^{n}\left[
\left( \tfrac{Y_{i}A_{i}}{\tilde{\pi}_{A}\left( S_{i}\right) }-\tfrac{%
Y_{i}\left( 1-A_{i}\right) }{1-\tilde{\pi}_{A}\left( S_{i}\right) }\right) 
\hat{\pi}_{A}\left( S_{i}\right) +\tfrac{\left( Y_{i}-Y_{L}\right)
D_{i}\left( 1-A_{i}\right) }{1-\tilde{\pi}_{A}\left( S_{i}\right) }\right] ,
\label{eq:expressionBoundsATT1}
\end{align}%
where 
%$(\hat{\pi}_A(s):s\in \mathcal{S})$ are estimators of $({\pi}_A(s):s\in \mathcal{S})$ to be discussed, and 
\begin{equation}
\check{G}~\equiv ~\frac{1}{n}\sum_{i=1}^{n}\left[ \left( \tfrac{D_{i}A_{i}}{%
\tilde{\pi}_{A}\left( S_{i}\right) }-\tfrac{D_{i}\left( 1-A_{i}\right) }{1-%
\tilde{\pi}_{A}\left( S_{i}\right) }\right) \hat{\pi}_{A}\left( S_{i}\right)
+\tfrac{D_{i}\left( 1-A_{i}\right) }{1-\tilde{\pi}_{A}\left( S_{i}\right) }%
\right] .  \label{eq:expressionBoundsATT2}
\end{equation}%
and $\{\hat{\pi}_{A}(s):s\in \mathcal{S}\}$ and $\{\tilde{\pi}_{A}(s):s\in 
\mathcal{S}\}$ are arbitrary estimators of $\{{\pi }_{A}(s):s\in \mathcal{S}%
\}$. Then, 
\begin{equation}
\sqrt{n}\left( 
\begin{array}{c}
\check{\upsilon}_{L}-\upsilon _{L}(\mathbf{P}) \\ 
\check{\upsilon}_{H}-\upsilon _{H}(\mathbf{P})%
\end{array}%
\right) ~=~\frac{1}{G}\left( 
\begin{array}{c}
U_{n}+V_{L,n}+W_{L,n}+R_{L,n} \\ 
U_{n}+V_{H,n}+W_{H,n}+R_{H,n}%
\end{array}%
\right) +\delta _{n},  \label{eq:expressionBoundsATT3}
\end{equation}%
where $\upsilon _{L}(\mathbf{P})$ and $\upsilon _{H}(\mathbf{P})$ are as in %
\eqref{eq:bounds_ATT}, $\delta _{n}=o_{p}(1)$ uniformly in $\mathbf{P}\in 
\mathcal{P}_{2}$, and 
\begin{align}
U_{n}& =\sum_{s\in \mathcal{S}}\left[ {\scriptsize 
\begin{array}{c}
T_{n,1}(1,1,s)+T_{n,1}(0,1,s)+T_{n,1}(1,0,s)-\pi
_{A}(s)T_{n,1}(0,0,s)/(1-\pi _{A}(s)) \\ 
+p(s)\pi _{A}(s)[\mu (1,1,s)-\mu (0,1,s)]T_{n,2}(1,s) \\ 
+p(s)[(1-\pi _{A}(s))\mu (1,0,s)+\pi _{A}(s)\mu (0,0,s)]T_{n,2}(2,s)+ \\ 
\left[ 
\begin{array}{c}
\left[ 
\begin{array}{c}
\mu (1,1,s)\pi _{D(1)}(s)+\mu (0,1,s)(1-\pi _{D(1)}(s)) \\ 
-\mu (1,0,s)\pi _{D(0)}(s)-\mu (0,0,s)(1-\pi _{D(0)}(s))%
\end{array}%
\right] \pi _{A}(s) \\ 
+\mu (1,0,s)\pi _{D(0)}(s)%
\end{array}%
\right] T_{n,3}(s)%
\end{array}%
}\right]   \notag \\
V_{L,n}& =-\sum_{s\in \mathcal{S}}\left[ {\scriptsize 
\begin{array}{c}
p(s)\pi _{A}(s)\frac{N_{L}}{G}T_{n,2}(1,s)+p(s)[Y_{H}+\frac{N_{L}}{G}(1-\pi
_{A}(s))]T_{n,2}(2,s)+ \\ 
\big[Y_{H}\pi _{D(0)}(s)+\frac{N_{L}}{G}(\pi _{D(1)}(s)\pi _{A}(s)+\pi
_{D(0)}(s)(1-\pi _{A}(s)))\big]T_{n,3}(s)%
\end{array}%
}\right]   \notag \\
W_{L,n}& =\sum_{s\in \mathcal{S}}p(s)\sqrt{n}(\hat{\pi}_{A}(s)-\pi _{A}(s))%
\left[ {\scriptsize 
\begin{array}{c}
\mu (1,1,s)\pi _{D(1)}(s)+\mu (0,1,s)(1-\pi _{D(1)}(s)) \\ 
-\mu (1,0,s)\pi _{D(0)}(s)-\mu (0,0,s)(1-\pi _{D(0)}(s)) \\ 
-\frac{N_{L}}{G}(\pi _{D(1)}(s)-\pi _{D(0)}(s))%
\end{array}%
}\right]   \notag \\
R_{L,n}& =\sum_{s\in \mathcal{S}}p(s)\sqrt{n}(\tilde{\pi}_{A}(s)-\tfrac{%
n_{A}(s)}{n\left( s\right) })\left[ {\scriptsize 
\begin{array}{c}
-\mu (1,1,s)\pi _{D(1)}(s)-\mu (0,1,s)(1-\pi _{D(1)}(s))+ \\ 
\mu (1,0,s)\pi _{D(0)}(s)-Y_{H}\pi _{D\left( 0\right) }\left( s\right)
/(1-\pi _{A}\left( s\right) ) \\ 
-\mu (0,0,s)(1-\pi _{D(0)}(s))\pi _{A}\left( s\right) /(1-\pi _{A}\left(
s\right) ) \\ 
+\frac{N_{L}}{G}\left( \pi _{D\left( 1\right) }\left( s\right) -\pi
_{D\left( 0\right) }\left( s\right) \right) 
\end{array}%
}\right]   \notag \\
N_{L}& =\sum_{s\in \mathcal{S}}p(s)\left[ {\scriptsize 
\begin{array}{c}
\left[ 
\begin{array}{c}
\mu (1,1,s)\pi _{D(1)}(s)+\mu (0,1,s)(1-\pi _{D(1)}(s)) \\ 
-\mu (1,0,s)\pi _{D(0)}(s)-\mu (0,0,s)(1-\pi _{D(0)}(s))%
\end{array}%
\right] \pi _{A}(s) \\ 
+(\mu (1,0,s)-Y_{H})\pi _{D(0)}(s)%
\end{array}%
}\right]   \notag \\
G& =\sum_{s\in \mathcal{S}}p(s)[\pi _{D(1)}(s)\pi _{A}(s)+\pi
_{D(0)}(s)(1-\pi _{A}(s))],  \label{eq:expressionBoundsATT4}
\end{align}%
with $(T_{n,1},T_{n,2},T_{n,3})$ as in \eqref{eq:Tn_defn}, and $V_{H,n}$, $%
W_{H,n}$, $R_{H,n}$ are defined as in $V_{L,n}$, $W_{L,n}$, $R_{L,n}$ but
with $(N_{L},Y_{H})$ replaced by $(N_{H},Y_{L})$.
\end{lemma}
%%%%%%%%%%%%%%%%%%%%%%%
\begin{proof}
We only show the result for $\sqrt{n}(\check{\upsilon}_{L}-\upsilon _{L}(\mathbf{P}))$, as the one for $\sqrt{n}(\check{\upsilon}_{H}-\upsilon _{H}(\mathbf{P}))$ is analogous. Note that $\check{\upsilon}_{L}=\check{N}_{L}/\check{G}$ and $\upsilon _{L}(\mathbf{P})=\tilde{N}_{L}/\tilde{G}$, 
\begin{align*}
\check{N}_{L}& ~\equiv~ \sum_{s\in \mathcal{S}}\tfrac{n(s) }{n}\left[ 
\begin{array}{c}
\left( \tfrac{\sum_{i=1}^{n}1[A_{i}=1,S_{i}=s]Y_{i}}{n(s)\tilde{\pi}_{A}(s) }-\tfrac{\sum_{i=1}^{n}1[A_{i}=0,S_{i}=s]Y_{i}}{n(s)(1-\tilde{\pi}_{A}(s)}\right) \hat{\pi}_{A}(s) \\ 
+\tfrac{\sum_{i=1}^{n}\left( Y_{i}-Y_{H}\right) 1[D_{i}=1,A_{i}=0,S_{i}=s]}{n(s)(1-\tilde{\pi}_{A}(s)}
\end{array}
\right] \\
\check{G}& ~\equiv~ \sum_{s\in \mathcal{S}}\tfrac{n(s) }{n}\left[\left( \tfrac{n_{AD}(s) }{n(s)\tilde{\pi}_{A}(s) }-\tfrac{n_{D}(s) -n_{AD}(s) }{n(s)(1-\tilde{\pi}_{A}(s)}\right) \hat{\pi}_{A}(s) +\tfrac{n_{D}(s) -n_{AD}(s) }{n(s)(1-\tilde{\pi}_{A}\left(s\right) )}\right]
\end{align*}
and 
\begin{align*}
\tilde{N}_{L}& ~=~E\left[
(Y_{i}(1)-Y_{i}(0))(D_{i}(1)-D_{i}(0))A_{i}+(Y_{i}(1)-Y_{H})D_{i}(0)\right] \\
\tilde{G}& ~=~E[(D_{i}(1)-D_{i}(0))A_{i}+D_{i}(0)].
\end{align*}%
Also, we have 
\begin{align}
N_{L}& ~\overset{(1)}{=}~E\left[ {
\begin{array}{c}
E[Y_{i}(1)-Y_{i}(0)|D_{i}(1)>D_{i}(0),S_{i}]P(D_{i}(1)>D_{i}(0)|S_{i})\pi _{A}(S_{i}) \\ 
+E[Y_{i}(1)-Y_{H}|D_{i}(0)=1,S_{i}]P(D_{i}(0)=1|S_{i}) 
\end{array}
}\right] \notag \\
G& ~\overset{(2)}{=}~E\left[ P(D_{i}(1)>D_{i}(0)|S_{i})\pi
_{A}(S_{i})+P(D_{i}(0)=1|S_{i})\right] ,  \label{eq:expressionBoundsATT6}
\end{align}%
where (1) and (2) hold by the i.i.d.\ condition in Assumption \ref{ass:1} and \eqref{eq:keyDefns}.

We now proceed to show \eqref{eq:expressionBoundsATT1}. First, note that
\begin{align*}
\tilde{N}_{L}~&\overset{(1)}{=}~E\left[ {
\begin{array}{c}
E[Y_{i}(1)-Y_{i}(0)|D_{i}(1)>D_{i}(0),S_{i}]P(D_{i}(1)>D_{i}(0)|S_{i})P(A_{i}=1|S_{i}) \\ 
+E[Y_{i}(1)-Y_{H}|D_{i}(0)=1,S_{i}]P(D_{i}(0)=1|S_{i})
\end{array}
}\right]  \\
\tilde{G}~&\overset{(2)}{=}~E\left[ P(D_{i}(1)>D_{i}(0)|S_{i})P(A_{i}=1|S_{i})+P(D_{i}(0)=1|S_{i})\right] .
\end{align*}%
where (1) and (2) hold because $W_{i}\perp A_{i}|S_{i}$, as shown in the proof of Theorem \ref{thm:mainATT}. Second, we note that 
\begin{equation}
\sqrt{n}({\check{N}_{L}}/{\check{G}}-{\tilde{N}_{L}}/{\tilde{G}})~=~\sqrt{n}({\check{N}_{L}}/{\check{G}}-{N_{L}}/{G})+\delta _{n,1},
\label{eq:expressionBoundsATT7}
\end{equation}%
where $\delta _{n,1}=\sqrt{n}({N_{L}}/{G}-\tilde{N}_{L}/\tilde{G})=o_{p}(1)$, uniformly in $\mathbf{P}\in \mathcal{P}_{2}$. To get this result, we use
Assumptions \ref{ass:1}(a), (d) and \ref{ass:2}(b) and (e). 
% ERASE:\ To see this, note that%
% \begin{eqnarray*}
% \left\vert \delta _{n,1}\right\vert &=&\left\vert \sqrt{n}({N_{L}}/{G}-{%
% \tilde{N}_{L}}/{\tilde{G}})\right\vert \\
% &\leq &\frac{\sqrt{n}\left\vert {N_{L}}-{\tilde{N}_{L}}\right\vert }{{G}}%
% +\left\vert \frac{{\tilde{N}_{L}}}{{\tilde{G}}}\right\vert \frac{\left\vert 
% \sqrt{n}({\tilde{G}}-{G})\right\vert }{{G}}=o_{p}(1), \\
% &\leq &O(1)\sqrt{n}(P(A_{i}=1|S_{i})-\pi _{A}(S_{i}))=o(1) .
% \end{eqnarray*}

Third, we note that
\begin{equation}
\sqrt{n}({\check{N}_{L}}/{\check{G}}-{N_{L}}/{G})~=~[\sqrt{n}(\check{N}_{L}-N_{L})-(N_{L}/G)\sqrt{n}(\check{G}-G)]/G+\delta _{n,2},
\label{eq:expressionBoundsATT8}
\end{equation}%
where $\delta _{n,2}\equiv [ \sqrt{n}(\check{N}_{L}-{N_{L}})-({N_{L}} /G)\sqrt{n}(\check{G}-G)](1/\check{G}-1/G)=o_{p}(1)$, uniformly in $\mathbf{P%
}\in \mathcal{P}_{2}$ (based on Lemma \ref{lem:AsyDist}). 

Fourth, a lengthy derivation shows that 
\begin{equation}
\sqrt{n}(\check{G}-G)~=~\sum_{s\in \mathcal{S}}\left[ 
\begin{array}{c}
T_{n,3}(s)\left[ \pi _{D(1) }(s) \pi _{A}(s) +\pi _{D(0) }(s) ( 1-\pi _{A}(s) ) \right]  \\ 
+T_{n,2}(s,1)p(s) \pi _{A}(s)  
+T_{n,2}(s,2)p(s) ( 1-\pi _{A}(s) )  \\ 
-p(s) \left( \pi _{D(1) }(s) -\pi _{D(0) }(s) \right) \sqrt{n}\left( \tilde{\pi} _{A}(s) -n_{A}(s) /n(s)\right)  \\ 
+p(s) \left( \pi _{D(0) }(s) -\pi _{D(1) }(s) \right) \sqrt{n}\left( \hat{\pi} _{A}(s) -\pi _{A}(s)\right) 
\end{array}%
\right] +\delta _{n,3},
\label{eq:expressionBoundsATT9}
\end{equation}%
where $\delta _{n,3}=o_{p}(1)$, uniformly in $\mathbf{P}\in \mathcal{P}_{2}$ (based on Lemma \ref{lem:AsyDist}).

Finally, another lengthy derivation shows that
\begin{equation}
\sqrt{n}(\check{N}_{L}-N_{L})~=~\sum_{s\in \mathcal{S}}\left\{ {\scriptsize 
\begin{array}{c}
T_{n,1}(1,1,s)+T_{n,1}(0,1,s)+T_{n,1}(1,0,s)-\frac{\pi _{A}(s)T_{n,1}(0,0,s) }{(1-\pi _{A}(s))}+ \\ 
p(s)\pi _{A}(s)(\mu (1,1,s)-\mu (0,1,s))T_{n,2}(s,1)+ \\ 
p(s)[(1-\pi _{A}(s))\mu (1,0,s)+\pi _{A}(s)\mu (0,0,s)-Y_{H}]T_{n,2}(s,2)+ \\ 
\left[ 
\begin{array}{c}
\left[ 
\begin{array}{c}
\mu (1,1,s)\pi _{D(1)}(s)+\mu (0,1,s)(1-\pi _{D(1)}(s)) \\ 
-\mu (1,0,s)\pi _{D(0)}(s)-\mu (0,0,s)(1-\pi _{D(0)}(s))%
\end{array}%
\right] \pi _{A}(s) \\ 
+(\mu (1,0,s)-Y_{H})\pi _{D(0)}(s)%
\end{array}%
\right] T_{n,3}(s) \\ 
+p(s)\left[ 
\begin{array}{c}
\mu (1,1,s)\pi _{D(1)}(s)+\mu (0,1,s)(1-\pi _{D(1)}(s)) \\ 
-\mu (1,0,s)\pi _{D(0)}(s)-\mu (0,0,s)(1-\pi _{D(0)}(s))
\end{array}%
\right] \times  \\ 
\sqrt{n}(\hat{\pi}_{A}(s)-\pi _{A}(s)) \\ 
+p(s)\left[ 
\begin{array}{c}
-\mu _{\mathbf{P}}\left( 1,1,s\right) \pi _{D(1) }(s) -\mu _{\mathbf{P}}\left( 0,1,s\right) \left( 1-\pi _{D(1) }(s) \right)  \\ 
+\mu _{\mathbf{P}}\left( 1,0,s\right) \pi _{D(0) }(s) -\mu _{\mathbf{P}}\left( 0,0,s\right) \frac{\pi _{A}(s)
\left( 1-\pi _{D(0) }(s) \right) }{(1-\pi _{A}(s) )} \\ 
-Y_{H}\frac{\pi _{D(0) }(s) }{(1-\pi _{A}(s) )}%
\end{array}%
\right] \times  \\ 
\sqrt{n}(\tilde{\pi}_{A}(s)-n_{A}(s)/n(s)%
\end{array}%
}\right\} \text{{\small $+\delta _{n,4}$}},
\label{eq:expressionBoundsATT10}
\end{equation}%
where $\delta _{n,4}=o_{p}(1)$, uniformly in $\mathbf{P}\in \mathcal{P}_{2}$ (based on Lemma \ref{lem:AsyDist}). By setting $\delta _{n}=\delta _{n,1}+\delta _{n,2}-\delta _{n,3}N_L/G^2+\delta _{n,4}/G$ and combining \eqref{eq:expressionBoundsATT6}, \eqref{eq:expressionBoundsATT7}, \eqref{eq:expressionBoundsATT8}, \eqref{eq:expressionBoundsATT9}, and \eqref{eq:expressionBoundsATT10}, \eqref{eq:expressionBoundsATT1} follows.
\end{proof}

\begin{theorem}\label{thm:AsyDist_ATT} 
We have 
\begin{equation}
%\zeta _{n}(\mathbf{P}) \equiv 
\sqrt{n}\left( 
\begin{array}{c}
\hat{\upsilon}_{L}-\upsilon _{L}(\mathbf{P}) \\ 
\hat{\upsilon}_{H}-\upsilon _{H}(\mathbf{P})
\end{array}
\right) ~\overset{d}{\to  }~
% \zeta (\mathbf{P}) 
 N\left( {\bf 0 }_2,
\frac{1}{G^{2}} \left( 
\begin{array}{cc}
\varpi _{L}^{2}(\mathbf{P})  & \varpi _{HL}\left( \mathbf{P}%
\right)  \\ 
\varpi _{HL}(\mathbf{P})  & \varpi _{H}^{2}\left( \mathbf{P}%
\right) 
\end{array}%
\right) \right) ,  \label{eq:AsyDist_ATT}
\end{equation}%
uniformly in $\mathcal{P}_2$, where 
\begin{align*}
\varpi _{L}^{2}(\mathbf{P})  
&=\sum_{s\in \mathcal{S}}p(s)\left[
{\scriptsize
\begin{array}{c}
\pi _{D(1)}(s)\pi _{A}(s)\sigma ^{2}(1,1,s)+\left( 1-\pi _{D(1)}(s)\right)
\pi _{A}(s)\sigma ^{2}\left( 0,1,s\right) \\ 
+\pi _{D(0)}(s)(1-\pi _{A}(s))\sigma ^{2}\left( 1,0,s\right)\\
+(\pi _{A}(s))^2 (1-\pi _{D(0)}(s))\sigma ^{2}\left( 0,0,s\right) /(1-\pi _{A}(s)) \\ 
+\pi _{A}(s) \left[ \mu \left( 1,1,s\right) -\mu \left(
0,1,s\right) -(N_{L}/G)\right] ^{2}(1-\pi _{D(1)}(s))\pi _{D(1)}(s) \\ 
+\left[ 
\begin{array}{c}
\left( 1-\pi _{A}(s) \right) \mu \left( 1,0,s\right) +\pi
_{A}(s) \mu \left( 0,0,s\right)  \\ 
-Y_{H}-(N_{L}/G)\left( 1-\pi _{A}(s) \right) 
\end{array}
\right] ^{2}\frac{(1-\pi _{D(0)}(s))\pi _{D(0)}(s)}{(1-\pi _{A}(s))} \\ 
+\left[ 
\begin{array}{c}
\left[ 
\begin{array}{c}
\mu \left( 1,1,s\right) \pi _{D(1)}(s) +\mu \left( 0,1,s\right)
\left( 1-\pi _{D(1)}(s) \right)  \\ 
-\mu \left( 1,0,s\right) \pi _{D(0) }(s) -\mu \left(
0,0,s\right) \left( 1-\pi _{D(0) }(s) \right) 
\end{array}
\right] \pi _{A}(s)  \\ 
+(\mu \left( 1,0,s\right) -Y_{H})\pi _{D(0) }(s)  \\ 
-\left[ (N_{L}/G)\left[ \pi _{D(1)}(s) \pi _{A}(s)
+\pi _{D(0) }(s) \left( 1-\pi _{A}(s)
\right) \right] \right] 
\end{array}
\right] ^{2} 
% \\ 
% +\pi _{A}(s) \left( 1-\pi _{A}(s) \right) \tau \left(
% s\right)\times\\
% \left[ 
% \begin{array}{c}
% \mu \left( 1,1,s\right) \pi _{D(1)}(s) +\mu \left( 0,1,s\right)
% \left( 1-\pi _{D(1)}(s) \right)  \\ 
% -\mu \left( 1,0,s\right) \pi _{D(0) }(s) -\mu \left(
% 0,0,s\right) \left( 1-\pi _{D(0) }(s) \right)  \\ 
% -(N_{L}/G)\left( \pi _{D(1)}(s) -\pi _{D(0) }\left(
% s\right) \right) 
% \end{array}
% \right] ^{2}
\end{array}}
\right]  
\end{align*}
% here starts v_HL
and 
\begin{align*}
\varpi _{HL}(\mathbf{P}) &=\sum_{s\in \mathcal{S}}p(s)\left[ 
{\scriptsize
\begin{array}{c}
\pi _{D(1)}(s)\pi _{A}(s)\sigma ^{2}(1,1,s)+\left( 1-\pi _{D(1)}(s)\right)
\pi _{A}(s)\sigma ^{2}\left( 0,1,s\right) \\ 
+\pi _{D(0)}(s)(1-\pi _{A}(s))\sigma ^{2}\left( 1,0,s\right)\\
+(\pi _{A}(s))^2 (1-\pi _{D(0)}(s))\sigma ^{2}\left( 0,0,s\right) /(1-\pi _{A}(s)) \\
+\pi _{A}(s) \left[ 
\begin{array}{c}
\left[ \mu \left( 1,1,s\right) -\mu \left( 0,1,s\right) -(N_{L}/G)\right]
\times  \\ 
\left[ \mu \left( 1,1,s\right) -\mu \left( 0,1,s\right) -(N_{H}/G)\right] 
\end{array}
\right] (1-\pi _{D(1)}(s))\pi _{D(1)}(s) \\ 
+\left[ 
\begin{array}{c}
\left( 1-\pi _{A}(s) \right) \mu \left( 1,0,s\right) +\pi
_{A}(s) \mu \left( 0,0,s\right)  \\ 
-Y_{L}-(N_{H}/G)\left( 1-\pi _{A}(s) \right) 
\end{array}
\right] \times  \\ 
\left[ 
\begin{array}{c}
\left( 1-\pi _{A}(s) \right) \mu \left( 1,0,s\right) +\pi
_{A}(s) \mu \left( 0,0,s\right)  \\ 
-Y_{H}-(N_{L}/G)\left( 1-\pi _{A}(s) \right) 
\end{array}
\right] \frac{(1-\pi _{D(0)}(s))\pi _{D(0)}(s)}{(1-\pi _{A}(s))} \\ 
+\left[ 
\begin{array}{c}
\left[ 
\begin{array}{c}
\mu \left( 1,1,s\right) \pi _{D(1)}(s) +\mu \left( 0,1,s\right)
\left( 1-\pi _{D(1)}(s) \right)  \\ 
-\mu \left( 1,0,s\right) \pi _{D(0) }(s) -\mu \left(
0,0,s\right) \left( 1-\pi _{D(0) }(s) \right) 
\end{array}
\right] \pi _{A}(s)  \\ 
+(\mu \left( 1,0,s\right) -Y_{H})\pi _{D(0) }(s)  \\ 
-\left[ (N_{L}/G)\left[ \pi _{D(1)}(s) \pi _{A}(s)
+\pi _{D(0) }(s) \left( 1-\pi _{A}(s)
\right) \right] \right] 
\end{array}
\right] \times  \\ 
\left[ 
\begin{array}{c}
\left[ 
\begin{array}{c}
\mu \left( 1,1,s\right) \pi _{D(1)}(s) +\mu \left( 0,1,s\right)
\left( 1-\pi _{D(1)}(s) \right)  \\ 
-\mu \left( 1,0,s\right) \pi _{D(0) }(s) -\mu \left(
0,0,s\right) \left( 1-\pi _{D(0) }(s) \right) 
\end{array}
\right] \pi _{A}(s)  \\ 
+(\mu \left( 1,0,s\right) -Y_{L})\pi _{D(0) }(s)  \\ 
-\left[ (N_{H}/G)\left[ \pi _{D(1)}(s) \pi _{A}(s)
+\pi _{D(0) }(s) \left( 1-\pi _{A}(s)
\right) \right] \right] 
\end{array}
\right]
% \\ 
%  +\pi _{A}(s) \left( 1-\pi _{A}(s) \right) \tau \left(
% s\right) \times \\
% \left[ 
% \begin{array}{c}
% \mu \left( 1,1,s\right) \pi _{D(1)}(s) +\mu \left( 0,1,s\right)
% \left( 1-\pi _{D(1)}(s) \right)  \\ 
% -\mu \left( 1,0,s\right) \pi _{D(0) }(s) -\mu \left(
% 0,0,s\right) \left( 1-\pi _{D(0) }(s) \right)  \\ 
% -(N_{L}/G)\left( \pi _{D(1)}(s) -\pi _{D(0) }\left(
% s\right) \right) 
% \end{array}
% \right] \times\\ 
%   \left[ 
% \begin{array}{c}
% \mu \left( 1,1,s\right) \pi _{D(1)}(s) +\mu \left( 0,1,s\right)
% \left( 1-\pi _{D(1)}(s) \right)  \\ 
% -\mu \left( 1,0,s\right) \pi _{D(0) }(s) -\mu \left(
% 0,0,s\right) \left( 1-\pi _{D(0) }(s) \right)  \\ 
% -(N_{H}/G)\left( \pi _{D(1)}(s) -\pi _{D(0) }\left(
% s\right) \right) 
% \end{array}
% \right] 
\end{array}
}
\right] ,
\end{align*}
where $( N_{L},N_{H},G) $ is as in \eqref{eq:expressionBoundsATT4}, and $\varpi _{H}^{2}(\mathbf{P}) $ is as $\varpi _{L}^{2}(\mathbf{P}) $ but with $(N_{L},Y_{H})$ replaced by $(N_{H},Y_{L})$. 
%Note that $\left\{ \left( \sigma ^{2}(d,a,s),\mu (d,a,s),\pi _{D(a)}(s),\pi _{A}(s),p(s)\right) :\left( d,a,s\right) \in \{ 0,1\} \times \{ 0,1\} \times S\right\} $ are implicit functions of $\mathbf{P}$. 
Moreover, $\varpi _{L}^{2}(\mathbf{P}) $ and $\varpi _{H}^{2}(\mathbf{P}) $ are positive and finite, uniformly in $\mathcal{P}_2$.
\end{theorem}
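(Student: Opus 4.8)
The plan is to read the limiting distribution off the asymptotically linear representation of the ATT bounds provided by Lemma \ref{lem:expressionBoundsATT}, combine it with the joint central limit theorem for $(T_{n,1},T_{n,2},T_{n,3})$ in Lemma \ref{lem:AsyDist}, and then upgrade pointwise to uniform convergence over $\mathcal{P}_2$ by the subsequencing argument already used for Theorem \ref{thm:AsyDist_ATE}. The key observation is that the recommended estimator $\hat{\upsilon}_\bullet$ in \eqref{eq:BoundsHat_ATT} is exactly $\check{\upsilon}_\bullet$ of Lemma \ref{lem:expressionBoundsATT} with the generic $\hat{\pi}_A(s)$ set equal to the known target $\pi_A(s)$. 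Because the target is nonrandom, $\sqrt{n}(\hat{\pi}_A(s)-\pi_A(s))=0$ for every $s\in\mathcal{S}$, so $W_{L,n}$ and $W_{H,n}$ (the only carriers of $\sqrt{n}(\hat{\pi}_A-\pi_A)$, including the part routed through $\check{G}$) vanish identically, and the lemma collapses to
\begin{equation*}
\sqrt{n}\left(\begin{array}{c}\hat{\upsilon}_L-\upsilon_L(\mathbf{P})\\ \hat{\upsilon}_H-\upsilon_H(\mathbf{P})\end{array}\right)=\frac{1}{G}\left(\begin{array}{c}U_n+V_{L,n}\\ U_n+V_{H,n}\end{array}\right)+\delta_n,
\end{equation*}
with $\delta_n=o_p(1)$ uniformly in $\mathbf{P}\in\mathcal{P}_2$. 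This reduces the problem to the asymptotics of a fixed (though $\mathbf{P}$-dependent) linear map of $(T_{n,1},T_{n,2},T_{n,3})$.

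The first substantive step is to compute the limiting covariance. Along any sequence satisfying \eqref{eq:keyconvergence}, Lemma \ref{lem:AsyDist} gives $(T_{n,1}',T_{n,2}',T_{n,3}')'\overset{d}{\to}N(\mathbf{0},\mathrm{diag}(\Sigma_1,\Sigma_2,\Sigma_3))$, so by linearity the pair $\tfrac{1}{G}(U_n+V_{L,n},U_n+V_{H,n})$ converges to a bivariate normal whose covariance is the quadratic form in the coefficient vectors of $T_{n,1},T_{n,2},T_{n,3}$ read off from \eqref{eq:expressionBoundsATT4}. The block-diagonal structure of the limit means there are no cross-contributions between the three blocks, which organizes the bookkeeping: $\Sigma_1$ produces the four $\sigma^2(d,a,s)$ summands, $\Sigma_2$ produces the two $\pi_{D(\cdot)}(1-\pi_{D(\cdot)})$-weighted squared-bracket terms, and $\Sigma_3$ produces the final squared-bracket term involving $N_L/G$ and $N_H/G$. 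Carrying out this contraction and dividing by $G^2$ must reproduce the displayed $\varpi_L^2(\mathbf{P})$, $\varpi_{HL}(\mathbf{P})$, and $\varpi_H^2(\mathbf{P})$.

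For uniformity I would argue by contradiction exactly as in Theorem \ref{thm:AsyDist_ATE}: if \eqref{eq:AsyDist_ATT} failed uniformly over $\mathcal{P}_2$, one could extract a subsequence $\{\mathbf{P}_{k_n}\}$ along which all population objects in \eqref{eq:keyconvergence} converge (their ranges are bounded by Assumptions \ref{ass:1}(a) and \ref{ass:2}(c)) and along which the target Gaussian's covariance, being a continuous function of those limits, converges as well. Lemmas \ref{lem:AsyDist} and \ref{lem:expressionBoundsATT} then deliver convergence of the estimator's law to the same limit along the subsequence, and continuity of the normal CDF in its covariance yields the contradiction.

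The remaining claim is that $\varpi_L^2(\mathbf{P})$ and $\varpi_H^2(\mathbf{P})$ are positive and finite uniformly over $\mathcal{P}_2$. Every summand is nonnegative, and the weights on $\sigma^2(1,1,s)$ and $\sigma^2(0,0,s)$, namely $p(s)\pi_{D(1)}(s)\pi_A(s)$ and $p(s)(\pi_A(s))^2(1-\pi_{D(0)}(s))/(1-\pi_A(s))$, are bounded below by $\varepsilon\,p(s)\pi_{D(1)}(s)$ and $\varepsilon^2\,p(s)(1-\pi_{D(0)}(s))$ since $\pi_A(s)\in(\varepsilon,1-\varepsilon)$; Assumption \ref{ass:1}(c) then forces one of these contributions to exceed $\varepsilon^2\xi$, a uniform positive lower bound. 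Finiteness follows because $Y_i(d)\in[Y_L,Y_H]$ bounds all $\mu$'s and $\sigma^2$'s, all $\pi$-objects lie in $[0,1]$, the denominators $\pi_A(s)$ and $1-\pi_A(s)$ exceed $\varepsilon$, and $G=E[\pi_{D(1)}(S)\pi_A(S)+\pi_{D(0)}(S)(1-\pi_A(S))]\geq\varepsilon\,P(D_i(1)>D_i(0))\geq\varepsilon\xi>0$ by Assumption \ref{ass:1}(d), so the ratios $N_L/G$ and $N_H/G$ in the formulas are uniformly bounded. The main obstacle is the middle step: the explicit algebraic contraction of the three variance blocks against the coefficient vectors in \eqref{eq:expressionBoundsATT4} to match the lengthy closed forms for $\varpi_L^2,\varpi_{HL},\varpi_H^2$ is long and error-prone, and is precisely the substance hidden behind the ``lengthy derivation'' invoked within Lemma \ref{lem:expressionBoundsATT}.
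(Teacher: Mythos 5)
Your proposal is correct and follows essentially the same route as the paper, which proves this theorem by direct analogy with Theorem \ref{thm:AsyDist_ATE}: the asymptotically linear representation of Lemma \ref{lem:expressionBoundsATT} specialized to $\hat{\pi}_A=\pi_A$ (so that $W_{L,n}=W_{H,n}=0$), the joint CLT of Lemma \ref{lem:AsyDist}, the subsequence-and-contradiction argument for uniformity over $\mathcal{P}_2$, and boundedness arguments (via Assumptions \ref{ass:1}(a), (c), (d) and \ref{ass:2}(c)) for uniform positivity and finiteness of $\varpi_L^2(\mathbf{P})$ and $\varpi_H^2(\mathbf{P})$. The only step you leave unexecuted is the algebraic contraction of the coefficient vectors against $\mathrm{diag}(\Sigma_1,\Sigma_2,\Sigma_3)$ to match the displayed closed forms, which the paper likewise omits.
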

\begin{proof}
This proof is analogous to that of Theorem \ref{thm:AsyDist_ATE}, and therefore omitted. 
\end{proof}

\begin{proof}[Proof of Theorem \ref{thm:consist_ATT}]
    This proof is analogous to that of Theorem \ref{thm:consist}, and therefore omitted.
\end{proof}

%%%%%%% THEOREM B.4 %%%%%%%%%%%%%%%%%%%%%%%%%%%%%%%%%%%
\begin{theorem}\label{thm:AsyVarEstimation_ATT} 
Let $(\hat{p}(s),\hat{\pi}_{D(a)}(s),\hat{\mu}(d,a,s),\hat{\sigma} ^{2}(d,a,s)) : (d,a,s)\in \{0,1\}\times \{0,1\}\times \mathcal{S}) $ be as in \eqref{eq:AsyDistATE_var8}, and let
\begin{align}
\hat{\varpi}_{L}^{2} &=\sum_{s\in \mathcal{S}}\hat{p}(s)\left[ 
{\scriptsize
\begin{array}{c}
\hat{\pi}_{D(1)}(s){\pi}_{A}(s)\hat{\sigma}^{2}(1,1,s)+\left( 1-\hat{\pi}_{D(1)}(s)\right) {\pi}_{A}(s)\hat{\sigma}^{2}\left( 0,1,s\right) + \\ 
\hat{\pi}_{D(0)}(s)(1-{\pi}_{A}(s))\hat{\sigma}^{2}\left( 1,0,s\right) +({\pi}_{A}(s))^{2}(1-\hat{\pi}_{D(0)}(s))\hat{\sigma}^{2}\left(0,0,s\right) /(1-{\pi}_{A}(s) ) \\
+{\pi}_{A}(s)\left[ \hat{\mu}\left( 1,1,s\right) -\hat{\mu}\left(0,1,s\right) -\hat{\upsilon}_{L}\right] ^{2}(1-\hat{\pi}_{D(1)}(s))\hat{\pi}_{D(1)}(s) \\ 
+\left[ 
\begin{array}{c}
\left( 1-{\pi}_{A}(s) \right) \hat{\mu}\left( 1,0,s\right) +{\pi}_{A}(s) \hat{\mu}\left( 0,0,s\right)  \\ 
-Y_{H}-\hat{\upsilon}_{L}\left( 1-{\pi}_{A}(s) \right) 
\end{array}
\right] ^{2}\frac{(1-\hat{\pi}_{D(0)}(s))\hat{\pi}_{D(0)}(s)}{(1-{\pi}_{A}(s))} \\ 
+\left[ 
\begin{array}{c}
\left[ 
\begin{array}{c}
\hat{\mu}\left( 1,1,s\right) \hat{\pi}_{D(1)}(s) +\hat{\mu}\left( 0,1,s\right) \left( 1-\hat{\pi}_{D(1)}(s) \right)  \\ 
-\hat{\mu}\left( 1,0,s\right) \hat{\pi}_{D(0) }(s) -\hat{\mu}\left( 0,0,s\right) \left( 1-\hat{\pi}_{D(0) }\left(s\right) \right) 
\end{array}
\right] {\pi}_{A}(s)  \\ 
+(\hat{\mu}\left( 1,0,s\right) -Y_{H})\hat{\pi}_{D(0) }\left(s\right)  \\ 
-\left[ \hat{\upsilon}_{L}\left[ \hat{\pi}_{D(1)}(s) {\pi}_{A}(s) +\hat{\pi}_{D(0) }(s) \left( 1-{\pi}_{A}(s) \right) \right] \right] 
\end{array}
\right] ^{2} 
% \\ 
% +{\pi}_{A}(s) \left( 1-{\pi}_{A}(s) \right)
% \tau (s) \times  \\ 
% \left[ 
% \begin{array}{c}
% \hat{\mu}\left( 1,1,s\right) \hat{\pi}_{D(1)}(s) +\hat{\mu}\left( 0,1,s\right) \left( 1-\hat{\pi}_{D(1)}(s) \right)  \\ 
% -\hat{\mu}\left( 1,0,s\right) \hat{\pi}_{D(0) }(s) -\hat{\mu}\left( 0,0,s\right) \left( 1-\hat{\pi}_{D(0) }\left(s\right) \right)  \\ 
% -\hat{\upsilon}_{L}\left( \hat{\pi}_{D(1)}(s) -\hat{\pi}_{D(0) }(s) \right) 
% \end{array}
% \right] ^{2}
\end{array}}
\right] , \notag\\
% HERE STARTS HL
\hat{\varpi}_{HL}&=\sum_{s\in \mathcal{S}}\hat{p}(s)\left[ 
{\scriptsize\begin{array}{c}
\hat{\pi}_{D(1)}(s){\pi}_{A}(s)\hat{\sigma}^{2}(1,1,s)+\left( 1-\hat{\pi}%
_{D(1)}(s)\right) {\pi}_{A}(s)\hat{\sigma}^{2}\left( 0,1,s\right) + \\ 
\hat{\pi}_{D(0)}(s)(1-{\pi}_{A}(s))\hat{\sigma}^{2}\left( 1,0,s\right) 
+({\pi}_{A}(s))^{2}(1-\hat{\pi}_{D(0)}(s))\hat{\sigma}^{2}\left(
0,0,s\right) /(1-{\pi}_{A}(s) ) \\ 
+{\pi}_{A}(s)\left[ 
\begin{array}{c}
\left[ \hat{\mu}\left( 1,1,s\right) -\hat{\mu}\left( 0,1,s\right) -\hat{%
\upsilon}_{L}\right]  \\ 
\times \left[ \hat{\mu}\left( 1,1,s\right) -\hat{\mu}\left( 0,1,s\right) -%
\hat{\upsilon}_{H}\right] 
\end{array}%
\right] (1-\hat{\pi}_{D(1)}(s))\hat{\pi}_{D(1)}(s) \\ 
+\left[ 
\begin{array}{c}
\left( 1-{\pi}_{A}(s) \right) \hat{\mu}\left( 1,0,s\right) +%
{\pi}_{A}(s) \hat{\mu}\left( 0,0,s\right)  \\ 
-Y_{H}-\hat{\upsilon}_{L}\left( 1-{\pi}_{A}(s) \right) 
\end{array}%
\right] \times  \\ 
\left[ 
\begin{array}{c}
\left( 1-{\pi}_{A}(s) \right) \hat{\mu}\left( 1,0,s\right) +%
{\pi}_{A}(s) \hat{\mu}\left( 0,0,s\right)  \\ 
-Y_{L}-\hat{\upsilon}_{H}\left( 1-{\pi}_{A}(s) \right) 
\end{array}%
\right] \frac{(1-\hat{\pi}_{D(0)}(s))\hat{\pi}_{D(0)}(s)}{(1-\hat{\pi}%
_{A}(s))} \\ 
+\left[ 
\begin{array}{c}
\left[ 
\begin{array}{c}
\hat{\mu}\left( 1,1,s\right) \hat{\pi}_{D(1)}(s) +\hat{\mu}%
\left( 0,1,s\right) \left( 1-\hat{\pi}_{D(1)}(s) \right)  \\ 
-\hat{\mu}\left( 1,0,s\right) \hat{\pi}_{D(0) }(s) -%
\hat{\mu}\left( 0,0,s\right) \left( 1-\hat{\pi}_{D(0) }\left(
s\right) \right) 
\end{array}%
\right] {\pi}_{A}(s)  \\ 
+(\hat{\mu}\left( 1,0,s\right) -Y_{H})\hat{\pi}_{D(0) }\left(
s\right)  \\ 
-\left[ \hat{\upsilon}_{L}\left[ \hat{\pi}_{D(1)}(s) {\pi}%
_{A}(s) +\hat{\pi}_{D(0) }(s) \left( 1-%
{\pi}_{A}(s) \right) \right] \right] 
\end{array}%
\right]  \\ 
\times \left[ 
\begin{array}{c}
\left[ 
\begin{array}{c}
\hat{\mu}\left( 1,1,s\right) \hat{\pi}_{D(1)}(s) +\hat{\mu}%
\left( 0,1,s\right) \left( 1-\hat{\pi}_{D(1)}(s) \right)  \\ 
-\hat{\mu}\left( 1,0,s\right) \hat{\pi}_{D(0) }(s) -%
\hat{\mu}\left( 0,0,s\right) \left( 1-\hat{\pi}_{D(0) }\left(
s\right) \right) 
\end{array}%
\right] {\pi}_{A}(s)  \\ 
+(\hat{\mu}\left( 1,0,s\right) -Y_{L})\hat{\pi}_{D(0) }\left(
s\right)  \\ 
-\left[ \hat{\upsilon}_{H}\left[ \hat{\pi}_{D(1)}(s) {\pi}_{A}(s) +\hat{\pi}_{D(0) }(s) \left( 1-%
{\pi}_{A}(s) \right) \right] \right] 
\end{array}%
\right] 
% \\ 
% +{\pi}_{A}(s) \left( 1-{\pi}_{A}(s) \right)
% \tau (s) \times  \\ 
% \left[ 
% \begin{array}{c}
% \hat{\mu}\left( 1,1,s\right) \hat{\pi}_{D(1)}(s) +\hat{\mu}\left( 0,1,s\right) \left( 1-\hat{\pi}_{D(1)}(s) \right)  \\ 
% -\hat{\mu}\left( 1,0,s\right) \hat{\pi}_{D(0) }(s) -\hat{\mu}\left( 0,0,s\right) \left( 1-\hat{\pi}_{D(0) }\left(s\right) \right)  \\ 
% -\hat{\upsilon}_{L}\left( \hat{\pi}_{D(1)}(s) -\hat{\pi}_{D(0) }(s) \right) 
% \end{array}%
% \right] \times  \\ 
% \left[ 
% \begin{array}{c}
% \hat{\mu}\left( 1,1,s\right) \hat{\pi}_{D(1)}(s) +\hat{\mu}\left( 0,1,s\right) \left( 1-\hat{\pi}_{D(1)}(s) \right)  \\ 
% -\hat{\mu}\left( 1,0,s\right) \hat{\pi}_{D(0) }(s) -\hat{\mu}\left( 0,0,s\right) \left( 1-\hat{\pi}_{D(0) }\left(s\right) \right)  \\ 
% -\hat{\upsilon}_{H}\left( \hat{\pi}_{D(1)}(s) -\hat{\pi}_{D(0) }(s) \right) 
% \end{array}
% \right] 
\end{array}}
\right] , \label{eq:AsyVar_ATT}
\end{align}
and $\hat{\varpi}_{H}^{2}$ is defined as in $\hat{\varpi}_{L}^{2}$ but with $(\hat{\upsilon}_{L},Y_{L})$ replaced by $(\hat{\upsilon}_{H},Y_{H})$. Then, 
\begin{equation*}
(\hat{\varpi}_{L}^{2},\hat{\varpi}_{H}^{2},\hat{\varpi}_{HL})/\hat{G}^2~\overset{p}{\to  }~(\varpi _{L}^{2}(\mathbf{P)},\varpi _{H}^{2}(\mathbf{P)},\varpi _{HL}(\mathbf{P)})/G^2
\end{equation*}%
uniformly in $\mathcal{P}_2$.
\end{theorem}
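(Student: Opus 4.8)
The plan is to mirror the proof of Theorem~\ref{thm:AsyVarEstimation} for the ATE, since the ATT variance estimators in \eqref{eq:AsyVar_ATT} are continuous functions of essentially the same collection of sample moments, now evaluated over all four cells $(d,a)\in\{0,1\}^2$ rather than only $(1,1)$ and $(0,0)$. First I would fix $\delta>0$ and argue by contradiction: if the claimed uniform convergence fails, there is a subsequence $\{w_n\}$ and distributions $\{\mathbf{P}_{w_n}\in\mathcal{P}_2\}$ along which the probability that the norm of the discrepancy exceeds $\delta$ stays bounded away from zero. Passing to a further subsequence $\{k_n\}$, I would arrange that all the population objects in \eqref{eq:keyconvergence} converge, together with $(\upsilon_L(\mathbf{P}_{k_n}),\upsilon_H(\mathbf{P}_{k_n}))$ and the limiting variance components (which are continuous functions of those objects), so that it suffices to prove in-probability convergence of $(\hat{\varpi}_L^2,\hat{\varpi}_H^2,\hat{\varpi}_{HL})/\hat{G}^2$ along $\{\mathbf{P}_{k_n}\}$ to the corresponding limits.

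Next I would establish the required building-block convergences along $\{\mathbf{P}_{k_n}\}$. The consistency $(\hat{\upsilon}_L,\hat{\upsilon}_H)\overset{p}{\to}(\upsilon_L,\upsilon_H)$ follows from Theorem~\ref{thm:AsyDist_ATT} exactly as \eqref{eq:AsyDist3_5} followed from Theorem~\ref{thm:AsyDist_ATE}; the convergences $(\hat{p}(s),\hat{\pi}_{D(a)}(s))\overset{p}{\to}(p(s),\pi_{D(a)}(s))$ follow from Lemma~\ref{lem:AsyDist} (via $T_{n,2},T_{n,3}=O_p(1)$) and \eqref{eq:keyconvergence}, and $\hat{G}\overset{p}{\to}G$ follows from these and the LLN. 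Here the analysis is in fact somewhat lighter than in the ATE case, because the estimators in \eqref{eq:AsyVar_ATT} plug in the \emph{known} target probabilities $\pi_A(s)$, which carry no estimation error. The substantive content is the consistency of the weighted products $\hat{p}(s)\hat{\pi}_{D(a)}(s)\hat{\mu}(d,a,s)^b$ (for $b\in\{1,2\}$) and $\hat{p}(s)\hat{\pi}_{D(a)}(s)\hat{\sigma}^2(d,a,s)$, now for all four $(d,a)\in\{0,1\}^2$; these are handled exactly as in \eqref{eq:AsyDist3_7} and \eqref{eq:AsyDist3_10}, using Lemmas~\ref{lem:AsyDist} and \ref{lem:AsyDist2} for the numerators and the bounds $T_{n,1},T_{n,4}=O_p(1)$.

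The main obstacle is the degenerate-cell case: when $\mathbf{P}(D(a)=d,S=s)=0$, the quantities $\mu(d,a,s)$ and $\sigma^2(d,a,s)$ are set to zero by the convention in \eqref{eq:keyDefns}, and $\hat{\mu}(d,a,s),\hat{\sigma}^2(d,a,s)$ need not converge on their own — only the weighted products above do. I would therefore replicate the two-case dichotomy used in \eqref{eq:AsyDist3_9} and \eqref{eq:AsyDist3_11}: on the vanishing-probability branch I bound $\hat{\mu},\hat{\sigma}^2$ by constants depending only on $Y_L,Y_H$ (since they are weighted averages of $Y_i\in[Y_L,Y_H]$, or of squared deviations thereof) and invoke $\hat{p}(s)\hat{\pi}_{D(a)}(s)\overset{p}{\to}0$; on the positive-probability branch I use that the relevant count is positive w.p.a.1, i.e.\ $1[\,\cdot\,]=1+o_p(1)$, together with the $O_p(1)$ bounds on $T_{n,1}$ and $T_{n,4}$. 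This care is needed in particular for the cross-cells $(0,1,s)$ and $(1,0,s)$, which encode always-taker and complier contributions and do not appear in the ATE analysis.

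Finally, having shown $(\hat{\varpi}_L^2,\hat{\varpi}_H^2,\hat{\varpi}_{HL})\overset{p}{\to}(\varpi_L^2,\varpi_H^2,\varpi_{HL})$ and $\hat{G}\overset{p}{\to}G$ along $\{\mathbf{P}_{k_n}\}$, I would pass to the ratio. Since $G$ is bounded away from zero uniformly on $\mathcal{P}_2$ — which follows, as in Theorem~\ref{thm:AsyDist_ATT}, from Assumption~\ref{ass:1}(d) guaranteeing a nondegenerate complier share and Assumption~\ref{ass:2}(c) giving $\pi_A(s)\in(\varepsilon,1-\varepsilon)$, so that $G\geq \varepsilon\xi>0$ — the continuous mapping theorem yields $(\hat{\varpi}_L^2,\hat{\varpi}_H^2,\hat{\varpi}_{HL})/\hat{G}^2\overset{p}{\to}(\varpi_L^2,\varpi_H^2,\varpi_{HL})/G^2$. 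Combined with the convergence of the limiting components along $\{\mathbf{P}_{k_n}\}$, this contradicts the non-convergence hypothesis, completing the proof.
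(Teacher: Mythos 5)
Your proposal is correct and takes essentially the same approach as the paper: the paper's own proof is omitted, stating only that it is analogous to that of Theorem \ref{thm:AsyVarEstimation}, and your argument is exactly that analogy carried out in full — the same subsequence/contradiction scheme, the same building blocks from Lemmas \ref{lem:AsyDist} and \ref{lem:AsyDist2}, and the same two-case treatment of degenerate cells, now extended to all four $(d,a)$ cells. Your additional observations — that the plugged-in target probabilities $\pi_A(s)$ carry no estimation error, and that the final ratio step needs $\hat{G}\overset{p}{\to}G$ with $G\geq \varepsilon\xi>0$ (from Assumptions \ref{ass:1}(d) and \ref{ass:2}(c)) — are precisely the details the paper leaves implicit.
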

%%%%%%%%%%%%%%%%%%%%%
\begin{proof}
This proof is analogous to that of Theorem \ref{thm:AsyVarEstimation}, and therefore omitted. 
\end{proof}

\begin{proof}[Proof of Theorem \ref{thm:StoyeATT}.]
The proof is analogous to that of Theorem \ref{thm:Stoye}, and therefore omitted.
\end{proof}

\section{Appendix on alternative bounds estimators}\label{sec:appendixD}

\begin{lemma}\label{lem:AsyDist_ext}
Consider any sequence of $\{ \mathbf{P}_{n}\in \mathcal{P}_3\} _{n\geq 1}$ that satisfies \eqref{eq:keyconvergence} for all $(d,a,s)\in \{0,1\}^{2}\times \mathcal{S}$. Then, along $\{ \mathbf{P}_{n}\} _{n\geq 1}$, 
\begin{equation*}
\left( 
\begin{array}{c}
T_{n,1} \\ 
T_{n,2} \\ 
T_{n,3} \\
T_{n,A}
\end{array}
\right) ~\overset{d}{\to }~N\left( \left( 
\begin{array}{c}
\mathbf{0} \\ 
\mathbf{0} \\ 
\mathbf{0} \\
\mathbf{0} 
\end{array}%
\right) ,\left( 
\begin{array}{cccc}
\Sigma _{1} & \mathbf{0} & \mathbf{0}& \mathbf{0} \\ 
\mathbf{0} & \Sigma _{2} & \mathbf{0}& \mathbf{0}\\ 
\mathbf{0} & \mathbf{0} & \Sigma _{3}& \mathbf{0}\\
\mathbf{0} & \mathbf{0} &  \mathbf{0}&\Sigma _{A}
\end{array}%
\right) \right) ,
\end{equation*}%
where $(T_{n,1}',T_{n,2}',T_{n,3}',T_{n,A}')$ is as in \eqref{eq:Tn_defn} but with distribution $\mathbf{P}_{n}$, $\Sigma _{1}$, $\Sigma _{2}$, and $\Sigma _{3}$ are as in Lemma \ref{lem:AsyDist}, and $\Sigma _{A}\equiv diag( \tau ( s) ( 1-\pi _{A}( s) ) \pi _{A}( s)/ p( s) :s\in \mathcal{S}).$
% \begin{align*}
% \Sigma _{A}~\equiv ~diag( \tau ( s) ( 1-\pi _{A}( s) ) \pi _{A}( s)/ p( s) :s\in \mathcal{S}).
% \end{align*}
\end{lemma}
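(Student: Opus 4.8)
The plan is to obtain this result as a direct extension of Lemma \ref{lem:AsyDist}, appending the block $T_{n,A}$ to the vector $(T_{n,1},T_{n,2},T_{n,3})$ and recycling its three-step architecture. Two facts make $T_{n,A}$ tractable: it is a non-stochastic function of $(A^{(n)},S^{(n)})$, and Assumption \ref{ass:2}(f)—available now that we restrict to $\mathbf{P}_n \in \mathcal{P}_3$—delivers its conditional limit $\{T_{n,A}\mid S^{(n)}\}\overset{d}{\to}N(\mathbf{0},\Sigma_A)$ w.p.a.1, with $\Sigma_A=diag(\tau(s)(1-\pi_A(s))\pi_A(s)/p(s):s\in\mathcal{S})$. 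The goal is to show the four blocks are jointly asymptotically normal with the claimed block-diagonal covariance, i.e.\ that $T_{n,A}$ is asymptotically normal and asymptotically independent of each of the three existing blocks.

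First, Step 1 of the proof of Lemma \ref{lem:AsyDist} carries over unchanged. The limit $\zeta_1$ of $T_{n,1}$ is generated by the auxiliary array $(\check{Y}_i(d,a,s))$, which is independent of $((D_i,A_i,S_i))_{i=1}^n$ and hence of $(T_{n,2},T_{n,3},T_{n,A})$, each of the latter being a function of $((D_i,A_i,S_i))_{i=1}^n$ or of $(A^{(n)},S^{(n)})$. Thus the representation equality, the independence $T_{n,1}^{D}\perp(T_{n,2},T_{n,3},T_{n,A})$, the limit $T_{n,1}^{D}\overset{d}{\to}\zeta_1$, and $T_{n,1}^{C}=T_{n,1}^{D}+o_p(1)$ all follow verbatim from \eqref{eq:ad_step1_eq1}--\eqref{eq:ad_step1_eq4}, now with $T_{n,A}$ included among the conditioning statistics.

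Second, I would re-run Step 2 with one extra conditioning layer to establish $(T_{n,2},T_{n,3},T_{n,A})\overset{d}{\to}N(\mathbf{0},diag(\Sigma_2,\Sigma_3,\Sigma_A))$ at any continuity point $(h_2,h_3,h_A)$. Conditioning first on $(A^{(n)},S^{(n)})$ makes both $T_{n,3}$ and $T_{n,A}$ non-stochastic, so that $\mathbf{P}_n(T_{n,2}\le h_2,T_{n,3}\le h_3,T_{n,A}\le h_A)=E_{\mathbf{P}_n}[\mathbf{P}_n(T_{n,2}\le h_2\mid (A^{(n)},S^{(n)}))\,1(T_{n,3}\le h_3)1(T_{n,A}\le h_A)]$; the conditional CLT for the within-group decision frequencies—already proved in Step 2 of Lemma \ref{lem:AsyDist} via Assumption \ref{ass:2}(c) and Lemma \ref{lem:A1and2_impliesold3}—gives $\mathbf{P}_n(T_{n,2}\le h_2\mid (A^{(n)},S^{(n)}))\to P(\zeta_2\le h_2)$ w.p.a.1, so the bounding argument of \eqref{eq:ad_step2_eq2}--\eqref{eq:ad_step2_eq4} factors the display asymptotically into $P(\zeta_2\le h_2)\,\mathbf{P}_n(T_{n,3}\le h_3,T_{n,A}\le h_A)$. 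Conditioning the remaining probability on $S^{(n)}$ (which renders $T_{n,3}$ non-stochastic) and invoking Assumption \ref{ass:2}(f) for $\{T_{n,A}\mid S^{(n)}\}$ together with the i.i.d.\ CLT for $T_{n,3}$ factors it further into $P(\zeta_3\le h_3)P(\zeta_A\le h_A)$, where $\zeta_A\sim N(\mathbf{0},\Sigma_A)$. Combining with the $T_{n,1}$ block exactly as in Step 3 of Lemma \ref{lem:AsyDist} then yields the full block-diagonal limit.

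The main obstacle is the asymptotic independence of $T_{n,A}$ and $T_{n,2}$, since both are driven by the assignment $A^{(n)}$. The resolution, formalized by the conditioning above, is that conditional on $(A^{(n)},S^{(n)})$ the statistic $T_{n,A}$ is fixed while the randomness of $T_{n,2}$ stems solely from the potential decisions $D_i(a_i)$; by Assumption \ref{ass:2}(a) these are independent of the assignment given the strata, so the conditional limit of $T_{n,2}$ is the same $\zeta_2$ for (almost) every assignment configuration. One should also verify that the deterministic scaling $n_A(s)/n(s)\to\pi_A(s)$ (Assumption \ref{ass:2}(c) and Lemma \ref{lem:A1and2_impliesold3}) enters the conditional variance of $T_{n,2}$ identically across configurations, which is precisely what makes the $\pi_A(s)$ terms in $\Sigma_2$ non-random in the limit and secures the clean factorization.
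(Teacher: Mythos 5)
Your proposal is correct and takes essentially the same route as the paper, whose proof of Lemma \ref{lem:AsyDist_ext} is only a one-line remark that the arguments of Lemma \ref{lem:AsyDist} extend once Assumption \ref{ass:2}(f) is invoked for $T_{n,A}$. Your iterated-conditioning argument—first on $(A^{(n)},S^{(n)})$ to freeze $T_{n,3}$ and $T_{n,A}$ and factor out $\zeta_2$, then on $S^{(n)}$ to separate $T_{n,3}$ from $T_{n,A}$ via Assumption \ref{ass:2}(f)—is precisely the intended extension and correctly handles the only delicate point, the asymptotic independence of $T_{n,2}$ and $T_{n,A}$.
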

%%%%%%%%%%%%
\begin{proof}
    This proof follows from extending the arguments in Lemma \ref{lem:AsyDist}. The asymptotic analysis of $T_{n,A}$ requires Assumption \ref{ass:2}(f), which explains why we use $\mathcal{P}_3$. 
\end{proof}

\begin{theorem}\label{thm:AsyDist_ATEother}
Let $( \tilde{\theta}_{L},\tilde{\theta}_{H}) $ denote the alternative bounds estimator defined as in \eqref{eq:BoundsHat} but with $\left\{ n_{A}(S_{i})/n(S_i) :i=1,\ldots ,n\right\} $ replaced by $\left\{ \pi _{A}(S_{i}) :i=1,\ldots ,n\right\} $. Then, 
\begin{equation*}
\sqrt{n}\left(
\begin{array}{c}
\tilde{\theta}_{L}-\theta _{L}(\mathbf{P}) \\ 
\tilde{\theta}_{H}-\theta _{H}(\mathbf{P})
\end{array}%
\right) ~\overset{d}{\to}~N\left( {\bf 0}_2 
 ,\left( 
{ \begin{array}{cc}
\sigma _{L}^{2}(\mathbf{P}) & \sigma _{HL}\left( \mathbf{P}\right)  \\ 
\sigma _{HL}(\mathbf{P}) & \sigma _{H}^{2}\left( \mathbf{P}\right) 
\end{array}}
\right) +\left( 
{ \begin{array}{cc}
\Delta _{L}^{2}(\mathbf{P}) & \Delta _{HL}\left( \mathbf{P}\right)  \\ 
\Delta _{HL}(\mathbf{P}) & \Delta _{H}^{2}\left( \mathbf{P}\right) 
\end{array}}
\right) \right) , 
\end{equation*}
uniformly in $\mathcal{P}_3$, where $(\sigma _{H}^{2}(\mathbf{P}),\sigma _{L}^{2}(\mathbf{P}),\sigma _{HL}(\mathbf{P}) )$ is as in Theorem \ref{thm:AsyDist_ATE},
\begin{align*}
\Delta _{L}^{2}(\mathbf{P}) &~=~\sum_{s\in \mathcal{S}}p(s)\tau (s) ( 1-\pi _{A}(s) ) \pi _{A}(s) 
\left[ 
{\scriptsize\begin{array}{c}
\pi _{D(1)}(s)\mu (1,1,s)/\pi _{A}(s)+ \\ 
(1-\pi _{D(0) }(s))\mu (0,0,s)/(1-\pi _{A}(s) \\ 
+(1-\pi _{D(1)}(s))Y_{L}/\pi _{A}(s) \\ 
+\pi _{D(0) }(s)Y_{H}/(1-\pi _{A}(s))
\end{array}}
\right] ^{2}  \notag \\
\Delta _{HL}(\mathbf{P})& ~=~\sum_{s\in \mathcal{S}}p(s)\tau (s)
( 1-\pi _{A}(s) ) \pi _{A}(s) \times \\
&\left[\left[ 
{\scriptsize\begin{array}{c}
\pi _{D(1)}(s)\mu (1,1,s)/\pi _{A}(s)+ \\ 
(1-\pi _{D(0) }(s))\mu (0,0,s)/(1-\pi _{A}(s) \\ 
+(1-\pi _{D(1)}(s))Y_{L}/\pi _{A}(s) \\ 
+\pi _{D(0) }(s)Y_{H}/(1-\pi _{A}(s))
\end{array}}
\right]\times  \left[ 
{\scriptsize\begin{array}{c}
\pi _{D(1)}(s)\mu (1,1,s)/\pi _{A}(s)+ \\ 
(1-\pi _{D(0) }(s))\mu (0,0,s)/(1-\pi _{A}(s) \\ 
+(1-\pi _{D(1)}(s))Y_{H}/\pi _{A}(s) \\ 
+\pi _{D(0) }(s)Y_{L}/(1-\pi _{A}(s))
\end{array}}
\right] \right],
\end{align*}
and $\Delta _{H}^{2}(\mathbf{P})$ is as $\Delta _{L}^{2}(\mathbf{P})$ but with $(Y_{L},Y_{H})$ replaced by $(Y_{H},Y_{L})$.
\end{theorem}
%%%%%%%%%%%
\begin{proof}
This proof is analogous to that of Theorem \ref{thm:AsyDist_ATE}, and therefore omitted. The main difference in the argument is that Lemma \ref{lem:AsyDist} is replaced by Lemma \ref{lem:AsyDist_ext} (which additionally requires Assumption \ref{ass:2}(f)), explaining why we use $\mathcal{P}_3$ instead of $\mathcal{P}_1$.
\end{proof}

\begin{theorem}\label{thm:AsyDist_ATT_other} 
Let $( \tilde{\upsilon}_{L},\tilde{\upsilon}_{H}) $ denote the alternative bounds estimator defined as in \eqref{eq:BoundsHat_ATT} but with $\{ \pi _{A}(S_{i}) :i=1,\ldots ,n\} $ replaced by $\{ n_{A}(S_{i})/n(S_i) :i=1,\ldots ,n\} $. Then, 
\begin{equation}
\sqrt{n}\left( 
{\small\begin{array}{c}
\tilde{\upsilon}_{L}-\upsilon _{L}(\mathbf{P}) \\ 
\tilde{\upsilon}_{H}-\upsilon _{H}(\mathbf{P})
\end{array}}
\right) ~\overset{d}{\to}~N\left( {\bf 0}_2 ,
{\scriptsize
\begin{array}{c}
\frac{1}{G^2}\left[\left( 
{\scriptsize \begin{array}{cc}
\varpi _{L}^{2}(\mathbf{P})  & \varpi _{HL}\left( \mathbf{P}\right)  \\ 
\varpi _{HL}(\mathbf{P})  & \varpi _{H}^{2}\left( \mathbf{P}\right) 
\end{array}}
\right) +\left( 
{\scriptsize \begin{array}{cc}
\Delta _{L}^{2}(\mathbf{P}) & \Delta _{HL}\left( \mathbf{P}\right)  \\ 
\Delta _{HL}(\mathbf{P}) & \Delta _{H}^{2}\left( \mathbf{P}\right) 
\end{array}}
\right) \right]
\end{array}}
\right) ,  \label{eq:AsyDist_ATT1}
\end{equation}
uniformly in $\mathcal{P}_4$, where $(\varpi _{L}^{2}(\mathbf{P}),\varpi _{H}^{2}(\mathbf{P}),\varpi _{HL}(\mathbf{P}),N_L,N_H,G)$ are as in Theorem \ref{thm:AsyDist_ATT},
%%%%
\begin{align*}
\Delta _{L}^{2}(\mathbf{P}) &~=~\sum_{s\in \mathcal{S}}p(s)\tau (s) ( 1-\pi _{A}(s) ) \pi _{A}(s) 
\left[ 
{\scriptsize\begin{array}{c}
\mu \left( 1,1,s\right) \pi _{D(1)}(s) +\mu \left( 0,1,s\right) \left( 1-\pi _{D(1)}(s) \right)  \\ 
-\mu \left( 1,0,s\right) \pi _{D(0) }(s) -\mu \left( 0,0,s\right) \left( 1-\pi _{D(0) }(s) \right)  \\ 
-(N_{L}/G)\left( \pi _{D(1)}(s) -\pi _{D(0) }(s) \right) 
\end{array}}
\right] ^{2}  \notag \\
\Delta _{HL}(\mathbf{P})& ~=~\sum_{s\in \mathcal{S}}p(s)\tau (s) ( 1-\pi _{A}(s) ) \pi _{A}(s) \times \\
&\left[\left[ 
{\scriptsize\begin{array}{c}
\mu \left( 1,1,s\right) \pi _{D(1)}(s) \\
+\mu \left( 0,1,s\right)
\left( 1-\pi _{D(1)}(s) \right)  \\ 
-\mu \left( 1,0,s\right) \pi _{D(0) }(s) \\
-\mu \left(
0,0,s\right) \left( 1-\pi _{D(0) }(s) \right)  \\ 
-(N_{L}/G)\left( \pi _{D(1)}(s) -\pi _{D(0) }\left(s\right) \right) 
\end{array}}
\right] \left[ 
{\scriptsize\begin{array}{c}
\mu \left( 1,1,s\right) \pi _{D(1)}(s)\\ +\mu \left( 0,1,s\right)\left( 1-\pi _{D(1)}(s) \right)  \\ 
-\mu \left( 1,0,s\right) \pi _{D(0) }(s)\\ 
-\mu \left(0,0,s\right) \left( 1-\pi _{D(0) }(s) \right)  \\ 
-(N_{H}/G)\left( \pi _{D(1)}(s) -\pi _{D(0) }\left(s\right) \right) 
\end{array}}
\right] \right],
\end{align*}
and $\Delta _{H}^{2}(\mathbf{P}) $ is as  $\Delta _{L}^{2}(\mathbf{P})$ but with $N_{L}$ replaced by $N_{H}$.
\end{theorem}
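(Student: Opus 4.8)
The plan is to derive the limiting distribution of the alternative estimator from the general linearization already available in Lemma~\ref{lem:expressionBoundsATT}. That lemma expresses $\sqrt{n}(\check{\upsilon}_L - \upsilon_L(\mathbf{P}))$ and $\sqrt{n}(\check{\upsilon}_H - \upsilon_H(\mathbf{P}))$ for an \emph{arbitrary} estimator $\hat{\pi}_A(S_i)$ of the target probabilities as $\tfrac{1}{G}(U_n + V_{\cdot,n} + W_{\cdot,n}) + \delta_n$, where $\delta_n = o_p(1)$ uniformly in $\mathcal{P}_2$, and $W_{\cdot,n}$ is the only piece that depends on $\hat{\pi}_A$, entering through $\sqrt{n}(\hat{\pi}_A(s) - \pi_A(s))$. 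The recommended estimator of Theorem~\ref{thm:AsyDist_ATT} corresponds to $\hat{\pi}_A = \pi_A$, which annihilates $W_{L,n}$ and $W_{H,n}$; the alternative estimator corresponds to $\hat{\pi}_A(S_i) = n_A(S_i)/n(S_i)$, so that $\sqrt{n}(\hat{\pi}_A(s) - \pi_A(s)) = T_{n,A}(s)$ and $W_{L,n} = \sum_{s} p(s) T_{n,A}(s) c_L(s)$, $W_{H,n} = \sum_s p(s) T_{n,A}(s) c_H(s)$, where $c_L(s)$ and $c_H(s)$ are the bracketed coefficients in \eqref{eq:expressionBoundsATT4} (differing only in $N_L$ versus $N_H$).

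The key step is to invoke Lemma~\ref{lem:AsyDist_ext} in place of Lemma~\ref{lem:AsyDist}: it delivers the joint asymptotic normality of $(T_{n,1}, T_{n,2}, T_{n,3}, T_{n,A})$ with a \emph{block-diagonal} limiting covariance, in which $T_{n,A}$ is asymptotically independent of $(T_{n,1}, T_{n,2}, T_{n,3})$ and has variance $\Sigma_A = \mathrm{diag}(\tau(s)(1-\pi_A(s))\pi_A(s)/p(s) : s \in \mathcal{S})$. Because both $\tfrac{1}{G}(U_n + V_{\cdot,n})$ and $\tfrac{1}{G}W_{\cdot,n}$ are continuous linear functionals of this jointly normal vector, a continuous-mapping argument yields joint asymptotic normality of the two bounds. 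Crucially, the asymptotic independence makes the cross-covariance between the $(U_n + V_{\cdot,n})$ part and the $W_{\cdot,n}$ part vanish, so the limiting variance decomposes additively. This use of Assumption~\ref{ass:2}(f) is exactly what forces the class $\mathcal{P}_4$ rather than $\mathcal{P}_2$.

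It then remains to identify the two pieces. The $(U_n + V_{\cdot,n})$ part reproduces the matrix $(\varpi_L^2, \varpi_{HL}, \varpi_H^2)/G^2$ of Theorem~\ref{thm:AsyDist_ATT}, since setting $\hat{\pi}_A = \pi_A$ there leaves precisely these terms. For the $W_{\cdot,n}$ part, the independence of the $\{T_{n,A}(s)\}$ across $s$ gives $\mathrm{Var}(\tfrac{1}{G}W_{L,n}) \to \tfrac{1}{G^2} \sum_s p(s)^2 c_L(s)^2 \cdot \tau(s)(1-\pi_A(s))\pi_A(s)/p(s) = \Delta_L^2(\mathbf{P})/G^2$, and analogously the covariance of $\tfrac{1}{G}W_{L,n}$ and $\tfrac{1}{G}W_{H,n}$ converges to $\Delta_{HL}(\mathbf{P})/G^2$; matching $c_L(s)$ and $c_H(s)$ against the bracketed expressions confirms the stated $\Delta$ formulas.

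Finally, I would establish uniform validity over $\mathcal{P}_4$ by the same subsequence-and-contradiction scheme used in the proof of Theorem~\ref{thm:AsyDist_ATE}: along any subsequence violating uniformity, extract a further subsequence along which all the population quantities in \eqref{eq:keyconvergence}, together with $\tau_{\mathbf{P}}(s) \in [0,1]$, converge, apply Lemmas~\ref{lem:expressionBoundsATT} and~\ref{lem:AsyDist_ext} to obtain the limiting normal with variance continuous in these quantities, and derive a contradiction. I expect the only genuinely delicate step to be the bookkeeping of the covariance computation verifying that $c_L(s), c_H(s)$ produce exactly $\Delta_L^2, \Delta_{HL}, \Delta_H^2$; the structural work---the linearization and the asymptotic independence of $T_{n,A}$---is already supplied by the two lemmas.
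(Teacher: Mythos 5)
Your proposal is correct and follows essentially the same route as the paper: the paper's (omitted) proof is exactly the argument for Theorem \ref{thm:AsyDist_ATT} — linearization via Lemma \ref{lem:expressionBoundsATT} plus the subsequence-contradiction scheme of Theorem \ref{thm:AsyDist_ATE} — with Lemma \ref{lem:AsyDist} replaced by Lemma \ref{lem:AsyDist_ext}, which is precisely your plan. Your identification of $W_{\cdot,n}$ with $\sum_s p(s)T_{n,A}(s)c_\cdot(s)$, the additive variance decomposition from the block-diagonal covariance (asymptotic independence of $T_{n,A}$), and the resulting $\Delta$ formulas all check out against \eqref{eq:expressionBoundsATT4}.
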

\begin{proof}
This proof is analogous to that of Theorem \ref{thm:AsyDist_ATT}, and therefore omitted. The main difference in the argument is that Lemma \ref{lem:AsyDist} is replaced by Lemma \ref{lem:AsyDist_ext} (which additionally requires Assumption \ref{ass:2}(f)), explaining why we replace $\mathcal{P}_2$ with $\mathcal{P}_4$.
\end{proof}

%%%%%%%%%%%%%%%%%%%%%%%%%%%%%%%%%%%%%%%%%%%%%%%%%%%%%%%

\section{Appendix on the ATU}\label{sec:appendixE}

\subsection{Identification}

We now provide the sharp identified set for the ATU $\phi( {\bf Q},{\bf G})\equiv E[Y(1) - Y(0)|D=0]$.

\begin{theorem}\label{thm:mainATU} 
Under Assumptions \ref{ass:1} and \ref{ass:2}(a)-(e), the identified set for the ATU is $\Phi _{I}({\bf P}) = [\phi_{L}({\bf P}),\phi_{H}({\bf P})]$, where 
{\small
\begin{align} 
\phi _{L}( {\bf P}) 
& ~\equiv~ \frac{1}{G_U} E\Bigg[  \left( \frac{Y_i A_i}{P(A_i =1 | S_i)} - \frac{Y_i(1-A_i)}{P(A_i = 0 | S_i)} \right) P(A_i =0 | S_i)+  \frac{(Y_L-Y_i)(1-D_i)A_i}{P(A_i = 0 | S_i)}  \Bigg],  \notag \\ 
\phi _{H}( {\bf P})
& ~\equiv ~\frac{1}{G_U} E\Bigg[  \left( \frac{Y_i A_i}{P(A_i =1 | S_i)} - \frac{Y_i(1-A_i)}{P(A_i = 0| S_i)} \right) P(A_i =0 | S_i) +  \frac{(Y_H-Y_i)(1-D_i)A_i}{P(A_i = 0 | S_i)}  \Bigg],  \label{eq:bounds_ATU}
\end{align}
}
and
{\small
\begin{align} 
G_U = E \left[ \left( \frac{D_i A_i}{ P(A_i =1 | S_i) } - \frac{D_i (1-A_i)}{P(A_i = 0 | S_i) } \right)  P(A_i =0 | S_i)  + \frac{(1-D_i)A_i}{P(A_i = 0 | S_i)}  \right]. 
\label{eq:bounds_ATU_G}
\end{align}
}
% \begin{align}
% \phi _{L}(\mathbf{P})~& \equiv ~\ \tfrac{E\left[ { 
% {\scriptsize \begin{array}{c}
% E[Y_{i}(1)-Y_{i}(0)|D_{i}(1)>D_{i}(0),S_{i}]P(D_{i}(1)>D_{i}(0)|S_{i})P(A_{i}=0|S^{\left( n\right) })\\ 
% +E[Y_{L}-Y_{i}(0)|D_{i}(1)=0,S_{i}]P(D_{i}(1)=0|S_{i})
% \end{array}}
% }\right] }{E[P(D_{i}(1)>D_{i}(0)|S_{i})P(A_{i}=0|S^{\left( n\right)
% })+P(D_{i}(1)=0|S_{i})]},  \notag \\
% \phi _{H}(\mathbf{P})~& \equiv ~\tfrac{E\left[ { 
% {\scriptsize \begin{array}{c}
% E[Y_{i}(1)-Y_{i}(0)|D_{i}(1)>D_{i}(0),S_{i}]P(D_{i}(1)>D_{i}(0)|S_{i})P(A_{i}=0|S^{\left( n\right) })\\ 
% +E[Y_{H}-Y_{i}(0)|D_{i}(1)=0,S_{i}]P(D_{i}(1)=0|S_{i})
% \end{array}}
% }\right] }{E[P(D_{i}(1)>D_{i}(0)|S_{i})P(A_{i}=0|S^{\left( n\right)
% })+P(D_{i}(1)=0|S_{i})]}.  \label{eq:bounds_ATU}
% \end{align}
\end{theorem}
\begin{proof}
This proof is analogous to that of Theorem \ref{thm:mainATT}, and it is therefore omitted. 
\end{proof}

\subsection{Inference}

By analogy with the results for the ATT, we propose to estimate the identified of the ATU with $\hat\Phi _{I} ~\equiv~[\hat{\phi}_{L},\hat{\phi}_{U}]$,  where
\begin{align}
\hat{\phi}_{L} ~&\equiv~ \frac{1}{n \hat{G}_U} \sum_{i=1}^n \left[ \left( \tfrac{Y_{i}A_{i}}{n_{A}\left( S_{i}\right) /n\left( S_{i}\right)} - \tfrac{Y_{i} \left(1-A_{i} \right)}{1- n_{A}\left( S_{i}\right) /n\left( S_{i}\right)}  \right) (1-{\pi}_{A}\left( S_{i} \right))  + \tfrac{\left(Y_{L}-Y_{i}\right)(1-D_{i})A_{i}}{n_{A}\left( S_{i}\right) /n\left( S_{i}\right)} \right]     \notag\\
\hat{\phi}_{H} ~&\equiv~ \frac{1}{n \hat{G}_U} \sum_{i=1}^n \left[ \left( \tfrac{Y_{i}A_{i}}{n_{A}\left( S_{i}\right) /n\left( S_{i}\right)} - \tfrac{Y_{i} \left(1-A_{i} \right)}{1- n_{A}\left( S_{i}\right) /n\left( S_{i}\right)}  \right) (1-{\pi}_{A}\left( S_{i} \right))  + \tfrac{\left(Y_{H}-Y_{i}\right)(1-D_{i})A_{i}}{n_{A}\left( S_{i}\right) /n\left( S_{i}\right)} \right]    
    \label{eq:BoundsHat_ATU}  
% \hat{\phi}_{L}& =\frac{\sum_{s\in \mathcal{S}}\frac{n(s)}{n}\left[ 
% \begin{array}{c}
% \left( \tfrac{n(s)\sum_{i=1}^{n}1[A_{i}=1,S_{i}=s]Y_{i}}{%
% n(s)n_{AD}(s)-n_{A}(s)n_{D}(s)}-\tfrac{n_{A}(s)\sum_{i=1}^{n}1[S_{i}=s]Y_{i}%
% }{n(s)n_{AD}(s)-n_{A}(s)n_{D}(s)}\right)  \\ 
% \times \frac{n(s)n_{AD}(s)-n_{A}(s)n_{D}(s)}{n_{A}(s)\left(
% n(s)-n_{A}(s)\right) }(1-\hat{\pi}_{A}(s)) \\ 
% +\left( Y_{L}-\frac{\sum_{i=1}^{n}1[A_{i}=1,D_{i}=0,S_{i}=s]Y_{i}}{%
% n_{A}(s)-n_{AD}(s)}\right) (1-\frac{n_{AD}(s)}{n_{A}(s)})%
% \end{array}%
% \right] }{\sum_{s\in \mathcal{S}}\frac{n(s)}{n}\left( \frac{%
% n(s)n_{AD}(s)-n_{A}(s)n_{D}(s)}{n_{A}(s)\left( n(s)-n_{A}(s)\right) }(1-\hat{%
% \pi}_{A}(s))+(1-\frac{n_{AD}(s)}{n_{A}(s)})\right) },  \notag \\
% \hat{\phi}_{U}
% & =\frac{\sum_{s\in \mathcal{S}}\frac{n(s)}{n}\left[ 
% \begin{array}{c}
% \left( \tfrac{n(s)\sum_{i=1}^{n}1[A_{i}=1,S_{i}=s]Y_{i}}{%
% n(s)n_{AD}(s)-n_{A}(s)n_{D}(s)}-\tfrac{n_{A}(s)\sum_{i=1}^{n}1[S_{i}=s]Y_{i}%
% }{n(s)n_{AD}(s)-n_{A}(s)n_{D}(s)}\right)  \\ 
% \times \frac{n(s)n_{AD}(s)-n_{A}(s)n_{D}(s)}{n_{A}(s)\left(
% n(s)-n_{A}(s)\right) }(1-\hat{\pi}_{A}(s)) \\ 
% +\left( Y_{H}-\frac{\sum_{i=1}^{n}1[A_{i}=1,D_{i}=0,S_{i}=s]Y_{i}}{%
% n_{A}(s)-n_{AD}(s)}\right) (1-\frac{n_{AD}(s)}{n_{A}(s)})%
% \end{array}%
% \right] }{\sum_{s\in \mathcal{S}}\frac{n(s)}{n}\left( \frac{%
% n(s)n_{AD}(s)-n_{A}(s)n_{D}(s)}{n_{A}(s)\left( n(s)-n_{A}(s)\right) }(1-\hat{%
% \pi}_{A}(s))+(1-\frac{n_{AD}(s)}{n_{A}(s)})\right) }.
% \label{eq:BoundsHat_ATU}
\end{align}
and
\begin{equation}
    \hat{G}_U~\equiv~ \frac{1}{n} \sum_{i=1}^n \left[ \left(\tfrac{D_{i}A_{i}}{n_{A}\left( S_{i}\right) /n\left( S_{i}\right)} - \tfrac{D_{i}\left(1-A_{i}\right)}{1-n_{A}\left( S_{i}\right) /n\left( S_{i}\right)}\right)(1- {\pi}_{A}\left( S_{i} \right)) + \tfrac{(1-D_{i})A_{i}}{n_{A}\left( S_{i}\right) /n\left( S_{i}\right)} \right].
   \label{eq:G_hatATU}
\end{equation}
The asymptotic properties of these bounds are analogous to those of the ATT bounds. Recall from that  $\mathcal{P}_2$ is the set of probabilities $\mathbf{P}$ generated by $(\mathbf{Q},\mathbf{G})$ that satisfy Assumptions \ref{ass:1} and \ref{ass:2}(a)-(e). The next result establishes that the estimator of the identified set in \eqref{eq:BoundsHat_ATU} is uniformly consistent.
\begin{theorem}\label{thm:consistATU}
For any $\varepsilon >0$, 
\begin{equation*}
\underset{n\to \infty }{\lim \inf }~\inf_{{\bf P} \in \mathcal{P}_2 }~{\bf P}(~d_{H}(\hat\Phi _{I},\Phi _{I}({\bf P}))\leq \varepsilon ~)~=~1,  \label{eq:consistencyATU}
\end{equation*}
where $d_H(A,B)$ denotes the Hausdorff distance between sets $A, B\subset \mathbb{R}$.
\end{theorem}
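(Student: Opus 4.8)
The plan is to follow the template used in the proofs of Theorems \ref{thm:consist} and \ref{thm:consist_ATT}. Fix $\varepsilon > 0$. Since $\Phi_I(\mathbf{P}) = [\phi_L(\mathbf{P}), \phi_H(\mathbf{P})]$ and $\hat{\Phi}_I = [\hat{\phi}_L, \hat{\phi}_H]$ are both intervals, their Hausdorff distance equals the larger of the two endpoint discrepancies, which yields the elementary set inclusion
\[
\{|\phi_L(\mathbf{P}) - \hat{\phi}_L| \leq \varepsilon\} \cap \{|\phi_H(\mathbf{P}) - \hat{\phi}_H| \leq \varepsilon\} ~\subseteq~ \{d_H(\hat{\Phi}_I, \Phi_I(\mathbf{P})) \leq \varepsilon\}.
\]
Taking probabilities and then the infimum over $\mathbf{P} \in \mathcal{P}_2$, it suffices to prove the uniform consistency of each endpoint estimator, i.e.\
\[
\liminf_{n \to \infty} \inf_{\mathbf{P} \in \mathcal{P}_2} \mathbf{P}(|\phi_L(\mathbf{P}) - \hat{\phi}_L| \leq \varepsilon) = \liminf_{n \to \infty} \inf_{\mathbf{P} \in \mathcal{P}_2} \mathbf{P}(|\phi_H(\mathbf{P}) - \hat{\phi}_H| \leq \varepsilon) = 1 .
\]

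The key ingredient is the ATU counterpart of Theorem \ref{thm:AsyDist_ATT}: a result establishing that $\sqrt{n}(\hat{\phi}_L - \phi_L(\mathbf{P}), \hat{\phi}_H - \phi_H(\mathbf{P}))'$ converges in distribution to a centered bivariate normal, uniformly in $\mathcal{P}_2$, with a variance-covariance matrix that is finite uniformly over $\mathcal{P}_2$. Given this, uniform convergence to a tight limit forces $\sqrt{n}(\hat{\phi}_L - \phi_L(\mathbf{P}))$ and $\sqrt{n}(\hat{\phi}_H - \phi_H(\mathbf{P}))$ to be $O_p(1)$ uniformly, so that $\hat{\phi}_L - \phi_L(\mathbf{P}) = o_p(1)$ and $\hat{\phi}_H - \phi_H(\mathbf{P}) = o_p(1)$ uniformly in $\mathcal{P}_2$; these are precisely the two displayed limits. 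This is exactly the logic by which the proof of Theorem \ref{thm:consist} extracts consistency from Theorem \ref{thm:AsyDist_ATE}.

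To supply the required asymptotic distribution, I would first derive an influence-function representation for the ATU bounds, the direct analog of Lemma \ref{lem:expressionBoundsATT}. Writing $\hat{\phi}_L = \hat{N}_{U,L}/\hat{G}_U$ for the obvious numerator $\hat{N}_{U,L}$ associated with \eqref{eq:BoundsHat_ATU} and using that the limiting denominator is bounded away from zero uniformly over $\mathcal{P}_2$ (a consequence of Assumption \ref{ass:1}(d), which guarantees $P(D_i = 0) \in (0,1)$ and hence a nondegenerate untreated subpopulation), a delta-method linearization expresses $\sqrt{n}(\hat{\phi}_L - \phi_L(\mathbf{P}), \hat{\phi}_H - \phi_H(\mathbf{P}))'$ as a linear combination of the normalized sums $(T_{n,1}, T_{n,2}, T_{n,3})$ in \eqref{eq:Tn_defn} plus a remainder that is $o_p(1)$ uniformly in $\mathcal{P}_2$. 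Lemma \ref{lem:AsyDist} then delivers the joint asymptotic normality of these building blocks, and the passage from pointwise to uniform convergence is handled by the subsequencing-and-contradiction device of Theorem \ref{thm:AsyDist_ATE}: along any subsequence one extracts a further subsequence along which the moments in \eqref{eq:keyDefns} and the target probabilities converge, checks that the limiting covariance matches its closed form, and invokes continuity of the normal CDF in its covariance to reach the contradiction.

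The main obstacle is the ATU influence-function representation itself. The estimators in \eqref{eq:BoundsHat_ATU}--\eqref{eq:G_hatATU} simultaneously involve inverse-frequency weights $n_A(S_i)/n(S_i)$, the target-probability weight $1 - \pi_A(S_i)$, and the shared random normalization $\hat{G}_U$, so the linearization must track (i) the sampling variability of the treatment frequencies appearing in both numerator and denominator, (ii) the replacement of $P(A_i = 1 \mid S^{(n)})$ by the target $\pi_A(S_i)$, whose asymptotic innocuousness rests on Assumption \ref{ass:2}(e), and (iii) the covariance induced by sharing $\hat{G}_U$ across the two endpoints. By the symmetry between the treated and untreated subpopulations this algebra transcribes from Lemma \ref{lem:expressionBoundsATT} with $D_i$ replaced by $1 - D_i$ and the assignment arms interchanged, but verifying that each remainder term is $o_p(1)$ \emph{uniformly} rather than merely pointwise is the step that demands the most care.
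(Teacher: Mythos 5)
Your proposal is correct and follows essentially the same route as the paper: the paper proves Theorem \ref{thm:consistATU} by declaring it analogous to Theorem \ref{thm:consist}, i.e., the endpoint-event inclusion into the Hausdorff event followed by an appeal to the uniform asymptotic normality of the bound estimators, which for the ATU is supplied by Theorem \ref{thm:AsyDist_ATU} (itself resting on the influence-function representation in Lemma \ref{lem:expressionBoundsATU}, the ATU analog of Lemma \ref{lem:expressionBoundsATT} that you correctly identified as the technically demanding ingredient).
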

%%%%%%%%%%%%%%%%%%%%%%%%
\begin{proof}
This proof is analogous to that of Theorem \ref{thm:consist}, and it is therefore omitted. 
\end{proof}

We propose the following confidence interval (CI) for the ATU:
\begin{equation}
\hat{C}_{\Phi}\left( 1-\alpha \right) 
~\equiv~[~ \hat{\phi}_{L}-{\hat{\vartheta}_{L} \hat{c}_{L}}/{\sqrt{n}}~,~\hat{\phi}_{H}+{\hat{\vartheta}_{H} \hat{c}_{H}}/{\sqrt{n}}~] ,
\label{eq:CSstoyeATU}
\end{equation}
where $( \hat{c}_{L},\hat{c}_{H}) $ are the minimizers of $\hat{\vartheta}_{L}{c}_{L}+\hat{\vartheta}_{H}{c}_{H}$ subject to the following constraints
\begin{align*}
P\bigg( -{c}_{L}\leq Z_{1}~\cap~\tfrac{\hat{\vartheta}_{HL}}{\hat{\vartheta}_{H}\hat{\vartheta}_{L}}Z_{1}\leq c_{H}+\tfrac{\sqrt{n}( \hat{\phi}_{H}-\hat{\phi}_{L}) }{\hat{\vartheta}_{H}}+Z_{2}\sqrt{1-\tfrac{\hat{\vartheta}_{HL}^2}{\hat{\vartheta}_{H}^2\hat{\vartheta}_{L}^2}} ~\bigg|~ W^{(n)} \bigg)  
&\geq 1-\alpha  \notag\\
P\bigg( -c_{L}-\tfrac{\sqrt{n}( \hat{\phi}_{H}-\hat{\phi}_{L}) }{\hat{\vartheta}_{L}}-Z_{2}\sqrt{1-\tfrac{\hat{\vartheta}_{HL}^2}{\hat{\vartheta}_{H}^2\hat{\vartheta}_{L}^2}}
\leq \tfrac{\hat{\vartheta}_{HL}}{\hat{\vartheta}_{H}\hat{\vartheta}_{L}}Z_{1}~\cap~Z_{1}\leq c_{H} ~\bigg|~ W^{(n)} \bigg)  
&\geq 1-\alpha ,
%\label{eq:constrainsStoye_ATU}
\end{align*}%
and $Z_{1},Z_{2}$ are i.i.d.\ $N( 0,1) $. 
%Usually, $c_{L}$ and $c_{H}$ are obtained by the fact that the inequalities in \eqref{eq:constrainsStoye} hold with equality.
% We note that \eqref{eq:CSstoye} corresponds to $CI_{\alpha }^{2}$ in \cite{stoye:2009}. 
We now establish its asymptotic uniform validity.

\begin{theorem}\label{thm:StoyeATU}
The CI in \eqref{eq:CSstoyeATU} satisfies
\begin{equation}
    \underset{n\to \infty }{\lim}~\inf_{{\bf P}\in \mathcal{P}_2 }~\inf_{\phi \in \Theta_{I}({\bf P})}~{\bf P}( \phi \in \hat{C}_{\Phi}( 1-\alpha )) ~=~ 1-\alpha .\label{eq:coverageATU}
\end{equation}
\end{theorem}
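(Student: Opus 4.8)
The plan is to invoke \citet[Proposition 1]{stoye:2009}, exactly as in the proofs of Theorems \ref{thm:Stoye} and \ref{thm:StoyeATT}. The task then reduces to verifying the conditions that this proposition imposes on the estimated bounds and their estimated variances, translated to the ATU objects $(\hat{\phi}_L,\hat{\phi}_H,\hat{\vartheta}_L,\hat{\vartheta}_H,\hat{\vartheta}_{HL},\hat{G}_U)$ appearing in \eqref{eq:CSstoyeATU}. Concretely, I would supply the two uniform ingredients that feed into \citeauthor{stoye:2009}'s Assumption 1(i), namely (a) the uniform asymptotic normality
\[
\sqrt{n}\,\big(\hat{\phi}_L-\phi_L(\mathbf{P}),\,\hat{\phi}_H-\phi_H(\mathbf{P})\big)'~\overset{d}{\to}~N(\mathbf{0}_2,\Sigma_\phi(\mathbf{P}))
\]
uniformly over $\mathbf{P}\in\mathcal{P}_2$, and (b) the uniform consistency of $(\hat{\vartheta}_L^2,\hat{\vartheta}_H^2,\hat{\vartheta}_{HL})/\hat{G}_U^2$ for $\Sigma_\phi(\mathbf{P})$. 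These are the ATU counterparts of Theorems \ref{thm:AsyDist_ATT} and \ref{thm:AsyVarEstimation_ATT}, on which the proof of Theorem \ref{thm:consistATU} already implicitly relies.

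To obtain (a), I would first prove a linearization lemma for the ratio estimator $\hat{\phi}_\bullet=\hat{N}_\bullet/\hat{G}_U$ — the ATU analog of Lemma \ref{lem:expressionBoundsATT} — expressing $\sqrt{n}(\hat{\phi}_\bullet-\phi_\bullet(\mathbf{P}))$ as a fixed smooth linear functional of $(T_{n,1},T_{n,2},T_{n,3})$ plus a remainder that is $o_p(1)$ uniformly in $\mathcal{P}_2$, using Assumptions \ref{ass:1}(a),(d) and \ref{ass:2}(b),(e) to control the denominator expansion. Feeding the joint CLT of Lemma \ref{lem:AsyDist} through the subsequencing/continuity argument of Theorem \ref{thm:AsyDist_ATE} then delivers (a) with the stated $\Sigma_\phi(\mathbf{P})$. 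Ingredient (b) follows from the same subsequence argument applied to the plug-in variance formula, mirroring Theorem \ref{thm:AsyVarEstimation_ATT}.

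With (a)–(b) established, Assumption 1(i) of \citet{stoye:2009} holds. Assumption 1(ii) (strict positivity of the limiting variances) follows from the asymptotic distribution result together with Assumption \ref{ass:1}(c), exactly as in the closing paragraph of the proof of Theorem \ref{thm:AsyDist_ATE}. Finally, Assumption 3 follows from $\hat{\phi}_H\geq\hat{\phi}_L$ — which holds w.p.a.1 because $Y_H\geq Y_L$ and, by Lemma \ref{lem:den_not_zero}, $n_A(S_i)/n(S_i)\in(0,1)$ and $\hat{G}_U>0$ w.p.a.1, so that $\hat{\phi}_H-\hat{\phi}_L=(n\hat{G}_U)^{-1}\sum_i (Y_H-Y_L)(1-D_i)A_i/(n_A(S_i)/n(S_i))\geq 0$ — combined with \citet[Lemma 3]{stoye:2009}. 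Assembling these verifications through \citet[Proposition 1]{stoye:2009} yields \eqref{eq:coverageATU}.

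I expect the main obstacle to be establishing ingredient (a) uniformly. Because the ATU estimand is a ratio, the linearization must keep the denominator $\hat{G}_U$ bounded away from zero uniformly over $\mathcal{P}_2$; this uses that the probability limit $G_U=E[P(D_i(1)>D_i(0)\mid S_i)P(A_i=0\mid S^{(n)})+P(D_i(1)=0\mid S_i)]$ is bounded below, which holds since Assumption \ref{ass:1}(d) guarantees a nontrivial mass of compliers (assigned to control under Assumption \ref{ass:2}(b)) and hence $P(D_i=0)$ bounded away from zero. The remaining work — checking that the cross terms of the delta-method expansion collapse to $\Sigma_\phi(\mathbf{P})$ and that all error terms are $o_p(1)$ uniformly — is a lengthy but routine bookkeeping exercise exactly parallel to Lemma \ref{lem:expressionBoundsATT}, and is therefore the step I would relegate to the appendix as ``analogous to'' the ATT derivation.
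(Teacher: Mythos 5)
Your proposal takes essentially the same route as the paper: the paper proves this theorem by declaring it analogous to Theorem \ref{thm:Stoye}, i.e., by invoking \citet[Proposition 1]{stoye:2009} with its conditions verified through the uniform asymptotic normality of the ATU bounds, the uniform consistency of the variance estimators, and the ordering $\hat{\phi}_H\geq\hat{\phi}_L$ combined with \citet[Lemma 3]{stoye:2009} --- exactly the three verifications you carry out. The only difference is presentational: you sketch re-deriving the ATU ingredient results (linearization, CLT, variance consistency) within the proof, whereas the paper already supplies them as Lemma \ref{lem:expressionBoundsATU} and Theorems \ref{thm:AsyDist_ATU} and \ref{thm:AsyVarEstimation_ATU} in Appendix \ref{sec:appendixE}.
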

\begin{proof}
This proof is analogous to that of Theorem \ref{thm:Stoye}, and it is therefore omitted. 
\end{proof}

\subsection{Auxiliary results}

\begin{lemma}\label{lem:expressionBoundsATU} 
Let
\begin{align}
    \check{\phi}_{L} ~&\equiv~ \frac{1}{n \check{G}_U} \sum_{i=1}^n \left[ \left( \tfrac{Y_{i}A_{i}}{n_{A}\left( S_{i}\right) /n\left( S_{i}\right)} - \tfrac{Y_{i} \left(1-A_{i} \right)}{1- n_{A}\left( S_{i}\right) /n\left( S_{i}\right)}  \right)(1- \hat{\pi}_{A}\left( S_{i} \right)) 
    + \tfrac{\left(Y_{L}-Y_{i}\right)(1-D_{i})A_{i} }{1-n_{A}\left( S_{i}\right) /n\left( S_{i}\right)} \right]     \notag\\
    \check{\phi}_{H} ~&\equiv~ \frac{1}{n \check{G}_U} \sum_{i=1}^n \left[ \left( \tfrac{Y_{i}A_{i}}{n_{A}\left( S_{i}\right) /n\left( S_{i}\right)} - \tfrac{Y_{i} \left(1-A_{i} \right)}{1- n_{A}\left( S_{i}\right) /n\left( S_{i}\right)}  \right)(1- \hat{\pi}_{A}\left( S_{i} \right))
    + \tfrac{\left(Y_{H}-Y_{i}\right)(1-D_{i})A_{i}}{1-n_{A}\left( S_{i}\right) /n\left( S_{i}\right)} \right],
    \label{eq:expressionBoundsATU1}
\end{align}
where 
%$(\hat{\pi}_A(s):s\in \mathcal{S})$ are estimators of $({\pi}_A(s):s\in \mathcal{S})$ to be discussed, and 
\begin{equation}
    \check{G}_U~\equiv~ \frac{1}{n} \sum_{i=1}^n \left[ \left(\tfrac{D_{i}A_{i}}{n_{A}\left( S_{i}\right) /n\left( S_{i}\right)} - \tfrac{D_{i}\left(1-A_{i}\right)}{1-n_{A}\left( S_{i}\right) /n\left( S_{i}\right)}\right) (1-\hat{\pi}_{A}\left( S_{i} \right)) + \tfrac{(1-D_{i})A_{i}}{1-n_{A}\left( S_{i}\right) /n\left( S_{i}\right)} \right].
   \label{eq:expressionBoundsATU2}
\end{equation}
and $\{\hat{\pi}_A(s):s \in \mathcal{S}\} $ is an arbitrary estimator of $\{{\pi}_A(s):s \in \mathcal{S}\} $. Then,
\begin{equation}
\sqrt{n}\left( 
\begin{array}{c}
\check{\phi}_{L}-\phi _{L}(\mathbf{P}) \\ 
\check{\phi}_{H}-\phi _{H}(\mathbf{P})
\end{array}
\right) ~=~\frac{1}{G_U}\left( 
\begin{array}{c}
U_{n}+V_{L,n}+W_{L,n} \\ 
U_{n}+V_{H,n}+W_{H,n}
\end{array}
\right) +\delta _{n},  \label{eq:expressionBoundsATU3}
\end{equation}
where $\phi_{L}(\mathbf{P})$ and $\phi _{H}(\mathbf{P})$ are as in \eqref{eq:bounds_ATU}, $\delta _{n}=o_{p}(1)$ uniformly in $\mathbf{P}\in \mathcal{P}_2$, and
\begin{align}
U_{n}& ~=~\sum_{s\in \mathcal{S}}\left[
{\scriptsize
\begin{array}{c}
 (1-\pi _{A}( s))T_{n,1}( 1,1,s)/\pi _{A}( s)  - T_{n,1}( 0,1,s) - T_{n,1}( 1,0,s) - T_{n,1}( 0,0,s) \\
+ p( s)  [ (1-\pi _{A}( s)) \mu ( 1,1,s) + \pi _{A}( s) \mu ( 0,1,s) ] T_{n,2}( 1,s) \\
+ p( s) ( 1-\pi _{A}( s) ) [  - \mu ( 1,0,s) + \mu ( 0,0,s) ] T_{n,2}(2,s) \\
\left[ 
\begin{array}{c}
\left[ 
\begin{array}{c}
\mu ( 1,1,s) \pi _{D(1)}( s) 
+ \mu ( 0,1,s) ( 1-\pi _{D(1)}(s) )   \\ 
- \mu (1,0,s) \pi _{D( 0) }( s)
-\mu ( 0,0,s) ( 1-\pi _{D( 0) }( s)) 
\end{array}
\right] (1-\pi _{A}( s))  \\ 
- \mu ( 0,1,s) (1-\pi _{D( 1) }( s)) 
\end{array}
\right] T_{n,3}( s) 
\end{array}}
\right]  \notag \\
V_{L,n}& ~=~ \sum_{s\in \mathcal{S}}\left[
{\scriptsize
\begin{array}{c}
p( s) [ \frac{N_L}{G_U} \pi _{A}( s) - Y_L ] T_{n,2}(1,s)+p( s) \frac{N_L}{G_U}( 1-\pi _{A}( s) ) T_{n,2}(2,s) + \\
\big[  Y_{L}(1-\pi _{D( 1) }( s)) - \frac{N_L}{G_U}(
(1-\pi _{D(1)}( s)) \pi _{A}( s) +
(1-\pi _{D( 0) }( s)) ( 1-\pi _{A}( s) ) 
)\big] T_{n,3}( s)
\end{array}}
\right]   \notag \\
W_{L,n}&  ~=~\sum_{s\in \mathcal{S}}p( s) \sqrt{n}( \hat{\pi}_{A}( s) -\pi _{A}( s) ) \left[ 
{\scriptsize\begin{array}{c}
- \mu ( 1,1,s) \pi _{D(1)}( s) - \mu (0,1,s) ( 1-\pi _{D(1)}( s) )  \\ 
+ \mu ( 1,0,s) \pi _{D( 0) }( s) + \mu (0,0,s) ( 1-\pi _{D( 0) }( s) )  \\ 
+ \frac{N_L}{G_U}( \pi _{D(1)}( s) - \pi _{D( 0) }( s) ) 
\end{array}}
\right]   \notag \\
N_{L}&  ~=~\sum_{s\in \mathcal{S}}p( s)\left[ 
{\scriptsize\begin{array}{c}
\left[ 
\begin{array}{c}
\mu (1,1,s)\pi _{D(1)}(s)
+\mu (0,1,s)( 1-\pi _{D(1)}(s))  \\ 
-\mu (1,0,s)\pi _{D(0)}(s)
-\mu (0,0,s)( 1-\pi _{D(0)}(s)) 
\end{array}
\right] (1-\pi _{A}( s)) \\
+(Y_{L}-\mu (0,1,s))
(1- \pi_{D(1) }( s) )
\end{array}} \right] \notag \\
N_{H}&   ~=~\sum_{s\in \mathcal{S}}p( s)\left[ 
{\scriptsize\begin{array}{c}
\left[ 
\begin{array}{c}
\mu (1,1,s)\pi _{D(1)}(s)+\mu (0,1,s)( 1-\pi _{D(1)}(s))  \\ 
-\mu (1,0,s)\pi _{D(0)}(s)-\mu (0,0,s)( 1-\pi _{D(0)}(s)) 
\end{array}
\right] (1-\pi _{A}( s)) \\
+(Y_{H} - \mu ( 0,1,s))
(1- \pi_{D(1) }( s) )
\end{array}}
\right] \notag \\
G_U&  ~=~\sum_{s\in \mathcal{S}}p( s) [ (1-\pi _{D( 1)}( s)) \pi _{A}( s) + (1-\pi _{D( 0) }(s)) (1-\pi _{A}( s))] ,\label{eq:expressionBoundsATU4}
\end{align}
with $(T_{n,1},T_{n,2},T_{n,3})$ as in \eqref{eq:Tn_defn}, and $V_{H,n}$ and $W_{H,n}$ are defined as in $V_{L,n}$ and $W_{L,n}$ but with $(N_L,Y_{L}) $ replaced by $( N_L,Y_{L}) $.
\end{lemma}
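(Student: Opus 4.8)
The plan is to mirror the proof of Lemma \ref{lem:expressionBoundsATT} for the ATT, adapting the algebra to the ratio structure specific to the ATU. As there, I would establish the representation only for the lower bound $\sqrt{n}(\check{\phi}_{L}-\phi_{L}(\mathbf{P}))$, since the derivation for $\sqrt{n}(\check{\phi}_{H}-\phi_{H}(\mathbf{P}))$ is identical after interchanging $(N_{L},Y_{L})$ with $(N_{H},Y_{H})$. First I would write $\check{\phi}_{L}=\check{N}_{L}/\check{G}_{U}$, where $\check{N}_{L}$ collects the numerator summands in \eqref{eq:expressionBoundsATU1} and $\check{G}_{U}$ is as in \eqref{eq:expressionBoundsATU2}, and write the population bound as $\phi_{L}(\mathbf{P})=\tilde{N}_{L}/\tilde{G}_{U}$, with $\tilde{N}_{L}$ and $\tilde{G}_{U}$ the numerator and denominator in \eqref{eq:bounds_ATU}. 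Using the i.i.d.\ condition in Assumption \ref{ass:1} together with \eqref{eq:keyDefns}, I would verify that the target-probability versions $N_{L}$ and $G_{U}$ in \eqref{eq:expressionBoundsATU4} arise from $\tilde{N}_{L}$ and $\tilde{G}_{U}$ by replacing $P(A_{i}=0|S^{(n)})$ with $1-\pi_{A}(S_{i})$; this is the exact analog of \eqref{eq:expressionBoundsATT5}--\eqref{eq:expressionBoundsATT6}.

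Next I would carry out the two-stage linearization of the ratio. I would write
\begin{equation*}
\sqrt{n}\bigl(\check{N}_{L}/\check{G}_{U}-\tilde{N}_{L}/\tilde{G}_{U}\bigr)=\sqrt{n}\bigl(\check{N}_{L}/\check{G}_{U}-N_{L}/G_{U}\bigr)+\delta_{n,1},
\end{equation*}
where $\delta_{n,1}=\sqrt{n}(N_{L}/G_{U}-\tilde{N}_{L}/\tilde{G}_{U})=o_{p}(1)$ uniformly in $\mathbf{P}\in\mathcal{P}_{2}$; this negligibility uses Assumptions \ref{ass:1}(a),(d) and \ref{ass:2}(b),(e), precisely because Assumption \ref{ass:2}(e) forces $\pi_{A}(S_{i})$ and $P(A_{i}=1|S^{(n)})$ to agree at a rate faster than $\sqrt{n}$. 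A standard ratio expansion then gives
\begin{equation*}
\sqrt{n}\bigl(\check{N}_{L}/\check{G}_{U}-N_{L}/G_{U}\bigr)=\bigl[\sqrt{n}(\check{N}_{L}-N_{L})-(N_{L}/G_{U})\sqrt{n}(\check{G}_{U}-G_{U})\bigr]/G_{U}+\delta_{n,2},
\end{equation*}
with $\delta_{n,2}=o_{p}(1)$ uniformly by Lemma \ref{lem:AsyDist} (which yields $\check{G}_{U}-G_{U}=o_{p}(1)$ and hence $1/\check{G}_{U}-1/G_{U}=o_{p}(1)$, noting that $G_{U}$ is bounded away from zero by Assumptions \ref{ass:1}(d) and \ref{ass:2}(c)).

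The core of the proof, and the main obstacle, is then to expand $\sqrt{n}(\check{G}_{U}-G_{U})$ and $\sqrt{n}(\check{N}_{L}-N_{L})$ into the stated linear combinations of $T_{n,1}$, $T_{n,2}$, $T_{n,3}$ and $\sqrt{n}(\hat{\pi}_{A}(s)-\pi_{A}(s))$, with $o_{p}(1)$ remainders $\delta_{n,3}$ and $\delta_{n,4}$. This is the analog of \eqref{eq:expressionBoundsATT9}--\eqref{eq:expressionBoundsATT10} and requires tracking how the sample ratios $n_{A}(s)/n(s)$, $n_{AD}(s)/n_{A}(s)$, $(n_{D}(s)-n_{AD}(s))/(n(s)-n_{A}(s))$ and the sample means inside $\check{N}_{L}$ linearize around their limits. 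Each such ratio is handled by adding and subtracting its limit, routing the deviation through the relevant $T_{n,j}$ term, and collecting products of two deviations into the remainder; Lemma \ref{lem:AsyDist} guarantees each $T_{n,j}=O_{p}(1)$ and that the cross-product remainders are $o_{p}(1)$, and the uniformity over $\mathcal{P}_{2}$ follows by the subsequence device used in the proof of Theorem \ref{thm:AsyDist_ATE}. The bookkeeping differs from the ATT because the ATU correction term, built from units with $A_{i}=1$ and $D_{i}=0$ (never-takers assigned to treatment), enters the treated arm, and the main difference-in-means part is reweighted by $1-\pi_{A}(s)$ rather than $\pi_{A}(s)$; this flips the signs and alters the weights multiplying $\mu(d,a,s)$ and the $\pi_{D(a)}(s)$ factors in $U_{n}$, $V_{L,n}$, and $W_{L,n}$. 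Finally, setting $\delta_{n}=\delta_{n,1}+\delta_{n,2}+\delta_{n,3}+\delta_{n,4}$ and combining the expansions yields \eqref{eq:expressionBoundsATU3}.
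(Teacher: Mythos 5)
Your proposal is correct and follows exactly the route the paper intends: the paper omits this proof precisely because it is analogous to that of Lemma \ref{lem:expressionBoundsATT}, and your argument reproduces that proof's structure faithfully — the ratio decomposition $\check{\phi}_{L}=\check{N}_{L}/\check{G}_{U}$, the four-term remainder $\delta_{n}=\delta_{n,1}+\delta_{n,2}+\delta_{n,3}+\delta_{n,4}$ with Assumption \ref{ass:2}(e) controlling $\delta_{n,1}$ and Lemma \ref{lem:AsyDist} controlling the rest — while correctly identifying the ATU-specific changes (the correction term built from never-takers with $A_{i}=1$, $D_{i}=0$, and the reweighting by $1-\pi_{A}(s)$).
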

\begin{proof}
This proof is analogous to that of Lemma \ref{lem:expressionBoundsATT}, and it is therefore omitted. 
\end{proof}

\begin{theorem}\label{thm:AsyDist_ATU} 
We have 
\begin{equation}
%\zeta _{n}(\mathbf{P}) \equiv 
\sqrt{n}\left( 
\begin{array}{c}
\hat{\phi}_{L}-\phi _{L}(\mathbf{P}) \\ 
\hat{\phi}_{H}-\phi _{H}(\mathbf{P})
\end{array}
\right) ~\overset{d}{\to  }~
% \zeta (\mathbf{P}) 
 N\left( {\bf 0 }_2,
\frac{1}{G_U^{2}} \left( 
\begin{array}{cc}
\vartheta _{L}^{2}(\mathbf{P})  & \vartheta _{HL}\left( \mathbf{P}%
\right)  \\ 
\vartheta _{HL}(\mathbf{P})  & \vartheta _{H}^{2}\left( \mathbf{P}%
\right) 
\end{array}%
\right) \right) ,  \label{eq:AsyDist_ATU}
\end{equation}%
uniformly in $\mathcal{P}_2$, where 
\begin{align*}
\vartheta _{L}^{2}(\mathbf{P})  
&~=~\sum_{s\in \mathcal{S}}p(s)\left[
{\scriptsize
\begin{array}{c}
(1-\pi _{A}(s))^2 \pi _{D(1)}(s) \sigma ^{2}(1,1,s) / \pi _{A}(s)
+\left( 1-\pi _{D(1)}(s)\right) \pi _{A}(s)\sigma ^{2}\left( 0,1,s\right) \\ 
+ \pi _{D(0)}(s)(1-\pi _{A}(s))\sigma ^{2}\left( 1,0,s\right)
+ (1-\pi _{D(0)}(s)) (1-\pi _{A}(s)) \sigma ^{2}\left( 0,0,s\right) \\ 
+\left[ 
( 1-\pi _{A}(s) ) 
\mu \left( 1,1,s\right) 
+ \pi_{A}(s) \mu \left( 0,1,s\right)  + \frac{N_L}{G_U} \pi _{A}(s) - Y_{L}
\right] ^{2}
\frac{(1-\pi _{D(1)}(s))\pi _{D(1)}(s)}{\pi _{A}(s)} \\ 
+ ( 1 - \pi _{A}(s)) 
\left[ \mu \left(
0,0,s\right) + \frac{N_L}{G_U} - \mu \left( 1,0,s\right)\right] ^{2}
(1-\pi _{D(0)}(s))\pi _{D(0)}(s) \\ 
+\left[ 
\begin{array}{c}
\left[ 
\begin{array}{c}
\mu \left( 1,1,s\right) \pi _{D(1)}(s) +\mu \left( 0,1,s\right)
\left( 1-\pi _{D(1)}(s) \right)  \\ 
-\mu \left( 1,0,s\right) \pi _{D(0) }(s) -\mu \left(
0,0,s\right) \left( 1-\pi _{D(0) }(s) \right) 
\end{array}%
\right] (1-\pi _{A}(s))  \\ 
+ (Y_L - \mu \left( 0,1,s\right) ) (1 - \pi _{D\left( 1 \right) }(s))  \\ 
- \left[ \frac{N_L}{G_U}\left[ (1-\pi _{D(1)}(s)) \pi _{A}(s)
+ (1-\pi _{D(0) }(s) )\left( 1-\pi _{A}(s)
\right) \right] \right] 
\end{array}%
\right] ^{2} 
% \\ 
% +\pi _{A}(s) ( 1-\pi _{A}(s) ) \tau \left(
% s\right)\times\\
% \left[ 
% \begin{array}{c}
% - \mu \left( 1,1,s\right) \pi _{D(1)}(s) 
% - \mu \left( 0,1,s\right)
% \left( 1-\pi _{D(1)}(s) \right)  \\ 
% + \mu \left( 1,0,s\right) \pi _{D(0) }(s) 
% + \mu \left(0,0,s\right) \left( 1-\pi _{D(0) }(s) \right)  \\ 
% + \frac{N_L}{G_U}\left( \pi _{D(1)}(s) -\pi _{D(0) }\left(
% s\right) \right) 
% \end{array}
% \right] ^{2}
\end{array}}
\right] 
\end{align*}
% here starts v_HL
and 
\begin{align*}
\vartheta _{HL}(\mathbf{P}) &~=~\sum_{s\in \mathcal{S}}p(s)\left[ 
{\scriptsize
\begin{array}{c}
( 1- \pi_{A}(s))^2 \pi _{D(1)}(s) 
\sigma ^{2}\left( 1,1,s\right) / \pi _{A}(s) \\
+\left( 1-\pi _{D(1)}(s)\right)
\pi _{A}(s)\sigma ^{2}\left( 0,1,s\right) 
+\pi _{D(0)}(s)(1-\pi _{A}(s))\sigma ^{2}\left( 1,0,s\right)\\
+ (1 - \pi _{D(0)}(s))( 1- \pi _{A}(s))\sigma ^{2}(0,0,s) \\ 
+ \left[ 
( 1-\pi _{A}(s) ) \mu \left( 1,1,s\right) +
\pi_{A}(s) \mu \left( 0,1,s\right)  
+ \frac{N_L}{G_U} \pi _{A}(s)- Y_L 
\right] 
\times  \\ 
\left[ 
\begin{array}{c}
( 1-\pi _{A}(s) ) 
\mu \left( 1,1,s\right) 
+\pi_{A}(s) \mu \left( 0,1,s\right)  \\ 
+ \frac{N_H}{G_U}\pi _{A}(s) - Y_H
\end{array}%
\right] \frac{(1-\pi _{D(1)}(s))\pi _{D(1)}(s)}{\pi _{A}(s)} \\ 
+ (1-\pi _{A}(s)) 
\left[ 
\begin{array}{c}
\left[ - \mu \left( 1,0,s\right) + \mu \left( 0,0,s\right) +\frac{N_L}{G_U}\right]
\times  \\ 
\left[ - \mu \left( 1,0,s\right) + \mu \left( 0,0,s\right) +\frac{N_H}{G_U}\right] 
\end{array}%
\right] (1-\pi _{D(0)}(s))\pi _{D(0)}(s) \\ 
+\left[ 
\begin{array}{c}
\left[ 
\begin{array}{c}
\mu \left( 1,1,s\right) \pi _{D(1)}(s) +\mu \left( 0,1,s\right)
\left( 1-\pi _{D(1)}(s) \right)  \\ 
-\mu \left( 1,0,s\right) \pi _{D(0) }(s) 
-\mu \left(0,0,s\right) \left( 1-\pi _{D(0) }(s) \right) 
\end{array}%
\right] (1-\pi _{A}(s))  \\ 
+(Y_L - \mu \left( 0,1,s\right) ) ( 1 - \pi _{D(1) }(s))  \\ 
-\left[ \frac{N_L}{G_U}\left[ (1-\pi _{D(1)}(s)) \pi _{A}(s)
+(1-\pi _{D(0) }(s)) \left( 1-\pi _{A}(s)
\right) \right] \right] 
\end{array}%
\right] \times  \\ 
\left[ 
\begin{array}{c}
\left[ 
\begin{array}{c}
\mu \left( 1,1,s\right) \pi _{D(1)}(s) 
+\mu \left( 0,1,s\right)
\left( 1-\pi _{D(1)}(s) \right)  \\ 
-\mu \left( 1,0,s\right) \pi _{D(0) }(s) 
-\mu \left(0,0,s\right) \left( 1-\pi _{D(0) }(s) \right) 
\end{array}%
\right] (1 - \pi _{A}(s))  \\ 
+ (Y_{H} - \mu \left( 0, 1,s\right))( 1 - \pi _{D(1) }(s))  \\ 
- \left[ \frac{N_H}{G_U}\left[ (1 - \pi _{D(1)}(s)) \pi _{A}(s)
+ (1 - \pi _{D(0) }(s) )\left( 1-\pi _{A}(s)
\right) \right] \right] 
\end{array}%
\right]
% \\ 
% +\pi _{A}(s) ( 1-\pi _{A}(s) ) \tau (s) \times \\
% \left[ 
% \begin{array}{c}
% - \mu \left( 1,1,s\right) \pi _{D(1)}(s) 
% - \mu \left( 0,1,s\right)
% \left( 1-\pi _{D(1)}(s) \right)  \\ 
% + \mu \left( 1,0,s\right) \pi _{D(0) }(s) 
% + \mu \left(0,0,s\right) \left( 1-\pi _{D(0) }(s) \right)  \\ 
% + \frac{N_L}{G_U}\left( \pi _{D(1)}(s) -\pi _{D(0) }\left(
% s\right) \right) 
% \end{array}%
% \right] \times\\ 
%   \left[ 
% \begin{array}{c}
% - \mu \left( 1,1,s\right) \pi _{D(1)}(s) 
% - \mu \left( 0,1,s\right)
% \left( 1-\pi _{D(1)}(s) \right)  \\ 
% + \mu \left( 1,0,s\right) \pi _{D(0) }(s) 
% + \mu \left(0,0,s\right) \left( 1-\pi _{D(0) }(s) \right)  \\ 
% + \frac{N_H}{G_U}\left( \pi _{D(1)}(s) -\pi _{D(0) }\left(
% s\right) \right) 
% \end{array}%
% \right] 
\end{array}
}
\right] ,
\end{align*}
where $( N_{L},N_{H},G_U) $ is as in \eqref{eq:expressionBoundsATU4}, and $\vartheta _{H}^{2}(\mathbf{P}) $ is as $\vartheta _{L}^{2}(\mathbf{P}) $ but with $(N_{L},Y_{L})$ replaced by $(N_{H},Y_{H})$. 
%Note that $\left\{ \left( \sigma ^{2}(d,a,s),\mu (d,a,s),\pi _{D(a)}(s),\pi _{A}(s),p(s)\right) :\left( d,a,s\right) \in \{ 0,1\} \times \{ 0,1\} \times S\right\} $ are implicit functions of $\mathbf{P}$. 
Moreover, $\vartheta _{L}^{2}(\mathbf{P}) $ and $\vartheta _{H}^{2}(\mathbf{P}) $ are positive and finite, uniformly in $\mathcal{P}_2$.
\end{theorem}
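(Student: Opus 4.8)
The plan is to follow the template of the proof of Theorem \ref{thm:AsyDist_ATE}, which also underlies the analogous ATT result in Theorem \ref{thm:AsyDist_ATT}. The two essential inputs are the exact linearization in Lemma \ref{lem:expressionBoundsATU} and the joint asymptotic normality of $(T_{n,1},T_{n,2},T_{n,3})$ in Lemma \ref{lem:AsyDist}. Because the recommended estimators $\hat{\phi}_L,\hat{\phi}_H$ in \eqref{eq:BoundsHat_ATU} plug in the known target probabilities, I would apply Lemma \ref{lem:expressionBoundsATU} with $\hat{\pi}_A(S_i)=\pi_A(S_i)$, so that the terms $W_{L,n}$ and $W_{H,n}$---the only ones carrying the factor $\sqrt{n}(\hat{\pi}_A(s)-\pi_A(s))$---vanish identically. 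This reduces the representation to
\[
\sqrt{n}\begin{pmatrix}\hat{\phi}_L-\phi_L(\mathbf{P})\\ \hat{\phi}_H-\phi_H(\mathbf{P})\end{pmatrix}=\frac{1}{G_U}\begin{pmatrix}U_n+V_{L,n}\\ U_n+V_{H,n}\end{pmatrix}+\delta_n,
\]
with $\delta_n=o_p(1)$ uniformly in $\mathcal{P}_2$, where $U_n,V_{L,n},V_{H,n}$ are fixed linear functionals of $(T_{n,1},T_{n,2},T_{n,3})$ and $G_U$ is bounded away from zero uniformly by Assumptions \ref{ass:1}(d) and \ref{ass:2}(c).

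For the uniform convergence, I would argue by contradiction exactly as in the proof of Theorem \ref{thm:AsyDist_ATE}. If uniformity fails, I would extract a subsequence $\{k_n\}$ along which $(p_{\mathbf{P}_{k_n}}(s),\pi_{D(a),\mathbf{P}_{k_n}}(s),\mu_{\mathbf{P}_{k_n}}(d,a,s),\sigma^2_{\mathbf{P}_{k_n}}(d,a,s),\pi_{A,\mathbf{P}_{k_n}}(s))$ converges for every $(d,a,s)$, which forces convergence of $(\phi_L,\phi_H,G_U,N_L,N_H)$ and hence of $(\vartheta_L^2,\vartheta_H^2,\vartheta_{HL})$, since these are continuous functions of the limiting moments. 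Along this subsequence Lemma \ref{lem:AsyDist} gives $(T_{n,1},T_{n,2},T_{n,3})\overset{d}{\to}N(\mathbf{0},\mathrm{diag}(\Sigma_1,\Sigma_2,\Sigma_3))$; propagating this Gaussian limit through the continuous linear map above and dividing by $G_U$ delivers a bivariate normal whose covariance coincides with the claimed $G_U^{-2}$ times the matrix with entries $\vartheta_L^2,\vartheta_{HL},\vartheta_H^2$. Continuity of the normal CDF in its covariance matrix then contradicts the hypothesized failure of uniform convergence.

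To establish positivity and finiteness uniformly in $\mathcal{P}_2$, I would note that every summand of $\vartheta_L^2(\mathbf{P})$ is nonnegative (conditional variances with nonnegative weights, plus squared terms). By Assumption \ref{ass:1}(c), for some $s\in\mathcal{S}$ either $p(s)\pi_{D(1)}(s)\sigma^2(1,1,s)\geq\xi$ or $p(s)(1-\pi_{D(0)}(s))\sigma^2(0,0,s)\geq\xi$. In $\vartheta_L^2$ the coefficient multiplying $p(s)\pi_{D(1)}(s)\sigma^2(1,1,s)$ is $(1-\pi_A(s))^2/\pi_A(s)>\varepsilon^2$, while the coefficient multiplying $p(s)(1-\pi_{D(0)}(s))\sigma^2(0,0,s)$ is $(1-\pi_A(s))>\varepsilon$, both by Assumption \ref{ass:2}(c); hence $\vartheta_L^2(\mathbf{P})\geq\varepsilon^2\xi>0$, and the same bound holds for $\vartheta_H^2(\mathbf{P})$. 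Finiteness follows from the bounded outcomes in Assumption \ref{ass:1}(a) together with $\pi_A(s)\in(\varepsilon,1-\varepsilon)$, which bound each of $\sigma^2$, $\mu$, $N_L/G_U$, $N_H/G_U$, and all the strata weights uniformly.

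The main obstacle is the explicit variance computation: verifying that the quadratic form of the linear functional $(U_n+V_{L,n},U_n+V_{H,n})$ against the block-diagonal covariance $\mathrm{diag}(\Sigma_1,\Sigma_2,\Sigma_3)$ reproduces, term by term, the lengthy expressions stated for $\vartheta_L^2,\vartheta_{HL},\vartheta_H^2$. Exploiting the strata-wise independence encoded in the block-diagonal structure, one collects the contribution of each coefficient attached to $T_{n,1}(d,a,s)$, $T_{n,2}(s,\cdot)$, and $T_{n,3}(s)$ from \eqref{eq:expressionBoundsATU4} and matches it against $\Sigma_1,\Sigma_2,\Sigma_3$ from Lemma \ref{lem:AsyDist}. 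This is mechanical but bookkeeping-intensive, and is precisely the ATU analog of the ``lengthy derivation'' invoked inside Lemma \ref{lem:expressionBoundsATT}; it is where essentially all the work resides, the remaining steps being routine adaptations of the ATE and ATT arguments.
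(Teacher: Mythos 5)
Your proposal is correct and follows exactly the route the paper intends: the paper's proof is omitted as ``analogous to Theorem \ref{thm:AsyDist_ATT}'' (itself analogous to Theorem \ref{thm:AsyDist_ATE}), and that analogous argument is precisely what you reconstruct --- invoke Lemma \ref{lem:expressionBoundsATU} with $\hat{\pi}_A(s)=\pi_A(s)$ so that $W_{L,n}=W_{H,n}=0$, run the subsequence/contradiction argument of Theorem \ref{thm:AsyDist_ATE} using Lemma \ref{lem:AsyDist}, match the covariance by the quadratic-form bookkeeping, and obtain uniform positivity/finiteness from Assumptions \ref{ass:1}(a), (c), (d) and \ref{ass:2}(c). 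Your explicit lower bound $\vartheta_L^2(\mathbf{P})\geq \varepsilon^2\xi$ and the bound $G_U\geq\varepsilon\xi$ are valid and fill in the constants the paper leaves implicit.
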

%%%%%%%
\begin{proof}
This proof is analogous to that of Theorem \ref{thm:AsyDist_ATT}, and it is therefore omitted.
\end{proof}

\begin{theorem}\label{thm:AsyVarEstimation_ATU} 
Let $(\hat{p}(s),\hat{\pi}_{D(a)}(s),\hat{\mu}(d,a,s),\hat{\sigma} ^{2}(d,a,s)) : (d,a,s)\in \{0,1\}\times \{0,1\}\times \mathcal{S}) $ be as in \eqref{eq:AsyDistATE_var8}, and let
\begin{align}
\hat{\vartheta}_{L}^{2} &~=~\sum_{s\in \mathcal{S}}\hat{p}(s)\left[ 
{\scriptsize
\begin{array}{c}
(1 - {\pi}_{A}(s))^{2} \hat{\pi}_{D(1)}(s) \hat{\sigma}^{2}\left(
1,1,s\right) / {\pi}_{A}(s) \\ 
+\left( 1-\hat{\pi}_{D(1)}(s)\right) {\pi}_{A}(s)\hat{\sigma}^{2}\left( 0,1,s\right) + 
+\hat{\pi}_{D(0)}(s)(1-{\pi}_{A}(s))\hat{\sigma}^{2}\left( 1,0,s\right) 
\\ 
+ (1-\hat{\pi}_{D(0)}(s))
(1 - {\pi}_{A}(s))\hat{\sigma}^{2}(0,0,s) \\
+\left[ 
\left( 1-{\pi}_{A}(s) \right) \hat{\mu}\left( 1,1,s\right) +%
{\pi}_{A}(s) \hat{\mu}\left( 0,1,s\right)  
+ \hat{\phi}_{L} {\pi}_{A}(s) - Y_L
\right] ^{2}
\frac{(1-\hat{\pi}_{D(1)}(s))\hat{\pi}_{D(1)}(s)}{\hat{\pi}%
_{A}(s)} \\ 
+(1-{\pi}_{A}(s))\left[ - \hat{\mu}\left( 1,0,s\right) + \hat{\mu}\left(0,0,s\right) + \hat{\phi}_{L}\right] ^{2}
(1-\hat{\pi}_{D(0)}(s))\hat{\pi}_{D(0)}(s) \\ 
+\left[ 
\begin{array}{c}
\left[ 
\begin{array}{c}
\hat{\mu}\left( 1,1,s\right) \hat{\pi}_{D(1)}(s) +\hat{\mu}%
\left( 0,1,s\right) \left( 1-\hat{\pi}_{D(1)}(s) \right)  \\ 
-\hat{\mu}\left( 1,0,s\right) \hat{\pi}_{D(0) }(s) -%
\hat{\mu}\left( 0,0,s\right) \left( 1-\hat{\pi}_{D(0) }\left(
s\right) \right) 
\end{array}%
\right] (1-{\pi}_{A}(s))  \\ 
+ ( Y_L - \hat{\mu}\left( 0,1,s\right) )
(1 - \hat{\pi}_{D(1) }(s))  \\ 
-\left[ \hat{\phi}_{L}\left[ (1-\hat{\pi}_{D(1)}(s)) \hat{\pi}%
_{A}(s) + ( 1 - \hat{\pi}_{D(0) }(s)) 
\left( 1- {\pi}_{A}(s) \right) \right] \right] 
\end{array}%
\right] ^{2} 
% \\ 
% +{\pi}_{A}(s) \left( 1-{\pi}_{A}(s) \right)
% \tau (s) \times  \\ 
% \left[
% \begin{array}{c}
% - \hat{\mu}\left( 1,1,s\right) \hat{\pi}_{D(1)}(s) - \hat{\mu}\left( 0,1,s\right) \left( 1-\hat{\pi}_{D(1)}(s) \right)  \\ 
% + \hat{\mu}\left( 1,0,s\right) \hat{\pi}_{D(0) }(s) 
% + \hat{\mu}\left( 0,0,s\right) \left( 1-\hat{\pi}_{D(0) }(s) \right)  \\ 
% + \hat{\phi}_{L}\left( \hat{\pi}_{D(1)}(s) -\hat{\pi}_{D(0) }(s) \right) 
% \end{array}
% \right] ^{2}
\end{array}}
\right] , \notag\\
% HERE STARTS HL
\hat{\vartheta}_{HL}&~=~\sum_{s\in \mathcal{S}}\hat{p}(s)\left[ 
{\scriptsize
\begin{array}{c}
(1 - {\pi}_{A}(s))^{2}\hat{\pi}_{D(1)}(s)
\hat{\sigma}^{2}\left(1,1,s\right) / {\pi}_{A}(s) \\ 
+ \left( 1-\hat{\pi}_{D(1)}(s)\right) {\pi}_{A}(s)\hat{\sigma}^{2}\left( 0,1,s\right) 
+\hat{\pi}_{D(0)}(s)(1-{\pi}_{A}(s))\hat{\sigma}^{2}\left( 1,0,s\right) 
\\ 
+ (1 - \hat{\pi}_{D(0)}(s))(1 - {\pi}_{A}(s))
\hat{\sigma}^{2}(0,0,s) \\
+\left[ 
\left( 1-{\pi}_{A}(s) \right) \hat{\mu}\left( 1,1,s\right) 
+ {\pi}_{A}(s) \hat{\mu}\left( 0,1,s\right)  \hat{\phi}_{L} {\pi}_{A}(s) - Y_L
\right] \times  \\ 
\left[ 
\begin{array}{c}
\left( 1-{\pi}_{A}(s) \right) \hat{\mu}\left( 1,1,s\right) 
+ {\pi}_{A}(s) \hat{\mu}\left( 0,1,s\right)  \\ 
+ \hat{\phi}_{H} {\pi}_{A}(s) - Y_H
\end{array}%
\right] 
\frac{(1-\hat{\pi}_{D(1)}(s))\hat{\pi}_{D(1)}(s)}{{\pi}_{A}(s)} \\ 
+ (1 - {\pi}_{A}(s))
\left[ 
\begin{array}{c}
\left[ - \hat{\mu}\left( 1,0,s\right) + \hat{\mu}\left( 0,0,s\right) 
+ \hat{\phi}_{L}\right]  \\ 
\times \left[ - \hat{\mu}\left( 1,0,s\right) + \hat{\mu}\left( 0,0,s\right) 
+ \hat{\phi}_{H}\right] 
\end{array}%
\right] (1-\hat{\pi}_{D(0)}(s))\hat{\pi}_{D(0)}(s) \\ 
+\left[ 
\begin{array}{c}
\left[ 
\begin{array}{c}
\hat{\mu}\left( 1,1,s\right) \hat{\pi}_{D(1)}(s)
+\hat{\mu}\left( 0,1,s\right) \left( 1-\hat{\pi}_{D(1)}(s) \right)  \\ 
-\hat{\mu}\left( 1,0,s\right) \hat{\pi}_{D(0) }(s) 
- \hat{\mu}\left( 0,0,s\right) \left( 1-\hat{\pi}_{D(0) }\left(
s\right) \right) 
\end{array}%
\right] (1 - {\pi}_{A}(s))  \\ 
+ (Y_L - \hat{\mu}\left( 0, 1,s\right))(1 - \hat{\pi}_{D(1) })\left(s\right)  \\ 
- \left[ \hat{\phi}_{L}\left[ (1 - \hat{\pi}_{D(1)}(s)) {\pi}_{A}(s) + (1 - \hat{\pi}_{D(0) }(s)) \left( 1-{\pi}_{A}(s) \right) \right] \right] 
\end{array}%
\right]  \\ 
\times 
\left[ 
\begin{array}{c}
\left[ 
\begin{array}{c}
\hat{\mu}\left( 1,1,s\right) \hat{\pi}_{D(1)}(s) +\hat{\mu}\left( 0,1,s\right) \left( 1-\hat{\pi}_{D(1)}(s) \right)  \\ 
-\hat{\mu}\left( 1,0,s\right) \hat{\pi}_{D(0) }(s) 
- \hat{\mu}\left( 0,0,s\right) \left( 1-\hat{\pi}_{D(0) }(s) \right) 
\end{array}%
\right] (1 - {\pi}_{A}(s))  \\ 
+ ( Y_H - \hat{\mu}\left( 0, 1,s\right) ) ( 1 - \hat{\pi}_{D\left( 1 \right)) }(s)  \\ 
- \left[ \hat{\phi}_{H}\left[ (1 - \hat{\pi}_{D(1)}(s)) {\pi}_{A}(s) + (1 - \hat{\pi}_{D(0) }(s)) \left( 1-{\pi}_{A}(s) \right) \right] \right] 
\end{array}%
\right] \\ 
+{\pi}_{A}(s) \left( 1-{\pi}_{A}(s) \right)
\tau (s) \times  
% \\ 
% \left[ 
% \begin{array}{c}
% - \hat{\mu}\left( 1,1,s\right) \hat{\pi}_{D(1)}(s) - \hat{\mu}\left( 0,1,s\right) \left( 1-\hat{\pi}_{D(1)}(s) \right)  \\ 
% + \hat{\mu}\left( 1,0,s\right) \hat{\pi}_{D(0) }(s) 
% + \hat{\mu}\left( 0,0,s\right) \left( 1-\hat{\pi}_{D(0) }\left(s\right) \right)  \\ 
% + \hat{\phi}_{L}\left( \hat{\pi}_{D(1)}(s) -\hat{\pi}_{D(0) }(s) \right) 
% \end{array}%
% \right] \times  \\ 
% \left[ 
% \begin{array}{c}
% - \hat{\mu}\left( 1,1,s\right) \hat{\pi}_{D(1)}(s) - \hat{\mu}\left( 0,1,s\right) \left( 1-\hat{\pi}_{D(1)}(s) \right)  \\ 
% + \hat{\mu}\left( 1,0,s\right) \hat{\pi}_{D(0) }(s) 
% + \hat{\mu}\left( 0,0,s\right) \left( 1-\hat{\pi}_{D(0) }\left(s\right) \right)  \\ 
% + \hat{\phi}_{H}\left( \hat{\pi}_{D(1)}(s) -\hat{\pi}_{D(0) }(s) \right) 
% \end{array}%
% \right] 
\end{array}}
\right] , \label{eq:AsyVar_ATU}
\end{align}
and $\hat{\vartheta}_{H}^{2}$ is defined as in $\hat{\vartheta}_{L}^{2}$ but with $(\hat{\phi}_{L},Y_{L})$ replaced by $(\hat{\phi}_{H},Y_{H})$. Then, 
\begin{equation*}
(\hat{\vartheta}_{L}^{2},\hat{\vartheta}_{H}^{2},\hat{\vartheta}_{HL})/\hat{G}_U^2~\overset{p}{\to  }~(\vartheta _{L}^{2}(\mathbf{P)},\vartheta _{H}^{2}(\mathbf{P)},\vartheta _{HL}(\mathbf{P)})/G_U^2
\end{equation*}%
uniformly in $\mathcal{P}_2$.
\end{theorem}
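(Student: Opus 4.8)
The plan is to adapt the contradiction-and-subsequencing argument used for Theorem~\ref{thm:AsyVarEstimation} and its ATT analog, Theorem~\ref{thm:AsyVarEstimation_ATT}. Fix $\delta>0$ and suppose the asserted uniform convergence fails. Then there is a subsequence $\{w_n\}$ of $\mathbb{N}$ and distributions $\{\mathbf{P}_{w_n}\}$ such that $\lim_n \mathbf{P}_{w_n}(\|(\hat{\vartheta}_{L}^{2},\hat{\vartheta}_{H}^{2},\hat{\vartheta}_{HL})/\hat{G}_U^{2}-(\vartheta_{L}^{2},\vartheta_{H}^{2},\vartheta_{HL})(\mathbf{P}_{w_n})/G_U^{2}\|>\delta)>0$. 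Passing to a further subsequence $\{k_n\}$, I may assume that all the population objects $(\mu_{\mathbf{P}_{k_n}}(d,a,s),\sigma^{2}_{\mathbf{P}_{k_n}}(d,a,s),\pi_{D(a),\mathbf{P}_{k_n}}(s),\pi_{A,\mathbf{P}_{k_n}}(s),p_{\mathbf{P}_{k_n}}(s))$ converge as in \eqref{eq:keyconvergence}, now for every $(d,a,s)\in\{0,1\}^{2}\times\mathcal{S}$ (not only the two subgroups needed for the ATE). Because $\vartheta_{L}^{2}$, $\vartheta_{H}^{2}$, $\vartheta_{HL}$, and $G_U$ are continuous functions of these quantities, their values along $\{\mathbf{P}_{k_n}\}$ converge to the expressions obtained by substituting the limits; this reduces the uniform claim to a single convergence-in-probability statement along $\{\mathbf{P}_{k_n}\}$, which it then suffices to establish in order to reach a contradiction.

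Next I would assemble the building blocks exactly as in the proof of Theorem~\ref{thm:AsyVarEstimation}. By Theorem~\ref{thm:AsyDist_ATU} and the convergence of the limiting bounds, $(\hat{\phi}_{L},\hat{\phi}_{H})\overset{p}{\to}(\phi_{L},\phi_{H})$ along $\{\mathbf{P}_{k_n}\}$. Lemmas~\ref{lem:AsyDist} and~\ref{lem:A1and2_impliesold3} give $T_{n,2},T_{n,3}=O_p(1)$ and hence $(\hat{p}(s),\hat{\pi}_{D(a)}(s))\overset{p}{\to}(p(s),\pi_{D(a)}(s))$ for all $(a,s)$, while the known targets $\pi_A(s)$ converge by construction. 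For each $(d,a,s)\in\{0,1\}^{2}\times\mathcal{S}$ and $b\in\{1,2\}$, I would then show the weighted products $\hat{p}(s)\hat{\pi}_{D(a)}(s)\hat{\mu}(d,a,s)^{b}$ and $\hat{p}(s)\hat{\pi}_{D(a)}(s)\hat{\sigma}^{2}(d,a,s)$ converge to their population counterparts, using $T_{n,1}=O_p(1)$ (Lemma~\ref{lem:AsyDist}) and the limit of $T_{n,4}$ (Lemma~\ref{lem:AsyDist2}) together with the algebra in \eqref{eq:AsyDist3_9} and \eqref{eq:AsyDist3_11}. The only genuine novelty relative to the ATE is that all four $(d,a)$ cells now enter the formulas, but each cell is handled identically. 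Finally, $\hat{G}_U\overset{p}{\to}G_U$ follows from the same weighted-average arguments (cf.\ the treatment of $\check{G}_U$ in Lemma~\ref{lem:expressionBoundsATU}), and $G_U=\sum_{s}p(s)[(1-\pi_{D(1)}(s))\pi_A(s)+(1-\pi_{D(0)}(s))(1-\pi_A(s))]$ is bounded away from $0$ and $\infty$ uniformly in $\mathcal{P}_2$ by Assumptions~\ref{ass:1}(c),(d) and~\ref{ass:2}(c).

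With these pieces in hand, $\hat{\vartheta}_{L}^{2}$, $\hat{\vartheta}_{H}^{2}$, $\hat{\vartheta}_{HL}$, and $\hat{G}_U^{2}$ are continuous functions of the converging blocks, and since $\hat{G}_U^{2}$ is bounded away from zero w.p.a.1 the ratio $(\hat{\vartheta}_{L}^{2},\hat{\vartheta}_{H}^{2},\hat{\vartheta}_{HL})/\hat{G}_U^{2}$ converges in probability along $\{\mathbf{P}_{k_n}\}$ to $(\vartheta_{L}^{2},\vartheta_{H}^{2},\vartheta_{HL})/G_U^{2}$ evaluated at the limits, contradicting the displayed lower bound. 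I expect the main obstacle to be the degenerate-cell bookkeeping: when $P(D(a)=d,S=s)=0$ the estimators $\hat{\mu}(d,a,s)$ and $\hat{\sigma}^{2}(d,a,s)$ need not converge on their own, so one must argue---via the case split in \eqref{eq:AsyDist3_9} and \eqref{eq:AsyDist3_11}, and identities of the form $1[n_{AD}(s)>0]=1+o_p(1)$---that the \emph{products} $\hat{p}(s)\hat{\pi}_{D(a)}(s)\hat{\mu}(d,a,s)^{b}$ and $\hat{p}(s)\hat{\pi}_{D(a)}(s)\hat{\sigma}^{2}(d,a,s)$ collapse to the correct (zero) limits, and to do this while simultaneously keeping the denominator $\hat{G}_U$ bounded away from zero across all of $\mathcal{P}_2$.
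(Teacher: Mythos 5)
Your proposal is correct and takes essentially the same approach as the paper: the paper's own proof of this theorem is literally a one-line appeal to the analogy with Theorem \ref{thm:AsyVarEstimation}, and your write-up is a faithful elaboration of exactly that analogy (contradiction/subsequence reduction, consistency of $(\hat{\phi}_L,\hat{\phi}_H)$, $(\hat{p},\hat{\pi}_{D(a)})$, and the weighted $\hat{\mu},\hat{\sigma}^2$ products across all four cells, plus the lower bound on $G_U$ via Assumptions \ref{ass:1}(d) and \ref{ass:2}(c)). Your closing remark on the degenerate-cell bookkeeping is precisely the delicate point the paper's ATE proof handles in \eqref{eq:AsyDist3_9} and \eqref{eq:AsyDist3_11}, so nothing is missing.
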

%%%%%%%%%%%%%%%%%%%%%
\begin{proof}
This proof is analogous to that of Theorem \ref{thm:AsyVarEstimation}, and therefore omitted. 
\end{proof}

\end{appendix}
% \newpage
% - -- - - - - - - - - - -
% HERE STARTS THE BIBLIOGRAPHY 
% - -- - - - - - - - - - - 
% \newpage
\bibliography{BIBLIOGRAPHY}

\end{document}